\documentclass[11pt]{article}

\usepackage{times}
\usepackage{amsmath,amsthm,amsfonts,amssymb}
\usepackage{cancel}
\usepackage{fullpage}
\usepackage{liyang}
\usepackage{framed}
\usepackage{verbatim}
\usepackage{enumitem}
\usepackage{array}
\usepackage{multirow}
\usepackage{afterpage}
\usepackage{caption}
\usepackage{subcaption}
\usepackage{setspace}

\newcommand{\anote}[1]{}
\newcommand{\pnote}[1]{}

\usepackage[normalem]{ulem}

\usepackage{caption} 

\usepackage[usenames,dvipsnames]{xcolor}
\usepackage{fancyhdr}
\usepackage{todonotes}

\makeatletter
\newtheorem*{rep@theorem}{\rep@title}
\newcommand{\newreptheorem}[2]{
\newenvironment{rep#1}[1]{
 \def\rep@title{#2 \ref{##1}}
 \begin{rep@theorem}\itshape}
 {\end{rep@theorem}}}
\makeatother
\theoremstyle{plain}

\makeatletter
\newenvironment{proofof}[1]{\par
  \pushQED{\qed}%
  \normalfont \topsep6\p@\@plus6\p@\relax
  \trivlist
  \item[\hskip\labelsep
\emph{    Proof of #1\@addpunct{.}}]\ignorespaces
}{%
  \popQED\endtrivlist\@endpefalse
}
\makeatother

\def\colorful{1}

\ifnum\colorful=1

\newcommand{\orange}[1]{{\color{orange}{#1}}}

\newcommand{\red}[1]{{\color{red} {#1}}}
\newcommand{\green}[1]{{\color{green} {#1}}}

\newcommand{\gray}[1]{{\color{gray}{#1}}}

\fi
\ifnum\colorful=0

\newcommand{\orange}[1]{{{#1}}}

\newcommand{\red}[1]{{{#1}}}
\newcommand{\green}[1]{{{#1}}}
\newcommand{\gray}[1]{{{#1}}}

\fi

\ifnum\colorful=2

\newcommand{\orange}[1]{{\color{orange}{#1}}}

\newcommand{\red}[1]{{\color{red} {#1}}}
\newcommand{\green}[1]{{\color{green} {#1}}}

\newcommand{\gray}[1]{{\color{gray}{#1}}}

\fi

\usepackage{boxedminipage}

\newreptheorem{theorem}{Theorem}
\newtheorem*{theorem*}{Theorem}
\newreptheorem{lemma}{Lemma}
\newreptheorem{proposition}{Proposition}
\newtheorem*{noclaim*}{Claim}

\newcommand{\Bin}{\mathrm{Bin}}

\providecommand{\keywords}[1]{\textbf{\textit{Keywords--}} #1}

\newcommand{\twohigh}{{2,\mathrm{high}}}
\newcommand{\threehigh}{{3,\mathrm{high}}}

\newcommand{\supportset}{{\cal A}}

\newcommand{\BIG}{R}

\newcommand{\ul}[1]{\underline{#1}}
\newcommand{\ol}[1]{\overline{#1}}

\newcommand{\MIX}{\mathrm{MIX}}
\newcommand{\round}[1]{{\left\lfloor #1 \right\rceil}}





\begin{document}

\title{Learning Sums of Independent Random Variables \\
       with Sparse Collective Support \\
       }
%
%
%
%
%

\author{
Anindya De\thanks{Corresponding author.
Supported by a start-up grant from Northwestern University and NSF CCF-1814706.}\\
University of Pennsylvania \\
{anindyad@cis.upenn.edu}
\and
Philip M. Long\\
Google \\
{plong@google.com} 
\and
\and Rocco A.~Servedio\thanks{Supported by NSF grants CCF-1420349 and CCF-1563155.}\\ 
Columbia University \\{rocco@cs.columbia.edu} 
}

\maketitle

\begin{abstract}
We study the learnability of sums of independent integer random
variables given a bound on the size of the union of their
supports.  For $\supportset \subset \Z_{+}$,
a {\em sum of independent random variables with collective
  support $\supportset$} (called an $\supportset$-sum in this
paper) is a distribution $\bS = \bX_1 + \cdots + \bX_N$ where the
$\bX_i$'s are mutually independent (but not necessarily identically
distributed) integer random variables with $\cup_i \supp(\bX_i) \subseteq
\supportset.$

We give two main algorithmic results for learning such distributions:
\begin{enumerate}
\item 
For the case $| \supportset | = 3$,
we give an algorithm for learning 
$\supportset$-sums
to accuracy $\eps$ that uses $\poly(1/\eps)$ samples and runs in time $\poly(1/\eps)$, independent of $N$ and of the elements of $\supportset$.
\item For an arbitrary constant $k \geq 4$, if
$\supportset = \{ a_1,...,a_k\}$ with $0 \leq a_1 < ... < a_k$, 
we give an algorithm that uses $\poly(1/\eps) \cdot \log \log a_k$ samples (independent of $N$) and runs in time 
$\poly(1/\eps, \log a_k).$  
\end{enumerate}

We prove an essentially matching lower bound: if $|\supportset| = 4$,
then any algorithm 
must use 
\[
\Omega(\log \log a_4) 
\]
samples even for learning to constant accuracy.  We also give similar-in-spirit (but quantitatively very different) algorithmic results, and essentially matching lower bounds, for the case in which $\supportset$ is not known to the learner.

Our algorithms and lower bounds together settle the question of how the sample complexity of learning
sums of independent integer random variables
scales with the elements in the union of their supports,
both in the known-support and unknown-support settings.  Finally, all our algorithms easily extend to the ``semi-agnostic''
learning model, in which training data is generated from a distribution that
is only $c \epsilon$-close to some 
$\supportset$-sum for a constant $c>0$.

\end{abstract}

\keywords{Central limit theorem; sample complexity; sums of independent random variables; equidistribution; semi-agnostic learning}

\thispagestyle{empty}

 

\newpage

\setcounter{page}{1}


\section{Introduction}\label{sec:intro}

The theory of sums of independent random variables forms a rich strand of research in probability. 
Indeed, many of the best-known and most influential results in probability theory are about such sums; prominent examples include the weak and strong law of large numbers, a host of central limit theorems, and (the starting point of) the theory of large deviations. 
Within computer science, the well-known ``Chernoff-Hoeffding'' bounds --- i.e., large deviation bounds for sums of independent random variables --- are a ubiquitous tool of great utility in many contexts.\ignore{Indeed, one of the most used probabilistic tools in computer science is the well-known `Chernoff-Hoeffding' bounds which is a large deviation bound for sums of independent random variables. In fact, it should come as little surprise that there are several books~\cite{P95, gnedenko1954independent} devoted to 
the study of this probabilistic process.}
Not surprisingly, there are several books~\cite{gnedenko1954independent,Petrov75,P95, prokhorov, klesov, borovkov1985advances} devoted to the study of sums of independent random variables.

Given the central importance of sums of independent random variables both within probability theory and for a host of applications,
it is surprising that even very basic questions about \emph{learning} these distributions were not rigorously investigated until very recently. The problem of learning probability distributions from independent samples has attracted a great deal of attention in theoretical computer science for almost two decades  (see~\cite{KMR+:94, Dasgupta:99, AroraKannan:01, VempalaWang:02,KMV:10,MoitraValiant:10,BelkinSinha:10} and a host of more recent papers), but most of this work 
has focused on other types of distributions such as mixtures of Gaussians, hidden Markov models, etc.\ignore{(partly due to the popularity of such models in statistics and machine learning).
} While sums of independent random variables may seem to be a very simple type of distribution, as we shall see below the problem of learning 
such distributions turns out to be surprisingly tricky.

Before proceeding further, let us recall the standard 
PAC-style model for learning distributions that was essentially introduced in \cite{KMR+:94} and that we use in this work. In this model the unknown target distribution $\bX$ is assumed to belong to some class $\mathcal{C}$ of distributions.  A learning algorithm has access to i.i.d.~samples from $\bX$, and must produce an efficiently samplable description of a hypothesis distribution $\bH$ such that with probability at least (say) $9/10$, the total variation distance $\dtv(\bX, \bH)$ between $\bX$ and $\bH$ is at most $\epsilon$. (In the language of statistics, this task is usually referred to as \emph{density estimation}, as opposed to \emph{parametric estimation} in which one seeks to approximately identify the parameters of the unknown distribution $\bX$ when $\mathcal{C}$ is a parametric class like Gaussians or mixtures of Gaussians.)
In fact, all our positive results hold for the more challenging {\em semi-agnostic} variant of this model, which is as above except that the assumption that $\bX \in \mathcal{C}$ is weakened to the requirement $\dtv(\bX, \bX^*) \leq c \epsilon$ for
some constant $c$ and some $\bX^* \in \mathcal{C}$.

\ignore{\subsubsection*{Learning sums of independent random variables (SIIRVs):} Unlike the strand of work surrounding learning of Gaussian mixture models (such  as \cite{Dasgupta:99, AroraKannan:01}), in this paper we will be interested in discrete probability distributions.}

\medskip

\noindent{\bf Learning sums of independent random variables:  Formulating the problem.}  To motivate our choice of learning problem it is useful to recall some relevant context.  Recent years have witnessed many research works in theoretical computer science studying the learnability and testability of discrete probability distributions (see e.g.~\cite{DDS:12kmodallearn, DDS12stoclong, DDOST13, RabaniSS14,AcharyaDK15,AcharyaD15,Canonne15,LiRSS15,CanonneDGR16,Canonne16,DKS16coltksiirv,DKS16stoc,DDKT16}); our paper belongs to this line of research.
\ignore{
Besides the obvious connections to statistics and machine learning, the authors also see this line of investigation as being related to the important but poorly understood topic of complexity theory of distributions~\cite{viola2012complexity}.
}
A folklore result in this area is that a simple brute-force algorithm can learn \emph{any} distribution over an $M$-element set using $\Theta(M/\eps^2)$ samples, and that this is best possible if the distribution may be arbitrary.  Thus it is of particular interest to learn classes of distributions over $M$ elements for which a sample complexity dramatically better than this ``trivial bound'' (ideally scaling as $\log M$, or even independent of $M$ altogether) can be achieved.

This perspective on learning, along with a simple result which we now describe, strongly motivates considering sums of random variables which have small \emph{collective support}.  Consider the following very simple learning problem:  Let $\{\bX_i\}_{i=1}^n$ be independent random variables where $\bX_i$ is promised to be supported on the two-element set $\{0,i\}$ but $\Pr[\bX_i=i]$ is unknown:  what is the sample complexity of learning $\bX = \bX_1 + \cdots + \bX_N$?  
\ignore{Unlike the theory of learning Boolean functions where VC dimension essentially characterizes the sample complexity of learning exactly, no such measure is known for distribution learning.} 
Even though each random variable $\bX_i$ is ``as simple as a non-trivial random variable can be'' --- supported on just two values, one of which is zero --- a straightforward lower bound given in \cite{DDS12stoclong} shows that any algorithm for learning $\bX$ even to constant accuracy must use $\Omega(N)$ samples, which is not much better than the trivial brute-force algorithm based on support size.  (We note that this learning problem is the problem of learning a weighted sum of independent Bernoulli random variables in which the $i$-th Bernoulli random variable has weight equal to $i$, and hence the collective support of $\bX_1,\dots,\bX_N$ is $|\{0,1,\dots,N\}|=N+1.$)

Given this lower bound, it is natural to restrict the learning problem by requiring the random variables $\bX_1,\dots,\bX_N$ to have \emph{small} collective support, i.e.~the union $\supp(\bX_1) \cup \cdots \cup \supp(\bX_N)$ of their support sets is small.  
Inspired by this,  Daskalakis \emph{et al.}~\cite{DDS12stoclong} studied the simplest non-trivial version of this learning problem, in which each $\bX_i$ is a Bernoulli random variable (so the union of all supports is simply $\{0,1\}$; note, though, that the $\bX_i$'s may have distinct and arbitrary biases).
\ignore{

}
The main result of \cite{DDS12stoclong} is that this class (known as \emph{Poisson Binomial Distributions}) can be learned to error $\epsilon$ with $\mathsf{poly}(1/\epsilon)$ samples --- so, perhaps unexpectedly, the complexity of learning this class is completely independent of $N$, the number of summands.
The proof in \cite{DDS12stoclong} relies on several sophisticated results from probability theory, including a discrete central limit theorem from \cite{CGS11} (proved using Stein's method) and a ``moment matching'' result due to Roos~\cite{Roos:00}.  (A subsequent sharpening of the \cite{DDS12stoclong} result in \cite{DKS16coltpbd}, giving improved time and sample complexities, also employed sophisticated tools, namely Fourier analysis and algebraic geometry.)

Motivated by this first success, there has been a surge of recent work which studies the learnability of sums of richer classes of random variables. In particular, Daskalakis et al.~\cite{DDOST13} considered a generalization of \cite{DDS12stoclong} in which each $\bX_i$ is supported on the set $\{0,1, \ldots, k-1\}$, and Daskalakis et al.~\cite{DKT15} considered a vector-valued generalization in which each $\bX_i$ is supported on the set  $\{e_1,\dots,e_k\}$, the standard basis unit vectors in $\R^k.$\ignore{   They gave an algorithm that learns an unknown $k$-PMD$_N$ using $\poly(k/\eps)$ samples and  this result was subsequently sharpened in \cite{DKS16stoc,DDKT16}. While} We will elaborate on these results shortly, but here we first highlight a crucial feature shared by all these results; in all of \cite{DDS12stoclong,DDOST13,DKT15} the collective support of the individual summands forms a ``nice and simple'' set (either $\{0,1\},$ $\{0,1,\dots,k-1\}$, or $\{e_1,\dots,e_k\}$). Indeed,\ignore{while the sample  complexity is independent of $N$ in all these settings,} the technical workhorses of all these results are various central limit theorems which crucially exploit the  simple structure of these collective support sets. (These central limit theorems have since found applications in other settings, such as the design of algorithms for approximating equilibrium~\cite{DDKT16, DKT15, DKS16stoc, Cheng:2017} as well as stochastic optimization~\cite{De:2018}.) 

In this paper we go beyond the setting in which the collective support of $\bX_1,\dots,\bX_N$ is a ``nice'' set, by studying the learnability of $\bX_1 + \cdots + \bX_N$ where the collective support may be an \emph{arbitrary} set of non-negative integers.  
{Two} 
questions immediately suggest themselves:
\begin{enumerate}
\item How (if at all) does the sample complexity depend on the elements in the common support? 
\item Does knowing the common support set help the learning algorithm --- how does the complexity vary depending on whether or not the learning algorithm knows the common support? 
\end{enumerate}

In this paper we give essentially complete answers to these questions.   Intriguingly, the answers to these questions emerge from  the interface of probability theory and number theory:  our algorithms rely on new central limit theorems for sums of independent random variables which we establish, while our matching lower bounds exploit delicate properties of continued fractions and sophisticated equidistribution results from analytic number theory. The authors find it quite surprising that these two disparate sets of techniques ``meet up'' to provide matching upper and lower bounds on sample complexity.

We now formalize the problem that we consider.


\medskip
\noindent {\bf Our learning problem.}
Let $\bX_1,\dots,\bX_N$ be independent (but not necessarily identically distributed) random variables.
Let $\supportset = \cup_i \supp(\bX_i)$ be the union of their supports and assume w.l.o.g. that $\supportset = \{ a_1 ,..., a_k \}$ for
$a_1< a_2 < \cdots < a_k \in \Z_{\geq 0}$.   Let $\bS$ be the sum of these independent random variables, $\bS= \bX_1 + \cdots + \bX_N.$  We refer to such a random variable $\bS$ as an \emph{$\supportset$-sum}.

We study the problem of learning a unknown $\supportset$-sum $\bS$, given access to i.i.d.~draws from $\bS$. $\supportset$-sums generalize several classes of distributions which have recently been intensively studied in unsupervised learning \cite{DDS12stoclong, DDOST13, DKS16coltksiirv}, namely Poisson Binomial Distributions and ``$k$-SIIRVs,'' and are closely related to other such distributions~\cite{DKS16stoc,DDKT16} ($k$-Poisson Multinomial Distributions).  These previously studied classes of distributions have all been shown to have
learning algorithms with sample complexity $\poly(1/\eps)$ for all constant $k$.  

In contrast, in this paper we show that the picture is more varied for the sample complexity of learning 
when $\supportset$ can be any finite set.  Roughly speaking (we will give more details soon), two of our main results are as follows:
\begin{itemize}
\item {\em Any} $\supportset$-sum with $|\supportset| = 3$ is learnable from $\poly(1/\eps)$ samples independent of $N$ and of the elements of $\supportset$.  This is a  significant (and perhaps unexpected) generalization of the efficient learnability of Poisson Binomial Distributions, which corresponds to the
case $|\supportset| = 2$.
\item No such guarantee is possible for $|\supportset| =4$: if $N$ is
  large enough, there are infinitely many sets
  $\supportset = \{ a_1,a_2,a_3,a_4 \}$ 
  with $0 \leq a_1 < ... < a_4$
  such that $\Omega(\log \log a_4)$ examples are
  needed even to learn to constant accuracy (for a small absolute
  constant).
\end{itemize}
\ignore{
}
%
Before presenting our results in more
detail, to provide context we recall relevant previous work on learning related distributions.


\subsection{Previous work}

A \emph{Poisson Binomial Distribution of order $N$}, or PBD$_N$, is a sum of $N$ independent (not necessarily identical) Bernoulli random variables, i.e.  
an $\supportset$-sum for $\supportset = \{0,1\}$.  
Efficient algorithms for learning PBD$_N$ distributions were given in 
\cite{DDS12stoc,DKS16coltpbd}, which gave learning algorithms using $\poly(1/\eps)$ samples and 
$\poly(1/\eps)$ runtime, independent of $N$.  

Generalizing a PBD$_N$ distribution, a
$k$-SIIRV$_N$ \emph{(Sum of Independent Integer Random Variables)} 
is a $\supportset$-sum for $\supportset = \{ 0,...,k-1 \}$.
Daskalakis et al.~\cite{DDOST13} 
(see also \cite{DKS16coltksiirv})
gave $\poly(k,1/\eps)$-time and sample algorithms for learning any $k$-SIIRV$_N$ distribution to accuracy $\eps$, independent of $N$.

Finally, a different generalization of PBDs is provided by the class of \emph{$(N,k)$-Poisson Multinomial Distributions}, or $k$-PMD$_N$ distributions.  Such a distribution is $\bS = \bX_1 + \cdots + \bX_N$ where the $\bX_i$'s are independent (not necessarily identical) $k$-dimensional vector-valued random variables each supported on $\{e_1,\dots,e_k\}$, the standard basis unit vectors in $\R^k.$    
Daskalakis et al.~\cite{DKT15} gave an algorithm that learns 
any unknown $k$-PMD$_N$ using $\poly(k/\eps)$ samples and running in time $\min\{2^{O(k^{O(k))}\cdot \log^{O(k)}(1/\eps)},2^{\poly(k/\eps)}\}$;  this result was subsequently sharpened in \cite{DKS16stoc,DDKT16}.

Any
$\supportset$-sum with $|\supportset| = k$
has an associated underlying $k$-PMD$_N$ distribution:
if $\supportset = \{ a_1,...,a_k\}$, then
writing $\bar{a}$ for the vector $(a_1,\dots,a_k) \in \Z^k$, an $\supportset$-sum $\bS'$ is equivalent to $\bar{a} \cdot \bS$ where $\bS$ is an $k$-PMD$_N$, as making a draw from $\bS'$ is equivalent to making a draw from $\bS$ and outputting its inner product with the vector $\bar{a}.$  However, 
%
this does \emph{not} mean that the \cite{DKT15} learning result for $k$-PMD$_N$ distributions implies a corresponding learning result for $\{ a_1,...,a_k\}$-sums.  If an $\supportset$-sum learning algorithm \emph{were given draws from the underlying $k$-PMD$_N$}, then of course it would be straightforward to run the \cite{DKT15} algorithm, construct a high-accuracy hypothesis distribution $\bH$ over $\R^k$, and output $\bar{a} \cdot \bH$ as the hypothesis distribution for the unknown $\supportset$-sum.  But when learning $\bS'$, the algorithm does {not} receive draws from the underlying $k$-PMD$_N$ $\bS$; instead it only receives draws from $\bar{a} \cdot \bS$.  In fact, as we discuss below, this more limited access causes a crucial \emph{qualitative} difference in learnability, namely an inherent dependence on the $a_i$'s in the necessary sample complexity once $k \geq 4.$  (The challenge to the learner arising from the blending of the
contributions to a $\supportset$-sum is roughly analogous to the challenge that arises in learning a DNF formula;
if each positive example in a DNF learning problem 
were annotated with an identifier for a term that it satisfies, 
learning would be trivial.)

\subsection{The questions we consider and our algorithmic results.}

As detailed above, previous work has extensively studied the learnability of PBDs, $k$-SIIRVs, and $k$-PMDs; however, we believe that the current work is the first to study the learnability of general $\supportset$-sums.  A first simple observation is that since any $\supportset$-sum with $|\supportset|=2$ is a scaled and translated PBD, the results on learning PBDs mentioned above easily imply that the sample complexity of learning any $\{ a_1, a_2 \}$-sum is $\poly(1/\eps)$, independent of the number of summands $N$ and the values $a_1,a_2$.  A second simple observation is that any $\{ a_1,...,a_k \}$-sum 
with $0 \leq a_1 < ... < a_k$
can be learned using $\poly(a_k,1/\eps)$ samples, simply by viewing it as an $a_k$-SIIRV$_N$.  But this bound is in general quite unsatisfying -- indeed, for large $a_k$  it could be even larger than the trivial $O(N^k/\eps^2)$ upper bound that holds since any $\supportset$-sum with $|\supportset|=k$ is supported on a set of size 
$O(N^{k}).$

Once $k \geq 3$ there can be non-trivial additive structure present in the set of values $a_1,\dots,a_k$.  This raises a natural question:   is $k=2$ the only value for which $\supportset$-sums are learnable from a number of samples that is independent of the domain elements $a_1,\dots,a_k$? Perhaps surprisingly, our first main result is an efficient algorithm which gives a negative answer.  We show that for $k=3$, the values of the $a_i$'s don't matter; we do this by giving an efficient learning algorithm (even a semi-agnostic one) for learning $\{ a_1, a_2, a_3 \}$-sums, whose running time and sample complexity are completely independent of $a_1, a_2$ and $a_3$:

\begin{theorem} [Learning $\supportset$-sums with $|\supportset|=3$, known support] \label{known-k-is-three-upper}
There is an algorithm and a positive constant $c$ with the following properties:  The algorithm is given $N$, an accuracy parameter $\eps>0$, distinct values $a_1 < a_2 < a_3 \in \Z_{\geq 0}$, and access to i.i.d. draws from an unknown distribution $\bS^*$ that
has total variation distance at most $c \epsilon$ from
an $\{a_1,a_2,a_3\}$-sum.  The algorithm uses $\poly(1/\eps)$ draws from $\bS^*$, runs in $poly(1/\eps)$ time\footnote{Here and throughout we assume
a unit-cost model for arithmetic operations $+$, $\times$, $\div$.}, and with probability at least $9/10$ outputs a concise representation of a hypothesis distribution $\bH$ such that $\dtv(\bH,\bS^*) \leq \eps.$ 
\end{theorem}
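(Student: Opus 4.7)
The plan is to establish a structural (limit) theorem for $\{0,a,b\}$-sums after two normalizations, from which an efficient $\poly(1/\eps)$-time and -sample learning algorithm follows by routine parameter estimation.

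First, by shifting the samples by $-a_1 N$ and dividing by $d := \gcd(a_2-a_1, a_3-a_1)$ (applying the inverse affine map to the final hypothesis at the end), I may assume $a_1 = 0$ and $\gcd(a_2,a_3)=1$. Writing $a := a_2$ and $b := a_3$, we then have $\bS = a\bZ + b\bW$, where $(\bY,\bZ,\bW)$ is the underlying $3$-PMD satisfying $\bY+\bZ+\bW = N$. Equivalently, $\bS$ is the pushforward of the joint distribution of $(\bZ,\bW)$ under the integer linear map $L(z,w) = az+bw$; since $\gcd(a,b)=1$, the image lattice of $L$ is all of $\Z$.

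The technical heart would be a new structural theorem: every such $\bS$ is $\eps$-close in total variation to a distribution in an explicit family $\mathcal{F}$, each element of which is described by $\poly(1/\eps)$ parameters independent of $a$, $b$, and $N$. The natural dichotomy is: (i) if $\text{Var}(\bS)\le T$ for some threshold $T=\poly(1/\eps)$, then a Bernstein-type concentration forces $\bS$ to place $1-\eps$ of its mass on $\poly(1/\eps)$ consecutive integers, so $\bS$ is well-approximated by a histogram on these values; (ii) if $\text{Var}(\bS)>T$, a two-dimensional local central limit theorem applied to $(\bZ,\bW)$ composed with the projection $L$ shows that $\bS$ is $\eps$-close to a discrete Gaussian on $\Z$ with mean $E[\bS]$ and variance $\text{Var}(\bS)$, possibly modified by a sparse correction on $\poly(1/\eps)$ atoms. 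The condition $\sqrt{\text{Var}(\bS)}\gg 1$ in the second regime ensures that discretizing the continuous projected Gaussian to $\Z$ introduces $O(\eps)$ error regardless of $a,b$.

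Given the structural theorem, the algorithm is routine: draw $\poly(1/\eps)$ samples; compute empirical estimates $\hat\mu\approx E[\bS]$ and $\hat\sigma^2\approx\text{Var}(\bS)$; branch on $\hat\sigma^2$; output either a histogram on the frequent values (low-variance branch) or a discrete Gaussian on $\Z$ with parameters $(\hat\mu,\hat\sigma^2)$ augmented by a sparse correction read off from anomalously frequent sample values (high-variance branch); finally undo the Step~1 normalization to produce the hypothesis for $\bS^*$. Since only $O(\poly(1/\eps))$ numeric parameters are ever manipulated, both sample complexity and runtime are $\poly(1/\eps)$ in the unit-cost arithmetic model. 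The semi-agnostic variant follows by running a Scheff\'e tournament over an $\eps$-cover of $\mathcal{F}$ of size $2^{\poly(1/\eps)}$, which tolerates $O(\eps)$ adversarial corruption of the sample.

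The main obstacle is establishing the structural theorem with constants uniform in $a,b$. The most delicate sub-case is when $\text{Var}(\bZ)\ll\text{Var}(\bW)$, so that $\bS\approx ac+b\bW$ for a near-constant $c$: the natural approximation lives on the sublattice $b\Z$, but we need one on $\Z$. Here I would use the small-variance $a\bZ$ component to smear the distribution across the residues modulo $b$, which requires a careful quantitative mixed-type limit theorem. I expect low-order moment matching (in the spirit of Roos~\cite{Roos:00}, as used for PBDs in \cite{DDS12stoclong}) combined with a local two-dimensional central limit theorem to be the right tools in the high-variance regime, while standard concentration suffices in the small-variance regime.
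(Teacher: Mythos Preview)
Your proposed structural dichotomy is false, and the gap is precisely the sub-case you flag as ``delicate.'' Take $b$ enormous (say $b\sim N^2$), let almost all summands be $\{0,b\}$-valued so that $\Var[\bW]$ is large, and let only $O(1)$ summands be $\{0,a\}$-valued so that $\Var[\bZ]=O(1)$. Then $\Var[\bS]\sim b^2\Var[\bW]$ is huge, so you are in branch~(ii), but $\bS$ is essentially supported on $O(1)$ cosets of $b\Z$ and is nowhere near a discretized Gaussian on $\Z$; no $\poly(1/\eps)$-atom ``sparse correction'' can repair this, since the support already contains $\Theta(\sqrt{\Var[\bW]})$ points. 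Your suggested fix --- using $a\bZ$ to ``smear across the residues modulo $b$'' --- is exactly backwards: when $\Var[\bZ]\ll b^2$, the component $a\bZ$ hits only $O(\sqrt{\Var[\bZ]})\ll b$ residues modulo $b$ and cannot fill in the gaps. (This is Example~\#4 of the introduction.)

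The paper's proof is organized quite differently. After zero-moding, one tracks the weights $c_{q_1}\le c_{q_2}\le c_{q_3}$ of the three gaps $q_j=|a_\ell-a_{\ell'}|$ (which satisfy $q_3=q_1+q_2$) and splits into four cases according to which exceed thresholds $(1/\eps)^{C^\ell}$. In every non-trivial case a Light--Heavy decomposition reduces $\bS$ to a mixture of $\poly(1/\eps)$ shifts of a weighted sum $p\,\bS^{(p)}+q\,\bS^{(q)}$ (or three such terms) of independent high-variance signed PBDs, and the kernel-method Lemma~\ref{lem:siirv-kernel} learns such mixtures once multiplicative factor-$2$ estimates of $\sigma_p,\sigma_q$ are in hand. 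The step that eliminates the $a_3$-dependence --- and which has no analogue in your outline --- is the estimation of the smaller variance: when $\sigma_q\le\eps\, p$, two fresh samples are differenced and reduced \emph{modulo $p$}, so that $q^{-1}(s^{(3)}-s^{(4)})\bmod p$ recovers $|s^{(3)}_q-s^{(4)}_q|$ and gives a constant-factor estimate of $\sigma_q$ with $\Omega(1)$ probability. This modular trick, together with the additive relation $q_3=q_1+q_2$ (used to collapse three weighted PBDs to two when one variance is dominated), is what makes the $k=3$ sample complexity independent of $a_3$; a two-branch variance dichotomy with moment estimation cannot reproduce it.
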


We also give an algorithm for $k \geq 4$.   More
precisely, we show:

\begin{theorem}[Learning $\supportset$-sums, known support]
\label{known-general-k-upper}
For any $k \geq 4$, there is an algorithm and a constant $c>0$ with the following properties:  it is given $N$, an accuracy parameter $\eps > 0$, distinct values $a_1 < \cdots < a_k \in \Z_{\geq 0}$, and access to i.i.d.\ draws from an unknown 
distribution $\bS^*$ that has total variation distance
at most $c \epsilon$ from some
$\{a_1,\dots,a_k\}$-sum. The algorithm runs in time $(1/ \eps) ^{2^{O(k^2)}} \cdot (\log a_k)^{\poly(k)}$\ignore{\rnote{Was ``$2^{\poly(k)} \cdot (1/ \eps) ^{2^{O(k^2)}} \cdot (\log a_k)^{\poly(k)}$'' but looking at it, of course the initial $2^{\poly(k)}$ is superfluous given what follows.}}, uses $(1/ \eps) ^{2^{O(k^2)}} \cdot \log \log a_k$ samples, and with probability at least $9/10$ outputs a concise representation of a hypothesis distribution $\bH$ such that $\dtv(\bH,\bS^*) \leq \eps.$
\end{theorem}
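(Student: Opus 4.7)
The plan is to learn $\bS^*$ via the standard cover-plus-tournament paradigm: construct a cover $\mathcal{C}$ of $\{a_1,\dots,a_k\}$-sums under total variation distance such that every $\supportset$-sum is within $O(\eps)$ of some member of $\mathcal{C}$, and then apply a tournament-style hypothesis selector that needs only $O(\log|\mathcal{C}|/\eps^2)$ samples (and automatically absorbs the semi-agnostic $c\eps$ slack). To match the sample bound $(1/\eps)^{2^{O(k^2)}}\cdot \log\log a_k$, the cover must have size roughly $\exp\!\bigl((1/\eps)^{2^{O(k^2)}}\bigr)\cdot(\log a_k)^{\poly(k)}$, and the quoted runtime arises from enumerating it.

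The main technical step is a new structural/limit theorem for $\supportset$-sums. Shifting by $Na_1$, I may assume $a_1=0$, so every $\bX_i$ is supported on a subset of $\{0,a_2,\dots,a_k\}$. I would split based on $\mathrm{Var}(\bS)$. In the \emph{small-variance} regime $\bS$ is essentially supported on $\poly(k/\eps)$ consecutive multiples of the gcd of $\{a_2,\dots,a_k\}$, so one can explicitly tabulate the probabilities on the support to precision $\eps/\poly(k)$; this contributes a sub-cover of size $(1/\eps)^{\poly(k)}$, independent of $a_k$. In the \emph{large-variance} regime I aim to show that $\bS$ is $\eps$-close to a ``canonical'' distribution determined by (i) a sub-lattice $L\subseteq\Z$ (capturing the additive structure the summands inherit from $\supportset$), (ii) a bounded number of coset shifts, and (iii) on each coset, a discrete Gaussian whose mean and variance are discretized to $(1/\eps)^{2^{O(k^2)}}$ precision.

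The heart of the argument, and the source of the $\log\log a_k$ sample dependence, is controlling how many distinct canonical forms can arise as one sweeps over all $\{a_1,\dots,a_k\}$-sums. Concretely I want to show that only $(\log a_k)^{\poly(k)}$ essentially different choices of lattice and coset data are relevant; all others yield distributions already covered by earlier ones up to $\eps$ in total variation. I expect this to follow from combining a quantitative local central limit theorem for integer-valued sums on a common lattice (in the spirit of \cite{CGS11,DDOST13,DKT15,DKS16stoc} but sensitive to the arithmetic of $\supportset$, not merely to the variance) with a counting argument for the possible ``Diophantine types'' of $\{a_1,\dots,a_k\}$ at the $O(\log a_k)$ dyadic scales up to $a_k$.

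Given this cover theorem, the algorithm simply enumerates all allowed parameter tuples---lattice, coset weights and shifts, and either a discrete sparse profile or a discretized Gaussian pair per coset---and runs the semi-agnostic hypothesis selector of \cite{DDS12stoclong}. The main obstacle is proving the structural theorem with a $\log a_k$ (rather than $\log N$) scale dependence: a black-box appeal to the $k$-PMD covers of \cite{DKS16stoc} would yield only $\log N$, and doing better requires exploiting that all summands share the common support $\supportset$ of size $k$, through a CLT whose error depends on the additive/arithmetic structure of $\supportset$ rather than on $N$.
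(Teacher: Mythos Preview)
Your cover-plus-tournament skeleton matches the paper's overall shape, but the structural theorem you sketch is not the one that works, and the central technical idea is missing. The paper does not split on $\Var(\bS)$; after zero-moding each $\bX_i$ (which replaces $\{a_1,\dots,a_k\}$ by $K=O(k^2)$ signed differences $\pm q_a$) it defines weights $c_{q_a}=\sum_i\Pr[\bX'_i=\pm q_a]$ and runs a Light--Heavy decomposition indexed by which $c_{q_a}$ exceed thresholds $(1/\eps)^{2^a}$. The outcome (Corollary~\ref{corr:light-heavy1} via Lemmas~\ref{lem:light-heavy} and~\ref{lem:close-to-pure-signed-sicsirv}) is that $\bS'$ is $O(\eps)$-close to a mixture of at most $(1/\eps)^{2^{O(K)}}$ shifts of a single weighted sum $\sum_\alpha q_\alpha\bW_\alpha$ of $O(K^2)$ independent signed PBDs, each of variance at least $\poly(K/\eps)$. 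This is not a lattice/coset/Gaussian description: the $q_\alpha$ are fixed integers determined by the $a_i$, but the $\Var[\bW_\alpha]$ are completely unconstrained from above and can be as large as $N$.

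The step that rescues the $\log a_k$ dependence---and that your ``Diophantine types at $O(\log a_k)$ dyadic scales'' cannot substitute for---is the paper's new limit theorem, Lemma~\ref{l:mix}: whenever a subset of the $\bW_\alpha$ have variance large compared to the dominant coefficient $q_{\alpha^*}$, the sum $\sum_{\alpha\in\text{subset}} q_\alpha\bW_\alpha$ is $O(\eps)$-close to a \emph{single} PBD scaled by $\gcd\{q_\alpha\}$. The proof combines the shift-invariance of each summand at its own scale $q_\alpha$ with B\'ezout's identity to manufacture shift-invariance at scale $1$, and then couples to a translated Poisson. After this coalescing, every surviving $\bW_\alpha$ has variance at most $O(a_k/\eps)$, so a factor-$2$ multiplicative grid over the $\poly(k)$ unknown variances has only $(\log a_k)^{\poly(k)}$ points. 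Without this lemma the variances range up to $N$ and you are stuck with the $\log N$ dependence you yourself flagged; no CLT ``sensitive to the arithmetic of $\supportset$'' fixes this on its own, because the arithmetic enters through the B\'ezout/shift-invariance argument, not through any Gaussian approximation. For each variance guess the paper then runs kernel learning (Lemma~\ref{lem:siirv-kernel}), which is robust to the unknown mixture-of-shifts coming from $\bS_{\mathrm{light}}$, rather than enumerating explicit Gaussian cover elements---but that difference is secondary to the missing limit theorem.
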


In contrast with $k=3$, our algorithm for general $k\geq 4$ has a sample complexity which depends (albeit doubly logarithmically) on $a_k$. This is a doubly exponential improvement over the naive $\poly(a_k)$ bound which follows from previous $a_k$-SIIRV learning algorithms \cite{DDOST13,DKS16coltksiirv}.

\medskip
\noindent
{\bf Secondary algorithmic results:  Learning with unknown support.}
We also give algorithms for a more challenging \emph{unknown-support} variant of the learning problem. In this variant the values $a_1,\dots,a_k$ are not provided to the learning algorithm, but instead only an upper bound $a_{\max} \geq a_k$ is given.  Interestingly, it turns out that the unknown-support problem is significantly different from the known-support problem: as explained below, in the unknown-support variant the dependence on $a_{\max}$ kicks in at a smaller value of $k$ than in the known-support variant, and this dependence is exponentially more severe than in the known-support variant.  

Using well-known results from hypothesis selection, it is straightforward to show that upper bounds for the known-support case yield upper bounds in the unknown-support case, essentially at the cost of an additional additive $O(k \log \amax)/\eps^2$ 
{term}
in the sample complexity.  This immediately yields the following:

\begin{theorem} [Learning with unknown support of size $k$]  \label{unknown-general-k-upper}
For any $k \geq 3$, there is an algorithm and a positive constant $c$ with the following properties:  The algorithm is given $N$, the value $k$, an accuracy   parameter $\eps>0$, an upper bound $\amax \in \Z_{\geq 0}$, and access to i.i.d. draws from an unknown 
distribution $\bS^*$ that has total variation distance 
at most $c \epsilon$ from an
$\supportset$-sum for
$\supportset = \{a_1,\dots,a_k\} \subset \Z_{\geq 0}$ where $\max_i a_i \leq \amax.$  The algorithm uses $O(k \log \amax)/\eps^2 +(1/ \eps) ^{2^{O(k^2)}} \cdot \log \log \amax$ draws from $\bS^*$, runs in $\poly((\amax)^k) \cdot$ $ (1/ \eps) ^{2^{O(k^2)}} \cdot (\log \amax)^{\poly(k)}$ time, and with probability at least $9/10$ outputs a concise representation of a hypothesis distribution $\bH$ such that $\dtv(\bH,\bS^*) \leq \eps.$
\end{theorem}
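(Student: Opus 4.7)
The plan is to reduce the unknown-support problem to the known-support case handled by Theorem \ref{known-general-k-upper}, paying for the search over possible supports via a standard hypothesis-selection tournament.

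First, enumerate all $M = \binom{\amax+1}{k} = O(\amax^k)$ candidate $k$-element support sets $\supportset' \subseteq \{0,1,\dots,\amax\}$; one of these equals the true unknown support $\supportset$ of the $\supportset$-sum to which $\bS^*$ is $c\eps$-close. Draw a single pool of $s_1 = (1/\eps)^{2^{O(k^2)}} \cdot \log\log \amax$ samples from $\bS^*$, and for each candidate $\supportset'$ invoke the algorithm of Theorem \ref{known-general-k-upper} with $\supportset'$ as the claimed support, using the pool as the sample source, to produce a hypothesis $\bH_{\supportset'}$. Because the known-support algorithm is itself semi-agnostic, a small constant-factor boost in $s_1$ (and in the semi-agnostic constant) suffices to guarantee that, with probability at least $19/20$, the hypothesis $\bH_{\supportset}$ corresponding to the correct support satisfies $\dtv(\bH_{\supportset},\bS^*) \le c'\eps$ for an absolute constant $c'$. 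Note that the samples may be shared across all $M$ runs, since correctness of the overall procedure only requires a good hypothesis to exist \emph{somewhere} in the pool; we are not attempting to have every invocation succeed simultaneously.

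Second, apply a standard Scheff\'e-tournament/maximum-selection hypothesis-selection procedure (as used in \cite{DDS12stoclong,DDOST13}) to the $M$ candidate hypotheses $\{\bH_{\supportset'}\}$, using $s_2 = O(\log M / \eps^2) = O(k \log \amax / \eps^2)$ fresh samples from $\bS^*$. This procedure returns, with probability at least $19/20$, a hypothesis $\bH$ whose total variation distance from $\bS^*$ is within a constant factor of the distance of the best hypothesis in the pool. Folding this constant and $c'$ into the overall semi-agnostic constant $c$ appearing in the theorem, we get $\dtv(\bH,\bS^*)\le \eps$, and a union bound yields overall success probability $\ge 9/10$.

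The total sample complexity is $s_1 + s_2 = (1/\eps)^{2^{O(k^2)}} \cdot \log \log \amax + O(k \log \amax / \eps^2)$, as claimed. The running time is dominated by (i) performing $M = O(\amax^k)$ invocations of the Theorem \ref{known-general-k-upper} algorithm and (ii) carrying out the $O(M^2)$ pairwise Scheff\'e comparisons of the tournament, each of which is cheap; this yields the stated $\poly((\amax)^k)\cdot(1/\eps)^{2^{O(k^2)}}\cdot(\log \amax)^{\poly(k)}$ time bound. There is no real technical obstacle here: the argument is a direct composition of an off-the-shelf hypothesis-selection subroutine with the known-support algorithm, and the only point that requires (routine) care is tracking the semi-agnostic constant through both steps so that the final output satisfies the $\eps$-accuracy guarantee rather than $O(\eps)$-accuracy.
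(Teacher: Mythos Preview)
Your proposal is correct and matches the paper's own proof essentially line for line: the paper also enumerates all $O(\amax^k)$ candidate supports, runs the known-support algorithm of Theorem~\ref{known-general-k-upper} on each using a shared sample pool, and then applies the hypothesis-selection procedure (Proposition~\ref{prop:log-cover-size}) to pick a good hypothesis, with exactly the sample and time bounds you derive. The only cosmetic difference is that the paper packages the reduction as a standalone observation (Observation~\ref{obs:reduction}) and invokes its in-house \textsf{Select} procedure rather than citing the Scheff\'e tournament directly.
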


Recall that a $\{a_1,a_2\}$-sum is simply a rescaled and translated PBD$_N$ distribution.  Using known results for learning PBDs, it is not hard to show that the $k=2$ case is easy even with unknown support:

\begin{theorem} [Learning with unknown support of size $2$] \label{unknown-k-is-two-upper}
There is an algorithm and a positive constant $c$ with the following properties:  The algorithm is given $N$, an accuracy  parameter $\eps> 0$, an upper bound $\amax \in \Z_{+}$, and access to i.i.d. draws from an unknown 
distribution $\bS^*$ that has total variation distance
at most $c \epsilon$ from an 
$\{a_1,a_2\}$-sum where $0 \leq a_1 < a_2 \leq \amax.$  The algorithm uses $\poly(1/\eps)$ draws from $\bS^*$, runs in $\poly(1/\eps)$ time, and with probability at least $9/10$ outputs a concise representation of a hypothesis distribution $\bH$ such that $\dtv(\bH,\bS^*) \leq \eps.$
\end{theorem}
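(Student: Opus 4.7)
The proof reduces the problem to learning Poisson Binomial Distributions, for which $\poly(1/\eps)$-sample, $\poly(1/\eps)$-time algorithms are known \cite{DDS12stoc,DKS16coltpbd}. The key structural observation is that every $\{a_1,a_2\}$-sum $\bS$ with $a_1 < a_2$ can be written as $\bS = N a_1 + b \cdot \bP$, where $b := a_2 - a_1$ and $\bP \sim \mathrm{PBD}_N$. Thus it suffices to recover the scale $b$ from samples (any residual shift is absorbed into the PBD hypothesis) and then feed rescaled samples into an off-the-shelf PBD learner.

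My plan is as follows. First, perform a point-mass test: using $O(1/\eps^2)$ samples, check whether some value has empirical probability at least $1 - \eps/2$; if so, output that point mass as the hypothesis. Otherwise, draw $n = \poly(1/\eps)$ samples $s_1,\ldots,s_n$, let $s_{\min} := \min_i s_i$, and compute $\hat b := \gcd\{s_i - s_{\min} : 1 \le i \le n\}$. Since each $s_i - s_{\min}$ equals $b \cdot (p_i - p_{\min}^{\mathrm{obs}})$ for the underlying PBD realizations, $b$ always divides $\hat b$; the main thing to verify is that $\hat b = b$ with high probability, i.e.\ that the observed $p_i - p_{\min}^{\mathrm{obs}}$ have GCD exactly $1$.

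The key lemma to establish is: for any $\mathrm{PBD}_N$ $\bP$ that is not $\eps$-close to a point mass, $n = \poly(1/\eps)$ iid samples from $\bP$ have pairwise-difference GCD equal to $1$ with probability at least $9/10$. The idea is that the support of any PBD is a contiguous interval $\{c,c+1,\ldots,c+k\}$, and in the non-point-mass regime, log-concavity and unimodality of the PBD PMF force at least two consecutive support values to carry probability bounded below by a function of $\eps$ alone. Consequently, for every prime $p$, the quantity $\max_r \Pr[\bP \equiv r \pmod{p}]$ is bounded away from $1$ by some $\delta(\eps) > 0$, so the probability that all $n$ samples lie in a single residue class mod $p$ is at most $(1-\delta(\eps))^n$; a union bound over primes $p$ (with the individual terms decaying geometrically in $n$) then gives the claim. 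Establishing a clean, PBD-uniform $\delta(\eps)$ using only log-concavity and contiguity is the main technical step of the proof.

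Given $\hat b = b$, I rescale via $\hat s_i := (s_i - s_{\min})/\hat b$; these are (approximately) iid samples from a PBD on $\{0,\ldots,k\}$, which I feed to a $\poly(1/\eps)$-sample PBD learner to obtain a hypothesis $\widehat{H}$, and return $s_{\min} + \hat b \cdot \widehat{H}$ as the final hypothesis for $\bS^*$. The point-mass branch and the general branch are combined via standard two-hypothesis selection at $O(1/\eps^2)$ additional sample cost, and the $c\eps$ adversarial closeness in the semi-agnostic setting is absorbed by taking the absolute constant $c$ sufficiently small throughout.
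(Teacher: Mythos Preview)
Your overall strategy---recover the scale $b=a_2-a_1$ as the gcd of sample differences, then feed rescaled samples to a PBD learner---is exactly the paper's. The gap is in the union-bound step of your gcd argument. Your log-concavity observation does show that if $\bP$ is not $\eps$-close to a point mass then the mode and a neighbor each carry mass $\Omega(\eps)$, hence for every prime $p$ one has $\max_r\Pr[\bP\equiv r\pmod p]\le 1-\Omega(\eps)$. But this bound is \emph{uniform in $p$}: each prime contributes at most $(1-\Omega(\eps))^{n-1}$ to the union bound, and you must sum over all primes up to the range of $\bP$, which can be as large as $N$ (or $\Theta(\sigma)$ after restricting to the bulk, where $\sigma$ can be $\Theta(\sqrt N)$). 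That forces $n=\Omega((\log N)/\eps)$, contradicting the claimed $N$-independent sample complexity. Your phrase ``individual terms decaying geometrically in $n$'' does not help: the terms decay in $n$ but not in $p$, and there are too many primes.

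The paper closes this gap via a different dichotomy, using the PBD cover theorem (Theorem~\ref{thm: sparse cover theorem}): either $\bP$ is $\eps$-essentially supported on an interval of $O(1/\eps^3)$ consecutive integers (sparse form), in which case $\bS$ itself has small essential support and is learned by brute force (Fact~\ref{fact:learn-sparse-ess-support}) with no gcd step at all; or $\bP$ is $\eps$-close to a shifted binomial with variance $\Omega(1/\eps^2)$ (heavy form). In the heavy case the paper argues that $\Omega(1/\sqrt\eps)$ of the samples are, after coupling, uniform over an interval $I$ of length $\Omega(1/\eps)$; for a uniform draw from $I$ one has $\Pr[p\mid(\text{draw}-b)]\le 1.02/p$, yielding $\Pr[p\mid\text{all $\ell$ differences}]\le(1.02/p)^\ell$. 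This bound \emph{does} decay in $p$, so $\sum_{p\text{ prime}}(1.02/p)^\ell<1$ for $\ell\ge 2$ and the union bound over all primes succeeds with $\ell=O(1/\sqrt\eps)$ samples, independent of $N$. Replacing your point-mass test by the sparse-form branch and invoking this Gaussian-approximation estimate in the remaining case would repair your argument.
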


\subsection{Our lower bounds.}
We establish sample complexity lower bounds for learning $\supportset$-sums that essentially match the above algorithmic results.

\medskip
\noindent {\bf Known support.}  Our first lower bound deals with the
known support setting.  
We give an $\Omega(\log \log a_4)$-sample lower bound for the
problem of learning an $\{ a_1,...,a_4\}$-sum for  $0 \leq a_1 <
a_2 < a_3 <a_4$.
This matches the dependence on $a_k$ of our
 $\poly(1/\eps) \cdot \log \log a_k$ upper bound.
More precisely, we show:

\begin{theorem}[Lower Bound for Learning $\{a_1,...,a_4\}$-sums, known support]
\label{known-k-is-four-lower}
Let $A$ be any algorithm with the following properties:  algorithm $A$ is given $N$, an accuracy parameter $\eps > 0$, distinct values $0 \leq a_1 < a_2 < a_3 < a_4 \in \Z$, 
and access to i.i.d.\ draws from an unknown $\{ a_1,...,a_4\}$-sum $\bS^*$; and with probability at least $9/10$ algorithm $A$ outputs a hypothesis distribution $\tilde{\bS}$ such that $\dtv(\tilde{\bS},\bS^*) \leq \eps.$
Then there are infinitely many quadruples $(a_1,a_2,a_3,a_4)$ such that for sufficiently large $N$, $A$ must use $\Omega(\log \log a_4)$ samples even when run with $\eps$
set to a (suitably small) positive absolute constant.
\end{theorem}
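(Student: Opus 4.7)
The plan is a packing-plus-Fano lower bound. For each of the infinitely many quadruples we intend to construct, we want a family of $K = \Theta(\log a_4)$ distinct $\{a_1,\ldots,a_4\}$-sums $\bS_1,\ldots,\bS_K$ with two properties: (i) every pair satisfies $\dtv(\bS_j,\bS_{j'}) \ge 2\eps_0$ for an absolute constant $\eps_0$, so that any $\eps_0$-accurate hypothesis can match at most one $\bS_j$; and (ii) the mutual information $I(J;\bS_J)$ between a uniformly random index $J\in[K]$ and a single draw from $\bS_J$ is $O(1)$. Fano's inequality then yields
\[
m \;\ge\; \Omega\!\left(\frac{\log K}{I(J;\bS_J)}\right) \;=\; \Omega(\log\log a_4),
\]
which is the desired bound once one observes that learning to $\eps_0$-accuracy (with probability $9/10$) forces correct identification of $J$.

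The quadruples are obtained from Diophantine approximation. Fix $a_1=0$, $a_2$, $a_3$ modest (for concreteness, $a_2=1$ and $a_3$ a fixed prime). The target is to produce infinitely many large $a_4$ for which the integer lattice generated by $\{a_2,a_3,a_4\}$ contains $\Theta(\log a_4)$ distinct short vectors, i.e.\ triples $(p_j,q_j,r_j) \in \Z^3$ with $\max\{|p_j|,|q_j|,|r_j|\} \le \poly(\log a_4)$ and $|p_j a_2 + q_j a_3 - r_j a_4| = O(1)$. Such $a_4$ arise from Dirichlet-style simultaneous approximation applied to the denominators of continued-fraction convergents of $a_3/a_2$ and related ratios, together with the equidistribution results from analytic number theory mentioned in the introduction; the doubly-exponential growth of these denominators is what furnishes the $\log\log a_4$ scaling.

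Given the quadruple and its short vectors, I would build the family as follows. Take a base single-coin distribution $\mathcal{D}_0$ on $\{a_1,\ldots,a_4\}$ with all four probabilities $\Theta(1)$; for each short vector $(p_j,q_j,r_j)$ define $\mathcal{D}_j$ by perturbing $\mathcal{D}_0$ in the direction dictated by that vector (informally, trading $a_4$-mass against $(a_1,a_2,a_3)$-mass in the ratios $r_j:p_j:q_j$, adjusted at $a_1$ to remain a probability distribution); and set $\bS_j$ to be the $N$-fold i.i.d.\ sum of $\mathcal{D}_j$. Because $|p_j a_2 + q_j a_3 - r_j a_4| = O(1)$, the first-order mean statistics of all $\mathcal{D}_j$'s coincide up to $O(1)$, and a local-central-limit-theorem estimate of the kind used in our upper-bound analyses keeps every pointwise density ratio $\bS_j(s)/\bS_{j'}(s)$ bounded by an absolute constant, yielding $I(J;\bS_J) = O(1)$. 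On the other hand, different short vectors perturb $\mathcal{D}_0$ in linearly independent directions, so the $N$-fold convolutions differ macroscopically in variance and higher moments, and a matching local-CLT / anti-concentration argument translates this into $\dtv(\bS_j,\bS_{j'}) \ge 2\eps_0$ for every pair.

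The main obstacle is the simultaneous calibration of these two competing requirements: all $\binom{K}{2}$ pointwise density ratios between convolved sums must be bounded by an absolute constant (so that information per sample remains $O(1)$) while every pair of $N$-fold convolutions must remain macroscopically separated in total variation (so the pairwise TV lower bound survives). Making both hold for a family of size $\Theta(\log a_4)$ requires precisely the density of short integer relations furnished by the continued-fraction / equidistribution construction above, which is where number theory and probability meet to give the matching $\Omega(\log \log a_4)$ lower bound.
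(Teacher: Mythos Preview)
Your high-level plan (build a $\Theta(\log a_4)$-size family that is pairwise $\Omega(1)$-separated in total variation yet pairwise $O(1)$ in KL, then apply Fano) matches the paper's. The construction, however, does not, and as written has a genuine gap.

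You propose that each hard instance be an $N$-fold i.i.d.\ sum of a single coin $\mathcal{D}_j$ on $\{a_1,a_2,a_3,a_4\}$ with all four probabilities $\Theta(1)$. But such sums are exactly the ``easy'' case of the paper's structural theory: with $a_2=1$ (so $\gcd$ of the gaps is $1$) and all weights large, each $\bS_j$ is $O(1/\sqrt{N})$-close to a discretized Gaussian determined by the mean and variance of $\mathcal{D}_j$. Two discretized Gaussians are either TV-close (if their means/variances nearly match) or have unbounded pointwise density ratios (if they do not), so you cannot simultaneously keep all pairwise KL's $O(1)$ and all pairwise TV's $\Omega(1)$ across a growing family. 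Your appeal to ``linearly independent perturbation directions'' and ``higher moments'' does not survive the CLT collapse: once the sum is essentially Gaussian, only two parameters remain. The short-integer-relation/Dirichlet story is also not doing the work you want; with $a_2,a_3$ fixed it does not produce the multi-scale structure needed.

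The paper's construction is quite different. It first reduces (via its Theorem~\ref{thm:reduction}) learning $\{0,1,p,q\}$-sums to learning $\bU+p\bV \bmod q$ for PBDs $\bU,\bV$. It then takes $p=f_L$, $q=f_{L+1}$ consecutive Fibonacci numbers and, for each $t$ in a range of size $\log^{\Theta(1)} q$, sets $\bV_t$ to be a shifted binomial of width $\Theta(f_t)$ and $\bU_t$ of width $\Theta(q/f_t)$. The resulting $\bS_t$ is a ``striped'' distribution on $\Z_q$ with $\Theta(f_t)$ stripes of width $\Theta(q/f_t)$; continued-fraction equidistribution (Lemma~\ref{l:space}) guarantees the stripes are evenly spaced at every scale $t$. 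This striping is what makes all pointwise probabilities lie in $[\Theta(1/q),\Theta(1/q)]$ (hence KL $=O(1)$, Lemma~\ref{lem:KLbounds}) while distributions at different stripe-scales remain $\Omega(1)$-far in TV (Lemma~\ref{l:tv.lower}). None of this multi-scale, non-Gaussian structure is present in an i.i.d.\ sum of a single $\Theta(1)$-probability coin, which is the core missing idea in your proposal.
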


This lower bound holds even though the target is exactly an 
$\{ a_1,...,a_4\}$-sum
(i.e. it holds even in the easier non-agnostic setting).

Since Theorem~\ref{known-k-is-three-upper} gives a $\poly(1/\eps)$ sample and runtime algorithm independent of the size of the $a_i$'s for $k=3$, the lower bound of Theorem~\ref{known-k-is-four-lower} establishes a phase transition between $k=3$ and $k=4$ for the sample complexity of learning $\supportset$-sums:  when $k=3$ the sample complexity is always independent of the actual set $\{a_1, a_2, a_3\}$, but for $k=4$ it can grow as $\Omega(\log \log a_4)$ (but no faster).

\medskip
\noindent {\bf Unknown support.}  Our second lower bound deal{s} with
the unknown support setting.  
We give an
$\Omega(\log \amax)$-sample lower bound for the problem of learning an
$\{a_1,a_2,a_3\}$-sum with unknown support $0 \leq a_1 < a_2 < a_3 \leq
\amax,$ matching the dependence on $\amax$ of our algorithm from Theorem~\ref{unknown-general-k-upper}.
More precisely, we prove:

\begin{theorem} [Lower Bound for Learning $\{ a_1,a_2,a_3\}$-sums, unknown support] \label{unknown-k-is-three-lower}
Let $A$ be any algorithm with the following properties:  algorithm $A$ is given $N$, an accuracy parameter $\eps > 0$, a value $0 < \amax \in \Z$, and access to i.i.d.\ draws from an unknown $\{ a_1,a_2,a_3\}$-sum $\bS^*$ where $0 \leq a_1 < a_2 < a_3 \leq \amax;$ and  $A$ outputs a hypothesis distribution $\tilde{\bS}$ which with probability at least $9/10$ satisfies $\dtv(\tilde{\bS},\bS^*) \leq \eps$.\ignore{(where the expectation is over the random samples drawn from $\bS^*$ and any internal randomness of $A$).} Then for sufficiently large $N$, $A$ must use $\Omega(\log \amax)$ samples even when run with $\eps$ set to a (suitably small) positive absolute constant.
\end{theorem}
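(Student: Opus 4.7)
\begin{proofof}{Theorem~\ref{unknown-k-is-three-lower} (Proof Plan)}
The plan is to establish the lower bound via a multi-hypothesis Fano-style argument. We will exhibit a family $\{P_s\}_{s \in S}$ of distributions indexed by $s$ in some set $S \subseteq [1, \amax]$ of cardinality $|S| = \amax^{\Omega(1)}$, together with the following three properties: (i) each $P_s$ is realizable as an $\{0,1,s\}$-sum (so that each is an $\{a_1,a_2,a_3\}$-sum with $a_3 \le \amax$); (ii) the distributions are pairwise separated in total variation, i.e.\ $\dtv(P_s, P_{s'}) \ge 2\eps_0$ for all distinct $s,s' \in S$, where $\eps_0$ is an absolute constant; and (iii) the pairwise Kullback--Leibler divergences $KL(P_s \,\|\, P_{s'})$ are uniformly bounded by a constant. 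Given (ii) and (iii), Fano's inequality will yield the desired sample complexity lower bound of $\Omega(\log |S|) = \Omega(\log \amax)$, since any algorithm that learns to accuracy $\eps_0$ must reduce to identifying which $P_s$ is the target (by (ii) and the triangle inequality).

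The concrete construction I would try first uses a ``signal plus noise'' template: take $P_s$ to be the distribution of $\bS_s = \bY + s \cdot \bI$, where $\bI$ is a Bernoulli random variable supported on $\{0,s\}$ (with $\Pr[\bI = s]$ equal to a carefully chosen constant $\alpha$) and $\bY$ is a PBD (a sum of Bernoulli summands supported on $\{0,1\}$) chosen to have large variance. Realizing $\bI$ as one summand and $\bY$ as a further set of Bernoulli summands exhibits $P_s$ as an $\{0,1,s\}$-sum. The set $S$ would be an arithmetic progression in $[1, \amax]$ whose spacing is chosen so that pairwise TV is bounded below by a constant while $\bY$ is wide enough that its translates $\bY+s$ and $\bY+s'$ overlap substantially, keeping likelihood ratios under control.

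The main obstacle is achieving property (iii) uniformly across all $\amax^{\Omega(1)}$ pairs in $S$. For naive shifted PBDs, the KL divergence between $P_s$ and $P_{s'}$ grows quadratically with $|s-s'|$, which restricts $|S|$ to be independent of $\amax$ and would yield only a constant (or at best $\log(1/\eps)$) lower bound from Fano. Overcoming this requires a construction in which the ``noise'' component $\bY$ has support spanning $[0, \amax]$ to ensure bounded likelihood ratios between far-apart hypotheses, while still leaving enough room to encode $\amax^{\Omega(1)}$ well-separated shifts. Concretely, I would take $N$ to be a sufficiently large polynomial in $\amax$ and pick the biases in the Bernoulli summands forming $\bY$ to produce a distribution whose ratio $\Pr[\bY=x]/\Pr[\bY=x+\delta]$ stays close to $1$ for all relevant $x$ and $\delta \le \amax$.

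If the direct Fano-style construction above proves too tight, an alternative route is Assouad's lemma applied to a hypercube encoding of $s$, with neighbors differing by a small increment; this reduces the problem to controlling a product of KL divergences along a chain of adjacent hypotheses. In either case, the crux of the proof is the careful tuning of $N$, the Bernoulli biases, and the arithmetic progression $S$ so that the three competing constraints---family size, TV separation, and bounded KL---can be satisfied simultaneously, at which point the $\Omega(\log \amax)$ lower bound follows immediately from the chosen information-theoretic inequality.
\end{proofof}
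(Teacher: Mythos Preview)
Your high-level plan (Fano with a large family, constant TV separation, bounded KL) matches the paper's, but your concrete construction cannot be made to work, and the obstacle you flag is not a matter of ``careful tuning.'' With $P_s = (1-\alpha)\bY + \alpha(\bY+s)$ you face a hard tradeoff: to keep likelihood ratios $\Pr[P_s=x]/\Pr[P_{s'}=x]$ bounded uniformly over $s,s'$ ranging through an interval of length $\Theta(\amax)$, you need $\sigma(\bY) = \Omega(\amax)$; but then $\dtv(P_s,P_{s'}) = \alpha \cdot \dtv(\bY+s,\bY+s') = O(\alpha|s-s'|/\sigma(\bY)) = O(|s-s'|/\amax)$, which forces $|S| = O(1)$ if you want constant TV separation. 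Shrinking $\sigma(\bY)$ restores separation but makes the tails of $\bY+s$ and $\bY+s'$ essentially disjoint for far-apart $s,s'$, blowing up KL. Assouad does not rescue this: it faces the same per-coordinate tension between separation and divergence. No choice of arithmetic progression or Bernoulli biases escapes this, because an additive-shift family of mixtures is simply too rigid to give $\amax^{\Omega(1)}$ hypotheses with both properties simultaneously.

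The paper's proof introduces two ideas absent from your plan. First, it reduces (via Theorem~\ref{thm:reduction}) the problem to learning $(a_2 \bS_2 \bmod a_3)$ for a fixed prime $a_3 \approx \amax$ and unknown $a_2$; working modulo $a_3$ makes every point of $\{0,\dots,a_3-1\}$ receive probability $\Theta(1/a_3)$ under every hypothesis, so KL is automatically $O(1)$ for all pairs. Second, the hard family is \emph{multiplicative}, not additive: $\bS'_r = (r\cdot \bS_2 \bmod a_3)$ for varying $r$, where $\bS_2$ is a single fixed PBD of width $\Theta(a_3/K)$. Showing that $a_3^{\Omega(1)}$ of these are pairwise $\Omega(1)$-far in TV requires a number-theoretic equidistribution result of Shparlinski controlling how the sets $r\cdot\{1,\dots,X\} \bmod a_3$ overlap as $r$ varies. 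Neither the modular reduction nor the multiplicative structure (nor the equidistribution input) appears in your plan, and all three are essential to break the tradeoff you correctly identified.
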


Taken together with our algorithm from Theorem~\ref{unknown-k-is-two-upper} for the case $k=2$,
Theorem~\ref{unknown-k-is-three-lower} establishes another phase transition, but now between $k=2$ and $k=3$, for
the sample complexity of learning $\supportset$-sums when $\supportset$ is
unknown.  When $|\supportset|=2$ the sample complexity is always independent of the
actual set, but for $|\supportset|=3$ 
and $0 \leq a_1 < ... < a_3$
it can grow as $\Omega(\log a_3)$ (but no faster).

\medskip

In summary, taken together the algorithms and lower bounds of this paper essentially settle the question of how the sample complexity of learning sums of independent integer random variables with sparse
collective support scales with the elements in the collective support, both in the known-support and unknown-support settings. 

\medskip

\noindent {\bf Discussion.}   As described above, for an arbitrary set $\{a_1, \ldots, a_k \}$, the sample complexity undergoes a significant phase transition between $k=3$ and $k=4$ in the known-support case and between 2 and 3 in the unknown-support case.  In each setting the phase transition is a result of ``number-theoretic phenomena" (we explain this more later) which can only occur for the larger number and cannot occur for the smaller number of support elements.\ignore{Indeed, our lower bound proofs are based on delicate arithmetic phenomena such as equidistribution and properties of continued fractions. }
We
find it somewhat surprising that
the sample complexities of these learning problems are determined by number-theoretic properties of the support sets.\ignore{, and view the presence of number-theoretic phenomena as suggesting that there may not be a characterization of the sample complexity of learning distributions which is as ``clean'' as VC-dimension.}

\medskip
\noindent {\bf Organization.} In the next section we give some of the key ideas that underlie our algorithms.  See 
Section~\ref{s:ltechniques}
for an overview of the ideas behind our lower bounds.  Full proofs are given starting in Section~\ref{sec:prelim}.

%

\section{Techniques for our algorithms}
\label{s:utechniques}

In this section we give an intuitive explanation of some of the ideas
that underlie our algorithms and their analysis.  
While our learning results are for
the semi-agnostic model, for
  simplicity's sake, we focus on the case in which 
  the target distribution $\bS$ is actually 
  an $\supportset$-sum.

A first question, which must be addressed before studying the algorithmic (running time) complexity of learning $\supportset$-sums, is to understand the sample complexity of learning them.\ignore{ as mentioned earlier, for learning probability distributions no simple characterization of sample complexity akin to VC dimension is known.}\ignore{ While in supervised learning of Boolean functions the VC dimension is well known to  tightly characterize the sample complexity of PAC learning~\cite{VapnikChervonenkis:71,KearnsVazirani:94}, for unsupervised learning of probability distributions no such simple characterization is known.} In fact, in a number of recent works on learning various kinds of  ``structured" distributions, just understanding the sample complexity of the learning problem is a major goal that requires significant work~\cite{DDS12stoc, WY12, DDOST13, DDS14, DKT15}. 

In many of the above-mentioned papers, an upper bound on both sample complexity and algorithmic complexity is obtained via a structural characterization of the distributions to be learned; our work follows a similar conceptual paradigm.  To give a sense of the kind of structural characterization that can be helpful for learning, we recall the characterization of SIIRV$_N$ distributions that was obtained in~\cite{DDOST13} (which is the one most closely related to our work). The main result of  \cite{DDOST13} shows that if $\bS$ is any $k$-SIIRV$_N$ distribution, then at least one of the following holds:
\begin{enumerate}
\item $\bS$ is $\epsilon$-close to being supported on $\poly(k/\epsilon)$ many integers;
\item $\bS$ is $\epsilon$-close to a distribution $c \cdot \bZ  + \bY$, where $1 \le c \le k-1$, $\bZ$ is a discretized Gaussian, $\bY$ is a distribution supported on $\{0, \ldots, c-1\}$, and $\bY,\bZ$ are mutually independent.
\end{enumerate}
In other words, \cite{DDOST13} shows that a $k$-SIIRV$_N$ distribution is either close to sparse (supported on $\poly(k/\epsilon)$ integers), or close to a $c$-scaled discretized Gaussian convolved with a sparse component supported on $\{0,\dots,c-1\}$. This leads naturally to an efficient learning algorithm that handles Case (1) above ``by brute-force" and handles Case (2) by learning $\bY$ and $\bZ$ separately (handling $\bY$ ``by brute force'' and handling $\bZ$ by estimating its mean and variance).

In a similar spirit, in this work we seek a more general characterization of $\supportset$-sums.  It turns out, though, that even 
when $|\supportset| = 3$, $\supportset$-sums
can behave in significantly more complicated ways than the $k$-SIIRV$_N$ distributions discussed above.

To be more concrete, let $\bS$ be 
a $\{a_1,a_2,a_3\}$-sum with $0 \leq a_1 < a_2 < a_3$.
By considering a few simple examples it is easy to see that there are at least four distinct possibilities for  ``what $\bS$ is like'' at a coarse level:

\begin{itemize}

\item {\bf Example \#1:} One possibility is that $\bS$ is essentially sparse, with almost all of its probability mass concentrated on a small number of outcomes (we say that such an $\bS$ has ``small essential support'').

\item {\bf Example \#2:}  Another possibility is that $\bS$ ``looks like'' a discretized Gaussian scaled by $|a_i-a_j|$ for some $1 \leq i < j \leq 3$ (this would be the case, for example, if $\bS = \sum_{i=1}^N \bX_i$ where each $\bX_i$ is uniform over $\{a_1,a_2\}$).  

\item {\bf Example \#3:}
A third possibility is that $\bS$ ``looks like'' a discretized Gaussian with no scaling (the analysis of \cite{DDOST13} shows that this is what happens if, for example, $N$ is large and each $\bX_i$ is uniform over $\{a_1=6,a_2=10,a_3=15\}$, since $\gcd(6,10,15)=1$).  

\item {\bf Example \#4:}
Finally, yet another possibility arises if, 
say, 
$a_3$ is very large (say $a_3 \approx N^2$) while $a_2,a_1$ are very small (say $O(1)$), and  $\bX_1,\dots,\bX_{N/2}$ are each uniform over $\{a_1,a_3\}$ while $\bX_{N/2+1},\dots,\bX_N$ are each supported on $\{a_1,a_2\}$ and $\sum_{i=N/2+1}^N \bX_i$ has very small essential support.  In this case,
for large $N$,
$\bS$ would (at a coarse scale) ``look like'' a discretized Gaussian scaled by $a_3 - a_1 \approx N^2$, but zooming in, locally each ``point'' in the support of this discretized Gaussian would actually be a copy of the small-essential-support distribution $\sum_{i=N/2+1}^N \bX_i$.

\end{itemize}

Given these possibilities for how $\bS$ might behave, it should not be surprising that our actual analysis for 
the case $|\supportset| = 3$
(given in Section~\ref{sec:learn3}) 
involves four  cases (and the above four examples land in the four distinct cases).
The overall learning algorithm ``guesses'' which case the target distribution belongs to and runs a different algorithm for each one; the guessing step is ultimately eliminated using the standard tool of hypothesis testing from statistics. We stress that while the algorithms for the various cases differ in some details, there are many common elements across their analyses, and the well known \emph{kernel method} for density estimation provides the key underlying core learning routine that is used in all the different cases.

 In the following intuitive explanation we first consider the case of $\supportset$-sums
for general finite $|\supportset|$,
and later explain how we sharpen the algorithm and analysis in the case $|\supportset|=3$ 
to obtain our stronger results for that case.  Our discussion below highlights a new structural result (roughly speaking, a new limit theorem that exploits both ``long-range'' and ``short-range'' shift-invariance) that plays a crucial role in our algorithms.

\subsection{Learning $\supportset$-sums with $|\supportset| = k$} \label{sec:Aequalsk}

For clarity of exposition in this intuitive overview we make some simplifying assumptions.   First, we make the assumption that the $\supportset$-sum $\bS$ that is to be learned has $0$ as one value in its $k$-element support, i.e. we assume that $\bS= \bX_1 + \ldots + \bX_N$ where the support of each $\bX_i$ is contained in the set $\{0, a_1, \dots, a_{k-1}\}$.  In fact, we additionally assume that each $\bX_i$ is \emph{$0$-moded}, meaning that $\Pr[\bX_i = 0 ] \ge \Pr[\bX_i=a_j]$ for all $i \in [N]$ and all $j \in [k-1]$\ignore{ (and hence $\Pr[\bX_i=0] \geq 1/k$)}.  (Getting rid of this assumption in our actual analysis requires us to work with zero-moded variants of the $\bX_i$ distributions that we denote $\bX'_i$, supported on $O(k^2)$ values that can be positive or negative, but we ignore this for the sake of our intuitive explanation here.)  For $j \in [k-1]$ we define 
\[\gamma_j := \sum_{i=1}^N \Pr[\bX_i = a_j],\]
which can be thought of as the ``weight''  that $\bX_1,\dots,\bX_N$ collectively put on the outcome $a_j$.

\medskip \noindent {\bf A useful tool:  hypothesis testing.}  To explain our approach it is helpful to recall the notion of hypothesis testing in the context of distribution learning~\cite{DL:01}. Informally, given $T$ candidate hypothesis distributions, one of which is $\epsilon$-close to the target distribution $\bS$, a hypothesis testing algorithm uses $O(\epsilon^{-2} \cdot \log T)$ draws from $\bS$, runs in $\poly(T,1/\eps)$ time, and with high probability identifies a candidate distribution which is $O(\epsilon)$-close to $\bS$.  We use this tool  in a few different ways.  Sometimes we will consider algorithms that ``guess" certain parameters from a ``small'' (size-$T$) space of possibilities; hypothesis testing allows us to assume that such algorithms guess the right parameters, at the cost of increasing the sample complexity and running time by only\ignore{$\poly(\log T,1/\eps)$} small factors.\ignore{\pnote{It looks to me that you increase the running time by a factor of $T$, to try out all the guesses, then add on $\poly(\log T,1/\eps)$ at the end, to evaluate them.}}
In other settings we will show via a case analysis that one of several different learning algorithms will succeed; hypothesis testing yields a combined algorithm that learns no matter which case the target distribution falls into.  
(This tool has been  used in many recent works on distribution learning, see e.g. \cite{DDS12stoc, DDS15, DDOST13}.)


\medskip

\noindent {\bf Our analysis.}  Let $t_1 =O_{k,\eps}(1) \ll t_2 = O_{k,\eps}(1) \ll \cdots \ll t_{k-1} = O_{k,\eps}(1)$ be fixed values (the exact values are not important here).  Let us reorder $a_1,\dots,a_{k-1}$ so that the weights $\gamma_1 \leq \cdots \leq \gamma_{k-1}$ are sorted in non-decreasing order.  An easy special case for us (corresponding to Section~\ref{sec:allsmall}) is when each $\gamma_j \leq t_j$.  If this is the case, then $\bS$ has small ``essential support'':  in a draw from $\bS=\bX_1 + \cdots + \bX_N,$ with very high probability for each $j\in [k-1]$ the number of $\bX_i$ that take value $a_j$ is at most $\poly(t_{k-1})$, so w.v.h.p.~a draw from $\bS$ takes one of at most $\poly(t_{k-1})^k$ values.  In such a case it is not difficult to learn $\bS$ using $\poly((t_{k-1})^k,1/\eps)= O_{k,\eps}(1)$ samples (see Fact~24).  We henceforth may assume that some $\gamma_j > t_j.$

\medskip

For ease of understanding it is helpful to first suppose that \emph{every} $j \in [k-1]$ has $\gamma_j > t_j$, and to base our understanding of the general case (that {some} $j \in [k-1]$ has $\gamma_j > t_j$) off of how this case is handled; we note that this special case is the setting for the structural results of Section~\ref{sec:structural}.  (It should be noted, though, that our actual analysis of the main learning algorithm given in Section~\ref{sec:notallsmall} does not distinguish this special case.)  So let us suppose that for all $j \in [k-1]$ we have $\gamma_j > t_j.$  To analyze the target distribution $\bS$ in this case, we consider a multinomial distribution $\bM = \bY_1 + \cdots + \bY_N$ defined by independent vector-valued random variables $\bY_i$, supported on $0,\be_1,\dots,\be_{k-1} \in \Z^{k-1}$, such that for each $i \in [N]$ and $j \in [k-1]$ we have $\Pr[\bY_i=e_j]=\Pr[\bX_i=a_j].$  Note that for the multinomial distribution $\bM$ defined in this way we have $(a_1,\dots,a_{k-1}) \cdot \bM = \bS.$

Using the fact that each $\gamma_j$ is ``large'' (at least $t_{{j}}$), 
recent results from \cite{DDKT16} imply that
the multinomial distribution $\bM$ is close to a $(k-1)$-dimensional discretized Gaussian whose covariance matrix has all eigenvalues
large
(working with zero-moded distributions is crucial to obtain this intermediate result).
In turn, such a discretized multidimensional Gaussian can be shown to be close to a vector-valued random variable in which each marginal (coordinate) is a $(\pm 1)$-weighted sum of \emph{independent} large-variance Poisson Binomial Distributions.
It follows 
that $\bS=(a_1,\dots,a_{k-1}) \cdot \bM$ is close to a 
a weighted sum of $k-1$ signed PBDs.
\footnote{This is a simplification of what the actual analysis establishes, but it gets across the key ideas.} 
A distribution
$\tilde{\bS}$
is a weighted sum of $k-1$ signed PBDs
if
$\tilde{\bS} = a_1 \cdot \tilde{\bS}_1 + \cdots + a_{k-1} \cdot \tilde{\bS}_{k-1}$ where $\tilde{\bS}_1,\dots,\tilde{\bS}_{k-1}$ are \emph{independent} signed PBDs; in turn, a signed PBD is a sum of independent random variables each of which is either supported on $\{0,1\}$ or on $\{0,-1\}$.  The $\tilde{\bS}$ that $\bS$ is close to further has the property that each $\tilde{\bS}_i$ has ``large'' variance (large compared with $1/\eps$).

Given the above analysis, to complete the argument in this case that each $\gamma_j > t_j$ we need a way to learn a 
weighted sum of signed PBDs
$\tilde{\bS} = a_1 \cdot \tilde{\bS}_1 + \cdots + a_{k-1} \cdot \tilde{\bS}_{k-1}$ where each $\tilde{\bS}_{j}$ has large variance.  This is done with the aid of a new limit theorem, Lemma~\ref{l:mix}, that we establish for distributions of this form. We discuss (a simplified version of) this limit theorem in Section~\ref{sec:limit}; here, omitting many details, let us explain what this new limit theorem says in our setting and how it is useful for learning.  Suppose w.l.o.g. that $\Var[a_{k-1}\cdot  \tilde{\bS}_{k-1}]$ contributes at least a ${\frac 1 {k-1}}$ fraction of the total variance of $\tilde{\bS}$.  Let $\MIX$ denote the set of those $j \in \{1,\dots,k-2\}$ such that $\Var[\tilde{\bS}_j]$ is large compared with $a_{k-1}$, and let $\MIX'=\MIX \cup \{k-1\}.$  The new limit theorem implies that the sum $\sum_{j \in \MIX'} a_j \cdot \tilde{\bS}_j$ ``mixes,'' meaning that it is very close (in $\dtv$) to a \emph{single} scaled PBD $a_{\MIX'} \cdot \tilde{\bS}_{\MIX'}$ where $a_{\MIX'} = \gcd \{a_j:  j \in \MIX'\}.$  
(The proof of the limit theorem involves a generalization of the notion of shift-invariance from probability theory~\cite{BX99} and a coupling-based method.  We elaborate on the ideas behind the limit theorem in Section~\ref{sec:limit}.)

Given this structural result, it is enough to be able to learn a distribution of the form
\[
\bT := a_1 \cdot \tilde{\bS}_1 + \cdots + a_\ell \cdot \tilde{\bS}_\ell + a_{\MIX'} \cdot \tilde{\bS}_{\MIX'}
\]
for which we now know that $a_{\MIX'} \cdot \tilde{\bS}_{\MIX'}$ has at least ${\frac 1 {\ell+1}}$ of the total variance, and each $\tilde{\bS}_j$ for $j \in [\ell]$ has $\Var[\tilde{\bS}_j]$ which is ``not too large'' compared with $a_{k-1}$ (but large compared with $1/\eps$).  We show how to learn such a distribution using $O_{k,\eps}(1) \cdot \log \log a_{k-1}$ samples (this is where the $\log \log$ dependence in our overall algorithm comes from).  This is done, intuitively, by guessing various parameters that essentially define $\bT$, specifically the variances $\Var[\tilde{\bS}_1],\dots,\Var[\tilde{\bS}_{\ell}]$.  Since each of these variances is roughly at most $a_{k-1}$ (crucially, the limit theorem allowed us to get rid of the $\tilde{\bS}_j$'s that had larger variance), via multiplicative gridding there are $O_{\eps,k}(1) \cdot \log a_{k-1}$ possible values for each candidate variance, and via our hypothesis testing procedure this leads to an $O_{\eps,k}(1) \cdot \log \log a_{k-1}$ number of samples that are used to learn.

\medskip

We now turn to the general case, that some $j \in [k-1]$ has $\gamma_j > t_j$.  Suppose w.l.o.g.~that $\gamma_1 \leq t_1, \dots \gamma_{\ell-1} \leq t_{\ell-1}$ and $\gamma_{\ell} > t_{\ell}$ (intuitively, think of $\gamma_1,\dots,\gamma_{\ell-1}$ as ``small'' and $\gamma_{\ell},\dots,\gamma_{k-1}$ as ``large'').  Via an analysis (see Lemma~\ref{lem:light-heavy}) akin to the ``Light-Heavy Experiment'' analysis of \cite{DDOST13}, we show that in this case the distribution $\bS$ is close to a distribution $\tilde{\bS}$ with the following structure:  $\tilde{\bS}$ is a mixture of at most $\poly(t_{\ell-1})^{k-1}$ many distributions each of which is a different shift of a \emph{single} distribution, call it $\bS_{\mathrm{heavy}},$ that falls into the special case analyzed above:  all of the relevant parameters $\gamma_{\ell},\dots,\gamma_{k-1}$ are large (at least $t_{\ell}$).  Intuitively, having at most $\poly(t_{\ell-1})^{k-1}$ many components in the mixture corresponds to having $\gamma_1,\dots,\gamma_{\ell-1} < t_{\ell-1}$ and $\ell \leq k-1$, and having each  component be a shift of the same distribution $\bS_{\mathrm{heavy}}$ follows from the fact that there is a ``large gap'' between $\gamma_{\ell-1}$\ignore{(which is at most $t_{\ell-1}$)} and $\gamma_{\ell}$\ignore{ (which is greater than $t_{\ell}$)}.  

Thus in this general case, the learning task essentially boils down to learning a distribution that is (close to) a mixture of translated copies of a distribution of the form $\bT$ given above.  Learning such a mixture of translates is a problem that is well suited to the ``kernel method'' for density estimation.  This method has been well studied in classical density estimation, especially for continuous probability densities (see e.g. \cite{DL:01}), but results of the exact type that we need did not seem to previously be present in the literature.  (We believe that ours is the first work that applies kernel methods to learn 
sums of independent random variables.)

In Section~\ref{sec:kernel} we develop tools for multidimensional kernel based learning that suit our context.  At its core, the kernel method approach that we develop allows us to do the following: Given a mixture of ${r}$ translates of $\bT$ and constant-factor approximations to $\gamma_\ell, \ldots, \gamma_{k-1}$, the kernel method allows us to learn this mixture to error $O(\epsilon)$ using only $\poly(1/\epsilon^\ell,{r})$ samples. Further, this algorithm is robust in the sense that the same guarantee holds even if the target distribution is only $O(\epsilon)$ close to having this structure (this is crucial for us).   Theorem~\ref{thm:learn-some-large} in Section~\ref{sec:learn} combines this tool with the ideas described above for learning a $\bT$-type distribution, and thereby establishes our general learning result for 
$\supportset$-sums with $|\supportset| \geq 4$.

\subsection{The case $|\supportset|=3$}

In this subsection we build on the discussion in the previous subsection, specializing to $k=|\supportset|=3$, and explain the high-level ideas of how we are able to learn with sample complexity $\poly(1/\eps)$ independent of $a_1,a_2,a_3$.   

For technical reasons (related to zero-moded distributions) there are three relevant parameters $t_1 \ll t_2  \ll t_3 = O_\eps(1)$ in the $k=3$ case.  The easy special case that each $\gamma_j \leq t_j$ is handled as discussed earlier (small essential support).  As in the previous subsection, let $\ell \in [3]$ be the least value such that $\gamma_\ell > t_\ell$.

In all the cases $\ell=1,2,3$ the analysis proceeds by considering the Light-Heavy-Experiment as discussed in the preceding subsection, i.e.~by approximating the target distribution $\bS$ by a mixture $\tilde{\bS}$ of shifts of the \emph{same} distribution $\bS_{\mathrm{heavy}}$.   When $\ell=3$, the ``heavy'' component $\bS_{\mathrm{heavy}}$ is simply a distribution of the form $q_3 \cdot\bS_3$ where $\bS_3$ is a signed PBD.  Crucially, while learning the distribution $\bT$ in the previous subsection involved guessing certain variances (which could be as large as $a_k$, leading to $\log a_k$ many possible outcomes of guesses and $\log \log a_k$ sample complexity), in the current setting the extremely simple structure of $\bS_{\mathrm{heavy}} = q_3 \cdot \bS_3$ obviates the need to make $\log a_3$ many guesses.  Instead, as we discuss in Section~\ref{sec:a0-is-three}, its variance can be approximated in a simple direct way by sampling just two points from $\bT$ and taking their difference; this easily gives a constant-factor approximation to the variance of $\bS_3$ with non-negligible probability. This success probability can be boosted by repeating this experiment several times (but the number of times does not depend on the $a_i$ values.)   We thus can use the kernel-based learning approach in a sample-efficient way, without any dependence on $a_1,a_2,a_3$ in the sample complexity.

For clarity of exposition, in the remaining intuitive discussion (of the $\ell=1,2$ cases) we only consider a special case:  we assume that $\bS = a_1 \cdot \bS_1  + a_2 \cdot \bS_2$ where both $\bS_1$ and $\bS_2$ are large-variance PBDs (so each random variable $\bX_i$ is either supported on $\{0,a_1\}$ or on $\{0,a_2\}$, but not on all three values $0,a_1,a_2$).  We further assume, clearly without loss of generality, that $\gcd(a_1,a_2)=1.$  (Indeed, our analysis essentially proceeds by reducing the $\ell=1,2$ 
case to this significantly simpler scenario, so this is a fairly accurate rendition of the true case.) 
 Writing $\bS_1 = \bX_1 + \ldots + \bX_{N_1}$ 
and $\bS_2 = \bY_1 + \ldots + \bY_{N_2}$, by zero-modedness we have that $\Pr[\bX_i=0] \geq {\frac 1 2}$ {and} $\Pr[\bY_i=0] \geq {\frac 1 2}$ for all $i$, so $\Var[\bS_j] = \Theta(1) \cdot \gamma_j$ for $j=1,2.$  We assume w.l.o.g.~in what follows that $a_1^2 \cdot \gamma_1 \ge a_2^2 \cdot \gamma_2$, so $\Var[\bS]$, which we henceforth denote $\sigma^2$, is $\Theta(1) \cdot a_1^2 \cdot \gamma_1.$

We now branch into three separate possibilities depending on the relative sizes of $\gamma_2$ and $a_1^2$. Before detailing these possibilities we observe that using the fact that $\gamma_1$ and $\gamma_2$ are both large, it can be shown that if we sample two points $s^{(1)}$ and $s^{(2)}$ from $\bS$, then with constant probability the value $\frac{|s^{(1)}-s^{(2)}|}{a_1}$ provides a constant-factor approximation to $\gamma_1$. 

\medskip
\noindent {\bf First possibility:}  $\gamma_2 < \epsilon^2 \cdot a_1^2.$ The algorithm samples two more points $s^{(3)}$ and $s^{(4)}$ from the distribution $\bS$.  The crucial idea is that with constant probability these two points can be used to obtain a constant-factor approximation to $\gamma_2$; we now explain how this is done. For $j \in \{3,4\}$, let 
$s^{(j)} = a_1 \cdot s^{(j)}_1  + a_2 \cdot s^{(j)}_2$ where $s^{(j)}_1 \sim \bS_1$ and $s^{(j)}_2 \sim \bS_2$, and consider the quantity $s^{(3)}-s^{(4)}.$ Since $\gamma_2$ is so small relative to $a_1$, the ``sampling noise" from $a_1 \cdot s^{(3)}_1 - a_1 \cdot s^{(4)}_1$ is likely to overwhelm the difference $a_2 \cdot s^{(3)}_2 - a_2 \cdot s^{(4)}_2$   at a ``macroscopic'' level. The key idea to deal with this  is to \emph{analyze the outcomes modulo $a_1$}.  In the modular setting, because $\Var[\bS_2] =\Theta(1) \cdot \gamma_2 \ll a_1^2$, one can show that with constant probability $|(a_2^{-1} \cdot  (s^{(3)}_2 - s^{(4)}_2)) \mod a_1|$ is a constant-factor approximation to  $\gamma_2$. (Note that as $a_1$ and $a_2$ are coprime, the operation $a_2^{-1}$ is well defined modulo $a_1$.) A constant-factor approximation to $\gamma_2$ can be used together with the constant-factor approximation to $\gamma_1$ to employ the aforementioned ``kernel method" based algorithm to learn the target distribution $\bS$. The fact that here we can use only two samples (as opposed to $\log \log a_1$ samples) to estimate $\gamma_2$ 
is really the crux of why for the $k=3$ case, the sample complexity is independent of $a_1$. (Indeed, we remark that\ignore{ this case turns out to be the ``hard'' case when dealing with 4-supported SICSIRVs over $\{0, a_1, a_2, a_3\}$ --- the} our analysis of the lower bound given by Theorem~\ref{known-k-is-four-lower} takes place in the modular setting and this ``mod $a_1$'' perspective is crucial for constructing the lower bound examples in that proof.)

\medskip
\noindent {\bf Second possibility:}  $a_1^2 /\epsilon^2>\gamma_2 > \epsilon^2 \cdot a_1^2.$ Here, by multiplicative gridding we can create a list of $O(\log(1/\eps))$ guesses such that at least one of them is a constant-factor approximation to $\gamma_2$. Again, we use the kernel method and the approximations to $\gamma_1$ and $\gamma_2$ to learn $\bS$. 

\medskip
\noindent {\bf Third possibility:}  The last possibility is that $\gamma_2 \ge a_1^2 / \epsilon^2$.  In this case, we show that $\bS$ is in fact $\epsilon$-close to the discretized Gaussian (with no scaling; recall that  $\gcd(a_1,a_2)=1$) that has the appropriate mean and variance. Given this structural fact, it is easy to learn $\bS$ by just estimating the mean and the variance and outputting the corresponding discretized Gaussian.  This structural fact follows from our new limit theorem, Lemma~\ref{l:mix}, mentioned earlier; we conclude this section with a discussion of this new limit theorem.

\ignore{
\rnote{ignore out what follows to end of subsection}
Since $\sigma^2$ is large (say at least $\Omega(1/\eps^c)$), it follows that the distribution $\bS$ is $\epsilon^{c/2}$-close  to a Gaussian in Kolmogorov (cdf) distance.   A useful intuitive view on this is that the distribution of $\bS$ ``looks like'' a Gaussian at a coarse scale.  A bit more precisely, this Kolmogorov-distance closeness means that if the support of $\bS$ is partitioned into intervals of length $\epsilon^{c/2} \cdot \sigma$, then in each of these intervals the probability mass of $\bS$ is $O(\eps^{c/2})$-close to that of the corresponding Gaussian.  So $\bS$ looks like a Gaussian ``at the scale of $\sigma$,'' but at a finer scale $\bS$ may no longer look like a Gaussian; since we are working with the (demanding) total variation distance measure rather than Kolmogorov distance, more analysis is required. (Looking ahead, since $\bS$ is a discrete distribution, it is natural to hope that $\bS$ is close to a discretized Gaussian or a scaled discretized Gaussian, and indeed this is where the analysis below will lead us.)

\ignore{
Under the assumption that $a_1^2 \cdot \gamma_1 \ge a_2^2 \cdot \gamma_2$, it is not difficult to show that at the scale of $a_1$, $\bS$ is close to a Gaussian. However, it may be very far from a Gaussian at finer scales. Whether or not this happens gives rise to the subcases now: 
}

We now branch into two separate possibilities depending on the relative sizes of $\gamma_2$ and $a_1^2$.

\medskip

\noindent {\bf First possibility:}  $\gamma_2 < a_1^2/\eps^2.$   Within this case, it turns out that the sub-case in which $\gamma_2 \in [a_1^2 \cdot \eps, a_1^2/\eps^2]$ 
can be handled by a slight extension of the arguments for the complementary sub-case that $\gamma_2 < a_1^2 \cdot \eps$, so we focus our attention in the following on this complementary sub-case, that $\gamma_2 < a_1^2 \cdot \eps$ (or more simply put, that $\gamma_2 \ll a_1^2$).
In this case $\bS$ resembles a discretized Gaussian ``at the scale of $a_1$", but at finer scales the distribution of $\bS$ may be substantially irregular. For example, consider the case when $\gamma_2 = a_1/2$. In such a case we note that $(\bS \mod a_1)$ can only take $O(\log(1/\eps)) \cdot \sqrt{ a_1}$ many different values with non-negligible probability (by Bernstein's inequality applied to $\bS_2$), whereas a Gaussian with variance $a_1^2 \cdot \Omega(\poly(1/\epsilon))$ may take values in all equivalence classes modulo $a_1$ with roughly the same probability.

For this case (that $\gamma_2 < a_1^2 \cdot \eps)$, it is not difficult to show that to learn $\bS$ it suffices to estimate the variances of both $\bS_1$ and $\bS_2$.  Our learning algorithm in this case accordingly proceeds in two steps.  The first step is straightforward:
 It is easy to estimate the variance of $\bS_1$ up to a $(1 \pm \epsilon)$ multiplicative factor.  This is because we can estimate the variance of $\bS$, and we know that $\Var[\bS] \ge a_1 \Var[\bS_1] \ge \Theta(\Var[\bS])$ from above.  Thus we can obtain $\Var[\bS_1]$  up to a constant factor: by making a multiplicative grid of possibilities for $\Var[\bS_1]$ where successive possibilities differ by multiplicative factors of $(1 + \epsilon$) and using hypothesis testing, we can obtain a high-accuracy estimate of $\Var[\bS_1]$. 

The more challenging step turns out to be estimating $\Var[\bS_2]$. Roughly, this is difficult because of the possibility that $a_2^2 \cdot \Var[\bS_2] \ll \Var[\bS]$; in such an event, the ``sampling noise" from $a_1 \cdot \bS_1$ might overwhelm the distinction between different outcomes of 
$\bS_2$ at a ``macroscopic'' level. The crucial idea to deal with this possibility is to \emph{analyze the outcomes of $\bS$ modulo $a_1$}.  In the modular setting, because $\Var[\bS_2] =\Theta(1) \cdot \gamma_2 \ll a_1^2$, it becomes possible to accurately estimate $\Var[\bS_2]$. (We are ignoring some details here but this is the core underlying idea.)  We remark that this case turns out to be the ``hard'' case when dealing with 4-supported SICSIRVs over $\{0, a_1, a_2, a_3\}$ --- the analysis of the lower bound given in \cite{DLS16lower} takes place in the modular setting and this ``mod $a_1$'' perspective is crucial for constructing the lower bound examples in that work.

\medskip

\noindent {\bf Second possibility:}  The remaining possibility is that $\gamma_2 \ge a_1^2 / \epsilon^2$.  In this case, we argue that $\bS$ is in fact $\epsilon$-close to the discretized Gaussian (with no scaling; recall that we have assumed $\gcd(a_1,a_2)=1$) that has the appropriate mean and variance. Given this structural fact, it is easy to learn $\bS$ by just estimating the mean and the variance and outputting the corresponding discretized Gaussian.  This structural fact follows from our new limit theorem, Lemma~\ref{l:mix}, mentioned in the previous subsection; we conclude this section with a discussion of this new limit theorem and the ideas that underlie it.

\ignore{
We now explain, at an intuitive level, why $\bS$ is close to a discretized Gaussian in this case.  First, it should be clear that $\bS$ ``looks like'' a discretized Gaussian at scale $a_1$ (recalling that $\bS = a_1 \cdot \bS_1 + a_2 \cdot \bS_2$ where $\bS_1,\bS_2$ are PBDs with variance at least $1/\eps^c$).  Since $\gamma_2 \ge a_1^2/\epsilon^2$, the PBD $\bS_2$ puts significant probability weight on each of at least $a_1/\eps$ many different values, so intuitively the $a_2 \cdot \bS_2$ component can ``fill in the holes" lying between successive multiples of $a_1$. 
The way to formalize this intuition is via coupling-based techniques that were used to prove central limit theorems for  total variation distance in the work of R\"ollin~\cite{Rollin:07}. We also note that in theoretical computer science, \cite{GMRZ11} have used coupling-based techniques in their construction of pseudorandom generators for combinatorial shapes (and indeed we use a coupling-based lemma of theirs as a technical ingredient in our proof). 
}

}
\subsection{Lemma~\ref{l:mix} and limit theorems.} \label{sec:limit} Here is a simplified version of our new limit theorem, Lemma~\ref{l:mix}, specialized to the case in which its ``$D$'' parameter is set to 2:

\medskip
\noindent {\bf Simplified version of Lemma~\ref{l:mix}.} \emph{Let $\bS = r_1 \cdot \bS_1 + r_2 \cdot \bS_2$ where 
$\bS_1,\bS_2$ are independent signed PBDs and $r_1,r_2$ are nonzero integers such that $\gcd(r_1,r_2)=1,$ $\Var[r_1\cdot  \bS_1] \geq \Var[r_2 \cdot\bS_2]$, and
$\Var[\bS_2] \geq \max\{{\frac{1}{\eps^8}},{\frac {r_1} {\eps}}\}.$
Then $\bS$ is $O(\eps)$-close in total variation distance to a signed PBD $\bS'$ (and hence to a signed discretized Gaussian) with $\Var[\bS'] = \Var[\bS].$
}

\medskip


If a distribution $\bS$ is close to a discretized Gaussian in Kolmogorov  distance and is $1/\sigma$-shift invariant (i.e. $\dtv(\bS, \bS+1) \leq 1/\sigma$), then $\bS$ is close to a discretized Gaussian in total variation distance
\cite{Rollin:07, Barbour:15}.
%
Gopalan et al.~\cite{GMRZ11} used a coupling based argument to establish a 
{similar} central limit theorem
to obtain 
{pseudorandom generators}
for certain space bounded branching programs. 
Unfortunately, in the setting of the lemma stated above, it is not immediately clear why $\bS$ should have $1/\sigma$-shift invariance. 
To deal with this, we give a novel analysis exploiting \emph{shift-invariance at multiple different scales}.\ignore{
introduce a new idea of ``long-range'' and ``short-range'' shift invariance and use it to establish the desired result.}
Roughly speaking, because of the $r_1 \cdot \bS_1$ component of $\bS$, it can be shown that $\dtv(\bS, \bS+r_1) = 1/\sqrt{\Var[\bS_1]}$, i.e. $\bS$ has good ``shift-invariance at the scale of $r_1$''; by the triangle inequality  $\bS$ is also not affected much if we shift by a small integer multiple of $r_1$.  The same is true for a few shifts by $r_2$, and hence also for a few shifts by
{\em both} $r_1$ and $r_2$.  If $\bS$ is approximated well by
a discretized Gaussian, though, then it is also not affected by small shifts, including shifts by $1$, and in fact we need such a guarantee to prove
approximation by a discretized Gaussian through coupling.  However, since $\gcd(r_1,r_2)=1$, basic number theory implies that we can achieve
any small integer shift via a small number of shifts by $r_1$ and $r_2$, and therefore $\bS$ has the required ``fine-grained'' shift-invariance (at scale 1) as well.  Intuitively, for this to work we need samples from $r_2 \cdot \bS_2$ to ``fill in the gaps'' 
between successive values of $r_1 \cdot \bS_1$ -- this is why we need  $\Var[\bS_2] \gg r_1$.


Based on our discussion with researchers in this area~\cite{Barbour:15} the idea of exploiting both long-range and short-range shift invariance is new to the best of our knowledge and seems likely to be of use in proving new central limit theorems.

\ignore{ To give an illustrative example of the kind of central limit theorem that can be obtained using this approach, we state the following CLT which is implicit in our analysis:
\begin{theorem}
Let $\bX_1, \ldots,  \bX_N$ be independent integer random variables supported in $\{0, a_1, a_2\}$ where $\mathsf{gcd}(a_1, a_2)=1$. Assume that for all $i \in [N]$ we have that 
$\Pr[\bX_i=0]$ is at least (say) $1/10$. Let $\gamma_1 = \sum_{i=1}^N \Pr[\bX_i = a_1]$ and 
$\gamma_2 = \sum_{i=1}^N \Pr[\bX_i = a_2]$, and assume (w.l.o.g.) that  
$a_1^2 \gamma_1 \ge a_2^2 \gamma_2$.  Then if $\gamma_2 \ge (a_1 / \epsilon)^2$, we have 
$\dtv(\bS, \mathcal{N}(\mathbf{E}[\bS], \Var[\bS]) = O(\epsilon)$, where $\mathcal{N}(\mu,\sigma^2)$ is the discretized Gaussian obtained by rounding the mean-$\mu$, variance-$\sigma^2$ Gaussian distribution to the nearest integer.
\end{theorem}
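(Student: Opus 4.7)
The plan is to follow the R\"ollin/Barbour recipe sketched at the end of Section~\ref{sec:limit}: first show that $\bS$ is $O(\eps)$-close in Kolmogorov distance to the continuous Gaussian $\mathcal{N}(\mathbf{E}[\bS],\Var[\bS])$, then show the fine-grained shift invariance $\dtv(\bS,\bS+1) = O(\eps)$, and finally appeal to the well-known lemma (\cite{Rollin:07,Barbour:15}) that these two ingredients together upgrade Kolmogorov-closeness to a continuous Gaussian into total-variation-closeness to its integer discretization. That last step is exactly what the theorem asserts, so all the work is in the two intermediate bounds.

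\paragraph{Step 1 (Kolmogorov closeness via Berry--Esseen).} Write $p_{i,j} = \Pr[\bX_i = a_j]$ with $a_0 := 0$. A direct expansion gives
\[
\Var[\bX_i] \;=\; p_{i,0}\bigl(p_{i,1}\,a_1^2 + p_{i,2}\,a_2^2\bigr) + p_{i,1}p_{i,2}(a_1-a_2)^2 \;\geq\; \tfrac1{10}\bigl(p_{i,1}a_1^2 + p_{i,2}a_2^2\bigr),
\]
so $\sigma^2 := \Var[\bS] \geq \tfrac{1}{10}(a_1^2\gamma_1 + a_2^2\gamma_2) \geq \tfrac{1}{10} a_1^2\gamma_1$. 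The sum of absolute third central moments is $\rho_3 = O(a_1^3\gamma_1 + a_2^3\gamma_2)$, so Berry--Esseen gives $d_K(\bS,\mathcal{N}(\mathbf{E}[\bS],\sigma^2)) = O(\rho_3/\sigma^3)$. Using $a_2^2\gamma_2 \leq a_1^2\gamma_1$ to bound the $a_2^3\gamma_2$ term, this collapses to $O(1/\sqrt{\gamma_1} + 1/\sqrt{\gamma_2})$; since the hypothesis $\gamma_2 \geq (a_1/\eps)^2$ together with $a_1^2\gamma_1 \geq a_2^2\gamma_2$ also forces $\gamma_1 \geq (a_2/\eps)^2$, both terms are $O(\eps)$.

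\paragraph{Step 2 (Shift invariance at scale 1).} I would first establish the ``long-range'' shift invariances
\[
\dtv(\bS,\bS+a_1) = O(1/\sqrt{\gamma_1}), \qquad \dtv(\bS,\bS+a_2) = O(1/\sqrt{\gamma_2}),
\]
and then stitch them together via B\'ezout: since $\gcd(a_1,a_2)=1$ there are integers $u,v$ with $u a_1 + v a_2 = 1$ and $|u|\le a_2,\ |v|\le a_1$, and the triangle inequality yields
\[
\dtv(\bS,\bS+1) \;\leq\; |u|\,\dtv(\bS,\bS+a_1) + |v|\,\dtv(\bS,\bS+a_2) \;=\; O\!\bigl(a_2/\sqrt{\gamma_1} + a_1/\sqrt{\gamma_2}\bigr) \;=\; O(\eps),
\]
using $\gamma_1 \ge (a_2/\eps)^2$ and $\gamma_2 \ge (a_1/\eps)^2$. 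Intuitively, $\bS$ inherits good shift invariance of period $a_1$ (resp.\ $a_2$) from the PBD-like $\{0,a_1\}$-component (resp.\ $\{0,a_2\}$-component) of the $\bX_i$'s, and the number-theoretic fact $\gcd(a_1,a_2)=1$ lets us bootstrap this into shift invariance at scale~$1$.

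\paragraph{Main obstacle.} The only substantive step is the coarse bound $\dtv(\bS,\bS+a_1) = O(1/\sqrt{\gamma_1})$ (and its symmetric counterpart), which is nontrivial because $\bS$ is not literally a scaled PBD when some $\bX_i$ supports all three of $0,a_1,a_2$. I would handle this with an ``embedded coin'' decomposition: set $q_i := \min(p_{i,0},p_{i,1})$ and write $\bX_i$ as a mixture that with probability $2q_i$ is uniform on $\{0,a_1\}$ and otherwise is the (valid, nonnegative) residual distribution on $\{0,a_1,a_2\}$ that makes the marginal correct (nonnegativity of the residual is immediate from $q_i \le p_{i,0}, p_{i,1}$). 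The hypothesis $p_{i,0}\geq 1/10$ forces $q_i \ge p_{i,1}/10$ by a case-split on whether $p_{i,1}\le 1/10$, so $\sum_i q_i \ge \gamma_1/10$. Let $\bW_i \in \{0,1\}$ be the independent indicator that site $i$ chooses the coin component, and let $M := \sum_i \bW_i$. Conditional on $(\bW_i)$ and on all residual draws, $\bS$ equals $a_1 \cdot \mathrm{Bin}(M,1/2)$ plus a constant, and Chernoff on the independent $\bW_i$'s gives $M = \Omega(\gamma_1)$ with overwhelming probability. The classical $O(1/\sqrt{m})$ shift-invariance of $\mathrm{Bin}(m,1/2)$ then yields $\dtv(\bS,\bS+a_1)=O(1/\sqrt{\gamma_1})$ after averaging; the symmetric argument (swapping $a_1$ with $a_2$) gives the $a_2$-invariance, and combining with Step 1 via the R\"ollin/Barbour lemma completes the proof.
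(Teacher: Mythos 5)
Your Steps 1 and 2 are sound and in fact mirror the paper's strategy (a Berry--Esseen/Kolmogorov bound, long-range shift-invariance at scales $a_1$ and $a_2$ from the embedded-coin/Barbour--Xia argument, and B\'ezout stitching as in Lemmas~\ref{l:CRT} and~\ref{l:key}); the variance identity and the deduction $\gamma_1 \ge (a_2/\eps)^2$ are correct. The gap is in the final step: there is no ``well-known lemma'' that upgrades $\dk(\bS,\mathcal{N}) = O(\eps)$ together with $\dtv(\bS,\bS+1)=O(\eps)$ into $\dtv(\bS,\mathcal{N}_D)=O(\eps)$. The principle cited in Section~\ref{sec:limit} requires shift-invariance at level $1/\sigma$ (i.e.\ comparable to the pointwise scale of the Gaussian), not at level $\eps$, and your B\'ezout bound only gives $O(a_2/\sqrt{\gamma_1}+a_1/\sqrt{\gamma_2})=O(\eps)$, which is far larger than $1/\sigma \approx 1/(a_1\sqrt{\gamma_1})$. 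As a general statement about integer distributions the two-ingredient implication is simply false: take a Gaussian of standard deviation $\sigma$, partition $\Z$ into blocks of length $\sqrt{\sigma}$, and within each block tilt the mass by a constant factor (up on the first half, down on the second, preserving block totals). This distribution has Kolmogorov distance $O(1/\sqrt{\sigma})$ to the Gaussian and $\dtv(\bS,\bS+1)=O(1/\sqrt{\sigma})$, both $o(\eps)$, yet its total variation distance to $\mathcal{N}_D$ is $\Theta(1)$. So the conclusion cannot be drawn by black-box citation from the two bounds you establish; some additional structure of $\bS$ must enter.

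This is precisely why the paper does not argue this way. In the proof of Lemma~\ref{l:mix} the combination is done by convolving with a uniform distribution $\bU_d$ and invoking Lemma~\ref{lem:GMRZ}, and two quantitative features are essential there: (i) the refined form of Lemma~\ref{l:key}, which says that a shift by $d$ costs only a \emph{fixed} $O(\eps)$ plus an \emph{incremental} $O(d/\sigma)$ --- the scale-$1$ bound alone would force $\dtv(\bS,\bS+\bU_d)\le d\cdot O(\eps)$, which is useless for the large $d$ the smoothing step needs; and (ii) even with that refinement the argument loses a cube root, $\dtv(\bS,\mathcal{N}_D) = O(\eps + (\dk\cdot \sigma \cdot (1/\sigma) \cdot \mathrm{poly}(1/\eps))^{1/3})$ roughly, which is why Lemma~\ref{l:mix} demands variance at least $(1/\eps)^8$ so that the Kolmogorov error ($O(1/\sigma_{\min})\le \eps^4$) can absorb the loss. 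Under the hypotheses of the statement you are proving, $\gamma_2$ is only guaranteed to be $(a_1/\eps)^2$, so your $\dk$ bound is only $O(\eps/a_1)$ and running the paper's smoothing argument with your ingredients yields roughly $O((\eps/a_1)^{1/3})$, not $O(\eps)$. To repair the proof you would need either to carry out the multi-scale smoothing argument with the incremental shift bound and a substantially stronger Kolmogorov (or characteristic-function) estimate, or to exploit the independent-sum structure of $\bS$ directly in the last step rather than treating it as a generic distribution satisfying the two bounds.
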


}

\section{Lower bound techniques} \label{s:ltechniques}

 In this section we give an overview of the ideas behind our lower bounds.  Both of our lower bounds actually work by considering restricted 
$\supportset$-sums:
our lower bounds can be proved using only
distributions $\bS$ of the form $\bS = \sum_{i=1}^k a_i \cdot \bS_i$, where $\bS_1, \dots, \bS_k$ are independent PBDs; equivalently, $\bS = \sum_{i=1}^N \bX_i$ where each $\bX_i$ is supported on one of $\{0, a_1\}$, $\dots$, $\{0, a_k\}$.  

\medskip
\noindent {\bf A useful reduction.}  The problem of learning a 
distribution
modulo an integer plays a key role in both of our lower bound arguments.
More precisely, both lower bounds use a reduction, which we establish, showing that an efficient algorithm for learning weighted PBDs with weights 
$0 < a_1 < ...< a_k$
implies an efficient algorithm for learning 
with weights $a_1, ..., a_{k-1}$ modulo $a_k$.
This problem is specified as follows: Consider
an algorithm which is given access to i.i.d.\ draws from the distribution $(\bS \mod a_k)$ (note that this distribution is supported over $\{0,1,\dots,a_{k}-1\}$) where $\bS$ is 
of the form $a_1 \cdot \bS_1 + ... + a_{k-1} \cdot \bS_{k-1}$
and  $\bS_1,...,\bS_{k-1}$ are PBDs.
The algorithm should produce a high-accuracy hypothesis distribution for $(\bS \mod a_k)$.  We stress that the example points provided to the learning algorithm all lie in $\{0,\dots,a_{k}-1\}$ (so certainly any reasonable hypothesis distribution should also be supported on $\{0,\dots,a_k-1\}$).  Such a reduction is useful for our lower bounds because it enables us to prove a lower bound for learning $\sum_{i=1}^k a_i \cdot \bS_i$ by proving a lower bound
for learning $\sum_{i=1}^{k-1} a_i \cdot \bS_i \mod a_k$.

\ignore{


}

The high level idea of this reduction is fairly simple so we sketch it here.  
Let $\bS = a_1 \cdot \bS_1 + \cdots + a_{k-1} \cdot \bS_{k-1}$ be a 
weighted sum of PBDs
such that $(\bS \mod a_k)$ is the target distribution to be learned
and let $N$ be the total number of summands in all of the
PBDs.
Let $\bS_k$ be an independent PBD with mean and variance $\Omega(N^\star)$. The key insight is that by taking $N^\star$ sufficiently large relative to $N$, the distribution of $(\bS \mod a_k) + a_k \cdot \bS_k$ (which can easily be simulated by the learner given access to draws from $(\bS \mod a_k)$ since it can generate samples from $a_k \cdot \bS_k$ by itself) can be shown to be statistically very close to that of $\bS' := \bS + a_k \cdot \bS_k$.
Here is an intuitive justification:  We can think of the different possible outcomes of $a_k \cdot \bS_k$ as dividing the support of $\bS'$ into bins of
width $a_k$.  Sampling from $\bS'$ can be performed by picking a bin boundary (a draw from $a_k \cdot \bS_k$)
and an offset $\bS$.
While adding $\bS$ may take the sample across multiple bin boundaries,
if $\Var[\bS_k]$ is sufficiently large, then adding $\bS$ typically takes
$a_k \cdot \bS_k + \bS$ across a small fraction of the bin boundaries.
Thus, the
conditional distribution given membership in a bin is similar between bins that have
high probability under $\bS'$, which means that all of these conditional distributions
are similar to the distribution of $\bS' \mod a_k$ (which is a mixture of them).
Finally, $(\bS' \mod a_k)$ has the same distribution as $(\bS \mod a_k)$.
Thus,  given samples from $(\bS \mod a_k)$, the learner can essentially simulate samples from $\bS'$. However, $\bS'$  is 
is a weighted sum of $k$ PBDs,
which by the assumption of our reduction theorem can be learned efficiently.  Now, assuming the learner has a hypothesis $\bH$ such that $\dtv(\bH, \bS')\le \epsilon$, it immediately follows that $\dtv((\bH \mod  a_k), (\bS' \mod a_k)) \le \dtv(\bH,\bS') \le \epsilon$ as desired. \ignore{This finishes the sketch of the reduction; while some of  details need to be ironed out, the actual proof essentially just formalizes the reasoning summarized here.}

\medskip

\noindent {\bf Proof overview of Theorem~\ref{known-k-is-four-lower}.} 
At this point we have the task of proving a lower bound for 
learning weighted PBDs
over $\{0, a_1, a_2\}$ mod $a_3$.  We establish such a lower bound using Fano's inequality (stated precisely as Theorem~\ref{fano} in Section~\ref{sec:prelim}).
To get a sample complexity lower bound of $\Omega(\log \log a_3)$ from Fano's inequality, we must construct $T = \log^{\Omega(1)} a_3$ distributions $\bS_1 $, $\dots$, $\bS_T $, where each $\bS_i$ is a 
weighted PBD
on $\{0, a_1, a_2\}$  modulo $a_3$, meeting the following requirements: $\dtv(\bS_i, \bS_j) = \Omega(1)$ if $i \not =j$, and $D_{KL}(\bS_i || \bS_j) = O(1)$ for all $i,j \in T.$ In other words, applying Fano's inequality requires us to exhibit a large number of distributions (belonging to the family for which we are proving the lower bound) such that any two distinct distributions in the family are \emph{far} in total variation distance but \emph{close} in terms of KL-divergence. The intuitive reason for these two competing requirements is that if $\bS_i$ and $\bS_j$ are $2\eps$-far in total variation distance, then a successful algorithm for learning to error at most $\eps$ must be able to distinguish $\bS_i$ and $\bS_j$. On the other hand, if $\bS_i$ and $\bS_j$ are close in KL divergence, then it is difficult for any learning algorithm to distinguish between $\bS_i$ and $\bS_j$. 

Now we present the high-level idea of how we may construct distributions $\bS_1,\bS_2,\dots$ with the properties described above to establish Theorem~\ref{known-k-is-four-lower}.  The intuitive description of $\bS_i$ that we give below does not align perfectly with our actual construction, but this simplified description is hopefully helpful in getting across the main idea.

For the construction we fix $a_1=1$, $a_2 =p$ and $a_3=q$.  (We discuss how $p$ and $q$ are selected later; this is a crucial aspect of our construction.)  The $i$-th distribution $\bS_i$ is $\bS_i = \bU_i + p \bV_i$ mod $q$; we describe the distribution $\bS_i = \bU_i + p \bV_i \mod q$ in two
stages, first by describing each $\bV_i$, and then by describing the corresponding $\bU_i$.  In the actual construction $\bU_i$ and $\bV_i$ will be shifted binomial distributions.  
Since a binomial distribution is rather flat within one standard deviation of its mean, and decays exponentially after that, it is qualitatively somewhat like the uniform distribution over an interval;
for this intuitive sketch it is helpful to think of $\bU_i$ and $\bV_i$ as actually being uniform distributions over intervals.  We take the support of $\bV_1$ to be an interval of length $q/p$, so that adjacent members of the support of $(p \bV_1 \mod q)$ will be at distance $p$ apart from each other.  More generally, taking $\bV_i$ to be uniform over an interval of length $2^{i-1}q/p$, the average gap between adjacent members of $\supp(p \bV_i \mod q)$
is of length essentially $p/2^{i-1}$, and by a careful choice of $p$ relative to $q$ one might furthermore hope that the gaps would be ``balanced'', so that they are all of length roughly $p/2^{i-1}$.  (This ``careful choice'' is the technical heart of our actual construction presented later.) 

How does $\bU_i$ enter the picture?  The idea is to take each $\bU_i$ to be uniform over a \emph{short} interval, of length $3p/2^i$.  This  ``fills in each gap'' and additionally ``fills in the first half of the following gap;'' as a result, the first half of each gap ends up with twice the probability mass of the second half.   (As a result, every two points have probability mass within a constant factor of each other under every distribution --- in fact, any point under any one of our distributions has probability mass within a constant factor of that of any other point under any other one of our distributions.  This 
gives the $D_{KL}(\bS_i || \bS_j) \leq O(1)$ upper bound mentioned above.) For example, recalling that the ``gaps'' in $\supp(p  \bV_1 \mod q)$ are of length $p$, choosing $\bU_1$ to be uniform over $\{1,\dots,3p/2\}$ will fill in each gap along with the first half of the following gap.  Intuitively, each $\bS_i = \bU_i + p \bV_i$ is a ``striped'' distribution, with equal-width ``light stripes'' (of uniformly distributed smaller mass) and ``dark stripes'' (of uniformly distributed larger mass), and each $\bS_{i+1}$ has stripes of width half of the $\bS_i$-sum's stripes. Roughly speaking, two such distributions $\bS_i$ and $\bS_j$ ``overlap enough'' (by a constant fraction) so that they are difficult to distinguish; however they are also ``distinct enough'' that a successful learning algorithm must be able to distinguish which $\bS_i$ its samples are drawn from in order to generate a high-accuracy hypothesis.

\ignore{
\green{

end green:}
end ignore:}

We now elaborate on the careful choice of $p$ and $q$ that was mentioned above.  The critical part of this choice of $p$ and $q$ is that for $i  \ge 1$, in order to get ``evenly spaced gaps,'' the remainders of $p \cdot s$ modulo $q$ where $s \in \{1, \dots, 2^{i-1}q/p\}$ should be roughly
evenly spaced, or \emph{equidistributed}, in the group $\mathbb{Z}_q$. Here the notion of ``evenly spaced" is with respect to the ``wrap-around" distance (also known as the \emph{Lee metric}) on the group $\mathbb{Z}_q$ (so, for example, the wrap-around distance between $1$ and $2$ is $1$, whereas the wrap-around distance between $q-1$ and $1$ is $2$). Roughly speaking, we would like $p \cdot s$ modulo $q$ to be equidistributed in $\mathbb{Z}_q$ when $s \in \{1, \dots, 2^{i-1}q/p\}$, for a range of successive values of $i$ (the more the better, since this means more distributions in our hard family and a stronger lower bound). Thus, qualitatively, we would like the remainders of $p$ modulo $q$ to be \emph{equidistributed at several scales.} We note that equidistribution phenomena are well studied in number theory and ergodic theory, see e.g.~\cite{Tao:HOF}. 

While this connection to equidistribution phenomena is useful for providing visual intuition (at least to the authors), in 
our attempts to implement the construction using powers of two that was just sketched, it seemed that in order to control the errors that arise in fact a \emph{doubly} exponential growth was required, leading to the construction of only
$\Theta(\log \log q)$ such distributions and hence a $\Omega (\log \log \log q)$ sample complexity lower bound.
Thus to achieve an $\Omega(\log \log q)$ sample complexity lower bound, our actual choice of $p$ and $q$ comes from the theory of continued fractions. In particular, we choose $p$ and $q$ so that $p/q$ has a continued fraction representation with ``many'' ($\Theta(\log q)$, though for technical reasons we use only $\log^{\Theta(1)} q$ many) convergents that grow relatively slowly.  These $T = \log^{\Theta(1)} q$ convergents translate into $T$ distributions $\bS_1, \dots, \bS_T$ in our ``hard family'' of distributions, and thus into an $\Omega(\log \log q)$ sample lower bound via Fano's inequality.  

The key property that we use is a well-known fact in the theory of continued fractions:  if $g_i/h_i$ is the $i^{th}$ convergent of a continued fraction for $p/q$, then $|g_i/h_i  - p/q| \le 1/(h_i \cdot h_{i+1})$. In other words, the $i^{th}$ convergent $g_i/h_i$ provides a non-trivially good approximation of $p/q$ (note that getting an error of $1/h_i$ would have been trivial). From this property, it is not difficult to see that the remainders of $p \cdot \{1, \dots, h_i\}$ are roughly equidistributed modulo $q$. 

Thus, a more accurate description of our (still idealized) construction is that we choose $\bV_i$ to be uniform on $\{1, \dots, h_i\}$ and $\bU_i$ to be uniform on roughly $\{1,\dots,(3/2) \cdot (q/h_i)\}$. So as to have as many distributions as possible in our family, we would like $h_i \approx (q/p) \cdot c^i$ for some fixed $c>1$. This can be ensured by choosing $p,q$ such that all the numbers appearing in the continued fraction representation of $p/q$ are bounded by an absolute constant; in fact, in the actual construction, we simply take $p/q$ to be a convergent of $1/\phi$ where $\phi$ is the golden ratio. With this choice we have that the  $i^{th}$ convergent of the continued fraction representation of $1/\phi$ is $g_i/h_i$, where $h_i \approx ((\sqrt{5}+1)/2)^i$. This concludes our informal description of the choice of $p$ and $q$. 

Again, we note that in our actual construction (see Figure~\ref{f:four-sicsirvs}), we cannot use uniform distributions over intervals (since we need to use PBDs), but rather we have shifted binomial distributions.  This adds some technical complication to the formal proofs, but the core ideas behind the construction are indeed as described above.

\begin{figure}[tbh]
\begin{center}
\begin{tabular}{ccc}
\begin{subfigure}{2in}
\includegraphics[width=2in]{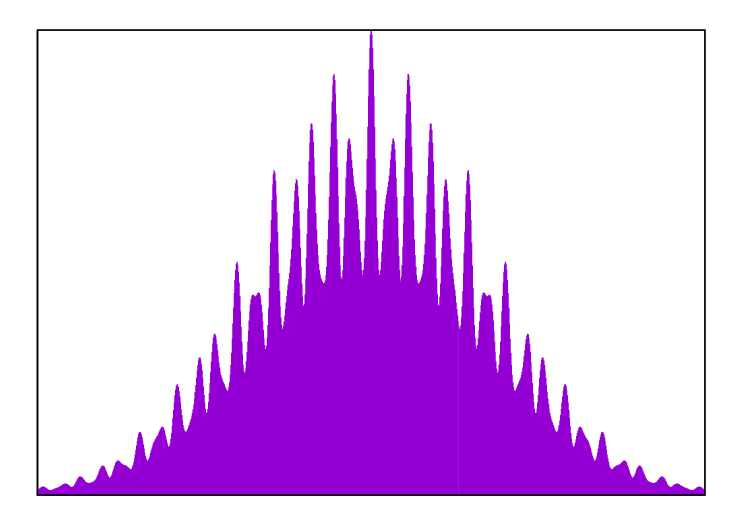}
\caption{}
\label{f:not_learnable_coarser}
\end{subfigure}
&
\begin{subfigure}{2in}
\includegraphics[width=2in]{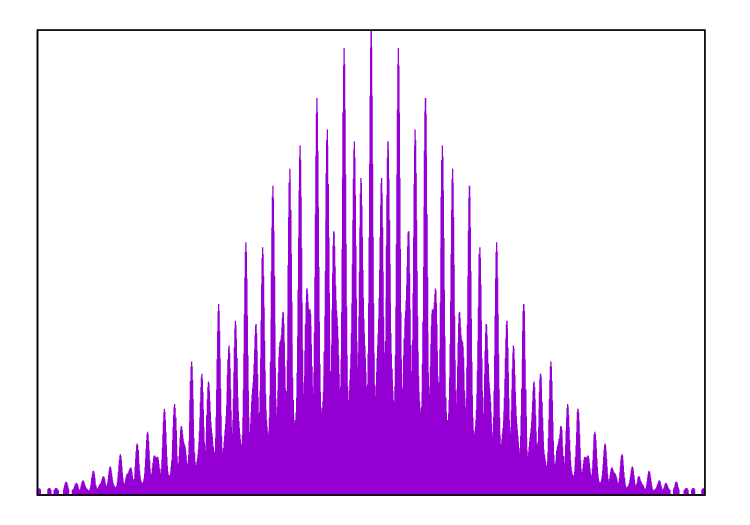}
\caption{}
\label{f:not_learnable_coarse}
\end{subfigure}
&
\begin{subfigure}{2in}
   \includegraphics[width=2in]{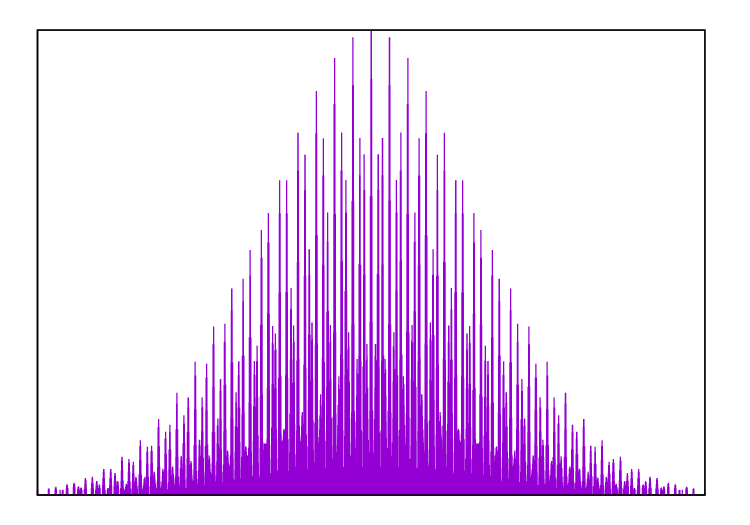}
 \caption{}
\label{f:not_learnable_fine}
\end{subfigure}
\end{tabular}
\end{center}
\caption{Examples of 
  targets
  used in our lower bound construction for Theorem~\ref{known-k-is-four-lower}.
  Very roughly,
  the 
  distribution
  in (\subref{f:not_learnable_coarse}) has peaks
  where the distribution
  (\subref{f:not_learnable_coarser}) does,
  plus a constant factor more peaks.  To compensate, its peaks are thinner.
  The distribution (\subref{f:not_learnable_fine}) has still more, 
  still thinner, peaks.
}
\label{f:four-sicsirvs}
\end{figure}

\medskip

\noindent {\bf Proof overview of Theorem \ref{unknown-k-is-three-lower}.}  As mentioned earlier, Theorem~\ref{unknown-k-is-three-lower} also uses our reduction from the modular 
learning problem.  Taking $a_1=0$ and $a_3 \approx \amax$ to be ``known'' to the learner, we show that any algorithm for learning a distribution of the form $(a_2 \bS_2 \mod a_3)$, where $0 < a_2 < a_3$ and $a_2$ is unknown to the learner and $\bS_2$ is a PBD$_N$, must use $\Omega(\log a_3)$ samples.  Like Theorem~\ref{known-k-is-four-lower}, we prove this using Fano's inequality, by constructing a ``hard family'' of $(a_3)^{\Omega(1)}$ many distributions of this type such that any two  distinct distributions in the family have variation distance $\Omega(1)$ but KL-divergence $O(1).$

We sketch the main ideas of our construction, starting with the upper bound on KL-divergence.  The value $a_3$ is taken to be a prime.  The same PBD$_N$ distribution $\bS_2$, which is simply a shifted binomial distribution and may be assumed to be ``known'' to the learner, is used for all of the distributions in the ``hard family'', so different distributions in this family differ only in the value of $a_2$.  The shifted binomial distribution $\bS_2$ is taken to have variance $\Theta((a_3)^2)$, so, very roughly, $\bS_2$ assigns significant probability on $\Theta(a_3)$ distinct values.  From this property, it is not difficult to show (similar to our earlier discussion) that any point in the domain $\{0,1,\dots,a_3-1\}$ under any one of our distributions has probability mass within a constant factor of that of any other point under any other one of our distributions (where the constant factor depends on the hidden constant in the $\Theta((a_3)^2)$).  This gives the required $O(1)$ upper bound on KL-divergence.

It remains to sketch the $\Omega(1)$ lower bound on variation distance.  As in our discussion of the Theorem~\ref{known-k-is-four-lower} lower bound, for intuition it is convenient to think of the shifted binomial distribution $\bS_2$ as being uniform over an interval of the domain $\{0,1,\dots,a_3-1\}$; by carefully choosing the variance and offset of this shifted binomial, we may think of this interval as being $\{0,1,\dots,r-1\}$ for $r = \kappa a_3$ for some small constant $\kappa >0$ (the constant $\kappa$ again depends on the hidden constant in the $\Theta((a_3)^2)$) value of the variance). So for the rest of our intuitive discussion we view the distributions in the hard family as being of the form $(a_2 \cdot \bU_r \mod a_3)$ where $\bU_r$ is uniform over $\{0,1,\dots,r-1\}$, $r=\kappa a_3$. 

Recalling that $a_3$ is prime, it is clear that for any $0 < a_2 < a_3$, the distribution $(a_2 \cdot \bU_r \mod a_3)$ is uniform over an $(r=\kappa a_3)$-element subset of $\{0,\dots,a_3-1\}.$  If $a_2$ and $a'_2$ are two independent uniform random elements from $\{1,\dots,a_3-1\}$, then since $\kappa$ is a small constant, intuitively the overlap between the supports of $(a_2 \cdot \bU_r \mod a_3)$ and
$(a'_2 \cdot \bU_r \mod a_3)$ should be small, and consequently the variation distance between these two distributions should be large.
This in turn suggests that by drawing a large random set of values for $a_2$, it should be possible to obtain a large family of distributions of the form $(a_2 \cdot \bU_r \mod a_3)$ such that any two of them have large variation distance.  We make this intuition precise using a number-theoretic equidistribution result of Shparlinski~\cite{Shparlinski:08} and a probabilistic argument showing that indeed a random set of $(a_3)^{1/3}$ choices of $a_2$ is likely to have the desired property.  This gives a ``hard family'' of size $(a_3)^{1/3}$, leading to an $\Omega(\log a_3) = \Omega(\log \amax)$ lower bound via Fano's inequality.  As before some technical work is required to translate these arguments for the uniform distribution over to the shifted binomial distributions that we actually have to work with, but we defer these technical details to Section~\ref{sec:unknown-lower}.

\section{Preliminaries}
\label{sec:prelim}

\subsection{Basic notions and useful tools from probability.}

\medskip \noindent {\bf Distributions.}  We will typically ignore the distinction between a random variable and its distribution.
We use bold font $\bX_i,\bS$, etc. to denote random variables (and also distributions).

For a distribution $\bX$ supported on the integers we write $\bX(i)$ to denote
the value $\Pr[\bX=i]$ of the probability density function of $\bX$ at point $i$, and $\bX(\leq i)$ to denote the value $\Pr[\bX \leq i]$ of the cumulative density function of $\bX$ at point $i$.
For $S \subseteq \Z$, we write $\bX(S)$ to denote $\sum_{i \in S}\bX(i)$ and $\bX_S$ to denote
the conditional distribution of $\bX$ restricted to $S.$  \ignore{Sometimes we write $X(I)$ and $X_I$ for a subset $I \subseteq [0,n]$, meaning $X(I \cap [n])$ and $X_{I \cap [n]}$ respectively.}

\medskip \noindent {\bf Total Variation Distance.} Recall that the {\em total variation distance} between two distributions
$\bX$ and $\bY$ over a countable set $D$ is
\begin{align*}
\dtv\left(\bX,\bY \right) &:= {\frac 1 2} \cdot \sum_{\alpha
\in D}{|\bX(\alpha)-\bY(\alpha)|} = \max_{S \subseteq D}[\bX(S)-\bY(S)],
\end{align*}
with analogous definitions for pairs of distributions over $\R$, over $\R^k$, etc.
Similarly, if $\bX$ and
$\bY$ are two random variables ranging over a countable set, their total
variation distance $\dtv(\bX,\bY)$ is defined as the total variation
distance between their distributions.
We sometimes write ``$\bX \stackrel{\eps}{\approx} \bY$'' as shorthand for ``$\dtv(\bX,\bY) \leq \eps$''.

For $\bX$ and $\bY$ with $\dtv(\bX,\bY) \leq \eps$, the following coupling lemma justifies thinking
of a draw from $\bY$ as being obtained by making a draw from $\bX$, and modifying it with
probability at most $\epsilon$.
\begin{lemma}[\cite{lindvall2002lectures}]
\label{l:coupling}
For random variables $\bX$ and $\bY$ with $\dtv(\bX,\bY) \leq \eps$, there is
a joint distribution whose marginals are $\bX$ and $\bY$ such that, with probability at
least $1 - \eps$, $\bX = \bY$.
\end{lemma}

\medskip \noindent {\bf Shift-invariance.}
Let $\bX$ be a finitely supported real-valued random variable. 
For an integer $k$ we write $\dshift{k}(\bX)$ to denote $\dtv(\bX, \bX+k)$.  We say that $\bX$ is \emph{$\alpha$-shift-invariant at scale $k$} if $\dshift{k}(\bX) \leq \alpha$; if $\bX$ is $\alpha$-shift-invariant at scale 1 then we sometimes simply say that $\bX$ is \emph{$\alpha$-shift-invariant}.
We will use the following basic fact:

\begin{fact} \label{fact:SI}

\begin{enumerate}
\item
If $\bX,\bY$ are independent random variables then $\dshift{k}(\bX + \bY) \leq \dshift{k}(\bX).$

\item Let $\bX$ be $\alpha$-shift-invariant at scale $p$ and $\bY$ (independent
 from $\bX$) be $\beta$-shift-invariant at scale $q$.  Then $\bX+\bY$ is
 both $\alpha$-shift-invariant at scale $p$ and
 $\beta$-shift-invariant at scale $q$.

\end{enumerate}
\end{fact}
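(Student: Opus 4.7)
My plan is to prove part (1) via a coupling argument and then derive part (2) as a direct two-line consequence. Part (1) is the substantive content; part (2) is essentially bookkeeping.

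For part (1), I would start from the standard characterization of total variation distance in terms of optimal couplings: there exists a coupling $(\bX, \bX')$ on a common probability space with $\bX' \stackrel{d}{=} \bX + k$ and $\Pr[\bX \neq \bX'] = \dtv(\bX, \bX+k) = \dshift{k}(\bX)$. Next I would take $\bY$ independent of this coupling (extending the probability space if necessary). Then $(\bX+\bY, \bX'+\bY)$ is a coupling of the pair $(\bX+\bY,\, \bX+\bY+k)$, because $\bX'+\bY$ has the same distribution as $\bX+k+\bY$ by independence of $\bY$ from $(\bX,\bX')$ combined with $\bX' \stackrel{d}{=} \bX + k$. The bound then follows because the two sums coincide whenever $\bX = \bX'$, so
\[
\dtv(\bX+\bY,\, \bX+\bY+k) \leq \Pr[\bX+\bY \neq \bX'+\bY] \leq \Pr[\bX \neq \bX'] = \dshift{k}(\bX).
\]

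For part (2), applying part (1) gives $\dshift{p}(\bX+\bY) \leq \dshift{p}(\bX) \leq \alpha$, so $\bX+\bY$ is $\alpha$-shift-invariant at scale $p$. Swapping the roles of $\bX$ and $\bY$ (which is allowed by their independence and the commutativity of addition) gives $\dshift{q}(\bX+\bY) = \dshift{q}(\bY+\bX) \leq \dshift{q}(\bY) \leq \beta$, proving the second claim.

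The only real obstacle is making sure the coupling is set up cleanly — specifically, that $\bY$ is independent of the optimal coupling $(\bX,\bX')$, so that adding $\bY$ on both sides preserves the correct marginal distributions. Once that is handled, the rest is immediate, and part (2) requires no new ideas. Alternatively, one could bypass couplings entirely and prove part (1) by the direct calculation $\dtv(\bX+\bY, \bX+\bY+k) = \tfrac{1}{2}\sum_n \big|\sum_m \bY(m)(\bX(n-m) - \bX(n-m-k))\big| \leq \tfrac{1}{2}\sum_m \bY(m)\sum_n |\bX(n-m) - \bX(n-m-k)| = \dshift{k}(\bX)$ via the triangle inequality and Fubini, but the coupling proof is cleaner and more in line with the paper's later use of coupling methods.
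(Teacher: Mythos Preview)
Your proof is correct. The paper does not actually provide a proof of this fact; it is stated as a ``basic fact'' and left unproved. Your coupling argument for part (1) is the standard one, and your derivation of part (2) from part (1) is exactly right.
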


\medskip \noindent {\bf Kolmogorov Distance and the DKW Inequality.}  Recall that the {\em Kolmogorov distance} $\dk(\bX,\bY)$ between probability distributions
over the integers is 
\[
\dk(\bX,\bY):=\max_{j\in \Z} |\Pr[\bX \leq j] - \Pr[\bY \leq j]|,
\]
and hence for any interval $I=\{a,a+1,\dots,a+b\} \subset \Z$ we have that
\[
|\Pr[\bX \in I] - \Pr[\bY \in I]| \leq 2 \dk(\bX,\bY).
\]

Learning any distribution with respect to the Kolmogorov distance is relatively easy,
which follows from the \emph{Dvoretzky-Kiefer-Wolfowitz} (DKW) inequality.
Let $\wh{\bX}_m$ denote the empirical distribution of $m$ i.i.d. samples drawn from $\bX.$
The DKW inequality states that for $m=\Omega((1/\eps^2)\cdot \ln(1/\delta))$, with probability $1-\delta$
(over the draw of $m$ samples from $\bX$)
the {empirical distribution} $\wh{\bX}_m$ will be
$\eps$-close to $\bX$ in Kolmogorov distance:
\begin{theorem}[\cite{DKW56,Massart90}] \label{thm:DKW}
Let $\widehat{\bX}_m$ be an empirical distribution of $m$ samples from distribution $\bX$ over the integers.
Then for all $\eps>0$, we have
$$\Pr [ \dk(\bX, \widehat{\bX}_m) > \eps ] \leq 2e^{-2m\eps^2}.$$
\end{theorem}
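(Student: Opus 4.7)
The plan is to reduce the problem to bounding the uniform empirical process on $[0,1]$, and then to combine Hoeffding's inequality with a symmetrization argument. For a sample $\bX_1, \ldots, \bX_m$ from the integer-valued distribution $\bX$ with CDF $F$, I would apply the standard quantile coupling: take $\bU_1, \ldots, \bU_m$ i.i.d.\ uniform on $[0,1]$ and set $\bX_i := F^{-1}(\bU_i)$, which preserves joint laws. One then checks the identity $\widehat{F}_m(j) = \widehat{G}_m(F(j))$ at every integer $j$, where $\widehat{G}_m$ is the empirical CDF of the $\bU_i$'s, so $\dk(\bX, \widehat{\bX}_m) \leq \sup_{t \in [0,1]} |\widehat{G}_m(t) - t|$. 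It therefore suffices to prove the bound in the uniform case.

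For the uniform case, I would first note the easy suboptimal version. For each fixed $t$, the count $m\widehat{G}_m(t)$ is distributed as $\Bin(m,t)$, so Hoeffding's inequality gives $\Pr[|\widehat{G}_m(t) - t| > \eps] \leq 2 e^{-2m\eps^2}$. The function $t \mapsto \widehat{G}_m(t) - t$ is piecewise linear with jumps only at the order statistics $\bU_{(1)}, \ldots, \bU_{(m)}$, so its supremum is determined by the $2m$ values $\{\bU_{(i)}, \bU_{(i)}^-\}$; a union bound immediately yields $\Pr[\sup_t |\widehat{G}_m(t) - t| > \eps] \leq 4m \cdot e^{-2m\eps^2}$. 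This suboptimal constant would already suffice for every use of the DKW inequality in the present paper, at the cost of only an $O(\log m)$ factor in sample complexity.

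To obtain Massart's sharp constant $2$ (as stated in the theorem), the main obstacle is eliminating the spurious $m$ factor from the union bound. Here I would follow the approach of first establishing the one-sided inequality $\Pr[\sup_t(\widehat{G}_m(t) - t) > \eps] \leq e^{-2m\eps^2}$ (valid once $\eps \geq \sqrt{(\ln 2)/(2m)}$; below this threshold the target bound $2e^{-2m\eps^2}$ exceeds $1$ and is trivial), and then doubling by symmetry between upward and downward deviations. The one-sided bound is proved by constructing an exponential supermartingale indexed by the reverse-ordered sample and invoking Doob's submartingale maximal inequality, with a careful optimization over the exponential tilt parameter to avoid losing any polynomial factor. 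This constant-sharpening step is the technically delicate core of the argument, and the natural route would be to cite Massart's 1990 paper as a black box if a fully self-contained presentation is not required.
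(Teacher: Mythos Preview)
The paper does not prove this theorem; it is stated in the preliminaries as a classical result with citations to \cite{DKW56,Massart90} and is used as a black box throughout. Your sketch is a correct outline of the standard route (quantile-transform reduction to the uniform empirical process, followed by Massart's sharp one-sided bound), so there is nothing to compare against in the paper itself.
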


Convolving with an $\alpha$-shift invariant distribution can 
``spread the butter'' 
to transform distributions that
are close w.r.t.\ Kolmogorov distance into distributions that are close with respect to the more demanding total variation distance.
The following lemma makes this intuition precise:
\begin{lemma}[\cite{GMRZ11}]\label{lem:GMRZ}
Let $\bY, \bZ$ be distributions supported on the integers and $\bX$ be an $\alpha$-shift invariant distribution that is independent of $\bY,\bZ$.  Then for any $a,b$ such that\ignore{\pnote{needed for $d \leq b$ in their proof}} $\dk(\bY,\bZ) \leq \alpha b$, we have
\[
\dtv(\bY+\bX, \bZ+\bX) = O(\sqrt{\dk(\bY,\bZ) \cdot \alpha \cdot b}) + \Pr[\bY \notin [a,a+b)] + \Pr[\bZ \notin [a,a+b)].
\]
\end{lemma}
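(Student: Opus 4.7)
The plan is to reduce to the case where both distributions are supported on the window $I := [a, a+b)$ by truncating and paying the boundary terms, and then to use a coupling argument to translate Kolmogorov closeness into total variation closeness through the shift-invariance of $\bX$. Define $\bY_I, \bZ_I$ to be the conditional distributions of $\bY$ and $\bZ$ given membership in $I$. A direct calculation gives $\dtv(\bY, \bY_I) = \Pr[\bY \notin I]$ (and similarly for $\bZ$), and since convolution with $\bX$ does not increase $\dtv$, the triangle inequality yields
\[
\dtv(\bY+\bX,\, \bZ+\bX) \;\leq\; \dtv(\bY_I+\bX,\, \bZ_I+\bX) \;+\; \Pr[\bY \notin I] \;+\; \Pr[\bZ \notin I].
\]
I may assume the two boundary terms sum to at most $1/2$ (else the stated bound is trivial), in which case $\Pr[\bY \in I], \Pr[\bZ \in I] \geq 1/2$, and a short manipulation of conditional CDFs gives $\dk(\bY_I, \bZ_I) = O(\dk(\bY, \bZ))$.

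Next I would control $\dtv(\bY_I + \bX, \bZ_I + \bX)$ via $W_1$. Since $\bY_I$ and $\bZ_I$ are both supported on an interval of length $b$, the standard relation
\[
W_1(\bY_I, \bZ_I) \;=\; \sum_{j \in \Z}\bigl|F_{\bY_I}(j)-F_{\bZ_I}(j)\bigr| \;\leq\; b \cdot \dk(\bY_I,\bZ_I)
\]
holds, and the monotone (inverse-CDF) coupling $(Y, Z)$ of $\bY_I, \bZ_I$ attains $\E[|Y - Z|] = W_1(\bY_I, \bZ_I)$. By convexity of $\dtv$ in both arguments,
\[
\dtv(\bY_I + \bX,\, \bZ_I + \bX) \;\leq\; \E_{(Y,Z)}\!\bigl[\dtv(\bX + Y,\, \bX + Z)\bigr] \;=\; \E_{(Y,Z)}\!\bigl[\dshift{|Z-Y|}(\bX)\bigr].
\]
Iterating the definition of $\alpha$-shift-invariance with the triangle inequality gives $\dshift{k}(\bX) \leq |k|\,\alpha$, so the right-hand side is at most $\alpha \cdot \E[|Y-Z|] = \alpha \cdot W_1(\bY_I,\bZ_I) = O(\alpha \cdot b \cdot \dk(\bY,\bZ))$. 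Combined with the truncation step, this yields the (in fact stronger) bound
\[
\dtv(\bY+\bX,\, \bZ+\bX) \;=\; O(\alpha \cdot b \cdot \dk(\bY,\bZ)) + \Pr[\bY \notin I] + \Pr[\bZ \notin I],
\]
which implies the stated square-root form, since $\alpha b \dk(\bY,\bZ) \leq \sqrt{\alpha b \dk(\bY,\bZ)}$ whenever $\alpha b \dk(\bY,\bZ) \leq 1$ (and both bounds are trivially satisfied otherwise).

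The only real subtlety is the bookkeeping in the truncation step: one has to verify that conditioning does not blow up the Kolmogorov distance by more than a constant factor plus the boundary mass, so that $\dk(\bY_I,\bZ_I)$ is bounded by $\dk(\bY,\bZ)$ up to terms already absorbed in the $\Pr[\bY \notin I] + \Pr[\bZ \notin I]$ overhead; this follows from a brief computation with $p = \Pr[\bY \in I]$ and $q = \Pr[\bZ \in I]$, using $p, q \geq 1/2$. Everything else is a clean combination of the monotone coupling, the elementary bound $W_1 \leq b \cdot \dk$ on an interval of length $b$, and iterated shift-invariance.
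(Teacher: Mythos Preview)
The paper does not prove this lemma; it is quoted from \cite{GMRZ11}. Your argument is correct and self-contained, and in fact yields the sharper linear bound $O(\alpha\, b\, \dk(\bY,\bZ))$ in place of the square root.

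Your route is genuinely different from the one the paper (and, by extension, \cite{GMRZ11}) takes. The paper's own multidimensional generalization, Proposition~\ref{prop:multidim-kernel}, reveals the intended method: partition the window $[a,a+b)$ into $\lceil b/d\rceil$ blocks of width $d$; within a block, shift-invariance bounds the post-convolution discrepancy by $O(d\alpha)$, while across blocks the mass imbalance is at most $\dk(\bY,\bZ)$ per block, for a total of $O(d\alpha + (b/d)\,\dk)$; optimizing over $d$ gives $O(\sqrt{b\,\alpha\,\dk})$. Your approach instead passes through the Wasserstein distance: the monotone coupling realizes $W_1(\bY_I,\bZ_I)=\sum_j|F_{\bY_I}(j)-F_{\bZ_I}(j)|\le b\cdot \dk(\bY_I,\bZ_I)$, and then the elementary bound $\dtv(\bX+y,\bX+z)\le \alpha|y-z|$ converts $W_1$ directly into a TV bound after convolution. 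This avoids the bucketing tradeoff entirely and explains why you land on the linear bound rather than the square root. The bucketing argument, on the other hand, generalizes cleanly to higher dimensions (as the paper needs), where there is no analogue of the one-dimensional CDF formula for $W_1$.
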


We will also require a multidimensional generalization of Kolmogorov distance and of the DKW inequality. 
Given probability distributions $\bX,\bY$ over $\Z^d$, the Kolmogorov distance between $\bX$ and $\bY$ is
\[
\dk(\bX,\bY):=\max_{(j_1,\dots,j_d) \in \Z^d} |\Pr[\bX_i \leq j_i \text{~for all~}i\in[d]] - \Pr[\bY \leq j_i \text{~for all~}i\in[d]]|,
\]
and so for any axis-aligned rectangle $R = \prod_{i=1}^d \{a_i,\dots,a_i +b_i\} \subset \Z^d$ we have
\[
|\Pr[\bX \in R] - \Pr[\bY \in R]| \leq 2^d \dk(\bX,\bY).
\]  
We will use the following generalization of the DKW inequality to the multidimensional setting.
\begin{lemma}[\cite{talagrand1994sharper}]
\label{lem:DKW-multidim} 
Let $\widehat{\bX}_m$ be an empirical distribution of $m$ samples from distribution $\bX$ over $\Z^d$.
There are absolute constants $c_1$, $c_2$ and $c_3$ such that, 
for all $\eps>0$, for all $m \geq c_1 d/\epsilon^2$,\;
$$\Pr [ \dk(\bX, \widehat{\bX}_m) > \eps ] \leq c_2^d e^{-c_3 \epsilon^2 m}. $$
\end{lemma}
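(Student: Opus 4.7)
The plan is to realize the Kolmogorov distance $\dk(\bX, \widehat{\bX}_m)$ as a uniform empirical deviation over a VC class of sets, and then invoke Talagrand's sharp uniform convergence inequality. Let $\mathcal{F}$ denote the collection of axis-aligned lower orthants $\{x \in \Z^d : x_i \leq j_i \text{ for all } i \in [d]\}$ parameterized by $(j_1,\ldots,j_d) \in \Z^d$. Directly from the definition of the multidimensional Kolmogorov distance,
\[
\dk(\bX, \widehat{\bX}_m) \;=\; \sup_{A \in \mathcal{F}} \bigl|\widehat{\bX}_m(A) - \bX(A)\bigr|.
\]

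The next step is to verify that $\mathcal{F}$ has VC dimension exactly $d$. The lower bound is witnessed by the standard basis vectors $e_1,\ldots,e_d$: for any $S \subseteq [d]$, the thresholds $j_i = \mathbf{1}[i \in S]$ produce an orthant containing precisely $\{e_i : i \in S\}$. For the upper bound, given any set of $d+1$ points in $\Z^d$, by the pigeonhole principle at least one point $p$ fails to achieve the coordinate-wise maximum in any coordinate. Any orthant that contains the remaining $d$ points must have threshold $j_i$ at least as large as the $i$-th coordinate-wise maximum (since the maximizer in coordinate $i$ lies in the orthant), and therefore must also contain $p$; hence the dichotomy that isolates $p$ from the other $d$ points cannot be realized.

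Having expressed $\dk(\bX,\widehat{\bX}_m)$ as a supremum over a VC class of dimension $d$, we invoke Talagrand's sharpening of the Vapnik-Chervonenkis uniform convergence theorem. That theorem provides absolute constants $c_1, c_2, c_3$ such that, for any class $\mathcal{F}$ of subsets of a Polish space with VC dimension $V$ and any $m \geq c_1 V/\epsilon^2$,
\[
\Pr\!\left[\,\sup_{A \in \mathcal{F}} \bigl|\widehat{\bX}_m(A) - \bX(A)\bigr| > \epsilon\,\right] \;\leq\; c_2^{V}\, e^{-c_3\, m \epsilon^2}.
\]
Applying this with $V = d$ to the VC class $\mathcal{F}$ of lower orthants yields the conclusion.

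The main obstacle in this proof is obtaining the \emph{exponential}-in-$d$ prefactor $c_2^d$ rather than the polynomial-in-$m$ shatter-function factor that the classical Vapnik--Chervonenkis bound delivers (something of the form $(em/d)^d e^{-2 m \epsilon^2}$, whose $m$-dependence is undesirable when $m \gg d$). This sharpening is precisely the content of Talagrand's chaining/partitioning-based argument, which we invoke as a black box rather than reproducing; once it is granted, the reduction to orthants together with the elementary VC-dimension computation above completes the proof.
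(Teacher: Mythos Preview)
The paper does not prove this lemma; it simply cites Talagrand~\cite{talagrand1994sharper} and states the result without argument. Your proposal correctly unpacks how the stated bound follows from Talagrand's general VC inequality: you express $\dk(\bX,\widehat{\bX}_m)$ as a uniform deviation over the class of lower orthants, verify that this class has VC dimension $d$, and then invoke Talagrand's theorem, which supplies the crucial $c_2^V$ prefactor (rather than the $m$-dependent shatter-coefficient factor from the classical VC bound). This is exactly the intended derivation, so your write-up is a welcome expansion of what the paper leaves implicit.

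One small tightening of your upper-bound VC argument: the pigeonhole step should say that, for the set of $d+1$ points to be shattered, each point must be the \emph{strict} (unique) coordinate-wise maximizer in at least one coordinate (otherwise the dichotomy excluding only that point is unrealizable), and since there are only $d$ coordinates this is impossible. Your phrasing ``fails to achieve the coordinate-wise maximum in any coordinate'' is close but slightly ambiguous about ties; the strictness is what makes the pigeonhole go through.
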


\medskip \noindent {\bf Covers.} Let ${\cal P}$ denote a set of
distributions over the integers. Given $\delta > 0$, a set of distributions ${\cal Q}$ is said to be a \emph{$\delta$-cover of ${\cal P}$} (w.r.t. the
total variation distance) if for every distribution $\bP$ in ${\cal P}$ there
exists some distribution $\bQ$ in ${\cal Q}$ such that
$\dtv(\bP,\bQ) \leq \delta.$
We sometimes say that distributions $\bP,\bQ$ are
\emph{$\delta$-neighbors} if $\dtv(\bP,\bQ) \leq \delta$, or that $\bP$ and $\bQ$ are \emph{$\delta$-close}.

\medskip \noindent {\bf Support and essential support.}  We write $\supp(\bP)$ to denote the support of distribution $\bP.$  Given a  distribution $\bP$ over the integers, we say that $\bP$ is \emph{$\tau$-essentially supported on $S \subset \Z$} if $\bP(S) \geq 1-\tau.$  \ignore{We will use the following
simple fact, which says that the set of all distributions with essential support $S$ has a ``small'' cover which can be easily enumerated given $S$:\rnote{I think we should introduce the phrase ``essential support'', but it doesn't look like we ever use Fact~\ref{fact:small-cover}; shall we remove it?}

\begin{fact} [``Small'' cover for sparsely supported distributions.] \label{fact:small-cover}
Fix a finite set $S$ of integers, and let ${\cal P}_{\tau,S}$ be the set of all distributions that are $\tau$-essentially supported on $S$.  There is a set ${\cal Q}$ of $|{\cal Q}| \leq (|S|/\tau + 1)^{|S|}$ many distributions, all of which are supported on 
$|S|$, that form a $(2 \tau)$-cover of $S$.  Moreover, given $S$ and $\tau$ it is possible to list the elements of ${\cal Q}$ in
$\poly((|S|/\tau)^{|S|})$ time.
\end{fact}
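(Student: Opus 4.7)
\textbf{Proof plan for Fact~\ref{fact:small-cover}.} The plan is to build the cover in two stages: first, reduce from $\tau$-essentially supported distributions to distributions exactly supported on $S$; second, grid-discretize the simplex on $S$.

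For the first stage, given any $\bP \in {\cal P}_{\tau,S}$, fix an arbitrary point $i_0 \in S$ and define $\bP'$ to be the distribution on $S$ obtained from $\bP$ by setting $\bP'(i) = \bP(i)$ for $i \in S \setminus \{i_0\}$ and $\bP'(i_0) = \bP(i_0) + (1 - \bP(S))$. A direct computation shows
\[
\dtv(\bP, \bP') \;=\; \tfrac12\bigl((1 - \bP(S)) + (1 - \bP(S))\bigr) \;=\; 1 - \bP(S) \;\leq\; \tau.
\]

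For the second stage, set $M := \lceil |S|/(2\tau)\rceil$ and let ${\cal Q}$ be the set of all distributions $(q_i)_{i \in S}$ on $S$ such that each $q_i$ is a non-negative integer multiple of $1/M$ (so necessarily $\sum_i M q_i = M$). I would bound $|{\cal Q}|$ by the number of ways to write $M$ as an ordered sum of $|S|$ non-negative integers, namely $\binom{M + |S| - 1}{|S|-1} \leq (M+1)^{|S|} \leq (|S|/\tau + 1)^{|S|}$. Given any distribution $\bP'$ supported on $S$, I would produce a nearby element of ${\cal Q}$ by the usual rounding argument: initially set $\hat q_i := \lfloor M\bP'(i)\rfloor / M$, which satisfies $0 \leq \bP'(i) - \hat q_i < 1/M$; let $d := M - \sum_i \lfloor M \bP'(i)\rfloor \in \{0,1,\dots,|S|-1\}$ be the rounding defect; then increment $\hat q_i$ by $1/M$ on any choice of $d$ coordinates to obtain a $\bQ \in {\cal Q}$. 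Each coordinate changes by at most $1/M$, so $\dtv(\bP',\bQ) \leq \tfrac12 \cdot |S|/M \leq \tau.$

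Combining the two stages by the triangle inequality, every $\bP \in {\cal P}_{\tau,S}$ is within total variation distance $2\tau$ of some $\bQ \in {\cal Q}$, and the cardinality bound $|{\cal Q}| \leq (|S|/\tau + 1)^{|S|}$ holds by the above. For the algorithmic claim, ${\cal Q}$ can be enumerated by iterating over all $|S|$-tuples of non-negative integers summing to $M$ (e.g., by a standard lexicographic enumeration of compositions), which runs in time $\poly(|S|) \cdot |{\cal Q}|= \poly((|S|/\tau)^{|S|})$.

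The argument is essentially routine; the only mild subtlety is ensuring that the gridded approximations are themselves valid probability distributions (summing to exactly $1$), which is handled by the ``distribute the defect'' step rather than a naive coordinate-wise rounding.
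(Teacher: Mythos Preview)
Your proof is correct and follows the standard two-step argument (truncate to $S$, then grid the simplex). The paper does not actually give a proof of this fact: it is stated as a ``simple fact'' without justification, and in fact sits inside an \texttt{\textbackslash ignore} block with an editorial note questioning whether to remove it. So there is no paper proof to compare against; your argument supplies exactly the routine verification the authors omitted.

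One minor remark: your bound $M+1 \leq |S|/\tau + 1$ uses $\lceil |S|/(2\tau)\rceil \leq |S|/\tau$, which needs $\tau \leq |S|/2$. This is harmless since for $\tau > 1/2$ any single distribution already $(2\tau)$-covers everything, but you might mention the trivial case explicitly.
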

}

\subsection{The distributions we work with.}

We recall the definition of 
an $\supportset$-sum
and give some related definitions.
For $0 \leq a_1 < ... < a_k$ and $\supportset = \{ a_1,...,a_k\}$,
a $\supportset$-sum is a distribution
$\bS = \sum_{i=1}^N \bX_i$ where the $\bX_i$'s are independent integer random variables (\emph{not} assumed to be identically distributed) all of which are supported on the same set of integer values $a_1 < a_2 < \cdots < a_k \in \Z_{\geq 0}.$  A \emph{Poisson Binomial Distribution}, or PBD$_N$, is a 
$\{ 0,1 \}$-sum.
\ignore{We will sometimes consider \emph{generalized SICSIRVs} in which the $a_i$'s are allowed to be negative.  If $\bS$ is a generalized SICSIRVs over a set of values $\{a_1,\dots,a_k\}$ that includes 0, we say that $\bS$ is \emph{$0$-moded} if each $\bX_i$ has mode 0.}

A {\em weighted sum of PBDs} is a distribution
$\bS = a_2 \bS_2 + \cdots + a_k \bS_k$  where each $\bS_i$ is an independent PBD$_{N_i}$ and $N_2 + \cdots + N_k = N.$  Equivalently we have that $\bS = \sum_{i=1}^{N} \bX_i$ where $N_2$ of the $\bX_i$'s are supported on $\{0,a_2\}$, $N_3$ are supported on $\{0,a_3\}$, and so on.\ignore{  (Note that this include the case in which there are only $N$ instead of $(k-1)N$ summands $\bX_i$ since the ``extra'' $(k-2)N$ $\bX_i$'s can all be viewed as trivial, i.e. supported only on 0.)}  

Let us say that a \emph{signed PBD}$_N$ is a random variable $\bS = \sum_{i=1}^N \bX_i$ where the $\bX_i$'s are independent and each is either supported on $\{0,1\}$ or is supported on $\{0,-1\}.$  
We defined a weighted sum of signed PBDs analogously to
the unsigned case.

Finally, we say that an integer valued random variable $\bX$ has \emph{mode 0} if $\Pr[\bX=0] \geq \Pr[\bX = b]$ for all $b \in \Z.$

\medskip \noindent {\bf Translated Poisson Distributions and Discretized Gaussians.} We will make use of the translated Poisson distribution for approximating signed PBDs with large variance. 
\begin{definition}[\cite{Rollin:07}]
We say that an integer random variable $\bY$ is distributed according to the {\em translated Poisson distribution with parameters $\mu$ and $\sigma^2$}, denoted $TP(\mu, \sigma^2)$,
iff $\bY$ can be written as
\[
\bY = \lfloor \mu-\sigma^2\rfloor + \bZ,
\]
where $\bZ$ is a random variable distributed according to ${\rm Poisson}(\sigma^2+ \{\mu-\sigma^2\})$, where
$\{\mu-\sigma^2\}$ represents the fractional part of $\mu-\sigma^2$.
\end{definition}
The following lemma gives a useful bound on the variation distance between a signed PBD and a suitable translated Poisson distribution.  

\begin{lemma} \label{lem:translated Poisson approximation}
Let $\bS$ be a signed PBD$_N$ with mean $\mu$ and variance $\sigma^2 \geq 1.$
Then
$$\dtv\left(\bS, TP(\mu, \sigma^2)\right) \le O(1/\sigma).$$
\end{lemma}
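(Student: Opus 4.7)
The plan is to reduce to the $\sigma_-=0$ (ordinary PBD) special case, which is the classical translated Poisson approximation of R\"ollin~\cite{Rollin:07}, via a decomposition of $\bS$ into its positive- and negative-supported parts together with a shift-invariance argument.

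First I partition the summands: set $I_+ = \{i : \supp(\bX_i) \subseteq \{0,1\}\}$ and $I_- = \{i : \supp(\bX_i) \subseteq \{0,-1\}\}$, and write $\bS = \bS_+ - \bT_-$, where $\bS_+ := \sum_{i \in I_+}\bX_i$ and $\bT_- := -\sum_{i \in I_-}\bX_i$ are independent (unsigned) PBDs with means $\mu_\pm$ and variances $\sigma_\pm^2$ satisfying $\mu = \mu_+-\mu_-$ and $\sigma^2 = \sigma_+^2+\sigma_-^2$. Replacing $\bS$ by $-\bS$ if necessary, I may assume $\sigma_+^2 \geq \sigma^2/2$, so $\sigma_+ = \Theta(\sigma)$.

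Second, I apply R\"ollin's unsigned bound to $\bS_+$ to get $\dtv(\bS_+, TP(\mu_+,\sigma_+^2)) = O(1/\sigma_+) = O(1/\sigma)$, and then convolve with the independent $-\bT_-$ (which cannot increase TV) to obtain $\dtv(\bS,\; \bZ_+ - \bT_-) = O(1/\sigma)$, where $\bZ_+ \sim TP(\mu_+,\sigma_+^2)$ is independent of $\bT_-$. A direct computation shows $\bZ_+ - \bT_-$ has mean exactly $\mu$ and variance within an additive $O(1)$ of $\sigma^2$, and since perturbing the parameters of a translated Poisson with variance $\sigma^2$ by $O(1)$ changes the distribution by only $O(1/\sigma)$ in TV, it suffices to show $\dtv(\bZ_+ - \bT_-,\; TP(\mu,\sigma^2)) = O(1/\sigma)$.

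For this last step I would argue as follows. Both distributions are $O(1/\sigma)$-shift-invariant at scale $1$: the target $TP(\mu,\sigma^2)$ directly from the Poisson PMF, and $\bZ_+ - \bT_-$ because $\bZ_+$ is a deterministic shift of a Poisson random variable with parameter $\Theta(\sigma^2)$ (hence $O(1/\sigma)$-shift-invariant at scale $1$), a property which passes to the independent sum via Fact~\ref{fact:SI}. Applying Berry--Esseen separately to the PBD $\bT_-$, to the Poisson inside $\bZ_+$, and to the Poisson inside $TP(\mu,\sigma^2)$, both random variables in question are $O(1/\sigma)$-close in Kolmogorov distance to the continuous Gaussian $\mathcal{N}(\mu,\sigma^2)$, and hence $O(1/\sigma)$-close to each other in Kolmogorov distance. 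A smoothing argument in the style of Lemma~\ref{lem:GMRZ}---noting that both distributions put all but $\eps_*$ mass in an interval of length $O(\sigma\sqrt{\log(1/\eps_*)})$ around $\mu$---then converts this Kolmogorov estimate into the desired TV estimate.

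The main obstacle is this last step: comparing the ``Skellam-like'' convolution $\bZ_+ - \bT_-$ to a single translated Poisson with the same mean and variance. The parameters have to balance precisely---Kolmogorov error $O(1/\sigma)$, shift-invariance parameter $O(1/\sigma)$, and effective support length $O(\sigma)$---to yield the final $O(1/\sigma)$ TV bound without logarithmic loss. An alternative route that would sidestep the gluing altogether is to redo the Stein's-method computation underlying R\"ollin's theorem directly for signed sums: each summand $\bX_i$ of a signed PBD still satisfies $|\bX_i| \le 1$, so the third-moment quantities driving the Stein argument go through essentially unchanged, yielding the same $O(1/\sigma)$ bound in one shot.
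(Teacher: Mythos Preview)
Your decomposition-and-glue approach is substantially more complicated than necessary, and the gluing step as you describe it does not in fact recover the $O(1/\sigma)$ bound. Plugging $\dk=O(1/\sigma)$, shift-invariance parameter $\alpha=O(1/\sigma)$, and essential-support length $b=\Theta(\sigma)$ into a Lemma~\ref{lem:GMRZ}-style estimate yields $\sqrt{\dk\cdot b\cdot \alpha}=O(1/\sqrt{\sigma})$, not $O(1/\sigma)$; there is no choice of the tail cutoff $\eps_*$ that makes all three terms balance at order $1/\sigma$. You correctly flag this as the main obstacle, but it is a real loss, not just a matter of care.

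The paper sidesteps the entire gluing problem with a one-line observation you missed: a signed PBD is already a \emph{deterministic integer shift} of an ordinary PBD. If $\bX_i$ is supported on $\{0,-1\}$ then $\bX_i+1$ is supported on $\{0,1\}$, so with $M=|I_-|$ the random variable $\bS+M$ is an ordinary PBD with the same variance $\sigma^2$ and mean $\mu+M$. R\"ollin's bound applied directly to $\bS+M$ gives $\dtv(\bS+M,\,TP(\mu+M,\sigma^2))=O(1/\sigma)$, and since $M$ is an integer, $TP(\mu+M,\sigma^2)=TP(\mu,\sigma^2)+M$ exactly; undoing the shift finishes the proof. No splitting into $\bS_+$ and $\bT_-$, no Skellam-like convolution, no Kolmogorov-to-TV conversion. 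Your alternative suggestion of redoing the Stein computation would also work, but is unnecessary given this translation trick.
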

\begin{proof}
Without loss of generality we may suppose that $\bS = \bX_1 + \cdots + \bX_N$ where $\bX_1,\dots,\bX_M$
are supported on $\{0,-1\}$ with $\E[\bX_i] = -p_i$ for $i \leq M$, and $\bX_{M+1},\dots,\bX_N$ are supported on $\{0,1\}$ with $\E[\bX_i]=p_i$ for $i >M.$  Let $\bX'_i = \bX_i + 1$ for $1 \leq i \leq M$, so $\bS' := \bX'_1 + \cdots + \bX'_M + \bX_{M+1} + \cdots + \bX_N$ are independent Bernoulli random variables where $\E[\bX'_i]=1-p_i$ for $i \leq M.$  

\cite{Rollin:07} (specifically equation (3.4)) shows that 
if $\bJ_1,\ldots,\bJ_N$ are independent Bernoulli random variables with $\E[\bJ_i]=p_i$, then 
\[
\dtv\left(\sum_{i=1}^N \bJ_i, TP(\mu,\sigma^2)\right) \leq 
\frac{\sqrt{\sum_{i=1}^Np_i^3(1-p_i)}+2}{\sum_{i=1}^Np_i(1-p_i)}\]
where $\mu = \sum_{i=1}^N p_i.$
Applying this to $\bS'$, we see that for $\mu' = \E[\bS']$, we have
\begin{align*}
\dtv\left(\bS',TP(\mu',\sigma^2)\right) & \leq
\frac{\sqrt{\sum_{i=1}^Mp_i(1-p_i)^3 + \sum_{i=M+1}^N p_i^3(1-p_i)}+2}{\sum_{i=1}^Np_i(1-p_i)}\\
& \leq
\frac{\sqrt{\sum_{i=1}^Np_i(1-p_i) }+2}{\sum_{i=1}^Np_i(1-p_i)} \leq O(1/\sigma).
\end{align*}
The claimed bound follows from this on observing that $\bS'$ is a translation of $\bS$ by $M$ and $TP(\mu',\sigma^2)$ is likewise a translation of $TP(\mu,\sigma^2)$ by $M$.
\end{proof}

The following bound on the total variation distance between translated Poisson distributions will be useful.  
\begin{lemma}[Lemma~2.1 of \cite{BarbourLindvall}] \label{lem: variation distance between translated Poisson distributions}
For $\mu_1, \mu_2 \in \mathbb{R}$ and $\sigma_1^2, \sigma_2^2 \in \mathbb{R}_+$ with $\lfloor \mu_1-\sigma_1^2 \rfloor \le \lfloor \mu_2-\sigma_2^2 \rfloor$, we have
$$\dtv(TP(\mu_1,\sigma_1^2), TP(\mu_2,\sigma_2^2)) \le \frac{|\mu_1-\mu_2|}{\sigma_1}+\frac{|\sigma_1^2-\sigma_2^2|+1}{\sigma_1^2}.$$
\end{lemma}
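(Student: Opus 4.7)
The plan is to bound the distance via the triangle inequality, splitting it into a ``mean change at fixed variance'' piece and a ``variance change at fixed mean'' piece. The first piece is handled by the standard shift-invariance of the Poisson distribution, while the second exploits the fact that changing a Poisson's parameter while simultaneously re-centering so as to preserve the mean is much cheaper than changing its parameter alone.

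First I would rewrite each translated Poisson as an integer shift of an ordinary Poisson: setting $k_i = \lfloor \mu_i - \sigma_i^2 \rfloor$ and $\lambda_i = \sigma_i^2 + \{\mu_i - \sigma_i^2\}$, Lemma~\ref{lem:translated Poisson approximation}'s definition gives $TP(\mu_i, \sigma_i^2) \stackrel{d}{=} k_i + \mathrm{Po}(\lambda_i)$, with $\mu_i = k_i + \lambda_i$ and $\sigma_i^2 \le \lambda_i < \sigma_i^2 + 1$. The hypothesis $k_1 \le k_2$ ensures that the two distributions are aligned in a convenient way. I would then insert the intermediate distribution $TP(\mu_2, \sigma_1^2)$, so that
\[
\dtv\bigl(TP(\mu_1, \sigma_1^2), TP(\mu_2, \sigma_2^2)\bigr) \le \dtv\bigl(TP(\mu_1, \sigma_1^2), TP(\mu_2, \sigma_1^2)\bigr) + \dtv\bigl(TP(\mu_2, \sigma_1^2), TP(\mu_2, \sigma_2^2)\bigr).
\]

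For the mean-change piece, both distributions are integer shifts of the same Poisson $\mathrm{Po}(\lambda_1)$ whose shifts differ by $\lfloor \mu_2 - \sigma_1^2 \rfloor - \lfloor \mu_1 - \sigma_1^2 \rfloor = (\mu_2 - \mu_1) + O(1)$. Applying the classical Poisson shift-invariance estimate $\dtv(\mathrm{Po}(\lambda), \mathrm{Po}(\lambda) + m) = O(|m|/\sqrt{\lambda})$, which follows from the ratio identity $\mathrm{Po}(\lambda)(j+1)/\mathrm{Po}(\lambda)(j) = \lambda/(j+1)$ combined with the local-central-limit bound $\max_j \mathrm{Po}(\lambda)(j) = O(1/\sqrt{\lambda})$, then yields the $|\mu_1 - \mu_2|/\sigma_1$ term of the claimed bound plus a harmless $O(1/\sigma_1)$ remainder.

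For the variance-change piece, both distributions have mean exactly $\mu_2$, and it is here that the tighter $1/\sigma_1^2$ denominator arises. My plan is to use a local Gaussian-approximation argument: expanding the Poisson p.m.f. near its mean shows that for $j = \mu_2 + x\sigma_1$ the pointwise difference between $\mathrm{Po}(\lambda_1)(j - k_1')$ and $\mathrm{Po}(\lambda_2)(j - k_2')$ (where $k_i' = \lfloor \mu_2 - \sigma_i^2\rfloor$ are chosen precisely so that both distributions are centered at $\mu_2$) is of order $\phi(x)(x^2 - 1) \cdot |\lambda_1 - \lambda_2|/(\sigma_1 \lambda_1)$, with $\phi$ the standard Gaussian density; summing $|\,\cdot\,|$ over $j$ (equivalently, integrating $\sigma_1\,dx$ against a density of order $1/\sigma_1$) then contributes only $O(|\lambda_1 - \lambda_2|/\lambda_1)$ because the odd-parity mean-shift contribution, which would have produced the naive $|\lambda_1 - \lambda_2|/\sqrt{\lambda_1}$ term, cancels once the means are aligned. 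Using $|\lambda_1 - \lambda_2| \le |\sigma_1^2 - \sigma_2^2| + 1$ converts this into the stated $(|\sigma_1^2 - \sigma_2^2|+1)/\sigma_1^2$ term.

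The principal obstacle is making the local Gaussian expansion of the previous paragraph rigorous with explicit constants and uniform error control, especially the cancellation of the first-order (mean-shift) term; this can be done either through a Berry-Esseen-style local limit theorem for Poisson tails or through a Stein-method coupling argument of the sort developed by R\"ollin and Barbour, with the $+1$ in the final numerator absorbing the fractional-part adjustments coming from the floors that define the $k_i$'s and $\lambda_i$'s. Combining the two triangle pieces then produces the claimed bound.
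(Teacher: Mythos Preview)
The paper does not give its own proof of this lemma; it is simply quoted from \cite{BarbourLindvall} as an imported tool, so there is nothing in the paper to compare your argument against.

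As a standalone sketch your overall architecture---triangle inequality through an intermediate translated Poisson, then shift-invariance for the mean piece and a matched-mean comparison for the variance piece---is the standard route and is essentially how Barbour and Lindvall proceed. Two points deserve tightening. First, $TP(\mu_1,\sigma_1^2)$ and $TP(\mu_2,\sigma_1^2)$ are \emph{not} integer shifts of the same Poisson: their underlying Poisson parameters are $\sigma_1^2+\{\mu_1-\sigma_1^2\}$ and $\sigma_1^2+\{\mu_2-\sigma_1^2\}$, which differ by up to~$1$, so the mean-change step already involves both a shift and a small parameter change. Second, and more substantively, the ``harmless $O(1/\sigma_1)$ remainder'' you flag is not harmless relative to the stated bound: for large $\sigma_1$ it dominates the $1/\sigma_1^2$ additive term, so as written your decomposition yields only a weaker inequality with an extra $O(1/\sigma_1)$. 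To recover the exact constants one should avoid the detour through $TP(\mu_2,\sigma_1^2)$ and instead bound the shift by $k_2-k_1$ and the parameter change by $\lambda_2-\lambda_1$ directly, using the identities $k_2-k_1+\lambda_2-\lambda_1=\mu_2-\mu_1$ and $|\lambda_1-\lambda_2|\le|\sigma_1^2-\sigma_2^2|+1$ together with the sharp Poisson bounds $\dtv(\mathrm{Po}(\lambda),1+\mathrm{Po}(\lambda))\le\lambda^{-1/2}$ and $\dtv(\mathrm{Po}(\lambda),1+\mathrm{Po}(\lambda-1))\le\lambda^{-1}$; this is what the Barbour--Lindvall argument does, and your Gaussian-expansion heuristic for the second inequality can be replaced by a short direct coupling.
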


We will also use discretized Gaussians, both real-valued and vector-valued (i.e. multidimensional).  A draw from
the \emph{discretized Gaussian} $\calN_D(\mu,\sigma^2)$ is obtained by making a draw from the normal
distribution $\calN(\mu,\sigma)$ and rounding to the nearest integer.  We refer to $\mu$ and $\sigma^2$ respectively as the ``underlying mean'' and ``underlying variance'' of $\calN_D(\mu,\sigma).$  Similarly, a draw from the \emph{multidimensional 
discretized Gaussian} $\calN_D(\mu,\Sigma)$ is obtained by making a draw from the multidimensional
Gaussian $\calN(\mu,\Sigma)$ with mean vector $\mathbf{\mu}$ and covariance matrix $\Sigma$ and rounding each coordinate to the nearest integer.   To avoid confusion we will always explicitly write ``multidimensional'' when dealing with a multidimensional Gaussian.

We recall some simple facts about the variation distance between different discretized Gaussian distributions (see 
Appendix~B of the full version of \cite{DDOST13}):

\begin{lemma} [Proposition B.5 of \cite{DDOST13}] 
\label{lem:B5}
Let $\bG$ be distributed as $\calN(\mu,\sigma^2)$ and let $\lambda \in \R$.  Then
$\dtv(\lfloor \bG + \lambda \rfloor, \lfloor \bG \rfloor + \lfloor \lambda \rfloor) \leq {\frac 1 {2 \sigma}}.$
\end{lemma}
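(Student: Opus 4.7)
The plan is to reduce the claim to the case where the shift lies in $[0,1)$, apply the data-processing inequality for total variation distance to pass from floor of a Gaussian to the Gaussian itself, and then invoke the standard bound on $\dtv$ between two Gaussians with equal variance.

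First, I would write $\lambda = m + f$, where $m = \lfloor \lambda \rfloor \in \Z$ and $f = \{\lambda\} \in [0,1)$. Since $m$ is an integer, $\lfloor \bG + \lambda \rfloor = m + \lfloor \bG + f \rfloor$ with probability one, while $\lfloor \bG \rfloor + \lfloor \lambda \rfloor = m + \lfloor \bG \rfloor$. Translating both random variables by the same integer $m$ leaves total variation distance unchanged, so it suffices to prove that $\dtv(\lfloor \bG + f \rfloor, \lfloor \bG \rfloor) \leq 1/(2\sigma)$.

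Next, since the floor map is measurable, the data-processing inequality for $\dtv$ gives $\dtv(\lfloor \bG + f \rfloor, \lfloor \bG \rfloor) \leq \dtv(\bG + f, \bG) = \dtv(\calN(\mu + f, \sigma^2), \calN(\mu, \sigma^2))$. (Operationally, the optimal coupling of the two Gaussians induces a coupling of their floors that agrees whenever the Gaussians do.) It remains to bound this last quantity.

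Finally, for two Gaussians with a common variance $\sigma^2$ and means $\mu_1, \mu_2$, a direct computation yields $\dtv(\calN(\mu_1,\sigma^2), \calN(\mu_2, \sigma^2)) = 2\Phi(|\mu_1 - \mu_2|/(2\sigma)) - 1$, where $\Phi$ denotes the standard normal CDF. Using the elementary estimate $\Phi(x) - 1/2 = \int_0^x \phi(t)\,dt \leq x/\sqrt{2\pi}$ (which follows from $\phi(t) \leq 1/\sqrt{2\pi}$), we obtain $2\Phi(f/(2\sigma)) - 1 \leq f/(\sigma\sqrt{2\pi}) \leq 1/(\sigma\sqrt{2\pi}) < 1/(2\sigma)$, where the last inequality uses $1/\sqrt{2\pi} < 1/2$. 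This completes the argument. There is no serious obstacle; the only care required is the reduction to $f \in [0,1)$ so that the shift being bounded by $1$ yields the claimed factor of $1/(2\sigma)$.
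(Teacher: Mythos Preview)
The paper does not supply its own proof of this lemma; it is simply quoted as Proposition~B.5 of \cite{DDOST13} (and the paper remarks that ``the same argument'' yields the extension in Lemma~\ref{lem:B5plus}, without spelling that argument out). Your proof is correct and self-contained: the reduction to a fractional shift $f\in[0,1)$ via integer translation, the data-processing inequality applied to the floor map, and the exact formula $\dtv(\calN(\mu_1,\sigma^2),\calN(\mu_2,\sigma^2))=2\Phi(|\mu_1-\mu_2|/(2\sigma))-1\le |\mu_1-\mu_2|/(\sigma\sqrt{2\pi})$ combine to give the bound, in fact with the slightly better constant $1/\sqrt{2\pi}$ in place of $1/2$. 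This is the natural argument and almost certainly the one intended in the cited reference.
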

The same argument that gives Lemma~\ref{lem:B5} also gives the following small extension:

\begin{lemma} \label{lem:B5plus}
Let $\bG$ be distributed as $\calN(\mu,\sigma^2)$ and let $\lambda \in \R, \rho \in \Z.$ Then
$\dtv(\lfloor \bG + \lambda \rfloor, \lfloor \bG \rfloor + \rho) \leq {\frac {|\rho - \lambda|} {2 \sigma}}.$
\end{lemma}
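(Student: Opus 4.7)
The plan is to reduce to a comparison of two shifted continuous Gaussians and then apply Pinsker's inequality, mirroring the structure of the argument for Lemma~\ref{lem:B5}. First I would use the fact that $\rho\in\Z$ to write $\lfloor \bG\rfloor + \rho = \lfloor \bG + \rho\rfloor$, so that the quantity to be bounded becomes
\[
\dtv\bigl(\lfloor \bG + \lambda\rfloor,\; \lfloor \bG + \rho\rfloor\bigr).
\]
The two random variables $\bG+\lambda$ and $\bG+\rho$ are distributed according to $\calN(\mu+\lambda,\sigma^2)$ and $\calN(\mu+\rho,\sigma^2)$ respectively. Since applying the (deterministic) floor map to two random variables can only decrease total variation distance (data processing), it suffices to prove
\[
\dtv\bigl(\calN(\mu+\lambda,\sigma^2),\,\calN(\mu+\rho,\sigma^2)\bigr)\leq \frac{|\rho-\lambda|}{2\sigma}.
\]

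The second step is the standard bound on the total variation distance between two Gaussians with equal variance. Using the closed form for the KL divergence,
\[
D_{KL}\bigl(\calN(\mu+\lambda,\sigma^2)\,\|\,\calN(\mu+\rho,\sigma^2)\bigr)=\frac{(\lambda-\rho)^2}{2\sigma^2},
\]
Pinsker's inequality $\dtv\leq\sqrt{D_{KL}/2}$ gives exactly $|\rho-\lambda|/(2\sigma)$. Combining this with the data processing step yields the claimed bound. (If $|\rho-\lambda|/(2\sigma)\geq 1$ the conclusion is vacuous, so we may freely assume the Pinsker bound is non-trivial.)

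There is no substantive obstacle: the only subtlety, and the reason this is a ``small extension'' of Lemma~\ref{lem:B5}, is the initial absorption of the integer shift $\rho$ into the argument of the floor, which is what allows the $\tfrac{1}{2\sigma}$ bound of Lemma~\ref{lem:B5} (obtained with a generic $|\lambda-\lfloor\lambda\rfloor|<1$ displacement) to be replaced by the sharper $\tfrac{|\rho-\lambda|}{2\sigma}$ bound whenever $\rho$ is specifically close to $\lambda$. An equivalent way to present the proof, matching the phrasing that ``the same argument'' gives this extension, is to observe that $\lfloor \bG+\lambda\rfloor = \lfloor \bG + (\lambda-\rho)\rfloor + \rho$, reducing to bounding $\dtv(\lfloor \bG+(\lambda-\rho)\rfloor,\lfloor \bG\rfloor)$, and then run the Gaussian-shift/Pinsker computation above with displacement $\lambda-\rho$ in place of the displacement $\lambda-\lfloor\lambda\rfloor$ used in the proof of Lemma~\ref{lem:B5}.
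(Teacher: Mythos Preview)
Your proof is correct and is essentially what the paper intends: the paper does not give a standalone argument but simply notes that the same argument as for Lemma~\ref{lem:B5} applies, and your reduction (absorb the integer $\rho$ into the floor, then bound the TV between two equal-variance Gaussians shifted by $|\rho-\lambda|$ via data processing plus Pinsker) is precisely that argument. Your closing remark, rewriting $\lfloor \bG+\lambda\rfloor = \lfloor \bG+(\lambda-\rho)\rfloor + \rho$ and rerunning the Lemma~\ref{lem:B5} proof with displacement $\lambda-\rho$, matches the paper's ``same argument'' phrasing exactly.
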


We will use the following theorem about approximation
of signed PBDs.
\begin{theorem}[\cite{CGS11} Theorem 7.1\footnote{The theorem in \cite{CGS11} is stated only for PBDs, but the result for signed PBDs is easily derived from the result for PBDs via a simple translation argument similar to the proof of Lemma~\ref{lem:translated Poisson approximation}.}]\label{thm:CGS}
For $\bS$ a signed PBD, $\dtv( \bS, \mathcal{N}_D(\mu, \sigma^2)) \le O(1/\sigma)$ where $\mu = \E[\bS]$ and $\sigma^2 = \Var[\bS]$.
\end{theorem}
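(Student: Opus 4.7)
The plan is to reduce the signed case to the unsigned PBD result of \cite{CGS11}, via exactly the translation trick used in the proof of Lemma~\ref{lem:translated Poisson approximation}. Write $\bS = \bX_1 + \cdots + \bX_N$ where, without loss of generality, $\bX_1,\dots,\bX_M$ are supported on $\{0,-1\}$ and $\bX_{M+1},\dots,\bX_N$ are supported on $\{0,1\}$. Define $\bX'_i := \bX_i + 1$ for $1 \le i \le M$ (so $\bX'_i \in \{0,1\}$) and $\bX'_i := \bX_i$ for $i > M$. Then $\bS' := \bX'_1 + \cdots + \bX'_N$ is a genuine (unsigned) PBD, and it satisfies $\bS' = \bS + M$. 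In particular, $\mu' := \E[\bS'] = \mu + M$ and $\Var[\bS'] = \Var[\bS] = \sigma^2$.

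Next, apply the PBD version of the theorem from \cite{CGS11} to $\bS'$. This yields
\[
\dtv\bigl(\bS', \mathcal{N}_D(\mu',\sigma^2)\bigr) \le O(1/\sigma).
\]
Since total variation distance is invariant under simultaneous integer translation of both arguments, subtracting the integer $M$ from both coordinates gives
\[
\dtv\bigl(\bS' - M,\, \mathcal{N}_D(\mu',\sigma^2) - M\bigr) = \dtv\bigl(\bS', \mathcal{N}_D(\mu',\sigma^2)\bigr) \le O(1/\sigma).
\]
The left-hand coordinate is exactly $\bS$, so it remains to identify the distribution of $\mathcal{N}_D(\mu',\sigma^2) - M$. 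Because $M \in \Z$, rounding to the nearest integer commutes with translation by $M$: if $\bG \sim \mathcal{N}(\mu',\sigma^2)$ then $\mathrm{round}(\bG) - M = \mathrm{round}(\bG - M)$, and $\bG - M \sim \mathcal{N}(\mu,\sigma^2)$. Thus $\mathcal{N}_D(\mu',\sigma^2) - M$ has the same distribution as $\mathcal{N}_D(\mu,\sigma^2)$, yielding the claimed bound.

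There is essentially no hard step here; the only small thing to verify carefully is the commutation of rounding with integer translation, which ensures that the shift-by-$M$ used to reduce to the unsigned PBD case does not alter the target discretized Gaussian beyond a harmless change of mean parameter. Everything else is an immediate invocation of the \cite{CGS11} bound for PBDs.
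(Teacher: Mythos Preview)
Your proof is correct and is exactly the translation argument the paper's footnote points to: the paper does not give a standalone proof of this theorem but simply remarks that the signed-PBD case follows from the \cite{CGS11} PBD result ``via a simple translation argument similar to the proof of Lemma~\ref{lem:translated Poisson approximation},'' and that is precisely what you have written out.
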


The following is a consequence of Theorem~\ref{thm:CGS} and Lemma~\ref{lem:B5plus} which we explicitly record for later reference:

\begin{fact} \label{fact:good-shift-invariance}
Let $\bS$ be a signed PBD with $\Var[\bS] = \sigma_{\bS}^2.$  Then $\bS$ is $\tau$-shift-invariant at scale 1 for $\tau = O(1/\sigma_{\bS})$,
and hence for any integer $c$, the distribution $c \bS$ is $\tau$-shift-invariant at scale $c$.
\ignore{Moreover, for all $\mu,\sigma,$ the distribution $TP(\mu, \sigma^2)$ is $\tau$-shift-invariant at scale 1 for $\tau = O(1/\sigma).$}
\end{fact}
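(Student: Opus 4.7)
The plan is to prove the scale-$1$ shift-invariance via a triangle-inequality argument routed through the discretized Gaussian approximation, and then derive the scale-$c$ shift-invariance for $c\bS$ as an immediate consequence.

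For the scale-$1$ statement, write $\mu = \E[\bS]$, $\sigma = \sigma_\bS$, and let $\bG \sim \calN(\mu,\sigma^2)$, so that $\calN_D(\mu,\sigma^2) = \lfloor \bG \rfloor$. Theorem~\ref{thm:CGS} gives $\dtv(\bS, \lfloor \bG \rfloor) = O(1/\sigma)$, and (by translation invariance of total variation distance, applied to the deterministic shift $x \mapsto x+1$) also $\dtv(\bS+1, \lfloor \bG \rfloor + 1) = O(1/\sigma)$. The only remaining piece is a bound on $\dtv(\lfloor \bG \rfloor, \lfloor \bG \rfloor + 1)$. Here I use that $\lfloor \bG \rfloor + 1$ has exactly the same distribution as $\lfloor \bG + 1 \rfloor$ (a trivial check on the event $\{k-1 \le \bG < k\}$), and then invoke Lemma~\ref{lem:B5plus} with $\lambda = 1$ and $\rho = 0$ to conclude
\[
\dtv\bigl(\lfloor \bG \rfloor + 1, \lfloor \bG \rfloor\bigr) \;=\; \dtv\bigl(\lfloor \bG + 1 \rfloor, \lfloor \bG \rfloor\bigr) \;\le\; \tfrac{1}{2\sigma}.
\]
Combining these three estimates via the triangle inequality yields $\dtv(\bS, \bS+1) = O(1/\sigma)$, i.e.\ $\bS$ is $\tau$-shift-invariant at scale $1$ for $\tau = O(1/\sigma_\bS)$.

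For the ``hence'' part, observe that for any integer $c$ the random variable $c\bS + c$ is the pushforward of $\bS + 1$ under the deterministic map $x \mapsto c x$, while $c\bS$ is the pushforward of $\bS$ under the same map. Since total variation distance is non-increasing under deterministic maps (the data-processing inequality), we obtain
\[
\dtv(c\bS, c\bS + c) \;=\; \dtv(c\bS, c(\bS+1)) \;\le\; \dtv(\bS, \bS+1) \;=\; O(1/\sigma_\bS),
\]
which is precisely the claim that $c\bS$ is $\tau$-shift-invariant at scale $c$ for $\tau = O(1/\sigma_\bS)$.

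There isn't really a ``main obstacle'' here: both of the required inputs (Theorem~\ref{thm:CGS} and Lemma~\ref{lem:B5plus}) are already stated in the preliminaries, and the only mildly non-trivial observation is the distributional identity $\lfloor \bG \rfloor + 1 \stackrel{d}{=} \lfloor \bG + 1 \rfloor$, which is exactly what lets Lemma~\ref{lem:B5plus} (whose right-hand side is stated as $|\rho-\lambda|/(2\sigma)$) give the $O(1/\sigma)$ bound on shifting the discretized Gaussian by $1$ rather than the useless trivial bound.
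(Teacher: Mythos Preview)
Your proof is correct and follows exactly the approach the paper indicates: the paper states this fact as ``a consequence of Theorem~\ref{thm:CGS} and Lemma~\ref{lem:B5plus},'' and your triangle-inequality argument through the discretized Gaussian, together with the application of Lemma~\ref{lem:B5plus} with $\lambda=1$, $\rho=0$, is precisely that consequence spelled out. The scale-$c$ part via the data-processing inequality is also the natural (and intended) one-line derivation.
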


We also need a central limit theorem for \ignore{ $0$-moded }multinomial distributions.  We recall the following result, which is a direct consequence of the ``size-free CLT'' for Poisson Multinomial Distributions in  \cite{DDKT16}.  (Below we write $\mathbf{e}_i$ to denote the real vector in $\{0,1\}^{d}$ that has a 1 only in the $i$-th coordinate.)

\begin{theorem}\label{thm:multinomial}
Let $\bX_1, \ldots,\bX_N$ be independent $\Z^d$-valued random variables where the support of each $\bX_i$ is contained in the set $\{0,\pm  \mathbf{e}_1,\dots, \pm \mathbf{e}_{d}\}$.\ignore{Assume the variables are $0$-moded, so for all $ 1 \le i \le N$ we have $\Pr[\bX_i=0] \ge 1/2k$, and l} Let $\bM=\bX_1 + \ldots + \bX_N$.  Then we have 
\[
\dtv( \bM, \mathcal{N}_D(\mu, \Sigma)) \le O\left({\frac {d^{7/2}}{\sigma^{1/10}}}\right),
\]
where $\mu = \E[\bM]$ is the mean and $\Sigma$ is the $d\times d$ covariance matrix of $\bS$, and $\sigma^2$ is the minimum eigenvalue of $\Sigma$. 
\end{theorem}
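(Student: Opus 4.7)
Since each $\bX_i$ takes values in the $(2d+1)$-element set $\{0, \pm\mathbf{e}_1, \ldots, \pm\mathbf{e}_d\}$, my plan is to recast $\bM$ as the image of a standard $(2d+1)$-Poisson Multinomial Distribution under a fixed linear map, and then invoke the size-free CLT for PMDs from \cite{DDKT16} as a black box.

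Concretely, introduce the bijection $\psi$ from $\{0, \pm\mathbf{e}_1, \ldots, \pm\mathbf{e}_d\}$ to the $(2d+1)$ standard basis vectors $\{\mathbf{f}_0, \mathbf{f}_1^+, \mathbf{f}_1^-, \ldots, \mathbf{f}_d^+, \mathbf{f}_d^-\}$ of $\R^{2d+1}$, defined by $\psi(0) = \mathbf{f}_0$ and $\psi(\pm\mathbf{e}_j) = \mathbf{f}_j^{\pm}$. Set $\bY_i := \psi(\bX_i)$, so that $\bY := \sum_i \bY_i$ is a standard $(2d+1)$-PMD$_N$. Let $L : \R^{2d+1} \to \R^d$ be the linear map with $L\mathbf{f}_0 = 0$ and $L\mathbf{f}_j^{\pm} = \pm \mathbf{e}_j$; by construction $L(\bY_i) = \bX_i$, hence $L(\bY) = \bM$. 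Applying the size-free CLT of \cite{DDKT16} to $\bY$ yields
\[
\dtv(\bY, \calN_D(\mu_Y, \Sigma_Y)) \;\le\; O\!\left(\frac{(2d+1)^{7/2}}{\sigma_Y^{1/10}}\right) \;=\; O\!\left(\frac{d^{7/2}}{\sigma_Y^{1/10}}\right),
\]
where $\mu_Y, \Sigma_Y$ are the mean and covariance of $\bY$ and $\sigma_Y^2$ is the appropriate minimum eigenvalue of $\Sigma_Y$ (measured on the orthogonal complement of the trivial direction $\mathbf{1}$ arising from the PMD constraint $\mathbf{1}^\top \bY = N$).

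Next, the data-processing inequality gives $\dtv(\bM, L(\calN_D(\mu_Y, \Sigma_Y))) \le \dtv(\bY, \calN_D(\mu_Y, \Sigma_Y))$. To finish I need to identify $L(\calN_D(\mu_Y, \Sigma_Y))$ with $\calN_D(\mu, \Sigma)$ up to a negligible loss. Linearity of $L$ implies that the continuous Gaussian $\calN(\mu_Y, \Sigma_Y)$ maps exactly onto $\calN(L\mu_Y, L\Sigma_Y L^\top) = \calN(\mu, \Sigma)$, so the only discrepancy comes from the rounding. Since each row of $L$ contains entries in $\{-1, 0, +1\}$ with at most two nonzeros, the vector $L \lfloor \bG \rceil - \lfloor L \bG \rceil$ (for $\bG \sim \calN(\mu_Y, \Sigma_Y)$) lies pointwise in $\{-1, 0, +1\}^d$; a multidimensional analogue of Lemma~\ref{lem:B5plus} then bounds the associated TV cost by $O(d/\sigma)$, which is dominated by the main term whenever $\sigma \geq 1$.

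The main obstacle I anticipate is verifying that $\sigma_Y = \Omega(\sigma)$ so that the $1/\sigma_Y^{1/10}$ factor really can be replaced by $1/\sigma^{1/10}$. A short calculation gives $L L^\top = 2 I_d$, which together with $\Sigma = L\Sigma_Y L^\top$ allows one to relate $\sigma_Y^2$ and $\sigma^2$ \emph{on} the $d$-dimensional subspace $\mathrm{Im}(L^\top) \subset \mathbf{1}^\perp$; but a priori $\Sigma_Y$ could have arbitrarily small eigenvalues on directions orthogonal to $\mathrm{Im}(L^\top)$ inside $\mathbf{1}^\perp$ (corresponding to ``sign-oblivious'' linear combinations of coordinates of $\bY_i$), which would shrink $\sigma_Y$ below $\sigma$. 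I expect this to be handled either by appealing to a form of the DDKT16 theorem whose $\sigma_Y^2$ is measured only on the subspace of directions that are ``visible through'' $L$, or by a short direct argument showing that such orthogonal degeneracies do not influence the post-projection total variation distance $\dtv(\bM, \calN_D(\mu, \Sigma))$.
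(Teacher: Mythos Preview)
The paper does not prove this theorem; it simply quotes it as ``a direct consequence of the `size-free CLT' for Poisson Multinomial Distributions in \cite{DDKT16}.'' Your lift to a $(2d+1)$-PMD via $\psi$ followed by the projection $L$ is the natural way to make that derivation precise, and the data-processing and rounding steps you sketch are fine.

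The obstacle you flag, however, is a genuine gap rather than a routine check. Take $d=1$ with half the summands uniform on $\{0,+1\}$ and the other half equal to $0$ with probability $1-\epsilon$ and to $-1$ with probability $\epsilon$: then $\sigma^2=\Var[\bM]=N/8+O(N\epsilon)=\Theta(N)$, but the two eigenvalues of $\Sigma_Y$ on $\mathbf{1}^{\perp}$ are $\Theta(N)$ and $\Theta(N\epsilon)$, so $\sigma_Y^2=\Theta(N\epsilon)$. The small-eigenvalue direction here is approximately $(1,1,-2)$, which lies neither in $\ker L$ nor orthogonal to $\mathrm{Im}(L^\top)$; the degeneracy arises from the off-diagonal coupling of $\Sigma_Y$ across the splitting $\mathbf{1}^\perp=\mathrm{Im}(L^\top)\oplus(\ker L\cap\mathbf{1}^\perp)$, so neither of your two proposed fixes goes through as stated. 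A black-box application of \cite{DDKT16} to $\bY$ therefore yields only $O(d^{7/2}/\sigma_Y^{1/10})$, which becomes arbitrarily weaker than the claimed $O(d^{7/2}/\sigma^{1/10})$ as $\epsilon\to 0$. To recover the stated bound one must invoke or re-derive the \cite{DDKT16} CLT in the more general form where the summands lie in an arbitrary fixed finite subset of $\Z^d$ and $\sigma^2$ is the minimum eigenvalue of the covariance of the sum itself; their Fourier-analytic techniques do support this, but it is a re-execution of the argument in the signed setting rather than a reduction to their PMD theorem.
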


\medskip \noindent {\bf Covers and structural results for PBDs.}
\ignore{A {\em Poisson binomial distribution (PBD) of order $N \in \mathbb{N}$} is a sum $\sum_{i=1}^N \bX_i$ of $N$ mutually independent Bernoulli (0/1) random variables $\bX_1,\ldots,\bX_N$. We denote the set of all PBD$_N$ distributions of order $N$ by ${\cal S}_{N}$.}Our proof of Theorem~\ref{unknown-k-is-two-upper}, which is about learning PBDs that have been subject to an unknown shifting and scaling, uses the fact that for any $\eps$ there is a ``small cover'' for the
set of all PBD$_N$ distributions.  We recall the following from \cite{DP14}:
\begin{theorem} [Cover for PBDs] \label{thm: sparse cover theorem}
Let $\bS$ be any PBD$_N$ distribution.  Then for any $\eps>0$, we have that either

\begin{itemize}

\item $\bS$ is $\eps$-essentially supported on an interval of $O(1/\eps^3)$ consecutive integers (in this case we say that $\bS$ is in \emph{sparse form}); or if not,

\item $\bS$ is $\eps$-close to some distribution $u + \Bin(\ell,q)$ where $u, \ell \in \{0,1,\dots,N\},$ and $\Var[\Bin(\ell,q)] = \Omega(1/\eps^2)$ (in this case we say that $\bS$ is in \emph{$1/\eps$-heavy Binomial form}).

\end{itemize}
\end{theorem}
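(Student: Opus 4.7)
The plan is to dichotomize on $\sigma^2 := \Var[\bS]$, with threshold $\sigma^2 \asymp 1/\eps^2$. If the variance is small I will show directly that $\bS$ concentrates on a short interval; if it is large I will match the first two moments of $\bS$ with a (shifted) binomial and control the resulting total variation error via the discretized-Gaussian approximation of Theorem~\ref{thm:CGS}.

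\textbf{Sparse case.} Suppose $\sigma^2 \leq C/\eps^2$ for a suitably chosen constant $C$. Since $\bS$ is a sum of independent Bernoullis, Bernstein's inequality gives $\Pr[|\bS - \E[\bS]| > t] \leq 2\exp(-t^2/(2\sigma^2 + (2/3)t))$. I will choose $t = \Theta(\sigma\sqrt{\log(1/\eps)} + \log(1/\eps))$, driving this probability below $\eps$; using $\sigma \leq O(1/\eps)$ yields $t \leq O(1/\eps^3)$. Hence $\bS$ places mass at least $1 - \eps$ on the interval $[\E[\bS] - t,\, \E[\bS] + t]$ of $O(1/\eps^3)$ consecutive integers, which is sparse form.

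\textbf{Heavy binomial case.} Suppose $\sigma^2 \geq C/\eps^2$ and write $\mu = \E[\bS]$. Theorem~\ref{thm:CGS} yields $\dtv(\bS, \mathcal{N}_D(\mu, \sigma^2)) = O(1/\sigma) = O(\eps)$, so by the triangle inequality it suffices to exhibit $u,\ell \in \{0,\dots,N\}$ and $q \in [0,1]$ such that $u + \Bin(\ell, q)$ is also $O(\eps)$-close to $\mathcal{N}_D(\mu, \sigma^2)$. A second application of Theorem~\ref{thm:CGS} (a binomial is itself a PBD) gives $\dtv(u + \Bin(\ell, q),\, \mathcal{N}_D(u + \ell q, \sigma'^2)) = O(1/\sigma')$ with $\sigma'^2 := \ell q(1 - q)$, and a direct computation in the spirit of Lemma~\ref{lem: variation distance between translated Poisson distributions}, transposed to discretized Gaussians using Lemma~\ref{lem:B5plus}, bounds $\dtv(\mathcal{N}_D(\mu, \sigma^2), \mathcal{N}_D(u + \ell q, \sigma'^2))$ by $O(|\mu - u - \ell q|/\sigma) + O((|\sigma^2 - \sigma'^2| + 1)/\sigma^2)$. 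Thus I only need $|u + \ell q - \mu|$ and $|\ell q(1-q) - \sigma^2|$ to be $O(1)$ (which translates to $O(\eps\sigma)$ and $O(\eps\sigma^2)$ because $\sigma \geq \Omega(1/\eps)$), and automatically $\sigma'^2 = \Omega(1/\eps^2)$. The key enabler is the universal PBD bound $\sigma^2 = \sum_i p_i(1-p_i) \leq \min(\mu,\, N-\mu,\, N/4)$: assuming WLOG $\mu \leq N/2$ by the symmetry $\bS \mapsto N - \bS$, the ideal real-valued solution of $\ell q = \mu$, $\ell q(1-q) = \sigma^2$ is $q = 1 - \sigma^2/\mu$, $\ell = \mu^2/(\mu - \sigma^2)$, and a short case analysis using $\sigma^2 \leq \min(\mu, N/4)$ (splitting on how close $\sigma^2$ is to $\mu$) verifies $\ell \leq N$; rounding $\ell$ to the nearest integer perturbs both moments by $O(1)$, after which $u$ is chosen to absorb the residual shift in mean.

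\textbf{Main obstacle.} The technical heart of the argument is the moment-matching step in the heavy case: realizing the two-parameter target family $(\mu, \sigma^2)$ by the three-parameter family $(u, \ell, q)$ subject to the integrality and boundedness constraints, while preserving the $O(\eps)$ tolerance under rounding. What makes this feasible in every regime is precisely the universal PBD bound $\sigma^2 \leq N/4$ together with $\sigma^2 \leq \min(\mu, N-\mu)$; without it, the constraint $\ell \leq N$ could fail when $\mu - \sigma^2$ is very small. Everything else reduces to quantitative CLT estimates already packaged in the lemmas above.
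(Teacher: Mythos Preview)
The paper does not prove this theorem; it is quoted from \cite{DP14} as a known structural fact about PBDs, so there is no ``paper's own proof'' to compare against. Your proposal is a reasonable self-contained argument and is essentially correct.

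One point deserves tightening. In the heavy case you need $\ell = \mu^2/(\mu-\sigma^2) \le N$, and you justify this via $\sigma^2 \le \min(\mu, N/4)$. Those bounds alone do not force $\ell \le N$; the correct inequality is
\[
\mu - \sigma^2 \;=\; \sum_i p_i^2 \;\ge\; \frac{(\sum_i p_i)^2}{N} \;=\; \frac{\mu^2}{N}
\]
by Cauchy--Schwarz, which immediately gives $\ell \le N$ in every regime without a case split. With this in hand, rounding $\ell$ and absorbing the $O(1)$ mean shift into $u$ goes through exactly as you describe (and the reflection $\bS \mapsto N-\bS$ is compatible because $N - u' - \Bin(\ell',q')$ is again of the form $u + \Bin(\ell',1-q')$ with $u = N - u' - \ell' \ge 0$, using $u' + \ell' \le N$, which follows from $u' \approx 0$ and $\ell' \le N$). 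Everything else---Bernstein for the sparse case and the two applications of Theorem~\ref{thm:CGS} glued via a discretized-Gaussian (or translated-Poisson) comparison for the heavy case---is sound.
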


We recall some well-known structural results on PBDs that are in $1/\eps$-heavy Binomial form (see e.g. \cite{CGS11}, Theorem~7.1 and p. 231):

\begin{fact} \label{fact:heavy-PBD-nice}
Let $\bY$ be a PBD$_N$ distribution that is in $1/\eps$-heavy Binomial form
as described in Theorem \ref{thm: sparse cover theorem}.
Then

\begin{enumerate}

\item $\dtv(\bY,\bZ) = O(\eps)$, where $\bZ$ is a discretized $\calN(\E[\bY],\Var[\bY])$ Gaussian.
\item $\dshift{1}(\bY) = O(\eps)$.

\end{enumerate}

\end{fact}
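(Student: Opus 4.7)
\textbf{Proof Proposal for Fact~\ref{fact:heavy-PBD-nice}.}

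The plan is to reduce both parts to the single quantitative claim that $\sigma_\bY := \sqrt{\Var[\bY]} = \Omega(1/\eps)$. Once this is established, part~1 is immediate from Theorem~\ref{thm:CGS} applied to the PBD $\bY$ itself (giving $\dtv(\bY,\mathcal{N}_D(\E[\bY],\Var[\bY])) = O(1/\sigma_\bY) = O(\eps)$), and part~2 is immediate from Fact~\ref{fact:good-shift-invariance} applied to $\bY$ (giving $\dshift{1}(\bY) = O(1/\sigma_\bY) = O(\eps)$). So all the real content is in showing that $\bY$'s variance is sufficiently large.

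To establish $\sigma_\bY = \Omega(1/\eps)$, I would argue by bounding the maximum pointwise mass of $\bY$ from above and below. For the upper bound, note that $\bY' := u + \Bin(\ell,q)$ has $\sigma_{\bY'} = \Omega(1/\eps)$ by hypothesis; applying Theorem~\ref{thm:CGS} to the (shifted) Binomial $\bY'$ shows that $\max_i \Pr[\bY' = i] \le \max_i \Pr[\mathcal{N}_D(\E[\bY'],\sigma_{\bY'}^2)=i] + O(1/\sigma_{\bY'}) = O(\eps)$, since the peak mass of a discretized Gaussian with standard deviation $\sigma$ is $O(1/\sigma)$. Because $\dtv(\bY,\bY') \le \eps$, pointwise we get
\[
\max_i \Pr[\bY = i] \le \max_i \Pr[\bY' = i] + \eps = O(\eps).
\]
For the lower bound on the max mass of $\bY$, apply Chebyshev's inequality: $\Pr[|\bY - \E[\bY]| \le 2\sigma_\bY] \ge 3/4$, so $\bY$ places at least $3/4$ of its mass on an interval containing at most $4\sigma_\bY + 2$ integers. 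Pigeonhole then gives $\max_i \Pr[\bY = i] \ge \Omega(1/(\sigma_\bY + 1))$. (If $\sigma_\bY < 1$ this pigeonhole lower bound is $\Omega(1)$, which already contradicts the $O(\eps)$ upper bound for small enough $\eps$, so we may assume $\sigma_\bY \ge 1$.) Combining the two bounds yields $\Omega(1/\sigma_\bY) \le O(\eps)$, i.e., $\sigma_\bY = \Omega(1/\eps)$, as required.

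The main (and only mildly delicate) obstacle is the two-sided control of the max-mass of $\bY$: the upper bound is a direct consequence of the hypothesis plus Theorem~\ref{thm:CGS}, and the lower bound is a standard Chebyshev-plus-pigeonhole computation, so no deeper tool is really needed. Everything else --- the final approximation by $\mathcal{N}_D(\E[\bY],\Var[\bY])$ and the scale-$1$ shift-invariance --- is just an application of the already-stated Theorem~\ref{thm:CGS} and Fact~\ref{fact:good-shift-invariance} to the PBD $\bY$ using the just-established variance lower bound.
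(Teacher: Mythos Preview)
Your proposal is correct. The paper does not actually prove Fact~\ref{fact:heavy-PBD-nice}; it simply records it as a known structural result and points to \cite{CGS11}, Theorem~7.1 and p.~231. So there is no detailed proof in the paper to compare against.

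That said, your write-up correctly identifies and handles the one genuinely nontrivial point that the paper glosses over: the hypothesis only gives $\Var[u+\Bin(\ell,q)]=\Omega(1/\eps^2)$, not $\Var[\bY]=\Omega(1/\eps^2)$, and the conclusion is stated in terms of the parameters of $\bY$. Your max-mass sandwich (Theorem~\ref{thm:CGS} on $\bY'$ plus the $\dtv$ bound for the upper bound, Chebyshev plus pigeonhole for the lower bound) is a clean way to transfer the variance lower bound to $\bY$ itself; once $\sigma_\bY=\Omega(1/\eps)$ is in hand, parts~1 and~2 are indeed immediate from Theorem~\ref{thm:CGS} and Fact~\ref{fact:good-shift-invariance}, exactly as the paper's citation suggests.
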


\subsection{Extension of the Barbour-Xia coupling lemma}
In \cite{BX99}, Barbour and Xia proved the following lemma concerning the shift-invariance of sums of independent integer random variables. 
\begin{lemma}[\cite{BX99}, Proposition~4.6]\label{lem:BX}
Let $\bX_1, \ldots, \bX_N$ be $N$ independent integer valued random variables and let $\bS= \bX_1 + \ldots + \bX_N$. Let $\dshift{1} (\bX_i) \le 1 - \delta_i$. Then, 
$$
\dshift{1}(\bS) \le O\bigg( \frac{1}{\sqrt{\sum_{i=1}^N \delta_i}}\bigg). 
$$
\end{lemma}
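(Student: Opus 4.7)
\medskip
\noindent\textbf{Proof proposal.}
My strategy is to combine a per-summand coupling with a local central limit theorem applied to a signed-Bernoulli skeleton extracted from the $\bX_i$'s. The hypothesis $\dshift{1}(\bX_i) \leq 1 - \delta_i$ is equivalent, via the coupling characterization of total variation distance, to the existence of a joint distribution $(\bU_i, \bV_i)$ with $\bU_i, \bV_i \sim \bX_i$ marginally and $\Pr[\bV_i - \bU_i = 1] \ge \delta_i$ (equivalently, $\sum_k \min(\Pr[\bX_i=k], \Pr[\bX_i=k-1]) \ge \delta_i$). Randomizing the roles of $\bU_i$ and $\bV_i$ with an independent fair coin produces a symmetric coupling whose difference $\bD_i := \bV_i - \bU_i$ has mean zero and satisfies $\Pr[\bD_i = +1] \geq \delta_i/2$ and $\Pr[\bD_i = -1] \geq \delta_i/2$.

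Taking the couplings independently across $i$, I would set $\bS = \sum_i \bU_i$ and $\bS' = \sum_i \bV_i$; both are distributed as the target sum, and $\bR := \bS' - \bS = \sum_i \bD_i$ is a sum of independent, integer-valued, zero-mean random variables each stochastically dominating a centered signed Bernoulli of parameter $\delta_i/2$. This ``signed-PBD skeleton'' of $\bR$ has variance $\Omega(\sum_i \delta_i)$, so by Theorem~\ref{thm:CGS} it is $O(1/\sqrt{\sum_i \delta_i})$-close to a discretized Gaussian with the same variance, and therefore enjoys shift-invariance $\dshift{1}(\bR) = O(1/\sqrt{\sum_i \delta_i})$ by Fact~\ref{fact:good-shift-invariance}.

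The hard part is transferring the shift-invariance of the random displacement $\bR$ to $\bS$ itself: although $\bS + \bR$ is equidistributed with $\bS$, the variables $\bS$ and $\bR$ are in general \emph{not} independent, so Fact~\ref{fact:SI}(1) does not apply directly. I would handle this by processing summands one at a time in a size-biased fashion---iteratively applying the per-summand coupling conditioned on the history so that the shift-invariance of the accumulated Bernoulli skeleton transfers onto $\bS$---which is essentially the Stein's-method / exchangeable-pair approach of \cite{BX99}. As an alternative route I would also try a direct Fourier argument: prove a per-summand bound of the form $|\phi_{\bX_i}(\theta)|^2 \le 1 - c\,\delta_i \sin^2(\theta/2)$ from the overlap structure above, aggregate to obtain $|\phi_\bS(\theta)|^2 \le \exp\bigl(-c(\sum_i \delta_i)\sin^2(\theta/2)\bigr)$, and conclude via Plancherel and Fourier inversion on $\Z$. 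Establishing the per-summand Fourier inequality for arbitrary integer-valued $\bX_i$ (as opposed to Bernoullis) appears to be the technically subtle step and is where I would expect most of the effort to go.
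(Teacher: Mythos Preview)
The paper does not prove Lemma~\ref{lem:BX}; it is quoted from \cite{BX99} (Proposition~4.6 there) and used as a black box. The only thing proved nearby is Claim~\ref{clm:soupedup}, which reduces the non-integer case back to Lemma~\ref{lem:BX}. So there is no ``paper's own proof'' to compare your proposal against.

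On its own merits: your sketch correctly sets up the per-summand self-coupling and correctly isolates the real obstacle, namely that $\bR=\sum_i\bD_i$ and $\bS$ are dependent so Fact~\ref{fact:SI} does not transfer shift-invariance of $\bR$ to $\bS$. Your route~(a), processing summands sequentially and stopping once the accumulated difference hits $+1$, is the Mineka coupling and is indeed how \cite{BX99} proceeds. Two gaps remain. First, the stopping-time argument needs $\bD_i\in\{-1,0,+1\}$ (otherwise the walk can jump over $+1$), not merely $\Pr[\bD_i=\pm 1]\ge\delta_i/2$; this requires the ``Bernoulli-part'' decomposition you allude to but do not carry out (take $\beta_k=\min(p_k,p_{k+1})$, check $\beta_{k-1}+\beta_k\le 2p_k$, and use the residual as the diagonal part of the coupling). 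Second, your appeal to Theorem~\ref{thm:CGS} for $\bR$ is illegitimate as written since $\bR$ is not a signed PBD --- though this is moot once you switch to the stopping-time argument, which never needs shift-invariance of $\bR$ at all. For route~(b), the per-summand Fourier bound again needs the same Bernoulli-part decomposition to establish, and passing from a pointwise bound on $|\phi_\bS|$ to the $\ell^1$ quantity $\dtv(\bS,\bS+1)$ requires an additional argument you do not supply. In short: right roadmap, but the key construction and the hitting-time estimate are both still to be executed.
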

We require a $\dshift{p}$ analogue of this result. To obtain such an analogue we first slightly generalize the above lemma so that it does not require $\bX_i$ to be supported on $\mathbb{Z}$. The proof uses a simple reduction to the integer case. 
\begin{claim}\label{clm:soupedup}
Let $\bX_1, \ldots, \bX_N$ be $N$ independent  finitely supported random variables and let $\bS= \bX_1 + \ldots + \bX_N$. Let $\dshift{1} (\bX_i) \le 1 - \delta_i$. Then, 
$$
\dshift{1}(\bS) \le O\bigg( \frac{1}{\sqrt{\sum_{i=1}^N \delta_i }}\bigg). 
$$
\end{claim}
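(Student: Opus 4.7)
The plan is to reduce the claim to the integer-valued setting already handled by Lemma~\ref{lem:BX} by conditioning on the fractional parts of the $\bX_i$'s. For each $i$, let $\bR_i := \bX_i - \lfloor \bX_i \rfloor$ denote the (random) fractional part of $\bX_i$, which takes values in a finite subset of $[0,1)$ since $\bX_i$ is finitely supported. Because a shift by $1$ preserves fractional parts, the cosets $r + \Z$ (indexed by $r$) form a partition under which total variation distance splits additively: $\dshift{1}(\bX_i) = \sum_r \Pr[\bR_i = r]\,\dshift{1}(\bX_i \mid \bR_i = r)$. Define $\delta_{i,r} := 1 - \dshift{1}(\bX_i \mid \bR_i = r) \in [0,1]$. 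The hypothesis $\dshift{1}(\bX_i) \le 1 - \delta_i$ gives $\delta_i \le \E[\delta_{i,\bR_i}]$, and we may assume equality by replacing $\delta_i$ with the stronger value $\E[\delta_{i,\bR_i}]$.

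Conditional on $\vec{\bR} = (r_1,\dots,r_N)$, each $\bX_i - r_i$ is integer-valued, the variables $\{\bX_i - r_i\}_{i \in [N]}$ are mutually independent (from independence of the original $\bX_i$'s), and $\dshift{1}(\bX_i - r_i \mid \bR_i = r_i) = 1 - \delta_{i,r_i}$. Applying the integer-case Lemma~\ref{lem:BX} to this family gives
\[
\dshift{1}\bigl(\bS \mid \vec{\bR} = \vec{r}\bigr) \;\le\; O\!\left( \frac{1}{\sqrt{\sum_{i=1}^N \delta_{i,r_i}}} \right).
\]
Because unit shifts preserve fractional parts (so $(\bS+1) \mid \vec{\bR} = \vec{r}$ equals $(\bS \mid \vec{\bR} = \vec{r}) + 1$), pairing the $\vec{r}$-indexed components of the mixtures $\bS$ and $\bS+1$ and applying convexity of total variation distance yields $\dshift{1}(\bS) \le \E_{\vec{\bR}}[\dshift{1}(\bS \mid \vec{\bR})] \le \E\bigl[O(1/\sqrt{\textstyle\sum_i \delta_{i,\bR_i}})\bigr]$.

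The main obstacle is the final expectation: $1/\sqrt{x}$ is convex, so Jensen's inequality yields $\E[1/\sqrt{\sum_i \delta_{i,\bR_i}}] \ge 1/\sqrt{\sum_i \delta_i}$, i.e.\ the wrong direction. I plan to handle this via concentration. Set $T := \sum_i \delta_i$; if $T = O(1)$ the desired bound $O(1/\sqrt{T}) = \Omega(1)$ is vacuous since $\dshift{1}(\bS) \le 1$ always, so assume $T$ is large. The summands $\{\delta_{i,\bR_i}\}_i$ are independent, lie in $[0,1]$, and have mean-sum $T$, so a multiplicative Chernoff bound gives $\Pr[\sum_i \delta_{i,\bR_i} < T/2] \le \exp(-T/8)$. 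Bounding the low-probability event by the trivial estimate $1$ and using $O(1/\sqrt{T/2}) = O(1/\sqrt{T})$ on its complement,
\[
\E\!\left[O\bigl(1/\sqrt{\textstyle\sum_i \delta_{i,\bR_i}}\bigr)\right] \;\le\; O(1/\sqrt{T}) + \exp(-T/8) \;=\; O(1/\sqrt{T}),
\]
since $\exp(-T/8)$ decays much faster than $1/\sqrt{T}$. This delivers the desired bound $\dshift{1}(\bS) \le O(1/\sqrt{\sum_i \delta_i})$ and completes the plan.
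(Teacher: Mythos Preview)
Your argument is correct. The decomposition by fractional parts, the application of Lemma~\ref{lem:BX} to each conditional, the mixture/convexity bound for $\dshift{1}(\bS)$, and the Chernoff step to control $\E\bigl[1/\sqrt{\sum_i \delta_{i,\bR_i}}\bigr]$ all go through as stated. Bounding the bad event by the trivial estimate $\dshift{1}(\cdot)\le 1$ correctly sidesteps the issue that $1/\sqrt{\cdot}$ might blow up.

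The paper takes a genuinely different route. Rather than conditioning and averaging, it builds a single deterministic bijection: it partitions the support of each $\bX_i$ into integer-translate cosets (the same cells your $\bR_i$ indexes), and then maps cell $j$ of variable $i$ into a block of integers starting at $m_{j,i} = (Nk)^{ki+j}$, preserving integer offsets within each cell. Because these blocks are exponentially spread out, the only way two outcome vectors of the resulting integer variables $\bY_i$ can have sums differing by exactly $1$ is if they agree cell-by-cell and their within-cell offsets differ by $1$ in total; this forces $\dshift{1}(\sum_i \bX_i)=\dshift{1}(\sum_i \bY_i)$, and a single invocation of Lemma~\ref{lem:BX} finishes. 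So the paper's approach is a lossless combinatorial encoding that avoids any probabilistic tail argument, at the price of a somewhat ad hoc base-$(Nk)$ construction. Your approach is more structurally transparent (it isolates exactly the role of the fractional parts) but pays for it with the extra concentration step; both yield the same $O(\cdot)$ bound with universal constants.
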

\begin{proof}
Assume that  for any $i$, the support of $\bX_i$ is of size at most $k$ and is supported in the interval $[-k, k]$. (By the assumption of finite support this must hold for some integer $k$.)
Given any $\bX_i$, create a new random variable $\bY_i$ which is defined as follows: 
First, let us partition the support of $\bX_i$ by putting two outcomes into the same cell whenever the difference between them is
an integer.
Let $S^{(i)}_1,\dots,S^{(i)}_{k'}$ be the non-empty cells, so $k' \leq k$, and, for each $S_j^{(i)}$, there is a real $\beta_j$ such that
$S_j^{(i)} \subseteq \{ \beta_j + \ell: \ell \in \Z \}$.  Let $\gamma_{j,i}$ denote the smallest element of $S^{(i)}_j.$\ignore{\rnote{This was ``Let the support of $\bX_i$ be partitioned into sets $S_1^{(i)}, \ldots, S_k^{(i)}$ where we have the following:
\begin{itemize}
\item The difference between any two elements in $S_j^{(i)}$ belongs to $\mathbb{Z}$. 
\item The smallest element in $S_j^{(i)}$ is $\gamma_{j,i}$. 
\end{itemize}
''.  If what's there now looks okay, feel free to ignore out this rnote.}}
Let us define integers $\{m_{j,i}\}_{1\le j \le k , 1 \le i \le N}$ as follows: $m_{j,i} = (N\cdot k)^{k \cdot i +j}$. 
 The random variable $\bY_i$ is defined as follows:   For all $\ell \in \mathbb{Z}^+$, let the map $M_i$ send $\gamma_{j,i}+\ell$ to $ m_{j,i} +\ell$. The probability distribution of $\bY_i$ is the distributed induced by the map $M_i$ when acting on $\bX_i$, i.e. a draw from $\bY_i$ is obtained by drawing $x_i$ from $\bX_i$ and outputting $M_i(x_i).$   It is clear that $\bY_i$ is integer-valued and satisfies
\[
\dshift{1}(\bX_i)  = \dshift{1}(\bY_i).  
\]
Now  consider a sequence of outcomes $\bY_1 = y_1, \ldots, \bY_N=y_N$ and $\bY'_1=y'_1, \ldots, \bY'_N = y'_N$ such that 
\[
\big| \sum_{i=1}^N (y_i - y'_i)  \big| = 1. 
\]
We can write each $y_i$ as $m_{\alpha_i,i} + \delta_i$ where each $1 \le \alpha_i \le k$ and each $-k \le \delta_i \le k$. Likewise, $y'_i = m_{\alpha'_i,i} + \delta'_i$ where each $1 \le \alpha'_i \le k$ and each $-k \le \delta'_i \le k$. Since $m_{j,i} = (N\cdot k)^{k \cdot i +j}$, it is easy to see that the following must hold:
$$
\textrm{For all }i=1,\dots,N,  \quad  m_{\alpha_i,i}=m_{\alpha'_i,i} \quad \textrm{and} \quad \left| \sum_{i=1}^N (\delta_i - \delta'_i) \right|=1. 
$$
This immediately implies that
$
\dshift{1}\left(\sum_{i=1}^N \bX_i\right) = \dshift{1}\left(\sum_{i=1}^N \bY_i\right). 
$
Applying Lemma~\ref{lem:BX}, we have that 
\[
\dshift{1}(\sum_{i=1}^N \bY_i) \le  O\left( \frac{1}{\sqrt{\sum_{i=1}^N \delta_i}}\right),
\]
which finishes the upper bound. 
\end{proof}
This immediately yields the following corollary.
\begin{corollary}\label{corr:pshift}
Let $\bX_1, \ldots, \bX_N$ be finitely supported independent integer valued random variables. Let $\dshift{p}(\bX_i) \le 1- \delta_i$. Then, for $\bS= \sum_{i=1}^N \bX_i$, we have
$$
\dshift{p}(\bS) = O \left( \frac{1}{\sqrt{\sum_{i=1}^N  \delta_i}}\right).
$$
\end{corollary}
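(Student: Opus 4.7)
The plan is to reduce the scale-$p$ shift invariance claim for $\bS$ to the scale-$1$ shift invariance bound already established in Claim~\ref{clm:soupedup}, exploiting the crucial feature that Claim~\ref{clm:soupedup} does not require its input variables to be integer-valued—only that they be finitely supported. This is exactly why Claim~\ref{clm:soupedup} was stated in that generality, and the reduction is essentially a one-line rescaling.

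Concretely, for each $i \in [N]$ I would define the rescaled (real-valued) random variable $\bY_i := \bX_i/p$. Each $\bY_i$ is finitely supported since $\bX_i$ is. The key observation is that a unit shift of $\bY_i$ is exactly a $p$-shift of $\bX_i$: since the map $x \mapsto x/p$ is a bijection on $\R$, we have
\[
\dshift{1}(\bY_i) = \dtv(\bY_i, \bY_i + 1) = \dtv(\bX_i/p,(\bX_i+p)/p) = \dtv(\bX_i, \bX_i + p) = \dshift{p}(\bX_i) \le 1 - \delta_i.
\]
The same bijection argument applied to $\bS$ gives $\dshift{1}(\bS/p) = \dshift{p}(\bS)$.

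Next I would apply Claim~\ref{clm:soupedup} to $\bY_1, \dots, \bY_N$, whose sum is $\bS/p$. The hypothesis $\dshift{1}(\bY_i) \le 1 - \delta_i$ is exactly what we need, and the conclusion yields
\[
\dshift{p}(\bS) \;=\; \dshift{1}(\bS/p) \;=\; \dshift{1}\!\left(\sum_{i=1}^N \bY_i\right) \;\le\; O\!\left(\frac{1}{\sqrt{\sum_{i=1}^N \delta_i}}\right),
\]
which is the desired inequality.

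There is no real obstacle here: all of the technical work (the partitioning-into-residue-classes trick that reduced the Barbour--Xia lemma to the non-integer setting) has already been absorbed into Claim~\ref{clm:soupedup}. The corollary is then immediate, which is the whole reason Claim~\ref{clm:soupedup} was proved in its generalized form rather than just for integer-valued summands.
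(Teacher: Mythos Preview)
Your proposal is correct and follows exactly the same approach as the paper: define $\bY_i = \bX_i/p$, observe that $\dshift{1}(\bY_i) = \dshift{p}(\bX_i)$ and $\dshift{1}(\bS/p) = \dshift{p}(\bS)$, and invoke Claim~\ref{clm:soupedup}. Your write-up merely spells out a bit more of the bijection reasoning than the paper does.
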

\begin{proof}
Let $\bY_i = \bX_i/p$ for all $1 \le i \le N$. Then for $\bS' = \sum_{i=1}^N \bY_i$, it is clear that $\dshift{1}(\bS') = \dshift{p}(\bS)$. Applying Claim~\ref{clm:soupedup}, we get the corollary. 
\end{proof}

\ignore{

\subsection{Learning results for PBDs and $k$-SIIRVs.}

We will use some known learning results for PBDs and $k$-SIIRVs, specifically the following:

\begin{theorem} [Learning PBDs] \label{thm:learn-PBD} 
Let $\bX = \sum_{i=1}^n \bX_i$ be an unknown PBD.   There is an algorithm with the following properties:
given $n, \eps,\delta$ and access to independent draws from $\bX$, the algorithm uses \green{$\tilde{O}\left( (1/\eps^3)  \cdot \log(1/\delta) \right)$ samples from $X$,
runs in time $\tilde{O} \left( (1/\eps^3) \cdot \log n  \cdot \log^2 {1 \over \delta} \right)$,}
and with probability at least $1-\delta$ outputs a (succinct description of a) distribution $\hat{\bX}$ over $[n]$
which is such that $\dtv(\hat{\bX},\bX) \leq \eps.$
\end{theorem}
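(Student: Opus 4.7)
The plan is to prove Theorem~\ref{thm:learn-PBD} via a dichotomy driven by the variance of $\bX$, relying on the cover theorem for PBDs (Theorem~\ref{thm: sparse cover theorem}), the translated Poisson approximation (Lemma~\ref{lem:translated Poisson approximation}), and moment-matching for translated Poissons (Lemma~\ref{lem: variation distance between translated Poisson distributions}). Small-variance PBDs will be learned by an empirical estimator on their essential support, while large-variance PBDs will be learned parametrically as translated Poisson distributions.

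\medskip
\noindent\textbf{Step 1 (variance-based case split).} First I would draw $m_0 = \tilde{O}(\log(1/\delta)/\eps^2)$ samples and use them to compute a constant-factor approximation $\hat\sigma^2$ to $\sigma^2 := \Var[\bX]$ together with an estimate $\hat\mu$ of $\mu := \E[\bX]$ with $|\hat\mu - \mu| = O(\sigma)$. (Standard median-of-means amplification handles the $\log(1/\delta)$ confidence.) Fix a threshold $C/\eps^2$. If $\hat\sigma^2 \le C/\eps^2$ we enter the sparse case; otherwise we enter the translated-Poisson case.

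\medskip
\noindent\textbf{Step 2 (sparse case).} When $\sigma^2 = O(1/\eps^2)$, sub-Gaussian tail bounds for PBDs (Chernoff/Bernstein) yield an interval $I = [\hat\mu - L,\hat\mu + L]$ with $L = \tilde{O}(1/\eps)$ such that $\bX(I) \ge 1-\eps$. I would then draw $m_1 = \tilde{O}((1/\eps^3)\log(1/\delta))$ additional samples and output the empirical distribution $\hat{\bX}$ restricted to $I$ (with stray mass projected onto an endpoint). Because $|I| = \tilde{O}(1/\eps)$, the standard VC/multinomial concentration bound (``$O(s/\eps^2)$ samples suffice for a support of size $s$'') gives $\dtv(\hat{\bX},\bX) \le \eps$ with probability $\ge 1-\delta/3$. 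The succinct description is just the table of $\tilde{O}(1/\eps)$ probabilities, each integer in $I$ representable in $O(\log n)$ bits.

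\medskip
\noindent\textbf{Step 3 (translated Poisson case).} When $\sigma^2 = \Omega(1/\eps^2)$, Lemma~\ref{lem:translated Poisson approximation} gives $\dtv(\bX, TP(\mu,\sigma^2)) = O(1/\sigma) = O(\eps)$. I would then draw $m_2 = \tilde{O}((1/\eps^2)\log(1/\delta))$ additional samples and compute $\tilde\mu, \tilde\sigma^2$ with $|\tilde\mu-\mu| \le \eps\sigma$ and $|\tilde\sigma^2 - \sigma^2| \le \eps\sigma^2$ (using, e.g., a median-of-means estimator for both moments; PBDs have bounded higher moments relative to $\sigma^2$, so $\tilde{O}(1/\eps^2)$ samples suffice). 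Lemma~\ref{lem: variation distance between translated Poisson distributions} then gives $\dtv(TP(\tilde\mu,\tilde\sigma^2), TP(\mu,\sigma^2)) = O(\eps)$, and by the triangle inequality $\dtv(TP(\tilde\mu,\tilde\sigma^2),\bX) = O(\eps)$. The succinct description is simply the pair $(\tilde\mu,\tilde\sigma^2)$, representable in $O(\log n)$ bits.

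\medskip
\noindent\textbf{Step 4 (accounting and main obstacle).} Summing across the steps, the sample complexity is $m_0 + \max(m_1, m_2) = \tilde{O}((1/\eps^3)\log(1/\delta))$. The running time is dominated by processing each sample (each requiring $O(\log n)$ bit operations) and outputting the hypothesis, giving $\tilde{O}((1/\eps^3)\log n \cdot \log^2(1/\delta))$ after folding in confidence amplification. Rescaling $\eps$ and $\delta$ by constants absorbs the $O(\cdot)$ slack. The main obstacle is the sparse case: the cover theorem only guarantees essential support $O(1/\eps^3)$ in sparse form, which naively would cost $\tilde{O}(1/\eps^5)$ samples. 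The crux is to argue via sub-Gaussian concentration for PBDs that whenever $\sigma^2 \le O(1/\eps^2)$ the true essential support has length only $\tilde{O}(1/\eps)$, shaving the needed factor of $1/\eps^2$ and leaving the two regimes (``$\sigma^2 \le 1/\eps^2$'' handled by empirical estimation, ``$\sigma^2 > 1/\eps^2$'' handled by translated Poisson) cleanly covering all PBDs with matching $\tilde{O}(1/\eps^3)$ sample complexity.
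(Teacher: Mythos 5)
This theorem is not proved in the paper — it is imported as a known result from \cite{DDS12stoc,DKS16coltpbd}, and in fact the statement you were asked to prove is wrapped inside the paper's \texttt{\textbackslash ignore} macro and never appears in the compiled text. Your proposal is a correct reconstruction of the standard argument from that prior work: the variance dichotomy, the key observation that a PBD with $\sigma^2 = O(1/\eps^2)$ is $\eps$-essentially supported on an interval of length $\tilde O(1/\eps)$ (which is exactly what brings the empirical-estimator cost down to $\tilde O(1/\eps^3)$ rather than the $\tilde O(1/\eps^5)$ that the $O(1/\eps^3)$-sparse-form bound of Theorem~\ref{thm: sparse cover theorem} would naively suggest), and the translated-Poisson fit via Lemmas~\ref{lem:translated Poisson approximation} and~\ref{lem: variation distance between translated Poisson distributions} in the large-variance regime. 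The only cosmetic difference from the published proof is that you branch on an explicit variance estimate, whereas \cite{DDS12stoc} runs both subroutines unconditionally and picks the winner by hypothesis selection (as in Proposition~\ref{prop:log-cover-size}); both handle the boundary regime, since the two learners have overlapping ranges of validity near $\sigma^2 \approx 1/\eps^2$.
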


}

\subsection{Other background results on distribution learning.} 

\ignore{

\medskip \noindent {\bf Estimating the mean of a distribution.}  
\green{\rnote{I think perhaps we won't need this -- we can just guess, and may as well given all the guessing we are doing already.  So maybe remove this green stuff eventually.}
We recall the following standard fact which says that it is possible to estimate the mean of a distribution to high accuracy (measured relative to the variance of the distribution) with high confidence.  (For a detailed proof, which uses Chebychev's inequality and a standard median-based confidence boosting procedure, see e.g. Lemma 6 of \cite{DDS15algorithmica}.)\rnote{I'm not sure if there is a similarly generic way to argue that we can estimate variance easily for non-PBD distributions.  For a completely general distribution over $[pN]$ I guess variance estimation to $(1\pm \eps)$ multiplicative accuracy might take a number of samples depending on $p,N,$ right?  Of course we're never going to be dealing with completely generic distributions and in fact it is possible to estimate the variance on the cheap in th situations where we need to do so --- we just want to do this in the lowest hassle way possible.  Right now I've been taking a guessing-based approach, see e.g. Sections 3.2.3 and 3.3.1 ``Setup'' (where there is some amount of hassle).  This is not a major concern but eventually we should decide how we will do these sorts of things.}

\begin{fact} \label{fact:estimate-mean}
Let $\bD$ be any distribution with finite variance.  There exists an algorithm with the following properties:  given access to independent draws from $\bD$, it produces an estimate $\hat{\mu}$  for $\mu=\E[\bD]$such that with probability at least $1-\delta$, we have
\[|\mu-\hat{\mu}| \le \epsilon \cdot \sigma,\]
where $\sigma^2 = \Var[\bD].$
The algorithm uses $O(\log(1/\delta)/\epsilon^2)$ samples and runs in time
$O(\log(1/\delta)/\epsilon^2).$
\end{fact}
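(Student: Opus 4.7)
The plan is to use a standard ``median of means'' boosting construction. First I would partition the i.i.d.\ draws from $\bD$ into $T$ independent batches, each of size $n_1$, where $n_1 = \lceil 4/\epsilon^2 \rceil$ and $T = \Theta(\log(1/\delta))$. Within a single batch of $n_1$ samples $\bY_1,\dots,\bY_{n_1}$, I would form the empirical mean $\bar{\bY} := (1/n_1) \sum_{i=1}^{n_1} \bY_i$. Since the $\bY_i$'s are i.i.d.\ with variance $\sigma^2$, the random variable $\bar{\bY}$ has variance $\sigma^2/n_1$, and Chebyshev's inequality gives
\[
\Pr\bigl[\,|\bar{\bY} - \mu| > \epsilon \cdot \sigma\,\bigr] \;\le\; \frac{\sigma^2/n_1}{\epsilon^2 \sigma^2} \;=\; \frac{1}{n_1 \epsilon^2} \;\le\; \tfrac{1}{4}.
\]
So each batch produces an estimate that is within $\epsilon\sigma$ of $\mu$ with probability at least $3/4$.

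Next, I would run this procedure $T$ times independently to obtain candidate estimates $\hat{\mu}_1,\dots,\hat{\mu}_T$, and output $\hat{\mu} := \mathrm{median}(\hat{\mu}_1,\dots,\hat{\mu}_T)$. Let $\bZ_j$ be the indicator of the event $|\hat{\mu}_j - \mu| \le \epsilon \sigma$; the $\bZ_j$'s are independent Bernoulli random variables with $\E[\bZ_j] \ge 3/4$. A standard Chernoff bound then gives
\[
\Pr\!\left[\sum_{j=1}^T \bZ_j \le T/2\right] \;\le\; \exp(-\Omega(T)) \;\le\; \delta
\]
provided we choose $T = C \log(1/\delta)$ for a suitable absolute constant $C$. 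Conditioned on $\sum_j \bZ_j > T/2$, a strict majority of the $\hat{\mu}_j$'s lie in the interval $[\mu - \epsilon\sigma, \mu + \epsilon\sigma]$; because any such interval contains the median whenever it contains more than half the values, we conclude $|\hat{\mu} - \mu| \le \epsilon \sigma$ with probability at least $1 - \delta$.

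Finally, the total sample count is $n_1 \cdot T = O(\log(1/\delta)/\epsilon^2)$, and since computing a sample mean takes time linear in the batch size and the median of $T$ numbers can be computed in time $O(T)$, the overall running time is also $O(\log(1/\delta)/\epsilon^2)$ in the unit-cost arithmetic model. There is no real ``main obstacle'' here: the only subtle point is the appeal to Chebyshev's inequality, which only requires finite variance (and does not need any higher moment assumption or a known upper bound on $\sigma$, since the target accuracy is stated \emph{relative} to $\sigma$). All other steps are standard median-of-means boosting.
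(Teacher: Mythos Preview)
Your proof is correct and follows exactly the approach the paper indicates (Chebyshev's inequality for a single batch, then the standard median-based confidence boosting; the paper cites Lemma~6 of \cite{DDS15algorithmica} for this). Nothing to add.
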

}
}
\ignore{
\noindent {\bf Efficiently estimating variance of a PBD$_N.$}
A useful fact is that it is possible to efficiently estimate the variance of a PBD$_N$.  We recall the following from \cite{DDS12stoc}:

\pnote{I'm sorry if I'm wrong, but I think that this is equivalent to what was here before, and is a little simpler and more
interpretable.{ \bf{\orange{Rocco:}}} This looks fine to me.} \rnote{Update: It looks like thanks to Anindya's kernel kleverness in the $k=3$ case, now we do not use Fact~\ref{fact:estimate-PBD-variance} at all.  Shall we get rid of it?}
\begin{fact} \label{fact:estimate-PBD-variance}
Let $\bS$ be any PBD$_N.$  There exists an algorithm with the following properties:  given access to independent draws from $\bS$, it produces an estimate $\hat{\sigma}^2$ for $\sigma^2 = \Var[\bS]$  such that, if $\sigma \geq 1$,  with probability at least $1-\delta$:
\[
|\sigma^2 - \hat{\sigma}^2| \le \epsilon \cdot \sigma^2.
\]
The algorithm uses $O(\log(1/\delta)/\epsilon^2)$ samples and runs in time $O(\log(1/\delta)/\epsilon^2).$
\end{fact}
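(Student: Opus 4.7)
The plan is to estimate $\sigma^2$ via the pair-differences identity $\sigma^2 = \tfrac{1}{2}\E[(\bS - \bS')^2]$, which holds because $\bS,\bS'$ are i.i.d.\ copies (so $\bS-\bS'$ has mean zero). First I would draw $2m$ samples $s_1,\dots,s_{2m}$, pair them, and form the unbiased $U$-statistic
\[
T = \frac{1}{2m}\sum_{i=1}^m (s_{2i-1} - s_{2i})^2.
\]
Then I would apply Chebyshev's inequality to $T$ to obtain constant success probability, and boost to $1-\delta$ by the standard median-of-means trick over $O(\log(1/\delta))$ independent copies; this immediately gives an overall sample and time bound of $O(\log(1/\delta)/\epsilon^2)$ provided Chebyshev goes through with $m = O(1/\epsilon^2)$.

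The quantitative heart of the argument is bounding $\Var[T] = \tfrac{1}{4m}\Var[(\bS-\bS')^2]$, which reduces to controlling the fourth central moment $\mu_4(\bS) = \E[(\bS - \E\bS)^4]$. Writing $\bS = \sum_{i=1}^N \bX_i$ with $\E[\bX_i]=p_i$ and expanding $\E[(\sum_i(\bX_i - p_i))^4]$ directly (only the all-equal-indices and matched-pairs terms survive by independence and zero-meanness), I would derive the standard identity
\[
\mu_4(\bS) \;=\; 3\sigma^4 + \sum_{i=1}^N p_i(1-p_i)\bigl(1 - 6 p_i(1-p_i)\bigr) \;\leq\; 3\sigma^4 + \sigma^2,
\]
where the bound uses $1 - 6p_i(1-p_i) \leq 1$ and $\sum_i p_i(1-p_i) = \sigma^2$. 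Under the hypothesis $\sigma \geq 1$ this is $O(\sigma^4)$. A short expansion using $\E[\bS-\mu]=0$ gives $\E[(\bS-\bS')^4] = 2\mu_4(\bS) + 6\sigma^4 = O(\sigma^4)$, hence $\Var[(\bS-\bS')^2] = O(\sigma^4)$ and $\Var[T] = O(\sigma^4/m)$.

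With these moment bounds in hand, Chebyshev yields
\[
\Pr\!\left[|T - \sigma^2| \geq \epsilon\,\sigma^2\right] \;\leq\; \frac{\Var[T]}{\epsilon^2\sigma^4} \;=\; O\!\left(\frac{1}{\epsilon^2 m}\right),
\]
so $m = O(1/\epsilon^2)$ paired draws suffice for success probability at least (say) $2/3$. Outputting the median of $O(\log(1/\delta))$ independent copies of $T$ then amplifies the success probability to $1-\delta$ via the standard Chernoff bound on the number of ``good'' copies, giving the claimed $O(\log(1/\delta)/\epsilon^2)$ bounds on both samples and running time. The main (very mild) obstacle is simply verifying the fourth-moment identity cleanly; the hypothesis $\sigma \geq 1$ is exactly what is needed to absorb the lower-order $\sigma^2$ term in $\mu_4(\bS)$ into an $O(\sigma^4)$ bound, and everything else (pair-differences, Chebyshev, median-of-means) is entirely standard.
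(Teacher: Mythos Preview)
Your argument is correct. The pair-differences estimator has mean $\sigma^2$, your fourth-moment identity $\mu_4(\bS)=3\sigma^4+\sum_i p_i(1-p_i)(1-6p_i(1-p_i))\le 3\sigma^4+\sigma^2$ is right (the cross terms in $\E[(\sum_i(\bX_i-p_i))^4]$ vanish by independence and zero-meanness exactly as you say), the expansion $\E[(\bS-\bS')^4]=2\mu_4+6\sigma^4$ is correct, and Chebyshev plus median-of-means then give the stated sample and time bounds. The assumption $\sigma\ge 1$ is used precisely where you indicate, to absorb the $\sigma^2$ term into $O(\sigma^4)$.

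As for comparison with the paper: the paper does not actually prove this fact. The statement sits inside an \texttt{\textbackslash ignore\{\ldots\}} block (so it does not appear in the compiled paper at all), and even there it is merely recalled from \cite{DDS12stoc} with no argument given; an author note in that block explicitly remarks that the fact is no longer used and asks whether to delete it. So there is no ``paper's own proof'' to compare against. Your self-contained argument via the fourth central moment of a PBD is a clean way to establish the claim; an alternative route (closer in spirit to how such facts are often proved) is to first approximate $\bS$ by a translated Poisson or discretized Gaussian and read off concentration of the sample variance from there, but your direct moment computation is both elementary and tight enough for the stated bounds.
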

}

\medskip \noindent {\bf Learning distributions with small essential support.}  We recall the following folklore result, 
which says that distributions over a small essential support can be learned efficiently:
\ignore{\pnote{Not explicitly mentioning the semi-agnostic model here.  } \rnote{I think that's fine -- the statement of the fact as-is implies that it works semi-agnostically}
}



\begin{fact} \label{fact:learn-sparse-ess-support}
There is an algorithm $A$ with the following performance guarantee:  $A$ is given a positive integer $s$, an accuracy parameter 
$\eps$, a confidence parameter $\delta$, and access to i.i.d.\ draws from an unknown distribution $\bP$ over $\Z$ that is promised to be
$\eps$-essentially supported on some set $S$ with $|S| = s.$  Algorithm $A$ makes $m=\poly(s, 1/\eps, \log(1/\delta))$ draws from $\bP$, runs for time $\poly(s,1/\eps,\log(1/\delta))$ and with probability at least $1-\delta$ outputs a hypothesis distribution $\tilde{\bP}$ such that $\dtv(\bP,\tilde{\bP}) \leq 2\eps.$
\end{fact}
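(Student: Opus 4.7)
The plan is to output the empirical distribution from a polynomially-sized sample and to bound $\dtv(\bP,\tilde{\bP})$ by splitting the variation sum into the contributions from inside and outside the hidden essential support $S$. Concretely, my algorithm draws $m = C(s+\log(1/\delta))/\eps^{2}$ i.i.d.\ samples $X_1,\dots,X_m\sim\bP$ for a sufficiently large absolute constant $C$, and returns the empirical distribution $\tilde{\bP}(x)=\frac{1}{m}\sum_{i=1}^{m}\mathbb{1}[X_i=x]$. No knowledge of $S$ is used, and $\tilde{\bP}$ is represented succinctly since it is supported on at most $m$ integers, so both sample and time complexity are $\poly(s,1/\eps,\log(1/\delta))$ as required.

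For the analysis, I would write $2\dtv(\bP,\tilde{\bP}) = \sum_{x\in S}|\bP(x)-\tilde{\bP}(x)| + \sum_{x\notin S}|\bP(x)-\tilde{\bP}(x)|$ and bound the two pieces separately. For the off-support piece, since $\bP(S^{c})\le \eps$, a Hoeffding bound gives $\tilde{\bP}(S^{c})\le \eps + O(\sqrt{\log(1/\delta)/m})\le 2\eps$ with probability at least $1-\delta/2$; because $\bP$ and $\tilde{\bP}$ are nonnegative, this forces $\sum_{x\notin S}|\bP(x)-\tilde{\bP}(x)|\le \bP(S^{c})+\tilde{\bP}(S^{c}) \le 3\eps$. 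For the on-support piece, the key observation is that $|S|=s$, so we are essentially estimating a distribution over a known (but unseen) set of only $s$ atoms. Standard Chernoff/Bernstein estimates give $\E|\bP(x)-\tilde{\bP}(x)|=O(\sqrt{\bP(x)/m}+1/m)$ for each $x\in S$, and a Cauchy--Schwarz across $S$ gives $\E\bigl[\sum_{x\in S}|\bP(x)-\tilde{\bP}(x)|\bigr] \le O(\sqrt{s/m})$. Since changing a single sample alters this sum by at most $2/m$, McDiarmid's inequality concentrates it around its mean up to an additive $O(\sqrt{\log(1/\delta)/m})$ with probability at least $1-\delta/2$; for $C$ large enough this is at most $\eps$. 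Combining the two pieces gives $\dtv(\bP,\tilde{\bP})\le 2\eps$ with probability at least $1-\delta$.

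The main ``obstacle'' here is purely bookkeeping: coordinating the off-support Hoeffding bound with the on-support $\ell_1$ concentration so that both the failure probabilities and the $\eps$-budget come out at the target $2\eps$ with total failure probability $\delta$. If the McDiarmid step feels heavy, a more elementary alternative is to partition $S$ into heavy atoms $S_{h}=\{x\in S:\bP(x)\ge \eps/s\}$ and light atoms $S_{\ell}=S\setminus S_{h}$: the light atoms contribute at most $\eps$ to $\bP$ (and an equally small amount to $\tilde{\bP}$ in expectation, handled by Hoeffding), while each heavy atom can be estimated to additive error $\eps/s$ by a single Chernoff bound, and a union bound over the $s$ heavy atoms controls their total contribution. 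Either route completes the proof; there is no structural difficulty because the problem reduces to learning a distribution over $s$ atoms up to a negligible off-support remainder.
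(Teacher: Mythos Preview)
Your proposal is correct and takes essentially the same approach as the paper: the paper's entire ``proof'' is the parenthetical remark that the algorithm simply returns the empirical distribution of its $m$ draws from $\bP$, treating the fact as folklore. Your write-up supplies the standard analysis (split into on-$S$ and off-$S$ contributions, Hoeffding for the off-support mass, Cauchy--Schwarz plus McDiarmid for the $\ell_1$ error on the $s$ atoms of $S$) that the paper omits.
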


(The algorithm of Fact \ref{fact:learn-sparse-ess-support} simply
returns the empirical distribution of its $m$ draws from $\bP.$) Note
that by Fact \ref{fact:learn-sparse-ess-support}, if $\bS$ is a
sum of $N < \poly(1/\eps)$ integer random variables
then there is a
$\poly(1/\eps)$-time, $\poly(1/\eps)$-sample algorithm for learning
$\bS$, simply because the support of $\bS$ is contained in a set of
size $\poly(1/\eps)$.  Thus in the analysis of our algorithm for $k=3$
we can (and do) assume that $N$ is larger than any fixed
$\poly(1/\eps)$ that arises in our analysis.


\subsubsection{Hypothesis selection and ``guessing''.} \label{sec:hyp-test}

To streamline our presentation as much as possible, many of the learning algorithms that we present are described as ``making guesses'' for different values at various points in their execution.  For each such algorithm our analysis will establish that with very high probability there is a ``correct outcome'' for the guesses which, if it is achieved (guessed), results in an $\eps$-accurate hypothesis.  This leads to a situation in which there are multiple hypothesis distributions (one for each possible outcome for the guesses that the algorithm makes), one of which has high accuracy, and the overall learning algorithm must output (with high probability) a high-accuracy hypothesis.  Such situations have been studied by a range of authors (see e.g. \cite{Yatracos85,DaskalakisKamath14,AJOS14,DDS12stoc,DDS15}) and a number of different procedures are known which can do this.  For concreteness we recall one such result, Proposition 6 from \cite{DDS15}:

\begin{proposition} \label{prop:log-cover-size}
Let $\bD$ be a distribution over a finite set $W$
and let $\calD_\eps = \{ \bD_j\}_{j=1}^M$ be a collection of $M$ hypothesis distributions
over $W$ with the property that there exists $i \in [M]$ such that
$\dtv(\bD,\bD_i) \leq \eps$.
There is an algorithm~$\mathrm{Select}^{\bD}$ which is given
$\eps$ and a confidence parameter $\delta$, and is provided
with access to
(i) a source of i.i.d. draws from $\bD$ and from $\bD_i$, for all $i \in [M]$; and
(ii) an ``evaluation oracle'' $\eval_{\bD_i}$,
for each $i \in [M]$, which, on input $w \in W$, deterministically outputs
the value $\bD_i(w).$  
The $\mathrm{Select}^{\bD}$ algorithm has the following behavior:
It makes
$m = O\left( (1/ \eps^{2}) \cdot (\log M + \log(1/\delta)) \right)
$ draws from $\bD$ and {from} each $\bD_i$, $i \in [M]$,
and $O(m)$ calls to each oracle $\eval_{\bD_i}$, $i \in [M]$.  It runs in time $\poly(m,M)$ (counting each call to an $\eval_{\bD_i}$ oracle and draw from a $\bD_i$ distribution as unit time),
and with probability $1-\delta$ it outputs an index $i^{\star} \in [M]$ that satisfies $\dtv(\bD,\bD_{i^{\star}}) \leq 6\eps.$
\end{proposition}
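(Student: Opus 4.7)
The plan is to implement a Scheffé-style tournament among the $M$ candidate hypotheses, in the spirit of Devroye--Lugosi~\cite{DL:01} and the variants used in prior work (e.g.\ \cite{DDS12stoc,DDS15,DaskalakisKamath14}). For each ordered pair $i \neq j \in [M]$, using the evaluation oracles I define the \emph{Scheffé set}
\[
W_{ij} := \{ w \in W : \bD_i(w) > \bD_j(w) \},
\]
whose indicator can be evaluated in constant time on any point $w$ by querying $\eval_{\bD_i}(w)$ and $\eval_{\bD_j}(w)$. The key identity is $\bD_i(W_{ij}) - \bD_j(W_{ij}) = \dtv(\bD_i, \bD_j)$, since $W_{ij}$ is exactly the set where $\bD_i$ exceeds $\bD_j$.

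Next I would draw $m = \Theta((1/\eps^2)(\log M + \log(1/\delta)))$ samples from $\bD$ and the same number from each $\bD_i$. Using the evaluation oracles to test membership in each $W_{ij}$, I form empirical estimates $\hat{p}_{ij}$, $\hat{q}_{ij}$, $\hat{r}_{ij}$ of $\bD(W_{ij})$, $\bD_i(W_{ij})$, and $\bD_j(W_{ij})$ respectively. A Hoeffding bound combined with a union bound over the fewer than $M^2$ ordered pairs ensures that, with probability at least $1-\delta$, every such estimate is within $\eps/2$ of its true value simultaneously; this is exactly where the $\log M + \log(1/\delta)$ factor in the sample complexity arises. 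The tournament then declares that $i$ \emph{beats} $j$ if $|\hat{p}_{ij} - \hat{q}_{ij}| \leq |\hat{p}_{ij} - \hat{r}_{ij}|$, and the algorithm outputs any $i^\star$ that wins the most pairings (breaking ties arbitrarily).

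The main correctness verification is a short triangle-inequality argument. Fix an index $i$ with $\dtv(\bD,\bD_i) \le \eps$ (which exists by hypothesis). For any $j$ with $\dtv(\bD,\bD_j) > 6\eps$, the triangle inequality gives $\dtv(\bD_i,\bD_j) > 5\eps$, hence $q_{ij} - r_{ij} > 5\eps$; together with $|p_{ij}-q_{ij}| \le \eps$ (since $\dtv(\bD,\bD_i)\le \eps$) and the three $\eps/2$ empirical errors, a short calculation shows that $|\hat p_{ij}-\hat q_{ij}| \le 5\eps/2 < |\hat p_{ij}-\hat r_{ij}|$, so $i$ beats every such $j$. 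Consequently $i$ has at least $M -$ (number of $\bD_j$ with $\dtv(\bD,\bD_j) \leq 6\eps$) wins, and the standard Scheffé tournament accounting then forces the overall winner $i^\star$ to satisfy $\dtv(\bD,\bD_{i^\star}) \leq 6\eps$.

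The main obstacle is bookkeeping rather than conceptual: tracking the constant factors carefully so that $\eps/2$ Hoeffding accuracy on all $O(M^2)$ pairs really does yield the $6\eps$ guarantee (and not a worse constant), and confirming that the running time is $\poly(m,M)$ since computing each $\hat{p}_{ij}, \hat{q}_{ij}, \hat{r}_{ij}$ uses $O(m)$ oracle calls and the tournament requires $O(M^2)$ comparisons. Both aspects are routine once the Scheffé framework is in place, so the proof reduces to the two standard ingredients: (i) Hoeffding + union bound to control all $3M^2$ empirical estimates with $O((1/\eps^2)(\log M + \log(1/\delta)))$ samples, and (ii) the triangle-inequality argument above.
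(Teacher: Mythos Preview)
The paper does not prove this proposition; it is quoted verbatim as Proposition~6 of \cite{DDS15}.  Your Scheff\'e-tournament framework (pairwise Scheff\'e sets, Hoeffding plus a union bound over the $O(M^2)$ empirical estimates) is indeed the approach used there, and your sample-complexity and running-time accounting are correct.

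There is, however, a genuine gap in the last step.  From ``the good $i$ beats every $j$ with $\dtv(\bD,\bD_j)>6\eps$'' you cannot conclude that the \emph{most-wins} champion $i^\star$ lies in $G=\{j:\dtv(\bD,\bD_j)\le 6\eps\}$.  A bad index $j^\star$ can lose to $i$ yet still beat many other bad indices and many indices in $G\setminus\{i\}$, thereby accumulating more wins than $i$; nothing in your argument rules this out.  The Devroye--Lugosi analysis of the most-wins rule does go through, but it needs an extra step you omitted: when $i$ beats $i^\star$, one uses $|W(i^\star)|\ge |W(i)|$ together with antisymmetry to locate some $j$ that $i^\star$ beats and that in turn beats $i$, then applies the pairwise triangle-inequality lemma twice.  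This yields a constant strictly larger than $6$ (with your $\eps/2$ empirical error and estimated $\hat q,\hat r$, the chained bound is roughly $21\eps$).

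To obtain the stated $6\eps$, the selection rule in \cite{DDS15} is different from ``most wins'': one outputs an $i^\star$ minimizing $\max\{\hat q_{i^\star j}-\hat r_{i^\star j}: j \text{ beats } i^\star\}$ (equivalently, any index never beaten by a $j$ whose estimated distance to it exceeds a fixed threshold).  For the good $i$ this maximum is at most $4\eps$, hence the same holds for $i^\star$; then, whether $i^\star$ beats $i$ or $i$ beats $i^\star$, a single triangle-inequality step gives $\dtv(\bD,\bD_{i^\star})\le 6\eps$.  Replacing your selection rule by this one closes the gap with no change to the rest of your outline.
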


We shall apply Proposition \ref{prop:log-cover-size} via the following simple corollary (the algorithm
$A'$ described below works simply by enumerating over all possible outcomes of all the guesses and then running the $\mathrm{Select}^{\bD}$ procedure of Proposition \ref{prop:log-cover-size}):

\begin{corollary} \label{cor:guess} Suppose that an algorithm $A$ for learning
an unknown distribution $\bD$ works in the following way: (i)  it ``makes guesses'' in such a way that there are a total of $M$ possible different vectors of outcomes for all the guesses; (ii) for each vector of outcomes for the guesses, it makes $m$ draws from $\bD$ and runs in time $T$; (iii) with probability at least $1-\delta$, at least one vector of outcomes for the guesses results in a hypothesis $\tilde{\bD}$ such that $\dtv(\bD,\tilde{\bD}) \leq \eps,$ and (iv) for each hypothesis distribution $\bD'$ corresponding to a particular vector of outcomes for the guesses, $A$ can simulate a random draw from $\bD'$ in time $T'$ and can simulate a call to the evaluation oracle $\eval_{\bD'}$ in time $T'$.
Then there is an algorithm $A'$  that makes $m + O\left( (1/ \eps^{2}) \cdot (\log M + \log(1/\delta)) \right)$ draws from $\bD$; runs in time $O(T M) + \poly(m,M,T')$; and with probability at least $1-2\delta$ outputs a hypothesis distribution $\tilde{\bD}$ such that $\dtv(\bD,\tilde{\bD}) \leq 6 \eps.$
\end{corollary}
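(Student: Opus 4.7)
The plan is to build $A'$ in two phases: first use a single shared batch of samples to generate the full list of $M$ candidate hypotheses corresponding to all possible guess outcomes, and then apply the $\mathrm{Select}^{\bD}$ procedure of Proposition~\ref{prop:log-cover-size} to pick one that is $6\eps$-close to $\bD$. The only mildly subtle point is that the $m$-sample cost of $A$ is incurred only \emph{once}, not $M$ times; everything else is a routine bookkeeping exercise and a union bound.

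Concretely, $A'$ first draws a single batch $\bB$ of $m$ i.i.d.\ samples from $\bD$ and fixes a string of internal random bits for $A$. It then enumerates the $M$ possible vectors of guess outcomes $\vec{v}_1,\dots,\vec{v}_M$, and for each $\vec{v}_j$ it runs the deterministic continuation of $A$ on the input $(\bB,\vec{v}_j)$ to produce a candidate hypothesis distribution $\bD_j$. This enumeration uses $m$ draws from $\bD$ in total and runs in time $O(TM)$. By hypothesis~(iii) applied to $A$, with probability at least $1-\delta$ over $\bB$ and $A$'s internal randomness there is some index $j^\star \in [M]$ with $\dtv(\bD,\bD_{j^\star})\le \eps$; crucially, the samples $\bB$ can be shared across all $M$ simulations because they are independent of the guess vector, which is what keeps the sample cost at $m$ rather than $mM$.

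In the second phase, $A'$ invokes $\mathrm{Select}^{\bD}$ on the collection $\calD_\eps = \{\bD_1,\dots,\bD_M\}$ with accuracy $\eps$ and confidence $\delta$. Proposition~\ref{prop:log-cover-size} requires $O\!\left((1/\eps^{2})(\log M + \log(1/\delta))\right)$ additional draws from $\bD$, together with draws from each $\bD_j$ and calls to each $\eval_{\bD_j}$. By hypothesis~(iv), $A'$ can simulate each such draw and each such evaluation-oracle query in time $T'$, so the total time for this phase is $\poly(m,M,T')$. Conditioned on the event that some $\bD_{j^\star}$ is $\eps$-close to $\bD$, Proposition~\ref{prop:log-cover-size} guarantees with probability at least $1-\delta$ that Select returns an index whose hypothesis is $6\eps$-close to $\bD$.

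A union bound over the two failure events (the guessing phase producing no good hypothesis, and Select failing to identify one) yields overall success probability at least $1-2\delta$. Adding up the two phases gives sample complexity $m + O\!\left((1/\eps^{2})(\log M + \log(1/\delta))\right)$ and running time $O(TM) + \poly(m,M,T')$, matching the statement of Corollary~\ref{cor:guess}. There is no real obstacle in this argument; the proof is essentially a black-box reduction to Proposition~\ref{prop:log-cover-size}, and the only content is the observation that the $m$ samples of $A$ need not be re-drawn for each guess vector.
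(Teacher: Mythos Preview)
Your proof is correct and follows exactly the approach the paper itself indicates: enumerate all $M$ guess vectors (sharing a single batch of $m$ samples) to produce the candidate list, then run $\mathrm{Select}^{\bD}$ from Proposition~\ref{prop:log-cover-size} and union-bound the two failure events. The paper gives only a one-line parenthetical sketch, so your write-up in fact supplies more detail than the original while matching its structure precisely.
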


We will often implicitly apply Corollary~\ref{cor:guess} by indicating a series of guesses and specifying the possible outcomes for them.  It will always be easy to see that the space of all possible vectors of outcomes for all the guesses can be
enumerated in the required time.  In Appendix~\ref{sec:evaluation} we discuss the specific form of the hypothesis distributions that our algorithm produces and show that the time required to sample from or evaluate any such hypothesis is not too high (at most $1/\eps^{2^{\poly(k)}}$ 
when $|\supportset| = 3$,
hence negligible given our claimed running times).

\ignore{
 Often it will also be clear that we can evaluate probabilities for each resulting hypothesis sufficiently efficiently, 
for example when the
hypothesis has small support, or when it is a mixture of members of parameterized classes of distributions such as translated Poissons or discretized Gaussians.
When our guesses involve PBDs, they will have high enough variance that Lemma~\ref{lem:translated Poisson approximation} will imply that approximation by translated Poissons can be used for $\beta$-approximate evaluation.
}

\ignore{
\subsubsection{Learning and approximation}
Many of our arguments will have the following high-level structure:  to learn a target distribution $\bD$ belonging to a class $\calC$ we argue that every distribution in $\calC$ can be approximated by a distribution in another class $\calC'$, and then give an algorithm for learning distributions in $\calC'$.  This approach succeeds if the quality of the approximation compares favorably with the sample complexity of learning $\calC'$; the following observation formalizes the argument. \rnote{Here too I think the kernel approach to covering all $k=3$ cases means that now we do not use Observation~\ref{obs:approx-learn} -- I don't see any references to it in the file.  Shall we remove this whole subsubsection?}

\begin{observation} \label{obs:approx-learn}
Let $\calC$ be a class of distributions, and suppose that given $\gamma > 0$, every distribution in $\calC$ is $\gamma$-close to some distribution in a class $\calC'_{\gamma}.$  Given $\gamma>0$ let $A$ be an algorithm which, given parameters $\eps',\delta',\gamma$, learns $\calC'_{\gamma}$ to accuracy $\eps'$ with confidence $1-\delta'$ using $m(\eps',\delta',\gamma)$ samples and running in time $T(\eps',\delta',\gamma).$  

Suppose that $\eps,\delta,\gamma$ satisfy
\begin{equation}
m(\eps/2,\delta/2,\gamma) \cdot \gamma \leq \delta/2 \quad \quad \text{and} \quad \quad
\gamma < \eps/2. \label{eq:its-cool}
\end{equation}
Then there is an algorithm that learns $\calC$ to accuracy $\eps$ and confidence $\delta$ using $m(\eps/2,\delta/2,\gamma)$ samples and running in time $T(\eps/2,\delta/2,\gamma)$.
\end{observation}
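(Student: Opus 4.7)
The plan is to reduce learning $\calC$ to learning $\calC'_\gamma$ via a simple coupling/simulation argument that accounts for the $\gamma$-approximation cost. Let $\bD \in \calC$ be the target distribution. By hypothesis there is some $\bD' \in \calC'_\gamma$ with $\dtv(\bD,\bD') \leq \gamma$. The idea is to run $A$ with parameters $(\eps/2,\delta/2,\gamma)$ as if it were being fed samples from $\bD'$, but actually feed it samples from $\bD$.

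First I would observe that $m$ i.i.d.\ samples from $\bD$ and $m$ i.i.d.\ samples from $\bD'$ have joint distributions that are at total variation distance at most $m \cdot \gamma$ (this follows from the standard fact that $\dtv$ is subadditive under products, or equivalently from a union bound over an optimal coupling of each coordinate). Setting $m := m(\eps/2,\delta/2,\gamma)$, the condition $m \gamma \leq \delta/2$ guarantees that the samples we feed into $A$ are indistinguishable from genuine $\bD'$-samples except on an event of probability at most $\delta/2$.

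Therefore, conditioned on the (good) event that the sample stream ``looks like'' it came from $\bD'$, algorithm $A$'s guarantee applies: with probability at least $1-\delta/2$ (over $A$'s own randomness and the genuine-looking samples), $A$ outputs a hypothesis $\tilde\bD$ with $\dtv(\tilde\bD,\bD') \leq \eps/2$. A union bound over the two failure events (bad coupling, and $A$'s internal failure) gives overall success probability at least $1-\delta$. In that success case the triangle inequality yields
\[
\dtv(\tilde\bD,\bD) \;\leq\; \dtv(\tilde\bD,\bD') + \dtv(\bD',\bD) \;\leq\; \tfrac{\eps}{2} + \gamma \;\leq\; \tfrac{\eps}{2} + \tfrac{\eps}{2} \;=\; \eps,
\]
using the second hypothesis of~\eqref{eq:its-cool}. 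The sample and runtime bounds are inherited directly from $A$, which completes the reduction.

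There is no real obstacle here; the only subtlety worth flagging is that $A$'s $\delta'$ must absorb both the intrinsic failure probability of $A$ and the coupling slack $m\gamma$, which is exactly what the two conditions in~\eqref{eq:its-cool} are arranged to guarantee (together with $\gamma < \eps/2$ so that the approximation error is dominated by the learning error). No assumption about the form of $\calC$, $\calC'_\gamma$, or $A$ beyond a black-box PAC-style guarantee is needed.
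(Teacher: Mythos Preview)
Your proposal is correct and follows essentially the same argument as the paper: run $A$ with parameters $(\eps/2,\delta/2,\gamma)$ on samples from $\bD$, use the coupling interpretation of total variation distance plus a union bound (via the hypothesis $m\gamma\le\delta/2$) to argue the samples are indistinguishable from $\bD'$-samples except with probability $\delta/2$, and finish with the triangle inequality using $\gamma<\eps/2$. There is no substantive difference in approach.
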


\begin{proof}
The algorithm simply runs $A$ with parameters $\eps/2,\delta/2,\gamma$.  To see that it works, let $\bD \in \calC$ be the target distribution and $\bD' \in \calC'_{\gamma}$ be such that $\dtv(\bD,\bD') \leq \gamma.$  We observe that since $A$ uses $m(\eps/2,\delta/2,\gamma)$ samples and 
(by Lemma~\ref{l:coupling}) 
we may assume that each draw from the target distribution $\bD \in \calC$ is with probability $1-\gamma$ distributed exactly as if it were drawn from $\bD'$, a union bound over (\ref{eq:its-cool}) gives that with failure probability at most $m(\eps/2,\delta/2,\gamma) \cdot \gamma < \delta/2$, the $m(\eps/2,\delta/2,\gamma)$ samples which $A$ uses are distributed exactly as if they were drawn from $\bD' \in \calC'_\gamma$.  Hence with overall probability at least $1-\delta$ algorithm $A$ constructs a hypothesis distribution $\bH$ that has $\dtv(\bH,\bD') \leq \eps/2$ and hence, by (\ref{eq:its-cool}), $\dtv(\bH,\bD) \leq \eps.$
\end{proof}

}

\subsection{Small error}  We freely assume throughout that the desired error parameter $\eps$ is at most some sufficiently small absolute constant value.  \ignore{At various points in our upper bounds for ease of presentation we prove that our algorithms succeed in learning the unknown distribution to accuracy $\eps^c$ for some absolute constant $c>0$; since all of our claimed sample complexity and running time bounds are stated only up to $\poly(\cdot)$ factors, this clearly suffices to establish our claimed results.}

\subsection{Fano's inequality and lower bounds on distribution learning.} \label{sec:fano}

A useful tool for our lower bounds is Fano's inequality, or more precisely, the following extension of it
given by Ibragimov and Khasminskii \cite{IK79} and Assouad and Birge \cite{AB83}:

\begin{theorem}[Generalization of Fano's Inequality.] \label{fano}
Let $\bP_1,\dots,\bP_{t+1}$ be a collection of $t+1$ distributions such that for any $i \neq j \in [t+1]$, we
have (i) $\dtv(\bP_i,\bP_j) \geq \alpha/2$, and (ii) $D_{KL}(\bP_i || \bP_j) \leq \beta$, where $D_{KL}$ denotes Kullback-Leibler divergence.  
Let $A$ be a learning algorithm which is given samples from an unknown distribution $\bP$ which is promised to be one of $\bP_1,\dots,\bP_{t+1}$ and which outputs an index $i \in [t+1]$ specifying a distribution $\bP_i$. Then, to achieve expected error 
$\E[\dtv(\bP,\bP_i)] \leq \alpha/4$ (where the expectation is over the random samples from $\bP$), algorithm $A$ must have sample complexity
$\Omega \left(\frac{\ln t}{\beta}\right).$
\end{theorem}

\section{Tools for kernel-based learning} \label{sec:kernel}

At the core of our actual learning algorithm is the well-known
technique of learning via the ``kernel method'' (see \cite{DL:01}). In this
section we set up some necessary machinery for applying this technique in our context.  

The goal of this section is ultimately to establish Lemma~\ref{lem:siirv-kernel}, which we will use later in our main learning algorithm.   Definition~\ref{def:kernel} and Lemma~\ref{lem:siirv-kernel} together form a ``self-contained take-away'' from this section.

We begin with the following important definition.
\begin{definition} \label{def:kernel}
Let $\bY,\bZ$ be two distributions supported on $\Z.$  We say that $\bY$ is \emph{$(\eps,\delta)$-kernel learnable from $T=T(\eps,\delta)$ samples using $\bZ$} if the following holds:   Let $\hat{Y}=\{y_1,\ldots, y_{{T_1}}\}$ be a multiset of $T_1 \geq T$ i.i.d. samples drawn from $\bY$ and let $\bU_{\hat{Y}}$ be the uniform distribution over $\hat{Y}.$ Then with probability $1-\delta$ (over the outcome of $\hat{Y}$) it is the case that $\dtv(\bU_{\hat{Y}} + \bZ,\bY) \leq \epsilon$.
\end{definition}
Intuitively, the definition says that convolving the empirical distribution $\bU_{\hat{Y}}$ with $\bZ$ gives a distribution which is close to $\bY$ in total variation distance. Note that once $T_1$ is sufficiently large, $\bU_{\hat{Y}}$ is close to $\bY$ in Kolmogorov distance by the DKW inequality.  Thus, convolving with $\bZ$ smoothens $\bU_{\hat{Y}}$ . 

The next lemma shows that if $\bY$ is $(\eps,\delta)$-kernel learnable, then a mixtures of shifts of $\bY$ is also $(\eps,\delta)$-kernel learnable with comparable parameters (provided the number of components in the mixture is not too large).

\begin{lemma} \label{lem:mix}
Let $\bY$ be $(\eps,\delta)$-kernel learnable using $\bZ$ from $T{(\eps,\delta)}$ samples.  If $\bX$ is a mixture (with arbitrary mixing weights) of distributions $c_1 + \bY,\dots,c_k + \bY$ for some integers $c_1,\dots,c_k$, then $\bX$ is
$(7\eps,2\delta)$-kernel learnable from ${T'}$ samples using $\bZ$,
{provided that $T' \geq \max\left\{{\frac {kT(\eps,\delta/k)}{\eps}}, C \cdot {\frac {k^2 \log(k/\delta)}{\eps^2}}\right\}$}.
\end{lemma}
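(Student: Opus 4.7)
The plan is to exploit the compositional structure of the mixture: decompose the $T'$ samples drawn from $\bX$ by which component they came from, apply the kernel learnability guarantee of $\bY$ to each component separately, and combine the pieces via a signed-measure triangle inequality.

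\textbf{Setup.} Write $\bX = \sum_{i=1}^k w_i (c_i + \bY)$ and realize a draw from $\bX$ by first picking $i$ with probability $w_i$ and then outputting $c_i + y$ for $y \sim \bY$. Given $T'$ i.i.d.\ samples $\hat{X}$ from $\bX$, let $T_i$ be the number of samples that came from component $i$ and let $\hat{Y}_i$ denote that subset of samples (viewed as i.i.d.\ draws from $c_i + \bY$); then $\bU_{\hat{X}} = \sum_i (T_i/T')\, \bU_{\hat{Y}_i}$, and by translation-equivariance of convolution,
\[
\bU_{\hat{Y}_i} + \bZ \;=\; c_i + (\bU_{\tilde{Y}_i} + \bZ),
\]
where $\tilde{Y}_i = \{y : c_i + y \in \hat{Y}_i\}$ is i.i.d.\ from $\bY$. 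So each component's kernel estimate $\bU_{\hat{Y}_i}+\bZ$ is governed by kernel learnability of $\bY$ itself.

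\textbf{Concentration and component-wise learnability.} By a Chernoff/Hoeffding bound, using $T' \geq C k^2 \log(k/\delta)/\eps^2$ and a union bound over the $k$ components, with probability $\geq 1-\delta$ we have $|T_i/T' - w_i| \leq \eps/k$ for every $i$. Call a component \emph{large} if $w_i \geq 2\eps/k$ and let $I$ be the set of large components. For $i \in I$, the concentration bound gives $T_i \geq T' w_i/2 \geq T'\eps/k \geq T(\eps,\delta/k)$, where the last inequality uses the hypothesis $T' \geq k T(\eps,\delta/k)/\eps$. Applying the kernel learnability of $\bY$ to the $\geq T(\eps,\delta/k)$ samples in $\tilde{Y}_i$, together with a union bound over $i \in I$, yields that with probability $\geq 1-\delta$, every large component satisfies $\dtv(\bU_{\hat{Y}_i}+\bZ,\, c_i+\bY) \leq \eps$. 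Both high-probability events together hold with probability $\geq 1-2\delta$, as required.

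\textbf{Combining the pieces.} Write $\bU_{\hat{X}} + \bZ - \bX = \sum_i A_i$ with $A_i = (T_i/T')(\bU_{\hat{Y}_i}+\bZ) - w_i(c_i+\bY)$, decompose
\[
A_i \;=\; (T_i/T')\bigl[(\bU_{\hat{Y}_i}+\bZ) - (c_i+\bY)\bigr] \;+\; (T_i/T' - w_i)(c_i+\bY),
\]
and bound TV norms via the triangle inequality. For $i \in I$: $\|A_i\|_{\mathrm{TV}} \leq (T_i/T')\eps + |T_i/T'-w_i|$. For $i \notin I$: $\|A_i\|_{\mathrm{TV}} \leq T_i/T' + w_i$. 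Summing: the large-component contribution is at most $\eps + \eps = 2\eps$; the small-component weights satisfy $\sum_{i\notin I} w_i \leq 2\eps$ (since $|I^c|\leq k$) and hence $\sum_{i\notin I} T_i/T' \leq 2\eps + \eps = 3\eps$. Grand total: $7\eps$.

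\textbf{Main obstacle.} The conceptual approach is straightforward; the delicate part is the quantitative balancing needed to guarantee that every component with non-negligible weight receives enough samples to trigger kernel learnability, while the aggregate error from components that are too small to individually apply the guarantee remains $O(\eps)$. This is precisely what forces the two-term lower bound on $T'$ in the statement, and what dictates the choice of threshold $2\eps/k$ separating large from small components.
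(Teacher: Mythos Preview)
Your proposal is correct and follows essentially the same approach as the paper: decompose the empirical sample by mixture component, use a Chernoff/union bound to control the empirical weights, split components by whether their weight exceeds roughly $2\eps/k$ (the paper's set $\mathrm{Low}$ is defined via $T'(\pi_j-\eps/k)\le T(\eps,\delta/k)$, which under the hypothesis on $T'$ is equivalent), apply kernel learnability of $\bY$ to the well-populated components, and sum the pieces with a signed-measure triangle inequality to reach the $7\eps$ bound. The only minor difference is bookkeeping: the paper tracks $\ell_1$ sums directly, whereas you phrase things via $\|A_i\|_{\mathrm{TV}}$, but the arithmetic and the final constants coincide.
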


\begin{proof}
Let $\pi_j$ denote the weight of distribution $c_j + \bY$ in the mixture $\bX$.  We view the draw of a sample point from $\bX$ as a two stage process, where in the first stage an index $1\le j \le k$ is chosen with probability $\pi_j$ and in the second stage a random draw is made from the distribution $c_j +\bY$. 

Consider a draw of $T'$ independent samples $x_1,\dots,x_{T'}$ from $\bX$. In the draw of $x_i$, let the index chosen in the first stage be denoted $j_i$ (note that $1 \le j_i \le k$). For $j \in [k]$ define 
$$
S_j = \{1 \le i \leq T': j_i = j \}.
$$

{The idea behind Lemma \ref{lem:mix} is simple.  Those $j$ such that $\pi_j$ is small will have $|S_j|$ small and will not contribute much to the error.  Those $j$ such that $\pi_j$ is large will have $|S_j|/T'$ very close to $\pi_j$ so their cumulative contribution to the total error will also be small since each such $\bU_{\{x_i: i \in S_j\}}+\bZ$ is very close to the corresponding $c_j + \bY$.  We now provide details.}

{Since $T' \geq O\left({\frac {k^2 \log(k/\delta)}{\eps^2}}\right)$, a simple Chernoff bound and union bound over all $j \in [k]$} gives that
\begin{equation} \label{eq:empirical-ok}
\left| \frac{|S_j|}{T'} - \pi_j \right| \le  \epsilon/k \quad \text{for all~}j \in [k]
\end{equation}
with probability at least $1-\delta$. For the rest of the analysis we assume that indeed (\ref{eq:empirical-ok}) holds.  We observe that even after conditioning on (\ref{eq:empirical-ok}) and on the outcome of $j_1,\dots,j_{T'}$, it is the case that for each $i \in [T']$ the value $x_i$ is drawn independently from $c_{j_i} + \bY.$

Let $\mathrm{Low}$ denote the set {$\{1 \le j \le k: T' \cdot (\pi_j - \eps/k) \leq T(\eps,\delta/k)\}$, so each $j \notin \mathrm{Low}$ satisfies $T'\cdot (\pi_j - \eps/k) \geq T(\eps,\delta/k)$.}
Fix any $j \not \in \mathrm{Low}$.  From (\ref{eq:empirical-ok}) and the definition of $\mathrm{Low}$ we have that  $|S_j| \geq T' \cdot (\pi_j-\eps/k) \geq T(\eps,\delta/k)$, and since  $c_j+\bY$  is $(\eps,{\delta/k})$-kernel learnable from $T(\eps,{\delta/k})$ samples using $\bZ$, it follows that {with probability at least $1-\delta/k$ we have}
$$
\dtv \left( \bU_{\{x_i: i \in S_j\}} + \bZ, c_j +\bY\right)  \le \epsilon,
$$
and thus
\begin{align}
\nonumber
\sum_{z \in \Z} \left| \frac{|S_j|}{T'} \Pr[\bU_{\{x_i: i \in S_j\}} + \bZ = z]
              - \pi_j \Pr[\bY = z] \right| \\
\label{eq:ell1}
\leq \bigg|\frac{|S_j|}{T'} - \pi_j\bigg| + 
       {\max\left\{\frac{|S_j|}{T'}, \pi_j \right\}} \cdot \frac{\epsilon}{2}.
\end{align}
By a union bound, with probability at least $1-\delta$ the bound (\ref{eq:ell1}) holds for all $j \notin \mathrm{Low}$.
For $j \in \mathrm{Low}$, we trivially have 
$$
\sum_{z \in \Z} \left| \frac{|S_j|}{T'} \Pr[\bU_{\{x_i: i \in S_j\}} + \bZ = z]
              - \pi_j \Pr[Y = z] \right| 
\le \frac{|S_j|}{T'}  + \pi_j  \le \bigg|\frac{|S_j|}{T'} - \pi_j\bigg| + 2 \cdot \pi_j. 
$$
Next, note that 
\[
\sum_{j \in \mathrm{Low}} \pi_j \le \sum_{j \in \mathrm{Low}} \left(\frac{T(\eps,\delta/k)}{T'} + \eps/k\right)
\le
\sum_{j \in \mathrm{Low}} (\eps/k + \eps/k) \leq 2\eps.
\] 
Thus, we obtain that 
\begin{align*}
& \sum_{z \in \Z} \left| 
       \sum_{j=1}^k \frac{|S_j|}{T'} \Pr[\bU_{\{x_i: i \in S_j\}} + \bZ = z]
              - \sum_{j=1}^k \pi_j \Pr[Y = z] \right| \\
& \leq \sum_{j=1}^k \sum_{z \in \Z} 
        \left| \frac{|S_j|}{T'} \Pr[\bU_{\{x_i: i \in S_j\}} + \bZ = z]
              - \sum_{j=1}^k \pi_j \Pr[Y = z] \right| \\
& \le \sum_{j=1}^k \bigg|\frac{|S_j|}{T'} - \pi_j\bigg| + \sum_{j \not \in \mathrm{Low}}{\max\left\{\frac{|S_j|}{T'}, \pi_j\right\}} \cdot \frac{\epsilon}{2} + 2 \sum_{j \in \mathrm{Low}} \pi_j
\le 7 \epsilon.
\end{align*}

As $\bX$ is obtained by mixing 
$c_1 + \bY,...,c_k + \bY$ with
weights $\pi_1,...,\pi_k$ and 
$\bU_{x_1, \ldots, x_{T'}}$ is obtained by mixing
$\bU_{\{x_i: i \in S_1\}},...,\bU_{\{x_i: i \in S_k\}}$
with weights $\frac{|S_1|}{T'},...,\frac{|S_k|}{T'}$, the lemma is proved.
\end{proof}

The next lemma is a formal statement of the well-known robustness of kernel learning; roughly speaking, it says that if $\bX$ is kernel learnable using $\bZ$ then any $\bX'$ which is close to $\bX$ is likewise kernel learnable using $\bZ$.
\begin{lemma}~\label{lem:kernel-close}
Let $\bX$ be $(\eps,\delta)$-kernel learnable using $\bZ$ from $T(\eps,\delta)$ samples, and suppose that $0 < \dtv(\bX,\bX') = \kappa < 1.$  {If $T_0>\max\{T(\eps,\delta),C \cdot {\frac {\log(1/\delta)}{\eps^2}}\}$, then $\bX'$ is $(2\eps + 2 \kappa,2\delta)$-kernel learnable from $T_0$ samples using $\bZ.$}
\end{lemma}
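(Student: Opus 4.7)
The plan is to argue via a coupling between samples from $\bX$ and $\bX'$ followed by a triangle inequality. By the coupling characterization of total variation distance, there is a joint distribution on pairs $(\bx, \bx')$ with marginals $\bX, \bX'$ such that $\Pr[\bx \neq \bx'] = \kappa$. I would draw $T_0$ i.i.d.\ coupled pairs $(x_i, x'_i)$ and let $\hat{X}=\{x_1,\dots,x_{T_0}\}$, $\hat{X}' = \{x'_1,\dots,x'_{T_0}\}$; the marginals $\hat{X}$ and $\hat{X}'$ are then valid i.i.d.\ samples from $\bX$ and $\bX'$ respectively.

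First I would bound $\dtv(\bU_{\hat{X}}, \bU_{\hat{X}'})$ using the coupling. Define $N = |\{i : x_i \neq x'_i\}|$; then by the definition of the uniform empirical distributions, $\dtv(\bU_{\hat{X}}, \bU_{\hat{X}'}) \le N/T_0$. Since $N$ is a sum of $T_0$ i.i.d.\ Bernoulli($\kappa$) random variables, a standard Chernoff/Hoeffding bound combined with $T_0 \geq C \log(1/\delta)/\eps^2$ (for a suitable absolute $C$) gives that with probability at least $1-\delta$,
\[
\dtv(\bU_{\hat{X}}, \bU_{\hat{X}'}) \;\le\; \kappa + \eps.
\]
Second, since $T_0 \ge T(\eps,\delta)$ and $\bX$ is $(\eps,\delta)$-kernel learnable from $\bZ$, with probability at least $1-\delta$ over $\hat{X}$ we have
\[
\dtv(\bU_{\hat{X}} + \bZ, \bX) \;\le\; \eps.
\]

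Now I would combine these by the triangle inequality, using the standard fact that convolving both sides of a TV bound by an independent $\bZ$ does not increase TV distance (since $\dtv(\bA+\bZ,\bB+\bZ) \le \dtv(\bA,\bB)$):
\[
\dtv(\bU_{\hat{X}'}+\bZ, \bX') \;\le\; \dtv(\bU_{\hat{X}'}+\bZ, \bU_{\hat{X}}+\bZ) + \dtv(\bU_{\hat{X}}+\bZ, \bX) + \dtv(\bX,\bX').
\]
The three terms on the right are bounded by $\kappa+\eps$, $\eps$, and $\kappa$ respectively, yielding $\dtv(\bU_{\hat{X}'}+\bZ, \bX') \le 2\eps + 2\kappa$. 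A union bound over the two ``bad events'' above shows that this holds with probability at least $1-2\delta$, establishing $(2\eps+2\kappa, 2\delta)$-kernel learnability of $\bX'$ from $T_0$ samples using $\bZ$, as claimed.

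There is no real obstacle here; the only mild subtlety is that the samples $\hat{X}'$ given to the kernel learner come from $\bX'$ alone (no access to the coupling is needed), but the coupling is introduced only in the analysis to relate $\bU_{\hat{X}'}$ to $\bU_{\hat{X}}$. Since the marginals of the coupling are correct, the tail bound on $\dtv(\bU_{\hat{X}}, \bU_{\hat{X}'})$ is a bound on a genuine distribution over $\hat{X}'$, which is all that Definition~\ref{def:kernel} requires.
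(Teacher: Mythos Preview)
Your proof is correct and follows essentially the same approach as the paper's: the paper constructs the optimal coupling explicitly (via $\bX_{\mathrm{common}}$ and residual distributions) while you invoke its existence abstractly, but the remainder---the triangle inequality decomposition, the Chernoff bound on the fraction of disagreeing coupled pairs, and the data processing inequality for convolution---is identical.
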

\begin{proof}
We establish some useful notation:  let $\bX_{\mathrm{common}}$ denote the distribution defined by
\[
\Pr[\bX_{\mathrm{common}}=i] = {\frac {\min\{\Pr[\bX=i],\Pr[\bX'=i]\}}{\sum_i \min\{\Pr[\bX=i],\Pr[\bX'=i]\}}},
\]
let $\bX_{\mathrm{residual}}$ denote the distribution defined by
\[
\Pr[\bX_{\mathrm{residual}}=i] = {\frac {\Pr[\bX=i] - \min\{\Pr[\bX=i],\Pr[\bX'=i]\}}{\sum_i (\Pr[\bX=i] - \min\{\Pr[\bX=i]\Pr[\bX'=i]\})}},
\]
and likewise let 
$\bX'_{\mathrm{residual}}$ denote the distribution defined by
\[
\Pr[\bX'_{\mathrm{residual}}=i] = {\frac {\Pr[\bX'=i] - \min\{\Pr[\bX=i],\Pr[\bX'=i]\}}{\sum_i (\Pr[\bX'=i] - \min\{\Pr[\bX=i]\Pr[\bX'=i]\})}}.
\]

A draw from $\bX$ (from $\bX'$
  respectively) may be obtained as follows: draw from
  $\bX_{\mathrm{common}}$ with probability $1-\kappa$ and from
  $\bX_{\mathrm{residual}}$ (from $\bX'_{\mathrm{residual}}$
  respectively) with the remaining $\kappa$ probability.
  To see this, note that if $\tilde{\bX}$ is a random variable generated
  according to this two-stage process and $\bC \in \{ 0, 1\}$ is an
  indicator variable for whether the draw was from
  $\bX_{\mathrm{common}}$, then, 
  since $\kappa = 1 - \sum_i \min\{\Pr[\bX=i],\Pr[\bX'=i]\}$, we have
\begin{align*}
\Pr[\tilde{\bX} = i] & = \Pr[\tilde{\bX} = i \wedge \bC = 1] + \Pr[\tilde{\bX} = i \wedge \bC = 0] \\
& = \frac{\min\{\Pr[\bX=i],\Pr[\bX'=i]\}}{1 - \kappa} \times (1 - \kappa)
   + \frac{\Pr[\bX'=i] - \min\{\Pr[\bX=i],\Pr[\bX'=i]\}}{1 - (1 - \kappa)}
     \times \kappa \\
& = \Pr[\bX=i].
\end{align*}

We consider the following coupling of $(\bX,\bX')$:  to make a draw of $(x,x')$ from the coupled joint distribution $(\bX,\bX')$, draw $x_{\mathrm{common}}$ from $\bX_{\mathrm{common}}$, draw $x_{\mathrm{residual}}$ from $\bX_{\mathrm{residual}}$, and draw $x'_{\mathrm{residual}}$ from $\bX'_{\mathrm{residual}}.$  With probability $1-\kappa$ output $(x_{\mathrm{common}},x_{\mathrm{common}})$ and with the remaining $\kappa$ probability output  $(x_{\mathrm{residual}},x'_{\mathrm{residual}}).$  

Let $((x_1,x'_1),\dots,(x_{T_0},x'_{T_0}))$ be a sample of $T_0$ pairs each of which is independently drawn from the coupling of $(\bX,\bX')$ described above.  Let $\hat{X}=(x_1,\dots,x_{T_0})$ and $\hat{X}'=(x'_1,\dots,x'_{T_0})$ and observe that $\hat{X}$ is a sample of $T_0$ i.i.d. draws from $\bX$ and similarly for $\hat{X}'$.  We have
\begin{align}
\dtv(\bU_{\hat{X}'}+\bZ,\bX') &\leq
\dtv(\bU_{\hat{X}'}+\bZ,\bU_{\hat{X}} + \bZ) + \dtv(\bU_{\hat{X}}+\bZ,\bX) + \dtv(\bX,\bX') \nonumber \\
&\leq \dtv(\bU_{\hat{X}'},\bU_{\hat{X}}) + \eps + \kappa \quad \quad \text{(by the data processing inequality for $\ell_1$)},
\nonumber
\end{align}
where the second inequality holds with probability $1-\delta$ over the draw of $\hat{X}$ since $T_0 \geq T(\eps,\delta)$.  A simple Chernoff bound tells us that with probability at least $1-\delta$, the fraction of the $T_0 \geq C \cdot {\frac {\log(1/\delta)}{\eps^2}}$ pairs that are of the form $(x_{\mathrm{residual}},x'_{\mathrm{residual}})$ is at most $\kappa + \eps$.  Given that this happens we have $ \dtv(\bU_{\hat{X}'},\bU_{\hat{X}}) \leq \kappa + \eps$, and the lemma is proved.
\ignore{

Since $\dtv(X,X') = \kappa$, we can view drawing $T$ samples from $X'$ in the following multistage process. Fix a distribution $D$. To draw $T$ samples from $X'$, we first draw $T$ samples from $X$. Call these $x_1, \ldots, x_T$. Subsequently, we draw $T$ independent $\{0,1\}$ valued random (call them $z_1, \ldots, z_T$) where $\mathbf{E}[z_i]=\kappa$. For each $i: z_i=1$, we replace $x_i$ by an independent sample $y_i$ from $D$. Call the final set of samples as $x'_1, \ldots, x'_T$. With this, we have 
\begin{eqnarray*}
d_{\ell_1} (U_{x'_1,\ldots, x'_t} + Z,X') \le d_{\ell_1} (U_{x'_1,\ldots, x'_t} + Z,X) + (X,X') \le \epsilon + d_{\ell_1} (U_{x'_1,\ldots, x'_t} + Z,X).
\end{eqnarray*} 
Thus, we now focus on bounding $d_{\ell_1} (U_{x'_1,\ldots, x'_t} + Z,X)$. To bound this, notice that with probability $(1-\delta)$, 
$\dtv(U_{x_1,\ldots, x_t} + Z,X) \le \epsilon$. Call this event $\mathsf{Good}$. We will bound $d_{\ell_1} (U_{x'_1,\ldots, x'_t} + Z,X)$ assuming the event $\mathsf{Good}$ happened. 
\begin{eqnarray*}
\dtv(U_{x'_1, \ldots, x'_t} + Z,X) &\le& \dtv(U_{x'_1, \ldots, x'_t} + Z,U_{x_1, \ldots, x_t} +Z) + \dtv(U_{x_1, \ldots, x_t} +Z, X) \\
&\leq& \dtv(U_{x'_1, \ldots, x'_t} + Z,U_{x_1, \ldots, x_t} +Z) +\epsilon \le \dtv(U_{x'_1, \ldots, x'_t} ,U_{x_1, \ldots, x_t}) +\epsilon.
\end{eqnarray*}
It is easy to see that with probability $1-\kappa$, $\dtv (U_{x'_1, \ldots, x'_t} ,U_{x_1, \ldots, x_t}) \le 2 \kappa$. Putting everything together, this concludes the proof. }
\end{proof}

To prove the next lemma (Lemma~\ref{lem:kernel-learn1} below) we will need a multidimensional generalization of the usual coupling argument used to prove the correctness of the kernel method.  This is given by the following proposition:

\begin{proposition}~\label{prop:multidim-kernel}
For all $1 \le j \le k$, let $a_j, b_j \in \mathbb{Z}$ 
with $b_j \geq 1$
and let $\mathcal{B}$ be the subset of $\Z^k$ given by $\mathcal{B}= [a_1, a_1+b_1] \times  \ldots \times [a_k, a_k+b_k]$. 
Let $\bX, \bY$ be random variables supported on $\mathbb{Z}^k$ such that 
$\Pr[\bX \not \in \mathcal{B}], \Pr[\bY \not \in \mathcal{B}] \le \delta$. Let 
$\bZ$ be a random variable supported on $\mathbb{Z}^k$ such that for all $1 \le j \le k$, $\dtv(\bZ, \bZ+\mathbf{e}_j) 
\le \beta_j$.
If $\dk(\bX,\bY) \le \lambda$,
$\beta_j \leq \rho/b_j$ for all $j$ (where $\rho \geq 1$),
and $\bZ$ is independent of $\bX$ and $\bY$, then $$\dtv(\bX+ \bZ, \bY + \bZ) \le  2 \delta + 
O \left( 4^k \lambda^{\frac{1}{k+1}} \rho^{1 - \frac{1}{k+1}} \right).
$$
\end{proposition}
\begin{proof}
\newcommand{\cB}{\mathcal{B}} 
Let
$d_1 \leq b_1, \ldots, d_k \leq b_k$ be positive integers that we will fix later.
Divide the box $\mathcal{B}$ into boxes of size at most
$d_1 \times \ldots \times d_k$ by dividing each $[a_i,a_i + b_i]$ into intervals of
size $d_i$ (except possibly the last interval which may be smaller)\ignore{ and then taking the cartesian products of all pairs of intervals}.  Let
$\mathcal{S}$ denote the resulting set of $k$-dimensional boxes induced by these intervals, and note that the number of boxes in $\mathcal{S}$ is $\ell_1 \times \ldots \times \ell_k$ where $\ell_j = \lceil b_j/d_j \rceil$.

Let $\mu_{\bX}$ and $\mu_{\bY}$ be the probability measures associated
with $\bX$ and $\bY$, and let $\mu_{\bX, \cB}$ and $\mu_{\bY, \cB}$ be
the restrictions of $\mu_{\bX}$ and $\mu_{\bY}$ to the box $\cB$ (so $\mu_{\bX}$ and $\mu_{\bY}$ assign value zero to any point not in $\cB$).  
For a box $S \in \mathcal{S}$, let
$\mu_{\bX,S}$ denote the restriction of $\mu_{\bX}$ to
$S$. Let $x_S = \Pr[\bX \in S]$ and $y_S = \Pr[\bY \in S]$. Let $w_S
= \min \{x_S, y_S\}$.  Let $\bX_S$ and $\bY_S$ be the random variables
obtained by conditioning $\bX$ and $\bY$ on $S$, and
$\mu_{\bX_S}$ and $\mu_{\bY_S}$ be their measures.
Note that
$\mu_{\bX,S} = x_S \cdot \mu_{\bX_S}$ and $\mu_{\bY,S} = y_S \cdot \mu_{\bY_S}$. 
With this notation in place, using $f * g$ to denote the convolution of the measures $f$ and $g$, we now have
\begin{eqnarray*}
\dtv(\bX+\bZ, \bY+\bZ) & = & \frac{1}{2} \ell_1(\mu_{\bX + \bZ}, \mu_{\bY + \bZ}) \\
    & = & \frac{1}{2} \ell_1(\mu_{\bX} * \mu_{\bZ}, \mu_{\bY} * \mu_{\bZ}) \\
   &\le& \Pr[\bX \not \in \mathcal{B}] + \Pr[\bY \not \in \mathcal{B}] 
    + \frac{1}{2} \ell_1(\mu_{\bX, \cB} * \mu_{\bZ}, \mu_{\bY,\cB} * \mu_{\bZ}) \\
&\leq& 2 \delta 
   + \frac{1}{2} \sum_{S \in \mathcal{S}} \ell_1(\mu_{\bX, S} * \mu_{\bZ}, \mu_{\bY,S} * \mu_{\bZ}) \\
&\leq& 2 \delta + \frac{1}{2} \sum_{S \in \mathcal{S}} \ell_1(x_S \mu_{\bX_S} * \mu_{\bZ}, y_S \mu_{\bY_S} * \mu_{\bZ}) \\
&\leq& 2 \delta + \frac{1}{2} \sum_{S \in \mathcal{S}} \ell_1(w_S \mu_{\bX_S} * \mu_{\bZ}, w_S \mu_{\bY_S} * \mu_{\bZ}) 
                + \sum_{S \in \mathcal{S}} |x_S - y_S|  \\
&\leq& 2 \delta + \frac{1}{2} \sum_{S \in \mathcal{S}} w_S \ell_1(\mu_{\bX_S} * \mu_{\bZ}, \mu_{\bY_S} * \mu_{\bZ}) 
                + | \mathcal{S} | \cdot 
                   2^k
                      \lambda \\
&\leq& 2 \delta + \sum_{S \in \mathcal{S}} w_S \dtv(\bX_S  + \bZ, \bY_S + \bZ)
                + | \mathcal{S} | \cdot 2^k \lambda. \\
\end{eqnarray*}
Here the second to last inequality uses the fact that the definition of $\dk(\bX, \bY)$
gives $\sup |x_S - y_S| \le 2^k \lambda$. 
Next, notice that since $\dtv(\bZ, \bZ+ \mathbf{e}_j) \le \rho/b_j$ and each box in $\mathcal{S}$ has size at most
$d_1 \times \ldots \times d_k$, we get that 
$\dtv(\bX_S + \bZ, \bY_S + \bZ) \le \sum_{i=1}^k \beta_i (d_i - 1)$. 
Thus, using that $|\mathcal{S}| =  \prod_{j=1}^k \lceil b_j/d_j \rceil$ and $\sum_{S \in \mathcal{S}} w_S \le 1$, we have
\begin{align*}
\dtv(\bX+\bZ, \bY+\bZ) 
 & \leq  2 \delta + \sum_{S \in \mathcal{S}} w_S \cdot \big(\sum_{i=1}^k \beta_i (d_i-1) \big) 
      + 4^k \lambda \cdot\prod_{j=1}^k (b_j/d_j). \\
\end{align*}
Optimizing the parameters $d_1, \ldots, d_k$, we set
each
$
d_i  
  = 
\left\lceil \left(\frac{\lambda}{\rho}\right)^{\frac{1}{k+1}} b_i \right\rceil
$
which yields
\begin{align*}
\dtv(\bX+\bZ, \bY+\bZ) 
& \leq  2 \delta + 
  (k + 4^k) \lambda^{\frac{1}{k+1}} \rho^{1 - \frac{1}{k+1}}. \qedhere
\end{align*}
\end{proof}

Now we can prove Lemma~\ref{lem:kernel-learn1}, which
we will use to prove
that a weighted sum of high-variance PBDs
is kernel-learnable for appropriately chosen smoothening distributions.
 
\begin{lemma}~\label{lem:kernel-learn1}
Let independent random variables $\bX_1, \ldots, \bX_k$ over $\mathbb{Z}$,
and $\rho \geq 1$, be such that 
\begin{enumerate} 
 \item For $1 \le j \le k$, there exist $a_j, b_j \in \mathbb{Z}$, 
$ \delta_j \ge 0$ such that $\Pr[\bX_j \not \in [a_j , a_j + b_j]] \le \delta_j$, 
\item For all $1 \le j \le k$,  
      $\dshift{1} (\bX_j) \le \beta_j \leq \rho/b_j$.
\end{enumerate} 
Let $\bY = \sum_{j=1}^k p_j \cdot \bX_j$ for some integers $p_1,\dots,p_k$.  Let $\bZ_j$ be the uniform distribution on the set $\mathbb{Z} \cap[-c_j, c_j]$
where\pnote{We might have considered setting $c_j = \frac{\Theta(\epsilon)}{k \cdot \beta_j}$, analogously to before, but this could be
inconsistent with $c_j \leq b_j$.  I don't see
where we use $c_j \leq b_j$ though.  Even if we
do set $c_j = \frac{\Theta(\epsilon)}{k \cdot \beta_j}$, it
seems that we will still need $\beta_i \leq \epsilon^2/k$,
will seems like it may require significant downstream
changes.
}
$c_j \in \mathbb{Z}$ satisfies $c_j=\frac{\Theta(\epsilon) b_j}{k \cdot \rho}$ and
$1 \leq c_j \leq b_j$ 
and $\bZ_1,\dots,\bZ_k$ are mutually independent and independent of $\bX_1, \ldots, \bX_k$.
Define $\bZ = \sum_{j=1}^k p_j \cdot \bZ_j$.  Then,
 $\bY$ is $(\epsilon + 4(\delta_1 + \ldots + \delta_k), \delta)$-kernel learnable using $\bZ$ from $ T= 
 \frac{\exp(O(k^2))}{\epsilon^{O(k)}} \cdot 
\rho^{O(k)}
\cdot \log (1/\delta) +\log ({\frac {4k} {\delta}}) \cdot \max_{j} 1/\delta_j^2$ samples.
\end{lemma}

\begin{proof} 
We first observe that 
\begin{eqnarray}
\dtv(\bY + \bZ , \bY) &\le& \sum_{j=1}^k \dtv(p_j \cdot \bX_j + p_j \cdot \bZ_j, p_j \cdot \bX_j) \nonumber \\
&=& \sum_{j=1}^k \dtv(  \bX_j +   \bZ_j, \bX_j) \le \sum_{j=1}^k \frac{\rho c_j}{b_j} = \Theta(\eps) \label{eq:shiftbx1} 
\end{eqnarray}
where the last inequality uses the fact that $\bZ_j$ is supported on the interval $[-c_j, c_j]$ and $\dshift{1}(\bX_j) \le \frac{\rho}{b_j}$. Now, consider a two-stage sampling process for an element $y \leftarrow \bY$: For $1 \le j \le k$, we sample $x_j^{(y)} \sim \bX_j$ and then output $y = \sum_{j=1}^k p_j \cdot x_j^{(y)}$. Thus, for every sample $y$, we can associate a sample $x^{(y)} = (x_1^{(y)}, \ldots, x_k^{(y)})$. For $y_1, \ldots, y_T \leftarrow \bY$, let $x^{(y_1)}, \ldots, x^{(y_T)}$ denote the corresponding samples from $\mathbb{Z}^k$.   Let $\bU_{\hat{X}}$ denote the uniform distribution over the multiset of $T$ samples $x^{(y_1)}, \ldots, x^{(y_T)}$, and let $\bU_{\hat{Y}}$ denote the uniform distribution on $y_1, \ldots, y_T$.
 Let $\bX_{\mathsf{multi}} = (\bX_1, \bX_2, \ldots , \bX_k) $. 
By Lemma~\ref{lem:DKW-multidim},
 we get that if $T \ge c (k + \log (1/\delta))/\eta^2$ (for a parameter $\eta$ we will fix later), then with probability $1-\delta/2$ we have $\dk(\bX_{\mathsf{multi}}, \bU_{\hat{X}}) \le \eta$; moreover, if $T \geq \log ({\frac {4k} {\delta}}) \cdot \max_{j} 1/\delta_j^2$, then by a Chernoff bound and a union bound we have that $\Pr[(\bU_{\hat{X}})_j \not \in [a_j, a_j+b_j]] \le  2\delta_j$  for $1 \leq j \leq k$ (which we will use later) with probability $1-\delta/2$.  In the rest of the argument we fix such an $\hat{X}$ satisfying these conditions, and show that for the corresponding $\hat{Y}$ we have $ \dtv(\bY  , \bU_{\hat{Y}} + \bZ ) \leq 4 (\delta_1 + \ldots + \delta_k) + \epsilon$, thus establishing kernel learnability of $\bY$ using $\bZ$.
 
Next, we define
$\bZ_{\mathsf{multi}} = (\bZ_1,  \ldots, \bZ_k)$,
with the aim of applying Proposition~\ref{prop:multidim-kernel}. We observe that $\dtv(\bZ_{\mathsf{multi}}, \bZ_{\mathsf{multi}} + \mathbf{e}_j) \le \frac{1}{c_j}$ and as noted above, for $1 \le j \le k$ we have $\Pr[(\bU_{\hat{X}})_j \not \in [a_j, a_j+b_j]] \le  2 \delta_j$.
Define the box $\mathcal{B} = [a_1, a_1 + b_1] \times \ldots \times [a_k, a_k + b_k]$.
Applying Proposition~\ref{prop:multidim-kernel}, 
we get
\begin{align*}
\dtv(\bX_{\mathsf{multi}}+\bZ_{\mathsf{multi}} , \bU_{\hat{X}}+\bZ_{\mathsf{multi}} ) 
 & \le 4 (\delta_1 + \ldots + \delta_k) +  
O\left( 
4^k \eta^{\frac{1}{k+1}} \left( \frac{k \rho}{\Theta(\epsilon)} \right)^{1 - \frac{1}{k+1}}
 \right)\\
\end{align*}
since $\sum_j \beta_j \leq \epsilon$.
Taking an inner product with $\overline{p}=(p_1, \ldots, p_k)$, we get
\begin{eqnarray*}
 \dtv(\bY + \bZ , \bU_{\hat{Y}} + \bZ )&=& \dtv(\langle\overline{p},\bX_{\mathsf{multi}}+\bZ_{\mathsf{multi}} \rangle , \langle \overline{p}, \bU_{\hat{X}}+\bZ_{\mathsf{multi}}\rangle )  \\
&\le& \dtv(\bX_{\mathsf{multi}}+\bZ_{\mathsf{multi}} , \bU_{\hat{X}}+\bZ_{\mathsf{multi}} ) \\
&\leq&  4 (\delta_1 + \ldots + \delta_k) +  
O\left( 
4^k \eta^{\frac{1}{k+1}} \left( \frac{k \rho}{\Theta(\epsilon)} \right)^{1 - \frac{1}{k+1}}
 \right).
\end{eqnarray*}
Combining this with (\ref{eq:shiftbx1}), we get that 
\begin{align} 
 \dtv(\bY  , \bU_{\hat{Y}} + \bZ ) &\leq 4 (\delta_1 + \ldots + \delta_k) 
+ 
O\left(
4^k \eta^{\frac{1}{k+1}} \left( \frac{k \rho}{\Theta(\epsilon)} \right)^{1 - \frac{1}{k+1}}
 \right)
+ \Theta(\eps),
 \label{eq:zz}
 \end{align}
\ignore{
}
Setting $\eta = {\frac {\eps^{2k+1}}{4^{k (k+1)} k^k \rho^k}}$, the condition 
$T \ge c (k + \log (1/\delta))/\eta^2$ from earlier becomes 
$$
T \geq  c (k + \log (1/\delta))/\eta^2 =  \frac{e^{O(k^2)}}{\epsilon^{O(k)}} \cdot \rho^{2k} \cdot \log (1/\delta)
$$
\ignore{
} and we have $ \dtv(\bY  , \bU_{\hat{Y}} + \bZ ) \leq 4 (\delta_1 + \ldots + \delta_k) + \Theta(\epsilon)$, proving the lemma.
\end{proof}

We specialize Lemma~\ref{lem:kernel-learn1} to establish kernel learnability of
weighted sums of signed PBDs
as follows:

\begin{corollary}~\label{corr:kernel-signed}
Let $\bS_1, \ldots, \bS_k$ be independent signed PBDs and let $\bY = \sum_{j=1}^k p_j \cdot \bS_j$. Let $\sigma_j^2 = \Var[\bS_j] =\omega(k^2/\eps^2)$ and let $\bZ_j$ be the uniform distribution on $[-c_j, c_j] \cap \mathbb{Z}$ where $c_j = \Theta(\epsilon \cdot \sigma_j / k)$.  Let $\bZ = \sum_{j=1}^k p_j \cdot \bZ_j$. Then $\bY$ is $(\epsilon, \delta)$-kernel learnable using $\bZ$ from $T = \frac{e^{O(k^2)}}{\epsilon^{O(k)}} \cdot \log(1/\delta)$ samples. 
\end{corollary}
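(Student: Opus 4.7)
The plan is to invoke Lemma~\ref{lem:kernel-learn1} directly with $\bX_j := \bS_j$, so the bulk of the work is just verifying the two hypotheses of that lemma for signed PBDs and then simplifying the resulting sample-complexity expression.

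\textbf{Step 1: Shift-invariance.} Fact~\ref{fact:good-shift-invariance} says that any signed PBD with variance $\sigma_j^2$ is $O(1/\sigma_j)$-shift-invariant at scale 1, so I set $\beta_j := C/\sigma_j$ for an appropriate absolute constant $C$. Under this choice, the prescription $c_j = \Theta(\eps)/(k \beta_j)$ from Lemma~\ref{lem:kernel-learn1} becomes $c_j = \Theta(\eps \sigma_j/k)$, which is exactly the $c_j$ appearing in the corollary statement.

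\textbf{Step 2: Essential support.} Chebyshev's inequality applied to each $\bS_j$ gives $\Pr[|\bS_j - \E[\bS_j]| \geq t \sigma_j] \leq 1/t^2$. Setting $t := \sqrt{4k/\eps}$, I choose $\delta_j := \eps/(4k)$, take $a_j := \lceil \E[\bS_j] - t\sigma_j \rceil$, and set $b_j := 2\lceil t\sigma_j\rceil = O(\sigma_j \sqrt{k/\eps})$; this ensures $\Pr[\bS_j \notin [a_j, a_j+b_j]] \leq \delta_j$. The side-constraints $c_j \in \mathbb{Z}$ (by rounding, absorbed into the constant), $c_j \geq 1$ (from $\sigma_j = \omega(k/\eps)$), and $c_j \leq b_j$ (since $\eps \sigma_j/k \ll \sigma_j \sqrt{k/\eps}$) are all immediate.

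\textbf{Step 3: Simplifying the sample bound.} Substituting into the expression from Lemma~\ref{lem:kernel-learn1}, note $b_j \beta_j = \Theta(\sqrt{k/\eps})$, so $\bigl(\prod_{j=1}^k b_j \beta_j\bigr)^2 = (k/\eps)^{O(k)}$, and $\max_j 1/\delta_j^2 = (4k/\eps)^2$. Hence
\[
T \;=\; k^{O(k)} \cdot \eps^{-4k} \cdot (k/\eps)^{O(k)} \cdot \log(1/\delta) \;+\; (k/\eps)^2 \log(4k/\delta) \;=\; \frac{k^{O(k)}}{\eps^{O(k)}} \cdot \log(1/\delta),
\]
as claimed. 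Lemma~\ref{lem:kernel-learn1} then yields $(\eps + 4\sum_j \delta_j, \delta)$-kernel learnability, i.e.\ $(2\eps, \delta)$-kernel learnability; rescaling $\eps$ by a factor of $1/2$ (which changes the constants in the $O(k)$ exponent but not the form of the bound) gives the corollary.

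\textbf{Expected main obstacle.} Honestly, there is no substantive obstacle here: the corollary is a clean specialization in which the two ingredients the lemma needs ($\beta_j$-shift-invariance and $\delta_j$-essential support) are supplied respectively by Fact~\ref{fact:good-shift-invariance} and Chebyshev's inequality. The only bookkeeping points requiring a shade of care are the verification that the chosen $c_j$ simultaneously matches the lemma's prescription and satisfies $1 \leq c_j \leq b_j$ (both of which use the lower bound $\sigma_j^2 = \omega(k^2/\eps^2)$), and collapsing the final bound into the clean form $k^{O(k)}/\eps^{O(k)} \cdot \log(1/\delta)$.
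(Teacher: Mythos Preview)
Your proposal is correct and takes essentially the same approach as the paper: both apply Lemma~\ref{lem:kernel-learn1} with $\bX_j := \bS_j$, obtain the shift-invariance $\beta_j = O(1/\sigma_j)$ from Fact~\ref{fact:good-shift-invariance}, and verify the side constraints $1 \le c_j \le b_j$. The only difference is that the paper uses Bernstein's inequality (yielding $b_j = O(\sigma_j \ln(k/\eps))$ with $\delta_j = \eps/k$) where you use Chebyshev (yielding $b_j = O(\sigma_j\sqrt{k/\eps})$ with $\delta_j = \eps/(4k)$); both choices collapse $(\prod_j b_j\beta_j)^2$ into $(k/\eps)^{O(k)}$, so the final bound is the same.
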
 
\begin{proof}
Note that for $1 \le j \le k$, there are integers $a_j$ such that for $b_j = O(\sigma_j \cdot \ln (k / \epsilon))$, by Bernstein's inequality we have $\Pr[ \bS_j \not \in [a_j, a_j + b_j]] \le \epsilon/k$. Also, recall from Fact~\ref{fact:good-shift-invariance}  that $\dshift{1}(\bS_j) = \frac{O(1)}{\sigma_j}$. Since each $c_j$
satisfies $1 \leq c_j \leq b_j$, we may apply
Lemma~\ref{lem:kernel-learn1} and we get that $\bY$ is $(O(\epsilon), \delta)$-kernel learnable using $T = \frac{e^{O(k^2)}}{\epsilon^{O(k)}} \cdot \log(1/\delta)$ samples.
\end{proof}

(It should be noted that while the previous lemma shows that 
a weighted sum of signed
PBDs that have ``large variance'' are kernel learnable, the hypothesis $\bU_{\hat{Y}}+\bZ$ is based on $\bZ$ and thus constructing it requires knowledge of the variances $\sigma_1,\dots,\sigma_j$; thus Lemma~\ref{lem:kernel-learn1} does not immediately yield an efficient learning algorithm when the variances of the underlying PBDs are unknown. We will return to this issue of knowing (or guessing) the variances of the constituent PBDs later.)

Finally, we generalize Corollary~\ref{corr:kernel-signed} to obtain a robust version.
Lemma~\ref{lem:siirv-kernel} will play an important role in our ultimate learning algorithm.

\begin{lemma}~\label{lem:siirv-kernel}
Let $\bS$ be $\kappa$-close to a distribution of the form $\bS' = \bS_{\textsf{offset}} + \sum_{j=1}^K p_j \cdot \bS_j$, where $\bS_{\textsf{offset}},\bS_1,\dots,\bS_K$ are all independent and $\bS_1,\dots,\bS_K$ are signed PBDs. For $a \in [K]$ let $\sigma_a^2 = \Var[\bS_a] =\omega(K^2/\eps^2)$.\ignore{\rnote{Let's go for a self contained statement in this lemma statement, one should be able to read it independently of all the setup in section 5 etc.  So should this be ``$K$'' or ``$k$''?}}  Let $m = |\supp (\bS_{\textsf{offset}})|$ and let $\gamma_1, \ldots, \gamma_K$ be such that for all $1 \le a \leq K$ we have $\sigma_a \le\gamma_a \le 2\sigma_a$.  Let $\bZ_j$ be the uniform distribution on the interval $[-c_j, c_j] \cap \mathbb{Z}$ where $c_j = \Theta(\epsilon \cdot \gamma_j/K)$. Then for $\bZ = \sum_{a=1}^K p_a \cdot \bZ_a$, the distribution $\bS$ is $(O(\epsilon +  \kappa), O(\delta))$-kernel learnable  using $\bZ$ from $ \frac{\exp(O(K^2))}{\epsilon^{O(K)}} \cdot m^2 \cdot \log (m/\delta) $ samples. 
\end{lemma}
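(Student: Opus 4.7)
The plan is to prove Lemma~\ref{lem:siirv-kernel} by composing the three ``kernel learnability'' tools developed earlier in the section, peeling off one layer of generality at a time. The core kernel-learnability statement for a pure weighted sum of signed PBDs is already established by Corollary~\ref{corr:kernel-signed}; the offset $\bS_{\textsf{offset}}$ turns this pure sum into a mixture of $m$ translates, which is exactly the situation addressed by Lemma~\ref{lem:mix}; and the $\kappa$-closeness of $\bS$ to $\bS'$ is exactly the situation addressed by Lemma~\ref{lem:kernel-close}. So the three tools compose cleanly in this order.

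\textbf{Step 1 (pure sum).} First I would show that the ``ideal'' distribution $\bY := \sum_{j=1}^{K} p_j \cdot \bS_j$ is $(\epsilon,\delta)$-kernel learnable from $\bZ$ using $T_0 = K^{O(K)}/\epsilon^{O(K)} \cdot \log(1/\delta)$ samples. Corollary~\ref{corr:kernel-signed} gives exactly this, but with smoothening kernels based on the true standard deviations $\sigma_j$. The only subtlety is that $\bZ$ uses the guesses $\gamma_j$ instead. Since $\gamma_j \in [\sigma_j, 2\sigma_j]$, the window width $c_j = \Theta(\epsilon \gamma_j / K)$ is within a constant factor of what Corollary~\ref{corr:kernel-signed} prescribes, and inspecting its proof (which ultimately applies Lemma~\ref{lem:kernel-learn1} with the bounds $\beta_j = O(1/\sigma_j)$ from Fact~\ref{fact:good-shift-invariance} and $b_j = O(\sigma_j \ln(K/\epsilon))$ from Bernstein) shows the same $(\epsilon,\delta)$ conclusion goes through verbatim with $c_j$ defined from $\gamma_j$ rather than $\sigma_j$, because the two products $c_j \beta_j$ and $b_j/c_j$ each change only by constant factors.

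\textbf{Step 2 (add the offset).} Let the support of $\bS_{\textsf{offset}}$ be $\{o_1,\dots,o_m\}$ with masses $\pi_1,\dots,\pi_m$. Then the distribution $\bS' = \bS_{\textsf{offset}} + \sum_{j=1}^{K} p_j \cdot \bS_j$ is, by definition, the mixture of the $m$ translates $o_1 + \bY,\dots,o_m + \bY$ with mixing weights $\pi_1,\dots,\pi_m$. Applying Lemma~\ref{lem:mix} with the $(\epsilon,\delta)$-kernel learnability of $\bY$ from Step 1, we conclude that $\bS'$ is $(7\epsilon, 2\delta)$-kernel learnable using the same smoothener $\bZ$ from
\[
T_1 \;=\; \max\left\{\frac{m\cdot T_0(\epsilon,\delta/m)}{\epsilon},\; C\cdot\frac{m^2 \log(m/\delta)}{\epsilon^2}\right\} \;=\; \frac{K^{O(K)}}{\epsilon^{O(K)}}\cdot m^2 \cdot \log(m/\delta)
\]
samples, where the two terms inside the max are absorbed into the single bound on the right using $T_0(\epsilon,\delta/m) = K^{O(K)}/\epsilon^{O(K)} \cdot \log(m/\delta)$.

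\textbf{Step 3 (closeness of $\bS$ and $\bS'$).} Finally, since $\dtv(\bS,\bS') \leq \kappa$, Lemma~\ref{lem:kernel-close} applied to $\bS'$ and $\bS$ (taking $\bS'$ in the role of ``$\bX$'' there) yields that $\bS$ is $(O(\epsilon + \kappa), O(\delta))$-kernel learnable from $\bZ$ using $T = \max\{T_1,\, C\log(1/\delta)/\epsilon^2\}$ samples, which still fits inside $\frac{K^{O(K)}}{\epsilon^{O(K)}}\cdot m^2 \cdot \log(m/\delta)$. Rescaling $\epsilon$ and $\delta$ by constants (absorbed into the $O(\cdot)$'s) gives the lemma.

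\textbf{Main obstacle.} The steps themselves are essentially black-box applications of already-proved machinery, so the only genuine content to verify is the compatibility of Step 1 with using the approximate $\gamma_j$'s instead of the exact $\sigma_j$'s inside the kernel widths $c_j$. This is the one place where I cannot invoke Corollary~\ref{corr:kernel-signed} verbatim and must instead trace through Lemma~\ref{lem:kernel-learn1} to check that replacing $\sigma_j$ by a constant-factor approximation only perturbs the bound on $\dtv(\bY+\bZ,\bY)$ and the product $\prod_j (b_j/c_j)\cdot \eta$ by constants, which are then swallowed by the $\Theta(\cdot)$ in $c_j$ and by the $O(K)$ in $K^{O(K)}/\epsilon^{O(K)}$.
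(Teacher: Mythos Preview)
Your proposal is correct and follows essentially the same three-step composition as the paper's own proof: apply Corollary~\ref{corr:kernel-signed} to the pure weighted sum, then Lemma~\ref{lem:mix} to pass to the mixture of $m$ translates induced by $\bS_{\textsf{offset}}$, then Lemma~\ref{lem:kernel-close} to absorb the $\kappa$-closeness. Your explicit check that the $\Theta(\cdot)$ in $c_j$ absorbs the constant-factor slack between $\gamma_j$ and $\sigma_j$ is a detail the paper leaves implicit, but it is handled correctly.
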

\begin{proof}
Applying Corollary~\ref{corr:kernel-signed} and Lemma~\ref{lem:mix}, we first obtain that the distribution $\bS'$ is $(O(\epsilon),O(\delta))$-kernel-learnable using $\bZ$ from $ \frac{\exp(O(K^2))}{\epsilon^{O(K)}} \cdot m^2 \cdot \log (m/\delta) $  samples. Now, applying Lemma~\ref{lem:kernel-close}, we obtain that $\bS$ is 
$(O(\epsilon + \kappa), O(\delta))$-learnable 
using
$\bZ$ from $ \frac{\exp(O(K^2))}{\epsilon^{O(K)}} \cdot m^2 \cdot \log (m/\delta) $  samples.
\end{proof}

\section{Setup for the upper bound argument} \label{sec:setup-upper-bound}

Recall that 
an $\supportset$-sum
is $\bS = \bX_1 + \cdots + \bX_N$ where the $\bX_1,\dots,\bX_N$ distributions are independent
(but not identically distributed) and each $\bX_i$ is supported on the set 
$\supportset = \{a_1,\dots,a_k\}$ where 
$\{a_1,\dots,a_k \}\subset \Z_{\geq 0}$ and $a_1 < \cdots < a_k$ (and $a_1,\dots,a_k,N,\eps$ are all given to the learning algorithm in the known-support setting).

For each $\bX_i$ we define $\bX'_i$ to be the ``zero-moded'' variant of $\bX_i$, namely  $\bX'_i = \bX_i - \mode(\bX_i)$ where $\mode(\bX_i) \in \{a_1,\dots,a_k\}$ is a mode of $\bX_i$ (i.e. $\mode(\bX_i)$ satisfies $\Pr[\bX_i=\mode(\bX_i)]
\geq \Pr[\bX_i = a_{i'}]$ for all $i' \in [k]$).  We define $\bS'$ to be $\sum_{i=1}^N \bX'_i$.  It is clear that $\bS' + V=\bS$ where $V $ is an (unknown) ``offset'' in $\Z$.  Below we will give an algorithm that learns $\bS' + V$ given independent draws from it.

For each $i \in [N]$ the support of random variable $\bX'_i$ is contained in $\{0,\pm q_1,\dots,\pm q_K\}$, where $K=O(k^2)$ and $\{q_1,\dots,q_K\}$ is the set of all distinct values achieved by $|a_\ell - a_{\ell'}|, 1 \leq \ell < \ell' \leq k.$  
As noted above each $\bX'_i$ has
$\Pr[\bX'_i=0] \geq 1/k \geq 1/K.$

To help minimize confusion we will consistently use letters $i,j,$ etc. for dummy variables that range over $1,\dots,N$ and $a,b,c,d$ etc. for dummy variables that range over $1,\dots,K.$

We define the following probabilities and associated values:
\begin{align}
\text{For~}i \in [N] \text{~and~}a\in[K]: \quad c_{q_a,i} &= \Pr[\bX'_i = \pm q_a] \label{eq:cqai}\\
\text{For~}a \in [K]: \quad 
c_{q_a} &= \sum_{i=1}^N c_{q_a,i}. \label{eq:cqa}
\end{align}

We may think of the value $c_{q_a}$ as the ``weight'' of $q_a$ in $\bS'$.  \ignore{We will divide our analysis into two cases.  The first case, roughly speaking, is that all of the $K$ values $c_{q_1},\dots,c_{q_K}$ are ``large'', and the second case is that at least one value $c_{q_a}$ is ``small.''  We give two distinct algorithms, one for each case.  The overall algorithm works by running both algorithms and using the hypothesis selection procedure, Proposition \ref{prop:log-cover-size}, to construct one final hypothesis.

Before entering into the two cases:}

It is useful for us to view 
$\bS'=\sum_{i=1}^N \bX'_i$ in the following way.  Recall that the support of $\bX'_i$ is contained in $\{0,\pm q_1,\dots,\pm q_K\}.$  For $i \in [N]$ we define a vector-valued random variable $\bY_i$ that is supported on $\{0,\pm \be_1,\dots,\pm \be_K\}$ by 
\begin{equation} \label{eq:Y}
\Pr[\bY_i = 0]=\Pr[\bX'_i=0] \geq {\frac 1 K}, \quad \quad \Pr[\bY_i=\tau \be_a] = \Pr[\bX'_i= \tau q_a] \text{~for~}\tau \in \{-1,1\},a \in [K].
\end{equation}
We define 
the vector-valued random variable $\bM=\sum_{i=1}^N \bY_i$, so we have $\bX'_i=(q_1,\dots,q_K) \cdot
\bY_i$ for each $i$ and $\bS'=(q_1,\dots,q_K) \cdot \bM.$  Summarizing for convenient later reference:

\begin{align}
\bX'_1,\dots,\bX'_N:&  \text{~~independent, each supported in~} \{0,\pm q_1,\dots,\pm q_K\} \label{eq:Xi}\\
\bS' = \bX'_1 + \cdots + \bX'_N : &  \text{~~supported in~}\Z \label{eq:Sprime_as_sum}\\
\bY_1,\dots,\bY_N: &  \text{~~independent, each supported in~} \{0,\pm \be_1,\dots,\pm \be_K\} \label{eq:Yi}\\
\bM = \bY_1 + \cdots + \bY_N: &  \text{~~supported in~}\Z^k \label{eq:M}\\
\bS' &= (q_1,\dots,q_K) \cdot \bM.  \label{eq:Sprime_as_dot}
\end{align}

From this perspective, in order to analyze $\bS'$ it is natural to analyze the multinomial random variable $\bM$, and indeed this is what we do in the next section.  

Finally, we note that while it suffices to learn $\bS'$ of the form captured in (\ref{eq:Xi}) and (\ref{eq:Sprime_as_sum})
for the $K$ and $\bS'$ that arise from our reduction to this case, our analysis will hold for all $K \in \Z^+$ and
all $\bS'$ of this form.

\section{Useful structural results when all $c_{q_a}$'s are large} \label{sec:structural}

In this section we establish some useful structural results for dealing with 
a distribution
$\bS' = \sum_{i=1}^N \bX'_i$ for which, roughly speaking, all the values $c_{q_1},\dots,c_{q_K}$ (as defined in Section~\ref{sec:setup-upper-bound}) are ``large.'' More formally, we shall assume throughout this section that each $c_{q_a} \geq \BIG$, where the exact value of the parameter $\BIG$ will be set later in the context of our learning algorithm in (\ref{eq:setbig}) (we note here only that $\BIG$ will be set to a fixed ``large'' polynomial in $K$ and $1/\eps$).  Looking ahead, we will later use the results of this section to handle whatever $c_{q_a}$'s are ``large''.

The high-level plan of our analysis is as follows:  In Section \ref{sec:multi-to-dG} we show that the multinomial distribution $\bM$ (recall (\ref{eq:M}) and (\ref{eq:Sprime_as_dot})) is close in total variation distance to a suitable discretized multidimensional Gaussian.  In Section \ref{sec:dG-to-comb-of-signed-PBD} we show in turn that such a discretized multidimensional Gaussian is close to a vector-valued random variable that can be expressed in terms of independent signed PBDs.  Combining these results, in Section \ref{sec:sicsirv-to-pure} we show that 
$\bS'$ is close in variation distance to a 
weighted sum of signed PBDs.  The lemma stating this, Lemma \ref{lem:close-to-pure-signed-sicsirv} in Section~\ref{sec:sicsirv-to-pure}, is one of the two main structural results in this section.  The second main structural result in this section, Lemma~\ref{l:mix}, is stated and proved in Section~\ref{sec:mix}.  Roughly speaking, it shows that,
for a weighted sum of signed PBDs,
it is possible to replace the scaled sum of the ``high-variance'' PBDs by a single scaled PBD.  This is useful later for learning since it leaves us in a situation where we only need to deal with scaled PBDs whose variance is ``not too high.''

We record some useful notation for this section:  for $i \in [N]$, $a \in [K]$ and $\tau \in \{-1,1\}$ let $p_{i,a,\tau}$ denote 
\begin{equation} \label{eq:piatau}
p_{i,a,\tau} := \Pr[\bY_i=\tau\be_a] = \Pr[\bX'_i = \tau q_a].
\end{equation}

\subsection{From multinomials to discretized multidimensional Gaussians} \label{sec:multi-to-dG} 

The result of this subsection, Lemma \ref{lem:CLT-mean-0-multinomial}, establishes that the multinomial distribution $\bM$ is close in total variation distance to a discretized multidimensional Gaussian.  

\ignore{

}

\begin{lemma}\label{lem:CLT-mean-0-multinomial}
  Let $\bY_1, \ldots, \bY_N$ be as in (\ref{eq:Yi}), so each $\bY_i$ has $\Pr[\bY_i=0] \geq 1/K.$  
Assume that
$c_{q_a} \geq \BIG$ for all $a \in [K]$.  As in (\ref{eq:M}) let $\bM = \bY_1 + \cdots + \bY_N$, and let $\tilde{\mu} = \E[\bM]$ be the $K$-dimensional mean of $\bM$ and $\tilde{\Sigma}$ be the $K \times K$ covariance matrix $\Cov(\bM)$. 
Then 

\begin{enumerate}

\item [(1)] Defining $\tilde{\sigma}^2$ to be the smallest eigenvalue of $\tilde{\Sigma}$, we have that $\tilde{\sigma}^2 \geq \BIG/K.$

\item [(2)] $\dtv(\bM, \mathcal{N}_D (\tilde{\mu}, \tilde{\Sigma})) \le O(K^{71/20} / \BIG^{1/20})$.

\end{enumerate}

\end{lemma}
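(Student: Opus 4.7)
\medskip

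\noindent\textbf{Proof plan for Lemma~\ref{lem:CLT-mean-0-multinomial}.}
My plan is to establish part (1) (the eigenvalue lower bound) directly from the zero-modedness assumption $\Pr[\bY_i = 0] \geq 1/K$, and then deduce part (2) by a straightforward application of Theorem~\ref{thm:multinomial}. The main work is in part (1); part (2) is essentially plugging in.

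\medskip

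For part (1), the key observation is that $\tilde{\Sigma} = \sum_{i=1}^N \Cov(\bY_i)$, so it suffices to lower-bound $v^{\top} \Cov(\bY_i) v$ for each $i$, then sum.  Fix a unit vector $v \in \R^K$. Set $q_{i,a} := p_{i,a,+1} + p_{i,a,-1}$ (so $\sum_a q_{i,a} = 1 - \Pr[\bY_i = 0] \leq 1 - 1/K$) and $\mu_{i,a} := p_{i,a,+1} - p_{i,a,-1}$; note $|\mu_{i,a}| \leq q_{i,a}$. Since each $\bY_i$ is supported on at most one (signed) basis vector at a time, one checks that
\[
v^{\top} \Cov(\bY_i) v \;=\; \Var[v \cdot \bY_i] \;=\; \sum_{a} v_a^2 \, q_{i,a} \;-\; \Bigl(\sum_a v_a \mu_{i,a}\Bigr)^{2}.
\]
By Cauchy--Schwarz (on the indices $a$ with $q_{i,a} > 0$),
\[
\Bigl(\sum_a v_a \mu_{i,a}\Bigr)^{2} \;\leq\; \Bigl(\sum_a v_a^2 q_{i,a}\Bigr) \Bigl(\sum_a \tfrac{\mu_{i,a}^2}{q_{i,a}}\Bigr),
\]
and since $\mu_{i,a}^2/q_{i,a} \leq q_{i,a}$ we have $\sum_a \mu_{i,a}^2/q_{i,a} \leq \sum_a q_{i,a} \leq 1 - 1/K$. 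Therefore
\[
v^{\top} \Cov(\bY_i) v \;\geq\; \tfrac{1}{K}\sum_{a} v_a^2 \, q_{i,a}.
\]
Summing over $i$ and using $\sum_i q_{i,a} = \beta_a \geq \BIG$,
\[
v^{\top} \tilde{\Sigma} v \;\geq\; \tfrac{1}{K}\sum_{a} v_a^2 \beta_a \;\geq\; \tfrac{\BIG}{K}\,\|v\|^{2},
\]
which gives $\tilde{\sigma}^2 \geq \BIG/K$, as desired.

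\medskip

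For part (2), I invoke Theorem~\ref{thm:multinomial} with $d = K$: the $\bY_i$ satisfy its hypotheses (supported on $\{0, \pm \mathbf{e}_1, \ldots, \pm \mathbf{e}_K\}$, independent), so
\[
\dtv(\bM, \calN_D(\tilde{\mu}, \tilde\Sigma)) \;\leq\; O\!\left(\frac{K^{7/2}}{\tilde{\sigma}^{1/10}}\right).
\]
Substituting the bound $\tilde{\sigma}^2 \geq \BIG/K$ from part (1) yields $\tilde{\sigma}^{1/10} \geq (\BIG/K)^{1/20}$, so
\[
\dtv(\bM, \calN_D(\tilde{\mu}, \tilde\Sigma)) \;\leq\; O\!\left(\frac{K^{7/2} \cdot K^{1/20}}{\BIG^{1/20}}\right) \;=\; O\!\left(\frac{K^{71/20}}{\BIG^{1/20}}\right),
\]
as claimed. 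The only mild subtlety is ensuring the assumption $c_{q_a} = \beta_a \geq \BIG$ is actually used (and it is, in the summation step of part (1)); there are no hidden obstacles, since the heavy lifting — the size-free multinomial CLT — is already done in Theorem~\ref{thm:multinomial}.
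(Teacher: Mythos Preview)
Your proof is correct and follows essentially the same approach as the paper: both bound the Rayleigh quotient $v^{\top}\tilde{\Sigma}v$ term-by-term over $i$, extract the factor $\Pr[\bY_i=0]\ge 1/K$, and then sum using $\sum_i q_{i,a}=\beta_a\ge \BIG$, with part~(2) being a direct application of Theorem~\ref{thm:multinomial}. Your use of Cauchy--Schwarz to control $(\sum_a v_a\mu_{i,a})^2$ is a slightly slicker packaging of the paper's explicit square-completion algebra, but the two arguments arrive at the identical intermediate inequality $v^{\top}\Cov(\bY_i)v \ge \tfrac{1}{K}\sum_a q_{i,a}v_a^2$.
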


\begin{proof}
Given part (1), Theorem \ref{thm:multinomial} directly gives the claimed variation distance bound in part (2), so in the following we establish (1).  

Since $\bY_1,\dots,\bY_N$ are independent
we have that
\[
\tilde{\Sigma} = \sum_{i=1}^N \tilde{\Sigma}_i, \quad \text{where~} \tilde{\Sigma}_i = \Cov(\bY_i).  
\]
Fix $i \in [N]$.  Recalling (\ref{eq:piatau}), we have that $\tilde{\Sigma}_i$ is the $K \times K$ matrix defined by
\[
(\tilde{\Sigma}_i)_{a,b} = \begin{cases}
(p_{i,a,1} + p_{i,a,-1})(1-p_{i,a,1}-p_{i,a,-1}) + 4p_{i,a,1}p_{i,a,-1} & \text{~if~}a=b\\
-(p_{i,a,1}-p_{i,a,-1})(p_{i,b,1}-p_{i,b,-1}) & \text{~if~}a \neq b.
\end{cases}
\]
Hence we have
\begin{equation} \label{eq:tildeSigma}
(\tilde{\Sigma})_{a,b} = \begin{cases}
\sum_{i=1}^N (p_{i,a,1} + p_{i,a,-1})(1-p_{i,a,1}-p_{i,a,-1}) + 4p_{i,a,1}p_{i,a,-1} & \text{~if~}a=b\\
\sum_{i=1}^N -(p_{i,a,1}-p_{i,a,-1})(p_{i,b,1}-p_{i,b,-1}) & \text{~if~}a \neq b.
\end{cases}
\end{equation}
For later reference (though we do not need it in this proof) we also note that the mean vector $\tilde{\mu}$ is defined by
\begin{equation} \label{eq:tildemu}
\tilde{\mu}_a = \sum_{i=1}^N (p_{i,a,1} - p_{i,a,-1}).
\end{equation}

Let $\delta_i = \Pr[\bY_i=0] = 1 - p_{i,1,1} - p_{i,1,-1} - \cdots - p_{i,K,1} - p_{i,K,-1}$ and observe that by assumption we have $\delta_i \geq 1/K$ for all $i \in [N].$
We lower bound the smallest eigenvalue using the variational characterization.  For any unit vector $\bx$ in $\R^K$, we have
\begin{eqnarray}
\bx^T \cdot \tilde{\Sigma} \cdot \bx &=& \sum_{a=1}^K x_a^2\left(\sum_{i=1}^N (p_{i,a,1} + p_{i,a,-1})(1-p_{i,a,1}-p_{i,a,-1}) + 4p_{i,a,1}p_{i,a,-1}\right) \nonumber \\
&&- \sum_{a=1}^K \sum_{b \in [K],b \neq a} x_a x_b 
\left(
\sum_{i=1}^N (p_{i,a,1}-p_{i,a,-1})(p_{i,b,1}-p_{i,b,-1}) 
\right). \label{eq:a}
\end{eqnarray}
Let $p'_{i,a,1}=p_{i,a,1}+p_{i,a,-1}$.  Recalling that each $p_{i,a,1},p_{i,a,-1} \geq 0$, it is not difficult to see that then we have
\begin{eqnarray}
(\ref{eq:a}) &\geq&  \sum_{a=1}^K x_a^2\left(\sum_{i=1}^N p'_{i,a,1}(1-p'_{i,a,1})\right) 
- \sum_{a=1}^K \sum_{b \in [K],b \neq a} |x_a| \cdot |x_b| 
\left(
\sum_{i=1}^N p'_{i,a,1}p'_{i,b,1} 
\right), \label{eq:b}
\end{eqnarray}
so for the purpose of lower bounding  (\ref{eq:a}) it suffices to lower bound (\ref{eq:b}).  
Rewriting $p'_{i,a,1}$ as $p_{i,a}$ for notational simplicity, so now 
$\delta_i = 1-p_{i,1} - \cdots - p_{i,K}$,we have
\begin{eqnarray}
(\ref{eq:b}) & \geq &
\sum_{a=1}^K x_a^2 \sum_{i=1}^N p_{i,a}(1-p_{i,a})
- \sum_{a=1}^K \sum_{b \in [K], b \neq a} {|x_a| \cdot | x_b| }\sum_{i=1}^N p_{i,a}p_{i,b} \nonumber\\
& = &
\sum_{i=1}^N \left( \sum_{a=1}^K p_{i,a}(1-p_{i,a})x_a^2 - \sum_{a=1}^K \sum_{b\in [K], b \neq a} p_{i,a}p_{i,b}{|x_a| \cdot | x_b| } \right) \nonumber \\
&=& \sum_{i=1}^N \left( \sum_{a=1}^K \delta_i p_{i,a} x_a^2 + \sum_{a=1}^K p_{i,a} x_a^2 \left(\sum_{b\in [K], b \neq a} p_{i,b}\right) - \sum_{a=1}^K \sum_{b\in [K], b \neq a} p_{i,a}p_{i,b}{|x_a| \cdot | x_b| } \right)  \nonumber \\
&=& \sum_{i=1}^N \left( \delta_i \sum_{a=1}^K p_{i,a} x_a^2 + \sum_{a=1}^K \sum_{b\in [K], b \neq a} (p_{i,a}p_{i,b}x_a^2 - p_{i,a}p_{i,b}{|x_a| \cdot | x_b| }) \right)  \nonumber \\
&=& \sum_{i=1}^N \left( \delta_i \sum_{a=1}^K p_{i,a} x_a^2 +  \sum_{a=1}^K \sum_{b<a} p_{i,a}p_{i,b}({|x_a|}-{|x_b|})^2 \right)  \nonumber \\
&\geq& \sum_{i=1}^N \delta_i  \sum_{a=1}^K  p_{i,a} x_a^2 .
\end{eqnarray}
Recalling that $\delta_i \geq 1/K$ for all $i \in [N]$ and $\sum_{i=1}^n p_{i,a} = c_{q_a} \geq \BIG$ for all $a \in [K]$, we get
\[
\sum_{i=1}^N \delta_i  \sum_{a=1}^K p_{i,a} x_a^2 \geq {\frac 1 K}  \sum_{a=1}^K x_a^2 \sum_{i=1}^N p_{i,a} \geq {\frac 1 K} \sum_{a=1}^K c_{q_a} x_a^2 \geq {\frac \BIG K} \sum_{a=1}^K x_a^2 = {\frac \BIG K},
\]
so $\tilde{\sigma}^2 \geq \BIG/K$ and the lemma is proved.
\end{proof}

\subsection{From discretized multidimensional Gaussians to combinations of independent signed PBDs}
\label{sec:dG-to-comb-of-signed-PBD}

The first result of this subsection, Lemma \ref{lem:dG-to-comb}, is a technical lemma establishing that the discretized multidimensional Gaussian given by Lemma \ref{lem:CLT-mean-0-multinomial} is close to a vector-valued random variable in which each marginal (coordinate) is a $(\pm 1)$-weighted linear combination of independent discretized Gaussians, certain of which are promised to have large variance.

\begin{lemma} \label{lem:dG-to-comb}~
Under the assumptions of Lemma~\ref{lem:CLT-mean-0-multinomial} the following items (1) and (2) both hold:
\begin{itemize}

\item [(1)]
The pair $\tilde{\mu} \in \R^K, \tilde{\Sigma} \in \R^{K \times K}$, defined in (\ref{eq:tildemu}) and (\ref{eq:tildeSigma}), are such that there exist $\mu_{a,b}\in \R$, $1 \leq a \leq b \leq K,$ satisfying
\begin{equation} \label{eq:tildemucondition}
\tilde{\mu}_a = \mu_{a,a} + \sum_{c < a} \mu_{c,a} + \sum_{a < d}  \sign(\tilde{\Sigma}_{a,d}) \cdot \mu_{a,d},
\end{equation}
and there exist $\sigma_{a,b} \in \R$,
$1 \leq a \leq b \leq K,$ such that
\begin{equation} \label{eq:tildeSigmacondition}
\sigma_{a,b}^2 = |\tilde{\Sigma}_{a,b}| = |\tilde{\Sigma}_{b,a}|  \text{~~~for all~}a<b \quad \quad \text{and} \quad \quad
 \tilde{\Sigma}_{a,a}  = \sigma_{a,a}^2 
                     + \sum_{c<a} \sigma_{c,a}^2
                     + \sum_{a<d} \sigma_{a,d}^2.
\end{equation}
Furthermore, for all $a \in [K]$ we have $\sigma^2_{a,a} \geq \sigma^2$, where we define $\sigma^2 := \BIG/K.$

\item [(2)]
Let $\bU_{a,b},$ $1 \leq a < b \leq K$ be discretized Gaussians $\bU_{a,b} = \calN_D(\mu_{a,b},\sigma^2_{a,b})$ that are all mutually independent.  For $a \in [K]$ let $\bX_a$ be defined as
\[
\bX_a = \bU_{a,a} + \sum_{c<a} \bU_{c,a} + \sum_{a<d} \sign(\tilde{\Sigma}_{a,d}) \cdot \bU_{a,d}.
\]
Then
\[
\dtv\left((\bX_1,\dots,\bX_K),\calN_D(\tilde{\mu},\tilde{\Sigma})\right) \leq {\frac {K^2} \sigma} = {\frac {K^{5/2}} {\BIG^{1/2}}}.
\]
\end{itemize}
\end{lemma}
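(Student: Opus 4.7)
For Part~(1), the plan is to define the parameters explicitly and then verify the required lower bound. The identities in (\ref{eq:tildeSigmacondition}) force $\sigma_{a,b}^2 := |\tilde{\Sigma}_{a,b}|$ for $a \neq b$ and $\sigma_{a,a}^2 := \tilde{\Sigma}_{a,a} - \sum_{b \neq a} |\tilde{\Sigma}_{a,b}|$; taking these as definitions, the only nontrivial claim is $\sigma_{a,a}^2 \geq \sigma^2 := \BIG/K$. Writing $p_{i,a} := p_{i,a,1}+p_{i,a,-1}$ and $s_{i,a} := p_{i,a,1}-p_{i,a,-1}$, the identity $4p_{i,a,1}p_{i,a,-1} = p_{i,a}^2 - s_{i,a}^2$ and formula (\ref{eq:tildeSigma}) give $\tilde{\Sigma}_{a,a} = \sum_i (p_{i,a} - s_{i,a}^2)$, and the triangle inequality yields $|\tilde{\Sigma}_{a,b}| \leq \sum_i |s_{i,a}| |s_{i,b}| \leq \sum_i p_{i,a} p_{i,b}$ for $a \neq b$. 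Combining these two bounds and using $\sum_{b=1}^K p_{i,b} \leq 1-\delta_i$ together with $\delta_i := \Pr[\bY_i=0] \geq 1/K$ telescopes to give $\sigma_{a,a}^2 \geq \sum_i p_{i,a}\delta_i \geq \beta_a/K \geq \BIG/K$, as needed. For the $\mu_{a,b}$'s, the constraints (\ref{eq:tildemucondition}) are underdetermined, so the trivial choice $\mu_{a,a} := \tilde{\mu}_a$ and $\mu_{a,b} := 0$ for $a < b$ suffices.

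For Part~(2), I will couple the two distributions through a shared family of underlying \emph{continuous} Gaussians. Let $\{\bG_{a,b}\}_{a \leq b}$ be mutually independent with $\bG_{a,b} \sim \calN(\mu_{a,b}, \sigma_{a,b}^2)$, so that $\bU_{a,b} = \lfloor \bG_{a,b} \rfloor$, and define the continuous sums $\bG_a := \bG_{a,a} + \sum_{c<a}\bG_{c,a} + \sum_{a<d}\sign(\tilde{\Sigma}_{a,d}) \bG_{a,d}$. A routine computation of means and covariances using the construction of $\mu_{a,b}$ and $\sigma_{a,b}$ from Part~(1) shows that $(\bG_1,\ldots,\bG_K) \sim \calN(\tilde{\mu}, \tilde{\Sigma})$, so $(\lfloor \bG_1 \rfloor, \ldots, \lfloor \bG_K \rfloor) \sim \calN_D(\tilde{\mu}, \tilde{\Sigma})$. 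Setting $C_a := \sum_{c<a}\lfloor\bG_{c,a}\rfloor + \sum_{a<d}\sign(\tilde{\Sigma}_{a,d})\lfloor\bG_{a,d}\rfloor \in \Z$ and pulling it outside the floor gives the algebraic identity
\[
\lfloor \bG_a \rfloor \;=\; \lfloor \bG_{a,a} + \lambda_a \rfloor + C_a
\quad\text{and}\quad
\bX_a \;=\; \lfloor \bG_{a,a} \rfloor + C_a,
\]
where $\lambda_a := \sum_{c<a}\{\bG_{c,a}\} + \sum_{a<d}\sign(\tilde{\Sigma}_{a,d}) \{\bG_{a,d}\}$ is a real number with $|\lambda_a| < K$; crucially, both $C_a$ and $\lambda_a$ depend only on the \emph{off-diagonal} Gaussians $\{\bG_{c,d}\}_{c<d}$.

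Conditioning on $Z := \{\bG_{c,d}\}_{c<d}$ (whose marginal is identical under the two couplings), the $C_a$'s and $\lambda_a$'s become fixed constants while the $K$ diagonal Gaussians $\bG_{a,a}$ remain mutually independent. Consequently, both conditional joint distributions $(\bX_1,\ldots,\bX_K)\mid Z$ and $(\lfloor \bG_1 \rfloor,\ldots,\lfloor \bG_K \rfloor)\mid Z$ factor as \emph{product distributions} over $a \in [K]$. By Lemma~\ref{lem:B5plus} applied with $\rho = 0$, the $a$-th conditional marginal TV distance is bounded by $|\lambda_a|/(2\sigma_{a,a}) \leq K/(2\sigma)$; subadditivity of TV over independent products then gives a conditional joint TV of at most $K^2/(2\sigma)$, and averaging over $Z$ yields the unconditional bound $K^2/(2\sigma) = K^{5/2}/(2\sqrt{\BIG})$, matching the lemma up to a constant. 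The key obstacle the plan addresses is that a naive pointwise comparison of the two joint distributions fails, since under the natural coupling corresponding coordinates can disagree by $\Theta(K)$ almost surely; the trick is to \emph{isolate} the large-variance diagonal $\bG_{a,a}$ by conditioning out the off-diagonals, which simultaneously makes the conditional joints product-form (so that subadditivity across coordinates applies) and lets the variance $\sigma_{a,a}^2 \geq \sigma^2$ absorb the $O(K)$ shift through shift-invariance of discretized Gaussians.
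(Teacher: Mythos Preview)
Your proof is correct and follows essentially the same approach as the paper's: for Part~(1) you make the same explicit choices and carry out the same diagonal-dominance computation (your notation $p_{i,a},s_{i,a}$ is a slight repackaging of the paper's), and for Part~(2) you couple through the underlying continuous Gaussians, condition on the off-diagonal variables to make both conditional laws products of independent coordinates, apply Lemma~\ref{lem:B5plus} coordinatewise using $\sigma_{a,a}\ge\sigma$ to absorb the $O(K)$ rounding shift, and then average---exactly as the paper does. The only cosmetic discrepancy is that you write $\lfloor\cdot\rfloor$ whereas the paper's $\calN_D$ is defined via rounding to the nearest integer; since $\lfloor x\rceil=\lfloor x+\tfrac12\rfloor$ this does not affect the argument or the bound.
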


\begin{proof}
We first prove part (1).  Existence of the desired $\mu_{a,b}$ values is immediate since for each $a \in [K]$ the variable $\mu_{a,a}$ appears in only one equation given by (\ref{eq:tildemucondition}) (so we can select arbitrary values for each $\mu_{a,b}$ with $a < b$, and there will still exist a value of $\mu_{a,a}$ satisfying (\ref{eq:tildemucondition})).  The first part of (\ref{eq:tildeSigmacondition}) is trivial since for $a < b$ we take $\sigma^2_{a,b} = |\tilde{\Sigma}_{a,b}|$ (which of course equals $|\tilde{\Sigma}_{b,a}|$ since the covariance matrix $\tilde{\Sigma}$ is symmetric).  For the second part we take
$\sigma^2_{a,a}=
 \tilde{\Sigma}_{a,a}  - \sum_{c<a} \sigma_{c,a}^2 - \sum_{a<d} \sigma_{a,d}^2$ which we now proceed to lower bound.
\begin{align*}
\sigma^2_{a,a}
&= \tilde{\Sigma}_{a,a}  - \sum_{c<a} \sigma_{c,a}^2 - \sum_{a<d} \sigma_{a,d}^2 = \tilde{\Sigma}_{a,a} - \sum_{b \neq a} |\tilde{\Sigma}_{b,a}| \\
&\geq \tilde{\Sigma}_{a,a} -  \sum_{b \neq a}  \sum_{i=1}^N(p_{i,a,1}+p_{i,a,-1})(p_{i,b,1}+p_{i,b,-1}) \tag{by (\ref{eq:tildeSigma})}\\
&= \sum_{i=1}^N (p_{i,a,1} + p_{i,a,-1}) \left((1 - p_{i,a,1}-p_{i,a,-1})  
- \sum_{b \neq a} (p_{i,b,1}+p_{i,b,-1})\right) + 4 p_{i,a,1}p_{i,a,-1} \tag{again by (\ref{eq:tildeSigma})}\\
&\geq \sum_{i=1}^N (p_{i,a,1} + p_{i,a,-1}) \left((1 - p_{i,a,1}-p_{i,a,-1})  
- \sum_{b \neq a} (p_{i,b,1}+p_{i,b,-1})\right)\\
&= \sum_{i=1}^N (p_{i,a,1} + p_{i,a,-1}) \left(\delta_i +   \sum_{b \neq a} (p_{i,b,1}+p_{i,b,-1})
- \sum_{b \neq a} (p_{i,b,1}+p_{i,b,-1})\right) \tag{by definition of $\delta_i$}\\
&=\sum_{i=1}^N \delta_i (p_{i,a,1} + p_{i,a,-1}) \geq {\frac 1 K} \sum_{i=1}^N (p_{i,a,1}+p_{i,a,-1}) \tag{since $\delta_i \geq {\frac 1 K}$}\\
&\geq {\frac 1 K} c_{q_a} \geq {\frac \BIG K}.
\end{align*}

With $\mu_{a,b}$ and $\sigma_{a,b}$ in hand, now we turn to proving part (2) of the lemma.
For $1 \leq a \leq b \leq K$ let $\bU'_{a,b}$ be the (non-discretized) univariate Gaussian $\calN(\mu_{a,b},\sigma^2_{a,b})$ that $\bU_{a,b}$ is based on, so $\bU_{a,b} = \round{\bU'_{a,b}}$ and the distributions $\bU'_{a,b}$ are all mutually independent.
For $a \in [K]$ we define random variables\ignore{\pnote{How about calling these $\bV'_{a,a}$ and $\bV_{a,a}$?}}
$\bV'_{a,a}$, $\bV_{a,a}$ as
\begin{align*}
\bV'_{a,a} &= \sum_{c<a} \bU'_{c,a} + \sum_{a<d} \sign(\tilde{\Sigma}_{a,d}) \cdot \bU'_{a,d},\\
\bV_{a,a} &= \sum_{c<a} \round{\bU'_{c,a}} + \sum_{a<d} \round{ \sign(\tilde{\Sigma}_{a,d}) \cdot \bU'_{a,d}}
= \sum_{c<a} \round{\bU'_{c,a}} +  \sum_{a<d}  \sign(\tilde{\Sigma}_{a,d}) \cdot \round{ \bU'_{a,d} }\\
&= \sum_{c<a} \bU_{c,a} +  \sum_{a<d}  \sign(\tilde{\Sigma}_{a,d}) \cdot \bU_{a,d} .
\end{align*}
Fix a possible outcome $(u'_{a,b})_{a < b}$ of $(\bU'_{a,b})_{a < b}$ and for each $a < b$ let $u_{a,b}= \round{ u'_{a,b} }$ be the corresponding outcome of $\bU_{a,b}.$  For $a \in [K]$ let
\begin{align*}
v'_{a,a} &= \sum_{c<a} u'_{c,a} +\sum_{a<d}   \sign(\tilde{\Sigma}_{a,d}) \cdot   u'_{a,d},\\
v_{a,a} &= \sum_{c<a} \round{ u'_{c,a} } + \sum_{a<d} \round{ \sign(\tilde{\Sigma}_{a,d}) \cdot u'_{a,d} } =  \sum_{c<a} u_{c,a} + \sum_{a<d} \sign(\tilde{\Sigma}_{a,d}) \cdot u_{a,d}.
\end{align*}

Recalling Lemma~\ref{lem:B5plus}, we have that 
\[
\dtv(\round{ \bU'_{a,a} + v'_{a,a} }, \round{ \bU'_{a,a} } + v_{a,a}) \leq {\frac K \sigma}
\]
for each $a \in [K]$, and hence by independence we get that
\[
\dtv \left((\round{\bU'_{1,1} + v'_{1,1}},...,\round{\bU'_{K,K}+v'_{K,K}}),\;
        (\round{\bU'_{1,1}} + v_{1,1},...,\round{\bU'_{K,K}} + v_{K,K}) \right)
   \leq \frac{K^2}{\sigma}.
\]
Averaging over all outcomes of $(u'_{a,b})_{a < b} \leftarrow (\bU'_{a,b})_{a < b}$, we get that
\[
\dtv \left((\round{\bU'_{1,1} + \bV'_{1,1}},...,\round{\bU'_{K,K}+\bV'_{K,K}}),\;
        (\round{\bU'_{1,1}} + \bV_{1,1},...,\round{\bU'_{K,K}} + \bV_{K,K}) \right)
   \leq \frac{K^2}{\sigma}.
\]
To complete the proof it remains to show that the vector-valued random variable \[(\round{\bU'_{1,1} + \bV'_{1,1}},\dots,\round{\bU'_{K,K}+\bV'_{K,K}})\] is distributed according to $\calN_D(\tilde{\mu},\tilde{\Sigma}).$  It is straightforward to verify, using (\ref{eq:tildemucondition}) and linearity of expectation, that $\E[\bU'_{a,a} + \bV'_{a,a}] = \tilde{\mu}_a.$  For the  covariance matrix, we first consider the diagonal terms:  we have $\Var[\bU'_{a,a} + \bV'_{a,a}] = \tilde{\Sigma}_{a,a}$ by the second part of (\ref{eq:tildeSigmacondition}) and independence of the $\bU'_{a,b}$ distributions.  Finally, for the off-diagonal terms, for $a < b$ we have
\begin{align*}
&\Cov(\bU'_{a,a} + \bV'_{a,a}, \bU_{b,b}' + \bV_{b,b}')\\
  & = \Cov\left(\bU'_{a,a} + \sum_{c < a} \bU'_{c,a} 
                + \sum_{a < d } \sign(\tilde{\Sigma}_{a,d}) \cdot \bU_{a,d}', 
                \bU'_{b,b} + \sum_{c < b } \bU'_{c,b} + \sum_{b < d }  \sign(\tilde{\Sigma}_{b,d}) \cdot \bU'_{b,d} \right) \\
 & = \Cov( \sign(\tilde{\Sigma}_{a,b}) \cdot \bU'_{a,b}, \bU'_{a,b}) =  \sign(\tilde{\Sigma}_{a,b}) \cdot \Var[\bU'_{a,b}] =  \sign(\tilde{\Sigma}_{a,b}) \cdot \sigma^2_{a,b} = \tilde{\Sigma}_{a,b} = \tilde{\Sigma}_{b,a}
\end{align*}
as desired.
\end{proof}

We would like a variant of Lemma~\ref{lem:dG-to-comb} where signed PBDs play the role of discretized Gaussians.\ignore{\rnote{As Phil suggested earlier, we don't absolutely have to do this, the alternative would be just to work with discretized Gaussians.  But whether or not we go to signed PBDs, we will need something like Lemma \ref{lem:dG-to-comb-signed-PBDs}
 below.  If we stay with discretized Gaussians, we still have to get to a final situation where we just have 
$q_1 \bZ_1 + \cdots + q_K \bZ_K$ (where the $\bZ_i$'s would now be large-variance discretized Gaussians).  This would be done following the same kind of approach carried out below for signed PBDs:  use the fact that if you add two large-variance discretized Gaussians you get (to within small error) a discretized Gaussian, and if you add a small-variance discretized Gaussian to a large-variance discretized Gaussian, then the result is close to a shifted version of the large-variance discretized Gaussian summand (because up to a small error, thanks to shift-invariance of the big guy, you can replace the small-variance discretized Gaussian by a delta distribution located anywhere within the interval where it puts almost all of its mass).

So at this particular spot I don't see that it is significantly easier to go one way (signed PBDs) versus another (discretized Gaussians).}} This is given by the following lemma.  (Note that the lemma also ensures that every nontrivial signed PBD has high variance; this will be useful later.)

\begin{lemma} \label{lem:dG-to-comb-signed-PBDs}
\ignore{\rnote{This lemma is analogous to Corollary 31 of the STOC submission.  Do we need to add to the lemma statement a lower bound on $\Var[\bW_{a,a}]$ and an upper bound on $\Var[\bW_{a,b}]$ for $a<b$, analogous to the end of the Corollary 31 statement?}}
Given the $\tilde{\mu} \in \R^K, \tilde{\Sigma} \in \R^{K \times K}$ from Lemma \ref{lem:CLT-mean-0-multinomial} and the $\mu_{a,b}$, $\sigma_{a,b}$ and $\sigma^2$ defined in Lemma \ref{lem:dG-to-comb}, there exist signed PBDs $\bW_{a,b}$, $1 \leq a \leq b \leq K$, each of which is either trivial (a constant random variable) or has $\Var[\bW_{a,b}] \geq \sigma^{1/2} = \BIG^{1/4}/K^{1/4}$, such that
the random variables $\bS_a,$ $a \in [K]$, defined as
\begin{equation} \label{eq:def-Sa}
\bS_a =  \bW_{a,a} + \sum_{c<a} \bW_{c,a} + \sum_{a<d} \sign(\tilde{\Sigma}_{a,d})  \bW_{a,d},
\end{equation}
satisfy
\[
\dtv\left((\bS_1,\dots,\bS_K),\calN_D(\tilde{\mu},\tilde{\Sigma})\right) \leq O\left({\frac {K^2} {\sigma^{1/4}}}\right)
= O\left(
{\frac {K^{17/8}}{\BIG^{1/8}}}
\right).
\]
\end{lemma}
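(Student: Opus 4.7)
The plan is to build each $\bW_{a,b}$ from the corresponding discretized Gaussian $\bU_{a,b}=\calN_D(\mu_{a,b},\sigma_{a,b}^2)$ of Lemma~\ref{lem:dG-to-comb}, splitting into two cases according to the size of $\sigma_{a,b}^2$. If $\sigma_{a,b}^2 \geq \sigma^{1/2}$, take $\bW_{a,b}$ to be a nontrivial signed PBD with mean within $O(1)$ of $\mu_{a,b}$ and variance within $O(1)$ of $\sigma_{a,b}^2$; this is achieved by a direct construction, e.g. a Binomial$(\lceil 2\sigma_{a,b}^2\rceil,1/2)$ combined with a small number of extra $\{0,\pm 1\}$-valued Bernoullis to fine-tune the mean (using $\{0,-1\}$-Bernoullis for negative contributions, which is what ``signed'' affords). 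If instead $\sigma_{a,b}^2 < \sigma^{1/2}$ — which can only occur off the diagonal, since $\sigma_{a,a}^2 \geq \sigma^2 \gg \sigma^{1/2}$ — take $\bW_{a,b}$ to be the trivial signed PBD equal to the constant $\lfloor \mu_{a,b}\rceil$. Having done this, the comparison will be carried out by the triangle inequality through Lemma~\ref{lem:dG-to-comb}.

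In the large-variance case, Theorem~\ref{thm:CGS} gives $\dtv(\bW_{a,b},\calN_D(\E[\bW_{a,b}],\Var[\bW_{a,b}])) = O(1/\sigma_{a,b})$, and Lemma~\ref{lem: variation distance between translated Poisson distributions} (applied after passing to translated Poissons via Lemma~\ref{lem:translated Poisson approximation}, together with Lemma~\ref{lem:B5plus} to re-discretize) yields $\dtv(\calN_D(\E[\bW_{a,b}],\Var[\bW_{a,b}]),\bU_{a,b}) = O(1/\sigma_{a,b})$. Thus $\dtv(\bW_{a,b},\bU_{a,b}) = O(1/\sigma_{a,b}) \leq O(1/\sigma^{1/4})$. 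Since the collections $\{\bU_{a,b}\}$ and $\{\bW_{a,b}\}$ are each mutually independent, we may couple them coordinate-wise and pay the sum of these per-pair costs when we simultaneously substitute all large-variance $\bW_{a,b}$ into the expressions defining the $\bS_a$, incurring a joint TV cost of $O(K^2/\sigma^{1/4})$.

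The small-variance case is the main obstacle, because $\dtv(\bW_{a,b},\bU_{a,b})$ is not small on its own: we must instead analyze the substitution at the level of the joint $(\bS_1,\dots,\bS_K)$. The key observation is that each off-diagonal $\bU_{a,b}$ enters only the two sums $\bS_a$ and $\bS_b$ (with signs), and both of these sums contain a diagonal signed PBD ($\bW_{a,a}$, resp.\ $\bW_{b,b}$) which by construction has variance $\Theta(\sigma^2)$ and hence is $O(1/\sigma)$-shift-invariant at scale $1$ by Fact~\ref{fact:good-shift-invariance}. By Fact~\ref{fact:SI}(1), $\bS_a$ and $\bS_b$ inherit this shift-invariance. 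A Gaussian tail bound puts $|\bU_{a,b}-\lfloor\mu_{a,b}\rceil| \leq t := O(\sigma^{1/4}\sqrt{\log(K\sigma)})$ except with probability $O(1/\sigma^{1/4})$; conditioning on this event and using shift-invariance along $\bS_a$ and $\bS_b$ to absorb the integer shift of magnitude $\leq t$ at cost $O(t/\sigma)$, each off-diagonal small-variance substitution contributes $\tilde O(1/\sigma^{3/4})$ to the joint TV distance. We process the $O(K^2)$ pairs one at a time so that the cost is additive, producing total error $\tilde O(K^2/\sigma^{3/4})$, which is dominated by the $O(K^2/\sigma^{1/4})$ bound from the large-variance case.

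Putting everything together with Lemma~\ref{lem:dG-to-comb}, which already contributes $O(K^2/\sigma)$, we obtain
\[
\dtv\!\bigl((\bS_1,\dots,\bS_K),\,\calN_D(\tilde\mu,\tilde\Sigma)\bigr) \;=\; O\!\left(\frac{K^2}{\sigma^{1/4}}\right) \;=\; O\!\left(\frac{K^{17/8}}{\BIG^{1/8}}\right),
\]
using $\sigma^2 = \BIG/K$. Finally, each nontrivial $\bW_{a,b}$ has variance at least $\sigma^{1/2} - O(1) = \Omega(\BIG^{1/4}/K^{1/4})$ by the Case~1 construction and the assumption $\BIG \gg 1$; the remaining $\bW_{a,b}$ are trivial by construction, verifying the stated variance dichotomy. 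The main subtlety, as indicated, is the small-variance coupling step — handling it cleanly requires working with the joint distribution rather than per-coordinate TV, and leveraging the shift-invariance that the diagonal $\bW_{a,a}$ provide.
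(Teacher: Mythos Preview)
Your approach is essentially the same as the paper's: the same two-case construction of $\bW_{a,b}$ (nontrivial signed PBD when $\sigma_{a,b}^2 \geq \sigma^{1/2}$, constant $\round{\mu_{a,b}}$ otherwise), the same use of Theorem~\ref{thm:CGS} for the large-variance swaps, and the same use of shift-invariance of the diagonal $\bW_{a,a}$ to absorb the small-variance swaps (the paper packages this last step as Claim~\ref{claim:small-var-ok} with $t=\sigma^{1/4}$, giving $O(1/\sigma^{1/2})$ per swap). Two minor differences worth noting: the paper constructs $\bW_{a,b}$ with $\Var[\bW_{a,b}]=\sigma_{a,b}^2$ exactly, so the threshold $\sigma^{1/2}$ in the lemma statement holds without your $-O(1)$ slack; and your explicit one-pair-at-a-time hybrid on the joint vector $(\bS_1,\dots,\bS_K)$ is arguably cleaner than the paper's passage from per-coordinate bounds $\dtv(\bS_a,\bX_a)$ to the joint bound via an unadorned ``triangle inequality.''
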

\begin{proof}
Let $\bU_{a,b}$, $\bX_a$ be as defined in Lemma~\ref{lem:dG-to-comb}.  We ``swap out'' each discretized Gaussian
$\bU_{a,b} = \calN_D(\mu_{a,b},\sigma^2_{a,b})$ in $\bX_a$ for a signed PBD $\bW_{a,b}$ as follows:  Given $1 \leq a \leq b \leq K$,
\begin{itemize}

\item [(I)] If $\sigma^2_{a,b} \geq \sigma^{1/2}$, we define $\bW_{a,b}$ to be a signed PBD that has $|\E[\bW_{a,b}] - \mu_{a,b}|\leq 1/2$ and $\Var[\bW_{a,b}]=\sigma_{a,b}^2$.  (To see that there exists such a signed PBD, observe that we can take $N_1$ many Bernoulli random variables each with expectation $p$, satisfying $N_1 p (1-p) = \sigma^2_{a,b}$, to exactly match the variance, and then take an additional $N_2$ many constant-valued random variables (each of which is 1 or $-1$ depending on whether $N_1 p$ is greater or less than $\mu_{a,b}$) to get the mean of the signed PBD to lie within an additive $1/2$ of $\mu_{a,b}.$)

\item [(II)] If $\sigma^2_{a,b} < \sigma^{1/2}$ we define $\bW_{a,b}$ to be a trivial signed PBD that has $\E[\bW_{a,b}] = \round{\mu_{a,b}}$
and $\Var[\bW_{a,b}] = 0.$

\end{itemize}
In the above definition all $\bW_{a,b}$'s are independent of each other.  We note that Lemma~\ref{lem:dG-to-comb} implies 
that when $b=a$ the PBD $\bW_{a,a}$ has $\Var[\bW_{a,a}] = \sigma^2_{a,a} \geq \sigma^2 = \BIG/K \gg \sigma^{1/2}$, and hence the PBD $\bW_{a,a}$ falls into the ``large-variance''  Case~(I) above.

The random variable $\bS_a$ defined in Equation (\ref{eq:def-Sa}) is the analogue of $\bX_a$ from Lemma~\ref{lem:dG-to-comb} but with $\bW_{a,b}$ replacing each $\bU_{a,b}$.\ignore{We note that since $\bS_a$ is a signed sum of independent signed PBDs it itself is a signed PBD, \rnote{These $\bS_a$'s are not independent of each other, but it's fine; fix the presentation} and moreover $\Var[\bS_a] \geq \Var[\bW_{a,a}] \geq \sigma^2.$}  To establish the variation distance bound, fix $a \in [K]$; we first argue that the variation distance between $\bS_a$ and $\bX_a$ is small.  We start by observing that
since $\Var[\bW_{a,a}] \geq \sigma^2$,
Theorem~\ref{thm:CGS} and Lemma~\ref{lem:B5plus} give
\begin{equation} \label{eq:first}
\dtv(\bW_{a,a},\bU_{a,a}) \leq O(1/\sigma),
\end{equation}
and moreover $\bW_{a,a}$ is $O(1/\sigma)$-shift-invariant by Fact~\ref{fact:good-shift-invariance}.

Now consider a $c<a$ such that
$\bW_{c,a}$ falls into Case~(II).  By the standard concentration bound for the Gaussian $\bU'_{c,a} \sim \calN(\mu_{c,a},\sigma^2_{c,a})$ on which $\bU_{c,a}$ is based, we have that $\Pr[\bU'_{c,a} \notin [\mu_{c,a} - t \sigma_{c,a},\mu_{c,a} + 
t \sigma_{c,a}]] \leq 2e^{-t^2/2}$ for all $t > 0.$  It follows from Claim \ref{claim:small-var-ok} (stated and justified below) and the $O(1/\sigma)$-shift-invariance of $\bW_{a,a}$  that 
\[
\dtv(\bW_{a,a} + \bW_{c,a}, \bW_{a,a} + \bU_{c,a}) \leq O
\left({\frac {t \sigma_{c,a} + 1} {\sigma}}
\right)
+ 2e^{-t^2/2}.
\]
Selecting $t = \sigma^{1/4}$ so that $t \sigma_{c,a}+1 \leq \sigma^{1/4} \cdot \sigma^{1/4} + 1 = O(\sigma^{1/2})$ and $e^{-t^2/2} = o(1/\sigma^{1/2})$, we get that
\begin{equation} \label{eq:second}
\dtv(\bW_{a,a} + \bW_{c,a}, \bW_{a,a} + \bU_{c,a}) \leq O
\left( {\frac 1 {\sigma^{1/2}}}\right).
\end{equation}
A similar argument holds for each $d>a$ such that $\bW_{a,d}$ falls into Case~(II), giving
\begin{equation} \label{eq:second-and-a-half}
\dtv(\bW_{a,a} + \sign(\tilde{\Sigma}_{a,d}) \cdot \bW_{a,d}, \bW_{a,a} + \sign(\tilde{\Sigma}_{a,d}) \cdot \bU_{a,d}) \leq O
\left( {\frac 1 {\sigma^{1/2}}}\right).
\end{equation}
Finally, for each $c < a$ such that $\bW_{c,a}$ falls into Case~(I),
once again applying Theorem~\ref{thm:CGS} and Lemma~\ref{lem:B5plus}, we get
\begin{equation} \label{eq:third}
\dtv(\bW_{c,a},\bU_{c,a}) \leq O(1/\sigma^{1/4}),
\end{equation}
and similarly for $d>a$ such that $\bW_{a,d}$ falls into Case~(I) we have
\begin{equation} \label{eq:third-and-a-half}
\dtv(\sign(\tilde{\Sigma}_{a,d}) \cdot \bW_{a,d}, \sign(\tilde{\Sigma}_{a,d}) \cdot \bU_{a,d}) \leq O(1/\sigma^{1/4}).
\end{equation}

Combining (\ref{eq:first}---\ref{eq:third}) and recalling the definitions of $\bS_a$ and $\bX_a$, by the triangle inequality for each 
$a \in [K]$ we have
\[
\dtv(\bS_a,\bX_a) \leq O\left({\frac K {\sigma^{1/4}}}\right).
\]
Finally, another application of the triangle inequality gives
\[
\dtv((\bS_1,\dots,\bS_K),(\bX_1,\dots,\bX_K)) \leq O\left({\frac {K^2} {\sigma^{1/4}}}\right),
\]
which with Lemma~\ref{lem:dG-to-comb} gives the claimed bound.
\end{proof}

The following claim is an easy consequence of the definition of shift-invariance:
\begin{claim} \label{claim:small-var-ok}
Let $\bA$ be an integer random variable that is $\alpha$-shift-invariant, and let $\bB$ be an integer random variable such that $\Pr[\bB \notin [u,u+r]] \leq \delta$ for some integers $u,r.$  Then for any integer $r' \in [u,u+r]$ we have $\dtv(\bA + \bB,\bA + r') \leq \alpha r + \delta.$
\end{claim}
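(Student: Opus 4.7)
\textbf{Proof plan for Claim~\ref{claim:small-var-ok}.}
The plan is to first upgrade the hypothesis of shift-invariance at scale $1$ into shift-invariance at every scale $k$ with $|k| \leq r$, losing a factor of $|k|$. Then I will condition on the event $\{\bB \in [u, u+r]\}$ to reduce the claim about $\bA + \bB$ to a mixture of deterministic shifts $\bA + b$ with $b \in [u, u+r]$, each of which is close to $\bA + r'$ by the upgraded shift-invariance bound.

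First, by the triangle inequality applied $|k|$ times to the definition $\dtv(\bA, \bA+1) \leq \alpha$, one obtains
\[
\dtv(\bA, \bA+k) \leq |k|\alpha \quad \text{for every integer } k.
\]
In particular, for any $b \in [u,u+r] \cap \Z$ and any $r' \in [u,u+r] \cap \Z$, we have $|r'-b| \leq r$, so
\[
\dtv(\bA + b, \bA + r') = \dtv(\bA, \bA + (r'-b)) \leq r \alpha,
\]
using translation-invariance of total variation distance in the first equality.

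Next, let $E$ denote the event $\{\bB \in [u,u+r]\}$, so $\Pr[\neg E] \leq \delta$. Write the distribution of $\bA + \bB$ as a mixture
\[
(\bA+\bB) \;=\; \Pr[E] \cdot (\bA+\bB \mid E) \;+\; \Pr[\neg E] \cdot (\bA+\bB \mid \neg E).
\]
Using the standard fact that total variation distance is convex in mixtures and bounded by $1$ on the ``bad'' component, I would bound
\[
\dtv(\bA+\bB,\, \bA+r') \;\leq\; \Pr[E]\cdot \dtv\bigl((\bA+\bB\mid E),\, \bA+r'\bigr) \;+\; \Pr[\neg E].
\]
Conditional on $E$, the random variable $\bA + \bB$ is (using independence of $\bA$ and $\bB$) a mixture, over $b \in [u,u+r]$ with weights $\Pr[\bB = b \mid E]$, of the distributions $\bA + b$. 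Each such $\bA + b$ is within $r\alpha$ of $\bA + r'$ by the display above, so another application of convexity of total variation distance yields $\dtv((\bA+\bB \mid E), \bA + r') \leq r\alpha$. Combining with $\Pr[\neg E] \leq \delta$ gives $\dtv(\bA + \bB, \bA + r') \leq r\alpha + \delta$, as desired.

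There is really no obstacle here; the claim is a direct consequence of iterated shift-invariance plus conditioning on the essential support of $\bB$. The only thing worth being careful about is that the triangle inequality is applied in the integer shift-invariance bound exactly $|r'-b| \leq r$ times, which is why the bound is $r\alpha$ and not $(r+1)\alpha$ or anything larger.
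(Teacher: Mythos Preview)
Your proof is correct and is precisely the argument the paper has in mind; indeed the paper does not spell out a proof at all, merely calling the claim ``an easy consequence of the definition of shift-invariance.'' Your observation that independence of $\bA$ and $\bB$ is needed (and implicitly assumed) is apt, since the claim is applied only in that setting.
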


\subsection{$\bS'$ is close to a shifted weighted sum of signed PBDs} \label{sec:sicsirv-to-pure}

Recall that $\bS' = (q_1,\dots,q_K) \cdot \bM$ is as defined in (\ref{eq:Sprime_as_dot}). Combining Lemmas \ref{lem:CLT-mean-0-multinomial} and~\ref{lem:dG-to-comb-signed-PBDs}, and taking the dot-product with $(q_1,\dots,q_K)$ to pass from $\bM$ to $\bS'$, we get that the variation distance between $\bS'=(q_1,\dots,q_K) \cdot \bM$ and $(q_1,\dots,q_K) \cdot (\bS_1,\dots,\bS_K)$ is at most $O(K^{71/20} / \BIG^{1/20})$. We can express $(q_1,\dots,q_K) \cdot (\bS_1,\dots,\bS_K)$ as
\begin{align*}
&\sum_{a=1}^K q_a \left(\bW_{a,a} + \sum_{c<a} \bW_{c,a} + \sum_{a<d} \sign(\tilde{\Sigma}_{a,d})  \bW_{a,d}\right)\\
&= \sum_{a=1}^K q_a \bW_{a,a} + 
\sum_{1 \leq a < b \leq K} (q_b + \sign(\tilde{\Sigma}_{a,b})\cdot q_a) \bW_{a,b}.
\end{align*}
Recalling that each $\bW_{a,b}$ is either a constant random variable or a signed PBD with variance at least $\sigma^{1/2} = \BIG^{1/4}/K^{1/4}$, that each $\Var[\bW_{a,a}] \geq \sigma^2 > \sigma^{1/2},$ and that all of the distributions $\bW_{a,a},\bW_{a,b}$ are mutually independent, we get the following result showing that
$c_{q_1},\dots,c_{q_K} \geq \BIG$ implies that
$\bS'$ is close to a weighted sum of signed PBDs.

\begin{lemma} \label{lem:close-to-pure-signed-sicsirv}
Assume that $c_{q_1},\dots,c_{q_K} \geq \BIG$.   Then there is an integer $V'$, a subset of pairs $A \subseteq \{(a,b): 1 \leq a < b \leq K\}$, and a set of sign values $\{\tau_{a,b}\}_{(a,b) \in A}$ where each $\tau_{a,b} \in \{-1,1\}$,
such that $\dtv(\bS',\bB) =  O(K^{71/20} / \BIG^{1/20})$, where $\bB$ is 
a shifted sum of signed PBDs
\begin{equation} \label{eq:def-bB}
\bB = V' +  \sum_{a=1}^K q_a \bW_{a,a} + \sum_{(a,b) \in A} (q_b + \tau_{a,b}\cdot q_a) \bW_{a,b}
\end{equation}
in which all the $\bW_{a,a}$ and $\bW_{a,b}$ distributions are independent signed PBDs with variance at least $\BIG^{1/4}/K^{1/4}.$
\end{lemma}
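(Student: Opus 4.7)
The plan is to combine the two structural results immediately preceding the statement, namely Lemma~\ref{lem:CLT-mean-0-multinomial} and Lemma~\ref{lem:dG-to-comb-signed-PBDs}, via the triangle inequality, and then transport the resulting bound through the deterministic linear map $v\mapsto (q_1,\dots,q_K)\cdot v$ that carries the multinomial $\bM$ to the target $\bS'=(q_1,\dots,q_K)\cdot\bM$. The hypothesis $c_{q_a}\ge\BIG$ is exactly what activates Lemma~\ref{lem:CLT-mean-0-multinomial}, which yields $\dtv(\bM,\calN_D(\tilde\mu,\tilde\Sigma))=O(K^{71/20}/\BIG^{1/20})$. In parallel, Lemma~\ref{lem:dG-to-comb-signed-PBDs} furnishes the independent signed PBDs $\{\bW_{a,b}\}_{1\le a\le b\le K}$ (each either trivial or of variance at least $\BIG^{1/4}/K^{1/4}$) and the vector $(\bS_1,\dots,\bS_K)$ defined by (\ref{eq:def-Sa}) with $\dtv((\bS_1,\dots,\bS_K),\calN_D(\tilde\mu,\tilde\Sigma))=O(K^{17/8}/\BIG^{1/8})$. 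A comparison of exponents shows the former bound dominates the latter, so the triangle inequality plus the data processing inequality for total variation distance give
\[
\dtv\Bigl(\bS',\sum_{a=1}^K q_a\bS_a\Bigr) \;=\; O(K^{71/20}/\BIG^{1/20}).
\]

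The remainder is a purely algebraic rearrangement of $\sum_{a=1}^K q_a\bS_a$ into the form of $\bB$ in (\ref{eq:def-bB}). Substituting the definition (\ref{eq:def-Sa}) of each $\bS_a$ and collecting, for every pair $a<b$, the coefficient of $\bW_{a,b}$ from both $\bS_b$ (which contributes $q_b$ via its $\sum_{c<b}$ summand at $c=a$) and $\bS_a$ (which contributes $\sign(\tilde\Sigma_{a,b})\cdot q_a$ via its $\sum_{a<d}$ summand at $d=b$) produces
\[
\sum_{a=1}^K q_a\bW_{a,a}+\sum_{1\le a<b\le K}\bigl(q_b+\sign(\tilde\Sigma_{a,b})\cdot q_a\bigr)\bW_{a,b}.
\]
The diagonal PBDs $\bW_{a,a}$ all fall in Case~(I) of Lemma~\ref{lem:dG-to-comb-signed-PBDs} because $\sigma_{a,a}^2\ge\sigma^2\gg\sigma^{1/2}$, so each is a genuine signed PBD with variance at least $\BIG^{1/4}/K^{1/4}$. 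For each off-diagonal pair $(a,b)$, we place $(a,b)\in A$ and set $\tau_{a,b}=\sign(\tilde\Sigma_{a,b})$ whenever $\bW_{a,b}$ is non-trivial (giving the required variance by Lemma~\ref{lem:dG-to-comb-signed-PBDs}); the trivial $\bW_{a,b}$'s (Case~(II)) are constants $\round{\mu_{a,b}}$, so their contributions $(q_b+\sign(\tilde\Sigma_{a,b})\cdot q_a)\cdot\round{\mu_{a,b}}$ are fixed integers that we absorb into a single offset $V'$. Independence of the surviving $\bW$'s is inherited from Lemma~\ref{lem:dG-to-comb-signed-PBDs}, and the assembled distribution matches $\bB$ exactly.

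There is essentially no real obstacle here: the lemma is a bookkeeping consolidation of the two preceding results. The only items requiring a sentence of justification are the dominance of the $K^{71/20}/\BIG^{1/20}$ term over $K^{17/8}/\BIG^{1/8}$ (the ratio simplifies to $K^{57/40}\cdot\BIG^{3/40}\ge 1$ for all $K,\BIG\ge 1$), and the observation that convolving with a constant random variable is literally a translation, which is what lets us legitimately fold the trivial off-diagonal terms into the single shift $V'$.
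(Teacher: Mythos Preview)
Your proposal is correct and follows essentially the same approach as the paper: combine Lemmas~\ref{lem:CLT-mean-0-multinomial} and~\ref{lem:dG-to-comb-signed-PBDs} via the triangle inequality, push through the linear map $(q_1,\dots,q_K)\cdot(\,\cdot\,)$ using the data processing inequality, expand $\sum_a q_a\bS_a$ and regroup terms by $\bW_{a,b}$, and absorb the trivial off-diagonal $\bW_{a,b}$'s into the constant shift $V'$. Your additional justifications (the explicit dominance check on the error exponents and the remark about constants being absorbed as translations) are sound and slightly more detailed than what the paper writes.
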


\subsection{A useful limit theorem:  Simplifying by coalescing multiple large-variance scaled PBDs into one} \label{sec:mix}

Lemma~\ref{lem:close-to-pure-signed-sicsirv} leads to consideration of
distributions of the form $\bT = r_1 \bT_1 + \cdots + r_D \bT_D$, where $\bT_1, \dots, \bT_D$ are independent signed large-variance PBDs.  Let us consider for a moment the
case that $D = 2$, so that $\bT = r_1 \bT_1 + r_2 \bT_2$.  (As we will see, to handle the case of general $D$ it suffices to consider
this case.)  Since
$\mathrm{gcd}(r_1,r_2)$ divides every outcome of $\bT$, we may assume that
$\mathrm{gcd}(r_1,r_2) = 1$ essentially without loss of generality.
\ignore{As illustrated in
Figure~\ref{f:three-sicsirvs}(d),} When $\mathrm{gcd}(r_1,r_2) = 1$, if the variance of $\bT_2$ is large enough
relative to $r_1$, then the gaps between multiples of $r_1$ are filled in, and
$\bT$ is closely approximated by a single PBD.  This is helpful for learning, because
it means that cases in which $\Var[\bT_2]$ is this large are subsumed by cases in which there
are fewer PBDs.  This phenomenon is the subject of Lemma~\ref{l:mix}.

\begin{lemma}
\label{l:mix}
Let $\bT = r_1 \bT_1 + \cdots + r_D \bT_D$ where $\bT_1, \dots, \bT_D$ are independent signed PBDs and $r_1,\dots,r_D$ are nonzero integers with the following properties:
\begin{itemize}

\item $\Var[r_1 \bT_1] \geq {\frac 1 D} \Var[\bT]$;

\item For each $a \in \{2,\dots,D\}$ we have $\Var[\bT_a] \geq \max\{\sigma_{\min}^2,\left(\frac {r_1} {\eps'}\right)^2\},$ where $\sigma_{\min}^2 \geq (1/\eps')^8$.

\end{itemize}
Let $r' = \gcd(r_1,\dots,r_D).$
Then there is a signed PBD $\bT'$ with $\Var[r' \bT'] = \Var[\bT]$ such that
\[
\dtv \left(\bT,\; r' \bT' \right)
   \leq O(D \eps').
\]
\end{lemma}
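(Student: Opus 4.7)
The plan is to reduce to the case $\gcd(r_1,\dots,r_D) = 1$ (both $\bT$ and $r'\bT'$ are supported on multiples of $r'$, and the variance hypothesis $\Var[\bT_a] \geq r_1/\eps'$ only strengthens after dividing by $r'$) and then to show $\bT$ is $O(D\eps')$-close in total variation to the discretized Gaussian $G := \round{\calN(\E[\bT], \Var[\bT])}$. Since Theorem~\ref{thm:CGS} gives $\dtv(\bT', G) = O(1/\sigma_\bT) \leq O(\eps')$ for any signed PBD $\bT'$ with matching mean and variance (whose existence is easy to verify by an explicit PBD construction, and the required variance bound on $\sigma_\bT$ follows from $\Var[r_1\bT_1] \geq \Var[\bT]/D$ together with the hypothesis $\sigma_{\min}^2 \geq (1/\eps')^8$), the triangle inequality then yields the conclusion.

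To bound $\dtv(\bT, G)$, I would combine Kolmogorov closeness with scale-$1$ shift-invariance. \emph{Kolmogorov closeness:} Theorem~\ref{thm:CGS} gives that each $\bT_a$ is $O(1/\sigma_a)$-close in total variation to $\round{\calN(\E[\bT_a],\sigma_a^2)}$. Using Lemma~\ref{lem:B5plus} to commute the scaling by $r_a$ with rounding and then summing via independence yields $\dk(\bT, G) \leq O(D/\sigma_{\min}) \leq O(D (\eps')^4)$. \emph{Shift invariance:} by Fact~\ref{fact:good-shift-invariance} and Fact~\ref{fact:SI}(1), $\bT$ is $O(1/\sigma_a)$-shift-invariant at scale $r_a$ for each $a$. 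Since $\gcd(r_1,\dots,r_D) = 1$, an iterative Bezout argument represents $1 = \sum_a c_a r_a$ with integer coefficients $c_a$; the triangle inequality for the shift operator then gives $\dshift{1}(\bT) \leq \sum_a |c_a|/\sigma_a$. The variance hypothesis $\sigma_a^2 \geq r_1/\eps'$ for $a \geq 2$ controls the Bezout coefficients so that this total cost is small, corresponding to the intuition that samples from the $\bT_a$'s with $a \geq 2$ ``fill in the gaps'' between consecutive multiples of $r_1$ in the support of $r_1\bT_1$. Combining these two ingredients via Lemma~\ref{lem:GMRZ} (applied after a small auxiliary smoothing by a short uniform distribution, so that both $\bT$ and $G$ are realized as convolutions with a shift-invariant piece) upgrades the Kolmogorov distance bound to $\dtv(\bT, G) = O(D\eps')$.

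The main technical obstacle is the Bezout construction. A naive pairwise Bezout $1 = c_1 r_1 + c_2 r_2$ gives $|c_2| \leq r_1$, so the contribution $|c_2|/\sigma_2 \leq \sqrt{r_1\eps'}$ is too large when $r_1$ itself is large. The fix is an iterative ``spreading'' of the unit shift across all of $\bT_2,\dots,\bT_D$ that leverages their combined large variances rather than relying on any single one; the hypothesis $\sigma_a^2 \geq r_1/\eps'$ for every $a \geq 2$ is precisely what makes this work. This is the ``short-range versus long-range shift invariance'' perspective emphasized in the informal overview preceding the lemma, where $r_1$ plays the role of the long scale and the $r_a$ for $a \geq 2$ play the role of the short scales used to achieve fine-grained shift invariance at scale $1$.
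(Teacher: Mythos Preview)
Your high-level architecture---reduce to $r'=1$, establish Kolmogorov closeness to a (discretized) Gaussian, and upgrade to total variation via Lemma~\ref{lem:GMRZ} after an auxiliary uniform smoothing---is the same as the paper's. But the shift-invariance step, as you set it up, does not close.

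You plan to bound $\dshift{1}(\bT)$ by writing $1=\sum_a c_a r_a$ and summing $|c_a|/\sigma_a$, and then to control $\dtv(\bT,\bT+\bU_d)$ by $d\cdot\dshift{1}(\bT)$. Even granting the optimistic estimate $\dshift{1}(\bT)=O(\eps')$ (already delicate: the stated hypothesis is $\sigma_a^2\ge r_1/\eps'$, which only yields $r_1/\sigma_a\le\sqrt{r_1\eps'}$), this forces $d=O(1)$. But Lemma~\ref{lem:GMRZ} with $\bX=\bU_d$ then gives $\dtv(\bT+\bU_d,G+\bU_d)\lesssim\sqrt{(1/\sigma_{\min})\cdot b/d}$ with $b$ of order $\sigma_\bT/\eps'$, and since $\sigma_\bT\ge\sigma_{\min}$ this is $\Omega(1)$. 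The two constraints on $d$ are incompatible. Your proposed fix of ``spreading'' the unit shift across $\bT_2,\dots,\bT_D$ cannot repair this: the bottleneck already appears for $D=2$, and the hypotheses on $\sigma_3,\dots,\sigma_D$ are no stronger than on $\sigma_2$.

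The paper's resolution differs in two ways. First, it reduces to $D=2$ by induction, merging $r_1\bT_1+r_2\bT_2$ into $\gcd(r_1,r_2)\bT_{12}$ with the same variance and recursing. Second---and this is the idea you are missing---it does \emph{not} realize a shift by $d$ as $d$ unit shifts. Lemma~\ref{l:key} instead writes any shift by $d$ as roughly $d/r_1$ shifts at the long scale $r_1$ (each costing $O(1/\sigma_1)$) plus a \emph{single} B\'ezout correction of cost $O(r_1/\sigma_2+r_2/\sigma_1)$. This yields
\[
\dtv(\bT,\bT+\bU_d)\;\le\;O(\eps')+O\!\left(\frac{d}{r_1\sigma_1}\right),
\]
so the $d$-dependent term has $r_1\sigma_1\asymp\sigma_\bT$ in the denominator rather than $1/\dshift{1}(\bT)$. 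That buys room for $d$ as large as $\eps'\,\sigma_\bT$, which is precisely what the Kolmogorov-to-TV conversion in Lemma~\ref{lem:GMRZ} needs; the paper then optimizes $d$ to balance the two contributions.
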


\begin{proof}
{\bf Reduction to the case that $D = 2$.}  We begin by showing that the case $D=2$ implies the general case by induction, and thus it suffices to prove the $D=2$ case.   Let us suppose that we have proved the lemma in the $D=2$ case and in the $D=t-1$ case; we now use these to prove the $D=t$ case.  By the $D=2$ case, there is an absolute constant $C>0$ and a signed PBD $\bT_{12}$ such that\ignore{, for all small enough
$\epsilon'$,} we have
\begin{equation}
\label{e:first.step}
\dtv (r_1 \bT_1 + r_2 \bT_2, \gcd(r_1, r_2) \bT_{12} ) \leq C \epsilon' \mbox{ and }
   \Var[r_1 \bT_1 + r_2 \bT_2] = \Var[\gcd(r_1, r_2) \bT_{12}].
\end{equation}

Since for all $a = 3,...,t$ we have
\[
\Var[\bT_a] \geq  \left( \frac{ r_1 }{\epsilon'} \right)^2 \geq \left( \frac{ \gcd(r_1, r_2) }{\epsilon'} \right)^2,
\]
the $D=t-1$ case implies that, if $\bT_{12}, \bT_3,..., \bT_{T}$ are mutually independent, then there is a PBD $\bT'$ such that
\[
\dtv\left(\gcd(r_1, r_2) \bT_{12} + \sum_{a = 3}^{t} r_a \bT_a,\;
       r' \bT'\right) \leq C(t-1) \epsilon'
\]
and
\[
\Var[r' \bT'] = \Var\left[\gcd(r_1, r_2) \bT_{12} + \sum_{a = 3}^{t} r_a \bT_a\right]
 = \Var[\gcd(r_1, r_2) \bT_{12}] + \Var\left[\sum_{a = 3}^{t} r_a \bT_a\right],
\]
which, combined with (\ref{e:first.step}), completes the proof of the $D=t$ case.
We thus subsequently focus on the $D=2$ case.

\medskip \noindent
{\bf Reduction to the case that $r' = 1$.} Next, we note that we may assume without loss of generality that
$r' = \gcd(r_1, r_2) = 1$, 
since dividing each $r_a$ by $r'$ scales down
$\Var[r_1 \bT_1 + r_2 \bT_2]$ by $(r')^2$ and
$
\dtv \left(r_1 \bT_1 + r_2 \bT_2,\; r' \bT' \right)$ is easily seen to equal
$\dtv \left((r_1/r') \bT_1 + (r_2/r') \bT_2,\; \bT' \right).$

\medskip \noindent
{\bf Main proof of the $D=2,r'=1$ case.}
\newcommand{\tT}{\tilde{\bT}}
Recall that $\bT = r_1 \bT_1 + r_2 \bT_2$.
Let $\mu$ denote 
$\E[\bT]$, and let $\sigma^2$ denote $\Var[\bT]$.

As in \cite{GMRZ11}, we will use shift-invariance to go from bounds on
$\dk$ to bounds on $\dtv$.  
%
Our first step is to give a bound on $\dk$.  For this we will use
the following well-known Berry-Esseen-like inequality, which can be shown using Lemma~\ref{lem:translated Poisson approximation}, Theorem~\ref{thm:CGS} and Gaussian anti-concentration:
\begin{lemma}
\label{l:be}
There is a universal constant $c$ such 
\[
\dk(TP(\mu,\sigma^2), N(\mu, \sigma^2)) \leq \frac{c}{\sigma}
\]
for all $\mu$ and all $\sigma^2 > 0.$
\end{lemma}

Now we are ready for our bound on the Kolmogorov distance:
\begin{lemma}
\label{l:kolm}
$\dk(\bT, TP(\mu,\sigma^2))
  \leq O(1/\sigma_{\min}).$
\end{lemma}
{\bf Proof:} 
Lemma~\ref{lem:translated Poisson approximation}
implies that for $a = 1,2$ we have
\[
\dk(\bT_a, TP(\mu(\bT_a), \sigma(\bT_a)^2))
  \leq O(1/\sigma_{\min}),
\]
which directly implies
\[
\dk(r_a \bT_a, r_a TP(\mu(\bT_a), \sigma(\bT_a)^2))
  \leq O(1/\sigma_{\min}).
\]
Lemma~\ref{l:be} and the triangle inequality then give
\[
\dk(r_a \bT_a, N(r_a \mu(\bT_a), r_a^2 \sigma(\bT_a)^2))
  \leq O(1/\sigma_{\min}),
\]
and applying Lemma~\ref{l:be} and the triangle inequality again, we
get
\[
\dk(r_a \bT_a, TP(r_a \mu(\bT_a), r_a^2 \sigma(\bT_a)^2))
  \leq O(1/\sigma_{\min}).
\]
The lemma follows from the fact that
$\dk(\bX+\bY,\bX'+\bY') \leq \dk(\bX,\bX') + \dk(\bY,\bY')$ when $\bX,\bY$
are independent and $\bX',\bY'$ are independent.
\qed

\medskip

Facts~\ref{fact:SI} and~\ref{fact:good-shift-invariance} together
imply that $\bT$ is $O(1/\sigma_{\min})$-shift invariant at scales
$r_1$ and $r_2$, but, to apply Lemma~\ref{lem:GMRZ}, we need it to be
shift-invariant at a smaller scale.  Very roughly, we will do this by
effecting a small shift using a few shifts with steps with sizes in
$\{ r_1, r_2 \}$.  The following generalization of B\'{e}zout's
Identity starts to analyze our ability to do this.

\begin{lemma}\label{l:CRT}
Given any integer $0 \le u < r_1 \cdot r_2$, there are integers $v_1, v_2$ such that $u = v_1 \cdot r_1 + v_2 \cdot r_2$ with $|v_1|<r_2, |v_2|<r_1$.
\end{lemma}
\begin{proof}
By B\'{e}zout's Identity, there exist $x_1$ and $x_2$ with $|x_1| < r_2$
and $|x_2| < r_1$ such that
\[
x_1 r_1 + x_2 r_2 = 1.
\]
Let $y_1$ be obtained by adding $r_2$ to $x_1$
if $x_1$ is negative, and otherwise
just taking $x_1$, and define $y_2$ similarly; i.e.,
$y_1 = x_1 + r_2 \mathbf{1}[x_1 < 0] $ and $y_2 = x_2 + r_1 \mathbf{1}[x_2 < 0]$.  Then
\[
y_1 r_1 + y_2 r_2 = 1 \!\!\mod (r_1 r_2)
\]
and $0 \leq y_1 < r_2$ and $0 \leq y_2 < r_1$.  Thus
\[
u y_1 r_1 + u y_2 r_2 = u \!\!\mod (r_1 r_2).
\]
This in turn implies that
\[
u = u y_2 r_2 \!\!\mod r_1 \mbox{ and } u = u y_1 r_1 \!\!\mod r_2,
\]
so if $z_1 \in \{0,1,\dots,r_2-1\}$ and $z_2 \in \{0,1,\dots,r_1-1\}$ satisfy
 $z_1 = u y_1 \!\!\mod r_2$ and $z_2 = u y_2 \!\!\mod r_1$, we get
\[
(z_1 r_1 + z_2 r_2) = u \!\!\mod r_1
\]
and
\[
(z_1 r_1 + z_2 r_2) = u \!\!\mod r_1.
\]
By the Chinese Remainder Theorem, $z_1 r_1 + z_2 r_2 = u \!\!\mod (r_1 r_2)$.
Furthermore, as $0 \leq z_1 < r_2$ and $0 \leq z_2 < r_1$, we have
$0 \leq z_1 r_1 + z_2 r_2 < 2 r_1 r_2$.  If $z_1 r_1 + z_2 r_2 < r_1 r_2$, then
we are done; we can set $v_1 = z_1$ and $v_2 = z_2$.  
If not, either $z_1 > 0$ or $z_2 > 0$.  If $z_1 > 0$,
setting $v_1 = z_1 - r_2$ and $v_2 = z_2$ makes 
$z_1 r_1 + z_2 r_2 = z_1 r_1 + z_2 r_2 - r_1 r_2 = u$, and the corresponding
modification of $z_2$ works if $z_2 > 0$.
\end{proof}

Armed with Lemma~\ref{l:CRT}, we are now ready to work on the
``local'' shift-invariance of $\bT$.  The following
more general lemma will do the job.
\begin{lemma}
\label{l:key}
Let $\bX,\bY$ be independent integer random variables where $\bX$ is $\alpha$-shift-invariant at scale $1$ and $\bY$ is $\beta$-shift-invariant at scale $1$. Let $\bZ =
r_1 \cdot \bX + r_2 \cdot \bY$. Then for any positive integer $d$ we have
$\dtv(\bZ, \bZ+d) \leq r_2 \alpha + r_1 \beta
+ \min \left\{\frac{d}{r_1} \alpha, \frac{d}{r_2} \beta\right\}.$
\end{lemma}
\begin{proof}
Note that $d= s \cdot r_1 \cdot r_2 + z$ where $0 \le z < r_1\cdot r_2$, $0 \le s \le  d/(r_1 \cdot r_2) $, and $s$ is an integer.  By using Lemma~\ref{l:CRT}, we have that $d = s \cdot r_1 \cdot r_2 + v_1 \cdot r_1 + v_2 \cdot r_2$ where $|v_1|<r_2$ and $|v_2|<r_1$, so $d = (s \cdot r_2 + v_1) \cdot r_1 + v_2 \cdot r_2$. Note that
$$
|s \cdot r_2 + v_1| \le |v_1| + s \cdot r_2 \le (r_2-1) + d/r_1 .
$$
Thus, $d = t_1 \cdot r_1 + t_2 \cdot r_2$ where $t_1, t_2$ are integers and $|t_1| \le r_2-1 + d/r_1$ and $|t_2| \le (r_1-1)$.
Hence we have
\begin{eqnarray*}
\dtv(\bZ, \bZ+d) &=& \dtv(r_1 \cdot \bX + r_2 \cdot \bY , r_1 \cdot \bX + r_2 \cdot \bY + t_1 \cdot r_1 + t_2 \cdot r_2) \\
&\le& |t_1| \cdot \alpha + |t_2| \cdot \beta \\
&\leq& \left(\frac{d}{r_1} + r_2\right) \cdot  \alpha + r_1 \cdot \beta.
\end{eqnarray*}
By swapping the roles of $r_1$ and $r_2$ in the above analysis, we get the stated claim.
\end{proof}

Now we have everything we need to prove 
Lemma~\ref{l:mix} in the case that
$D = 2$.

\newcommand{\TPt}{\bV}
Let $\TPt = TP(\mu,\sigma^2)$.
Let $\bU_{d}$ denote the uniform distribution over $\{0,1,\dots,d-1\}$, where $d$ will be chosen later.  We will
bound $\dtv(\bT + \bU_{d}, \TPt + \bU_{d})$, $\dtv(\bT, \bT + \bU_{d})$,
and $\dtv(\TPt, \TPt + \bU_{d})$, and apply the triangle inequality via
\[
\dtv(\bT,\TPt)
 \leq \dtv(\bT,\bT+\bU_{d}) + \dtv(\bT+\bU_{d},\TPt)
 \leq \dtv(\bT,\bT+\bU_{d}) + \dtv(\bT+\bU_{d},\TPt+\bU_{d}) + \dtv(\TPt, \TPt+\bU_{d}).
\]

First, recalling that $\bT = r_1 \cdot \bT_1 + r_2 \cdot \bT_2$ and that (by Fact \ref{fact:good-shift-invariance})
$\bT_1$ is $O(1/\sigma(\bT_1))$-shift-invariant at scale 1 and 
$\bT_2$ is $O(1/\sigma(\bT_2))$-shift-invariant at scale 1, we have that

\begin{eqnarray*}
\dtv(\bT,\bT+\bU_{d}) &\le& \mathbf{E}_{x \sim \bU_{d}} \left[ \dtv (\bT,\bT+x)                    \right] \\
 & \le & \max_{x \in \supp(\bU_{d})}\dtv (\bT,\bT+x)  \\
& \leq & O\left( \frac{r_1}{\sigma(\bT_2)} + \frac{r_2}{\sigma(\bT_1)} + \min \left\{ \frac{d}{r_1  \sigma(\bT_1)}  , \frac{d}{r_2  \sigma(\bT_2)}  \right\} \right)
\mbox{\hspace{25pt}(by Lemma~\ref{l:key})} \\
&\leq&  O\left(\frac{r_1}{\sigma(\bT_2)} + \frac{d}{r_1 \sigma(\bT_1)} \right)
              \mbox{\hspace{25pt}(since $r_1 \sigma(\bT_1) > r_2 \sigma(\bT_2)$)} \\
&\leq& O(\epsilon') +  O\left(\frac{d}{r_1 \sigma(\bT_1)}\right),
\end{eqnarray*}
since $\sigma(\bT_2) > r_1 /\eps'$.

Next, Fact~\ref{fact:good-shift-invariance} implies that
$\TPt$ is $O\left( \frac{1}{r_1 \sigma(\bT_1)} \right)$-shift-invariant,
so repeated application of the triangle inequality gives
\[
\dtv(\TPt,\TPt+\bU_{d}) \leq O\left(\frac{d }{r_1 \sigma(\bT_1)} \right).
\]

Finally, we want to bound $\dtv(\bT + \bU_{d}, \TPt + \bU_{d})$.
Observe that $\Pr[|\bT-\mu| < \sigma/\eps']$ and $\Pr[|\TPt
- \E[\TPt]| \leq \sigma/\eps']$ are both $2^{-\poly(1/\eps')}.$ Hence
applying Lemma~\ref{lem:GMRZ} and recalling that $r_1 \sigma(\bT_1) >
r_2 \sigma(\bT_2)$, we get
\[
\dtv(\bT + \bU_{d}, \TPt + \bU_{d})
 \leq o(\epsilon') + O \left( {\sqrt{ (1/\sigma_{\min}) \cdot ( (r_1 \sigma(\bT_1))/\epsilon') \cdot (1/d) }} \right).
\]
Combining our bounds, we get that
\[
\dtv(\bT,\TPt)
\leq  O(\epsilon') + O \left( {\sqrt{ (1/\sigma_{\min})
                                    \cdot ( (r_1 \sigma(\bT_1))/\epsilon')
                                    \cdot (1/d) }}
                      + \frac{d }{r_1 \sigma(\bT_1)} \right).
\]
Taking $d = r_1 \sigma(\bT_1) /(\sigma_{\min}\epsilon')^{1/3}$, we get
\[
\dtv(\bT,\TPt)
  \leq O(\epsilon') + 1/(\sigma_{\min}\epsilon')^{1/3} = O(\epsilon')
\]
since $(1/\sigma_{\min})^2 > (1/\epsilon')^8$.

\ignore{\rnote{Changed this slightly from ``Finally, if $\bT_{\MIX'}$ is a PBD with the same expectation and variance as $\TPt$,'' since it's not immediately clear to me that there is a signed PBD that exactly matches the mean and variance.  (Probably this can be shown easily with a small calculation but I thought we might as well skip the calculation if possible.)}}
Finally, let $\bT'$ be a signed PBD that has $|\E[\bT'] - \mu|\leq 1/2$ and $\Var[\bT']=\sigma^2$.  (The existence of such a signed PBD can be shown as in (I) in the proof of Lemma~\ref{lem:dG-to-comb-signed-PBDs}.)  Lemmas~\ref{lem:translated Poisson approximation} and~
\ref{lem: variation distance between translated Poisson distributions}
imply that
$\dtv (\bT_{\MIX'}, \TPt) \leq 1/\sigma(\TPt) \leq \epsilon'$, completing the proof.
\ignore{
Finally, if $\bT_{\MIX'}$ is a PBD with the same expectation and variance as $\TPt$,
Lemma~\ref{lem:translated Poisson approximation} implies that
$\dtv (\bT', \TPt) \leq 1/\sigma \leq \epsilon'$, completing the proof.
}
\end{proof}

\ignore{
}

\newcommand{\Mix}{\mathrm{Mix}}
\newcommand{\cH}{{\cal H}}

\section{The learning result:  Learning when $|\supportset| \geq 4$} \label{sec:learn}

With the kernel-based learning results from Section~\ref{sec:kernel} and the structural results from Section~\ref{sec:structural} in hand, we are now ready to learn a distribution $\bS^*$
that is $c\epsilon$-close to a distribution
$\bS = \bS' + V$, where $\bS'$ is described in Section~\ref{sec:setup-upper-bound}.  We give two distinct learning algorithms, one for each of two mutually exclusive cases.  The overall learning algorithm works by running both algorithms and using the hypothesis selection procedure, Proposition \ref{prop:log-cover-size}, to construct one final hypothesis.

The high-level idea is as follows. In Section~\ref{sec:allsmall} we first easily handle a special case in which all the $c_{q_a}$ values are ``small,'' essentially using a brute-force algorithm which is not too inefficient since all $c_{q_a}$'s are small.  We then turn to the remaining general case, which is that some $c_{q_a}$ are large while others may be small.

The idea of how we handle this general case is as follows. First, via an analysis in the spirit of the ``Light-Heavy Experiment'' from  \cite{DDOST13}, we approximate the distribution $\bS' + V$ as a sum of two independent distributions
$\bS_{\mathrm{light}} + \bS_{\mathrm{heavy}}$
where intuitively $\bS_{\mathrm{light}}$ has ``small support'' and $\bS_{\mathrm{heavy}}$ is a 0-moded 
$\supportset$-sum
supported on elements all of which have large weight
(this is made precise in Lemma~\ref{lem:light-heavy}).
Since $\bS_{\mathrm{light}}$ has small support, it is helpful to think of 
$\bS_{\mathrm{light}} + \bS_{\mathrm{heavy}}$
 as a mixture of shifts of
$\bS_{\mathrm{heavy}}$.
We then use structural results from Section~\ref{sec:structural} to approximate this distribution in turn by a mixture of not-too-many shifts of
%
a weighted sum of signed PBDs,
whose component independent PBDs satisfy a certain technical condition on their variances (see Corollary~\ref{corr:light-heavy1}).  Finally, we exploit the kernel-based learning tools developed in Section \ref{sec:kernel} to give an efficient learning algorithm for this mixture distribution.  Very roughly speaking, the final $\log \log a_k$ sample complexity dependence (ignoring other parameters such as $\eps$ and $k$) comes from making $O(\log a_k)$ many ``guesses'' for parameters (variances) of the 
weighted sum of signed PBDs;
this many guesses suffice because of the technical condition alluded to above.

We now proceed to the actual analysis.
Let us reorder the sequence $q_1, \ldots, q_K$ so that 
$c_{q_1} \le \ldots \le c_{q_K}$. 
Let us now define the sequence $t_1, \ldots, t_K$ as $t_a= (1/\epsilon)^{2^a}$. (For intuition on the conceptual role of the $t_i$'s, the reader may find it helpful to review the discussion given in the ``Our analysis'' subsection of Section~\ref{sec:Aequalsk}.) Define the ``largeness index" of the sequence $c_{q_1} \le \ldots \le c_{q_K}$ as the minimum $\ell \in [K]$ such that $c_{q_\ell} > t_\ell $, and let $\ell_0$ denote this value.   If there is no $\ell \in [K]$ such that  $c_{q_\ell} > t_\ell$, then we set $\ell_0  = K+1$. 

We first deal with the easy special case that $\ell_0 = K+1$ and then turn to the main case.  

\subsection{Learning when $\ell_0 = K+1$} \label{sec:allsmall}
Intuitively, in this case all of $c_{q_1},\dots,c_{q_K}$ are ``not too large'' and we can learn via brute force.  More precisely,
since each $c_{q_a} \leq 1/\eps^{2^K}$, in a draw from $\bS'$ the expected number of random variables $\bX'_1,\dots,\bX'_N$ that take a nonzero value is at most $K/\eps^{2^K}$, and a Chernoff bound implies that in a draw from 
$\bS'$ we have $\Pr[$more than $\poly(K/\eps^{2^K})$ of the $\bX'_i$'s take a nonzero value$] \leq \eps.$  Note that for any $M$, there are at most $M^{O(K)}$ possible outcomes for $\bS = \bS' + V$ that correspond to having at most $M$ of the $\bX'_i$'s take a nonzero value. Thus it follows that in this case the random variable $\bS'$ (and hence $\bS$) is $\eps$-essentially supported on a set of size at most $M^{O(K)} = (K/\eps^{2^K})^{O(K)}.$  
Thus $\bS^*$ is $O(\epsilon)$-essentially supported on
a set of the same size.
Hence the algorithm of Fact \ref{fact:learn-sparse-ess-support} can be used to learn $\bS^*$ to accuracy $O(\eps)$ in time $\poly(1/\eps^{O(2^K)})= \poly(1/\eps^{2^{O(k^2)}}).$

\ignore{
\begin{proposition}
If $\ell_0 = K+1$, then there is a set $A$ of size $M = (O(t_K))^{K}= O(1/\eps^{K2^K})$ such that $\bS'$ is $\eps$-essentially-supported on $A$.
\end{proposition}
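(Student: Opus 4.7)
\smallskip

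The plan is to leverage the fact that $\bS'$ depends on $\bX'_1,\dots,\bX'_N$ only through the aggregate counts of each nonzero outcome, together with a simple concentration argument to bound these counts. For each $a \in [K]$, define the random variables $\bN_a^+ = |\{ i \in [N] : \bX'_i = q_a \}|$ and $\bN_a^- = |\{ i \in [N] : \bX'_i = -q_a \}|$. Then $\bS' = \sum_{a=1}^K q_a\,(\bN_a^+ - \bN_a^-)$, so the value of $\bS'$ is completely determined by the $K$-tuple $(\bN_1^+ - \bN_1^-,\dots,\bN_K^+ - \bN_K^-) \in \Z^K$.

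Next, I will control the magnitudes of the $\bN_a^\pm$'s. By definition of $c_{q_a}$ in \eqref{eq:cqa} and the definition of $\bX'_i$, we have $\E[\bN_a^+ + \bN_a^-] = c_{q_a}$. Under the assumption $\ell_0 = K+1$, by the ordering of the $c_{q_a}$'s we have $c_{q_a} \le t_a \le t_K$ for all $a \in [K]$. Set $M' = K t_K/\eps$. By Markov's inequality, for each $a \in [K]$,
\[
\Pr[\bN_a^+ + \bN_a^- \geq M'] \leq \frac{c_{q_a}}{M'} \leq \frac{\eps}{K},
\]
and a union bound over $a \in [K]$ gives that, with probability at least $1-\eps$, the event $\mathcal{E} := \{\forall a \in [K],\ \bN_a^+ + \bN_a^- < M'\}$ holds.

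Conditioned on $\mathcal{E}$, each coordinate $\bN_a^+ - \bN_a^-$ lies in the integer interval $(-M', M')$, so the tuple $(\bN_1^+ - \bN_1^-,\dots,\bN_K^+ - \bN_K^-)$ takes on at most $(2M' + 1)^K$ distinct values. Hence $\bS'$, which is a function of this tuple, takes on at most $(2M'+1)^K$ distinct values on the event $\mathcal{E}$. Letting $A$ denote this set of values, we have $\bS'(A) \geq \Pr[\mathcal{E}] \geq 1 - \eps$ and $|A| \leq (2M' + 1)^K = (2Kt_K/\eps + 1)^K = (O(t_K))^K$, where the last equality absorbs the polynomial factor $K/\eps$ into the constant inside $O(\cdot)$ (using that $t_K = (1/\eps)^{2^K}$ dominates such factors).

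There is no real obstacle here: the argument is essentially a Markov-plus-counting bound. The only subtlety worth mentioning is the choice to count distinct values of $\bS'$ via the net-count vector $(\bN_a^+ - \bN_a^-)_{a \in [K]}$ rather than, say, by multisets of individual nonzero outcomes; the former gives the sharper bound $(O(t_K))^K$ demanded by the proposition, whereas a multiset count over the $2K$-element set $\{\pm q_1,\dots,\pm q_K\}$ would give a weaker $(O(t_K))^{2K}$ bound.
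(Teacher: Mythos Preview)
Your proof is correct and follows essentially the same structure as the paper's: bound the number of nonzero summands with high probability, then count the resulting possible values of $\bS'$ via the net-count vector $(\bN_a^+ - \bN_a^-)_a$. The only real difference is tactical: the paper applies a Chernoff--Hoeffding bound to the \emph{total} number $\bW = \sum_i \mathbf{1}[\bX'_i \neq 0]$ of nonzero summands (using $\E[\bW] \le \sum_a c_{q_a} = O(t_K)$), obtaining $\bW \le t_K + O(\sqrt{t_K \log(1/\eps)})$ with probability $\ge 1-\eps$, whereas you apply Markov to each coordinate count $\bN_a^+ + \bN_a^-$ separately and union-bound. Your cutoff $M' = Kt_K/\eps$ is larger than the paper's $O(t_K)$, but since $t_K = (1/\eps)^{2^K}$ swallows any fixed $\poly(K,1/\eps)$ factor inside the $O(\cdot)$, both routes land on $|A| = (O(t_K))^K$. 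Your Markov argument is arguably cleaner here since the sharper Chernoff tail buys nothing at this level of bookkeeping.
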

\begin{proof}
For every $\bX_i$, we define the indicator random variable $\bW_i$ such that $\bW_i =1$ if and only if $\bX_i \not =0$. Note that for $\bW$ defined as $\bW = \bW_1 + \ldots  + \bW_n$, both $\mathbf{E}[\bW]$ and $\Var[\bW]$ is bounded by $O(t_1 + \ldots + t_K) = O(t_K)$. 
By applying the Chernoff-Hoeffding bound, 
Thus, 
$\Pr[\bW \le t_K + O(\sqrt{t_K \cdot \log (1/\epsilon)})] \geq 1- \epsilon$.  Let us call this event  as $E$.  Define $M$ as the set $$ A = \big\{ \sum_{a=1}^K \alpha_a \cdot q_a : 0 \le \alpha_a \le O(t_K) \big\}. $$
Note that $|A| = M = O(t_K)^K$. Note that if $E$ occurs, then $\bS' \in A$. As $\Pr[E] \ge 1- \epsilon$, this concludes the proof.  
%
\end{proof}
}

\subsection{Learning when $\ell_0 \le K$.} \label{sec:notallsmall}
Now we turn to the main case, which is when $\ell_0 \le K$.  The following lemma is an important component of our analysis of this case. Roughly speaking, it says that $\bS'$ is close to a sum of two independent random variables, one of which ($\bS_{\mathrm{light}}$) has small support and the other of which ($\bS_{\mathrm{heavy}}$) is 
the sum of $0$-moded random variables
that all have large weight.

\begin{lemma}~\label{lem:light-heavy}
Suppose that $\ell_0 \le K$. Then there exists $\tilde{\bS} = \bS_{\mathrm{heavy}} + \bS_{\mathrm{light}}$ such that $\dtv(\tilde{\bS}, \bS') \le O(\epsilon)$ and the following hold:

\begin{enumerate}

\item $\bS_{\mathrm{heavy}}$ and  $\bS_{\mathrm{light}}$ are independent of each other;

\item The random variable $\bS_{\mathrm{light}}$ is $\bS_{\mathrm{light}} = \sum_{1\leq b<\ell_0} q_b \cdot \bS_{b}$ where for each $1 \le b < \ell_0$, $\bS_b$ is supported
on the set 
$[-(1/\epsilon) \cdot t_{\mathrm{cutoff}}, (1/\epsilon) \cdot t_{\mathrm{cutoff}}] \cap \mathbb{Z}$ where  $t_{\mathrm{cutoff}} =  (t_1 + \ldots + t_{\ell_0 - 1})$ and the $\{\bS_b\}$ are not necessarily independent of each other;

\item The random variable $\bS_{\mathrm{heavy}}$ is 
the sum of $0$-moded random variables supported in
$\{0, \pm q_{\ell_0},$ $ \ldots,$ $\pm q_{K}\}$. Further, for all $b \ge \ell_0$, we have 
 $c_{q_b,\mathrm{heavy}}> \frac{t_{\ell_0}}{2}$ where $c_{q_b,\mathrm{heavy}}$ is defined as in Section~\ref{sec:setup-upper-bound} but now with respect to $\bS_{\mathrm{heavy}}$ rather than with respect to $\bS'$.
\end{enumerate}
\end{lemma}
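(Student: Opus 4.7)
The plan is to define independent components $\bS_{\mathrm{light}}$ and $\bS_{\mathrm{heavy}}$ via a ``Light-Heavy Experiment'' in the spirit of \cite{DDOST13}.  For each $i\in[N]$, set $\delta_i:=\Pr[\bX'_i=0]$, $p^L_i:=\sum_{b<\ell_0}(p_{i,b,+}+p_{i,b,-})$, and $p^H_i:=\sum_{b\geq\ell_0}(p_{i,b,+}+p_{i,b,-})$, so that $\delta_i+p^L_i+p^H_i=1$ and $\delta_i\geq 1/K$ by zero-modedness.  I would introduce, for each $i$, an \emph{independent} pair of random variables $(\bX^L_i,\bX^H_i)$: here $\bX^L_i$ is supported in $\{0\}\cup\{\pm q_b:b<\ell_0\}$ with $\Pr[\bX^L_i=\pm q_b]=p_{i,b,\pm}$ and $\Pr[\bX^L_i=0]=1-p^L_i$, while $\bX^H_i$ is defined symmetrically on the heavy-or-zero support with $\Pr[\bX^H_i=0]=1-p^H_i$.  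Taking all of $\{(\bX^L_i,\bX^H_i)\}_{i\in[N]}$ mutually independent, set $\bS_{\mathrm{heavy}}:=\sum_{i=1}^N\bX^H_i$ and $\bS_{\mathrm{light}}^{(0)}:=\sum_{i=1}^N\bX^L_i$, which are independent of one another by construction.

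Next I would verify the structural conditions.  Each $\bX^H_i$ is zero-moded since $1-p^H_i=\delta_i+p^L_i\geq\delta_i\geq p_{i,b,\pm}$ for every $b\geq\ell_0$, and it is supported in $\{0,\pm q_{\ell_0},\ldots,\pm q_K\}$; moreover $c_{q_b,\mathrm{heavy}}=\sum_i p_{i,b,\pm}=c_{q_b}\geq t_{\ell_0}>t_{\ell_0}/2$ for all $b\geq\ell_0$ by the definition of $\ell_0$ and the ordering $c_{q_1}\leq\cdots\leq c_{q_K}$, giving property~3.  For property~2, write $\bS_{\mathrm{light}}^{(0)}=\sum_{b<\ell_0}q_b\,\bN_b$, where $\bN_b:=\#\{i:\bX^L_i=q_b\}-\#\{i:\bX^L_i=-q_b\}$.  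The expected total count of nonzero $\bX^L_i$'s is $\sum_i p^L_i=\sum_{b<\ell_0}c_{q_b}\leq t_{\mathrm{cutoff}}$, so by Markov's inequality this count is at most $t_{\mathrm{cutoff}}/\epsilon$ with probability $\geq 1-O(\epsilon)$; on this ``good event,'' every $|\bN_b|\leq t_{\mathrm{cutoff}}/\epsilon$, as required.  I would then define $\bS_{\mathrm{light}}$ to equal $\bS_{\mathrm{light}}^{(0)}$ on the good event and a point mass at $0$ on the complement; because this conditioning depends only on $\bS_{\mathrm{light}}^{(0)}$, independence from $\bS_{\mathrm{heavy}}$ is preserved, and the modification incurs at most $O(\epsilon)$ total variation error.

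The main obstacle is then proving $\dtv(\tilde\bS,\bS')=O(\epsilon)$ for $\tilde\bS:=\bS_{\mathrm{light}}+\bS_{\mathrm{heavy}}$.  The naive per-index coupling between $\bX'_i$ and $(\bX^L_i,\bX^H_i)$ disagrees on the ``defect'' event that both new variables are simultaneously nonzero (an outcome impossible for $\bX'_i$), which has probability exactly $p^L_ip^H_i$; a union bound over $i$ yields only $\sum_i p^L_ip^H_i\leq\sum_i p^L_i\leq t_{\mathrm{cutoff}}$, far weaker than $\epsilon$.  To close this gap I plan to mirror the defect-absorption argument from the Light-Heavy Experiment of \cite{DDOST13}: one exploits that $\bS_{\mathrm{heavy}}$ is a sum of zero-moded variables whose collective weight $c_{q_b}\geq t_{\ell_0}=(1/\epsilon)^{2^{\ell_0}}$ on each heavy value is enormous compared to $t_{\mathrm{cutoff}}=O(t_{\ell_0-1})=O((1/\epsilon)^{2^{\ell_0-1}})$, so that the distribution of $\bS_{\mathrm{heavy}}$ enjoys strong shift-invariance (via Facts~\ref{fact:good-shift-invariance} and~\ref{fact:SI}) at every scale $q_b$ with $b\geq\ell_0$, and the defective perturbations (which are combinations of such $\pm q_b$'s) are absorbed into $\bS_{\mathrm{heavy}}$ without appreciably changing its total-variation fingerprint when summed against $\bS_{\mathrm{light}}$.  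Carrying out this shift-invariance/coupling step carefully -- in particular, generalizing it to arbitrary $q_1,\ldots,q_K$ rather than the consecutive values $\{0,1,\ldots,k-1\}$ of \cite{DDOST13} -- is the technical heart of the argument and the step I expect to be the main obstacle.
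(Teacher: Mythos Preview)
Your plan is close in spirit to the paper's, but the decomposition you propose differs from the paper's in a way that complicates the final step. The paper does not form independent per-index pairs $(\tilde\bX^L_i,\tilde\bX^H_i)$; instead it conditions on the random set $\bL=\{i:\bX'_i\in\mathcal{L}\}$, writes $\bS'=\sum_{i\in\bL}\underline\bX'_i+\sum_{i\notin\bL}\overline\bX'_i$ \emph{exactly} (with $\overline\bX'_i$ the conditional of $\bX'_i$ given $\bX'_i\in\{0\}\cup\{\pm q_b:b\geq\ell_0\}$), and shows that for any two good outcomes $L_1,L_2$ of $\bL$ the heavy sums $\sum_{i\notin L_j}\overline\bX'_i$ are within $O(\epsilon)$ of each other. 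This holds because $\sum_{i\notin L_1\cup L_2}\overline\bX'_i$ is shift-invariant at each heavy scale $q_b$ and is genuinely \emph{independent} of the at most $t_{\mathrm{cutoff}}/\epsilon$ summands $\overline\bX'_i$, $i\in L_2\setminus L_1$, being added back. One then fixes an arbitrary good $L^*$ and sets $\bS_{\mathrm{heavy}}:=\sum_{i\notin L^*}\overline\bX'_i$, which is automatically independent of $\bS_{\mathrm{light}}:=\mathrm{Mix}_{L}\sum_{i\in L}\underline\bX'_i$. In your framing, by contrast, the defect events are correlated with $\bS_{\mathrm{heavy}}$ itself (a defect at index $i$ reveals $\tilde\bX^H_i\neq0$), so you cannot directly absorb them via shift-invariance of $\bS_{\mathrm{heavy}}$; you would need to first decouple, e.g.\ by conditioning on $\{i:\tilde\bX^L_i\neq0\}$, which brings you back to the paper's argument anyway.

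Two smaller points. First, Facts~\ref{fact:good-shift-invariance} and~\ref{fact:SI} do not establish shift-invariance of $\bS_{\mathrm{heavy}}$ at scale $q_b$: Fact~\ref{fact:good-shift-invariance} is only for signed PBDs, which $\bS_{\mathrm{heavy}}$ is not. The correct tool is Corollary~\ref{corr:pshift} (the Barbour--Xia extension to arbitrary scale), which gives $\dshift{q_b}(\bS_{\mathrm{heavy}})=O(1/\sqrt{c_{q_b}})$ from the per-summand bound $\dshift{q_b}(\overline\bX'_i)\leq 1-\Pr[\overline\bX'_i=\pm q_b]$ that zero-modedness supplies. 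Second, the generalization from support $\{0,1,\ldots,k-1\}$ to arbitrary $q_b$'s is not the main obstacle you flag: once Corollary~\ref{corr:pshift} provides shift-invariance at each heavy scale, the argument is structurally identical to \cite{DDOST13}. The real subtlety is arranging the independence so that shift-invariance can be applied cleanly, and the paper's choice of $\bS_{\mathrm{heavy}}$ (restricted to $i\notin L^*$, with renormalized heavy probabilities) is what makes that work.
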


\begin{proof}
The proof follows the general lines of the proof of Theorem 4.3 of \cite{DDOST13}.  
Let ${\cal L} = \{\pm q_1, \ldots, \pm q_{\ell_0-1}\}$ 
and ${\cal H} =\{0, \pm q_{\ell_0}, \ldots, \pm q_{K}\}.$  
(It may be helpful to think of ${\cal L}$ as the ``light'' integers,
and ${\cal H}$ as ``heavy'' ones.)
We recall the following experiment that can be used to make a draw from $\bS'$, referred to in  \cite{DDOST13} as the ``Light-Heavy Experiment'':

\begin{enumerate}
\item {[Stage 1]:}  Informally, sample from the conditional
distributions given membership in ${\cal L}$.
Specifically, independently we sample for each $i \in [N]$ a random variable $\ul{\bX}'_i \in {\cal L}$ as follows:
$$\text{for each $b \in \calL$,} \quad \ul{\bX}'_i = b, \text{with probability~}
{\frac {\Pr[\bX'_i = b]} {\Pr[\bX'_i \in {\cal L}]}};
$$
i.e.\ $\ul{\bX}'_i$ is distributed according to the conditional distribution of $\bX'_i$, conditioning on $\bX'_i \in {\cal L}$. In the  case that $\Pr[\bX'_i \in {\cal L}]=0$ we define $\ul{\bX}'_i=0$ with probability $1$.

\item {[Stage 2]:}  Sample analogously for $\cal H$.
Independently we sample for each $i \in [N]$ a random variable $\ol{\bX}'_i \in {\cal H}$ as follows:
$$
\text{for each $b \in  \calH$,} \quad
\ol{\bX}'_i =  b , \text{with probability~}
{\frac {\Pr[\bX'_i = b]} {\Pr[\bX'_i  \in {\cal H}]}};
$$
i.e.\ $\ol{\bX}'_i$ is distributed according to the conditional distribution of $\bX'_i$, conditioning on $\bX'_i  \in {\cal H}$.
\item {[Stage 3]:} Choose which $\bX_i'$ take values in
${\cal L}$: sample a random subset $\bL \subseteq [N]$, by independently including each $i$ into $\bL$ with probability $\Pr[\bX'_i \in {\cal L}]$.
\end{enumerate}

After these three stages we output $\sum_{i \in \bL}\ul{\bX}'_i + \sum_{i \notin \bL}\ol{\bX}'_i$ as a sample from $\bS'$, where $\sum_{i \in \bL}\ul{\bX}'_i$ represents ``the contribution of ${\cal L}$'' and $\sum_{i \notin \bL}\ol{\bX}'_i$  ``the contribution of ${\cal H}$.'' 
Roughly, Stages 1 and 2 provide light and heavy options
for each $\bX'_i$, and Stage 3 chooses among the options.
We note that the two contributions are not independent, but they are independent conditioned on the outcome of~$\bL$.   Thus we may view a draw of $\bS'$ as a mixture, over all possible outcomes $L$ of $\bL$, of the distributions $\sum_{i \in L}\ul{\bX}'_i + \sum_{i \notin L}\ol{\bX}'_i$;
i.e. we have $\bS' = \Mix_{L \leftarrow \bL}(\sum_{i \in L}\ul{\bX}'_i + \sum_{i \notin L}\ol{\bX}'_i).$
This concludes the definition of the Light-Heavy Experiment.

Let $t_{\mathrm{cutoff}}= t_1 + \ldots +t_{\ell_0-1}$. Note that $\E[|\bL|] \leq t_{\mathrm{cutoff}}$. Let $\BAD$ denote the set of all outcomes $L$ of $\bL$ such that $|L| > (1/\eps) \cdot  t_{\mathrm{cutoff}}.$  A standard application of the Hoeffding bound  implies that $\Pr[\bL \in \BAD] = \Pr[|\bL| > (1/\eps) \cdot t_{\mathrm{cutoff}}] \leq 2^{-\Omega(1/\eps)}.$
  It follows that if we define the distribution $\bS''$ to be an outcome of the Light-Heavy Experiment conditioned on $\bL \notin \BAD$, i.e.\ $\bS'' = \Mix_{L \leftarrow \bL \ | \ \bL \notin \BAD}(\sum_{i \in L}\ul{\bX}'_i + \sum_{i \notin L}\ol{\bX}'_i)$,  we have that $\dtv(\bS'',\bS') \leq 2^{-\Omega(1/\eps)}$.  Consequently it suffices to show the existence of $\tilde{\bS}$ satisfying the properties of the lemma such that $\dtv(\tilde{\bS}, \bS'') \le \epsilon$. 
  
We will now show that for any $L_1, L_2 \notin \BAD$,  the random variables
$\bS_{L_j} = \sum_{i \not \in L_j} \ol{\bX}'_i$ (for $j \in \{1,2\}$) are close to each other in total variation distance. 
(If we think of $L_1$ and $L_2$ as different possibilities for the 
final
step in the process of sampling from the distribution of $\bS'$, 
recall that the values of $\ol{\bX}'_i$ are {\em always} in $\cH$ -- loosely speaking, during the first sample from
$\bS'$ the values of $\ol{\bX}'_i$ for $i \in L_1$ are not used, and during the second sample, 
the values for $i \in L_2$ are not used.)
Let $L_{\mathrm{union}} = L_1 \cup L_2$.\ignore{ and let 
$L_{\mathrm{common}} = [n] \setminus L_{\mathrm{union}}
$. }
 Note that by definition 
 $$|L_2 \setminus L_1|, \ |L_1 \setminus L_2| \le (1/\epsilon) \cdot t_{\mathrm{cutoff}}. $$
Define
\[
\bS_{L_{\mathrm{union}}}
  =  \sum_{i \notin L_{\mathrm{union}}}\ol{\bX}'_i
   \;\;=\;\; \bS_{L_1} - \sum_{i \in L_2 - L_1} \ol{\bX}'_i
         \;\; = \;\; \bS_{L_2} - \sum_{i \in L_1 - L_2} \ol{\bX}'_i.
\]

Choose $b \geq \ell_0$.
We have that
\begin{align*}
\dshift{q_b}(\ol{\bX}'_i) 
 &= 1 - \sum_j (\min\{\Pr[\ol{\bX}'_i = j],\Pr[\ol{\bX}'_i=j+q_b\})\\
 &  \leq 1 - \min\{\Pr[\ol{\bX}'_i = 0],\Pr[\ol{\bX}'_i=q_b] \} - \min\{\Pr[\ol{\bX}'_i = -q_b],\Pr[\ol{\bX}'_i=0] \} \\
 & = 1 - \Pr[\ol{\bX}'_i=-q_b] - \Pr[\ol{\bX'}_i=q_b],
\end{align*}
since $\ol{\bX'}_i$ is $0$-moded.
By Corollary~\ref{corr:pshift}, this implies that
$$
\dshift{q_b}(\bS_{L_{\mathrm{union}}}) \le \frac{O(1)}{\sqrt{\sum_{i \notin L_{\mathrm{union}}}     \Pr[\ol{\bX}'_i = \pm q_b]   }} \le \frac{O(1)}{\sqrt{c_{q_b} - |L_{\mathrm{union}}|}} \le \sqrt{\frac{2}{t_{\ell_0}}}. 
$$
Here the penultimate inequality uses the fact that 
$$
\sum_{i \notin L_{\mathrm{union}}} \Pr[\ol{\bX}'_i = \pm q_b]  = \sum_{i \in [n]} \Pr[\ol{\bX}'_i = \pm q_b] - \sum_{i \in L_{\mathrm{union}}} \Pr[\ol{\bX}'_i = \pm q_b]  \ge c_{q_b} - |L_{\mathrm{union}}|. 
$$
The last inequality uses that
$$c_{q_b} - |L_{\mathrm{union}}| \ge c_{q_b}-  |L_1| - |L_2| \ge c_{q_b} - 2 \cdot (1/\epsilon) \cdot t_{\mathrm{cutoff}}
 \ge t_b - 2 \cdot (1/\epsilon) \cdot t_{\mathrm{cutoff}} \ge \frac{t_{\ell_0}}{2}. $$
As each of the summands in the sum $\sum_{i \in L_2 \setminus L_1} \ol{\bX}'_i$ is supported on the set $\{0, \pm q_{\ell_0}, \ldots, \pm q_K\}$, viewing $\bS_{L_1}$ as a mixture of distributions each of which is obtained by shifting $\bS_{L_\mathrm{union}}$ at most $|L_2 \setminus L_1|$ many times, each time by an element of $\{0,\pm q_{\ell_0},\dots,q_K\}$,
we immediately obtain that
$$
\dtv(\bS_{L_{\mathrm{union}}}, \bS_{L_1}) \le |L_2 \setminus L_1| \cdot \sqrt{\frac{2}{t_{\ell_0}}} \le 2 (1/\epsilon) \cdot  t_{\mathrm{cutoff}} \cdot  \sqrt{\frac{2}{t_{\ell_0}}} \le O(\epsilon).
$$

Choose any $L^{\ast} \notin \BAD$ arbitrarily, and define $\bS_{\mathrm{heavy}} := \sum_{i \not \in L^{\ast}} \ol{\bX}'_i$. By the above analysis, for any $L' \notin \BAD $ it holds that
$\dtv(\sum_{i \not \in L'} \ol{\bX}'_i, \bS_{\mathrm{heavy}} ) = O(\epsilon)$. 
Thus, for any outcome $L \notin \BAD$, we have
$\dtv(\sum_{i \in L}\ul{\bX}'_i + \sum_{i \notin L}\ol{\bX}'_i, \sum_{i \in L}\ul{\bX}'_i + \bS_{\mathrm{heavy}}) = O(\epsilon)$. Define $\bS_{\mathrm{light}}
:= 
\Mix_{L \leftarrow \bL \ | \ \bL \notin \BAD}\sum_{i \in L}\ul{\bX}'_i.
$

We now verify that the above-defined $\bS_{\mathrm{heavy}}$ and $\bS_{\mathrm{light}}$ indeed satisfies the claimed properties.
Note that
 for each $L  \notin \BAD$, $\sum_{i \in L}\ul{\bX}'_i $ is supported on the set 
 $$\left\{\sum_{b \le \ell_0} q_b \cdot S_b : S_b \in [-(1/\epsilon) \cdot t_{\mathrm{cutoff}}, (1/\epsilon) \cdot t_{\mathrm{cutoff}}] \right\},$$
so the second property holds as well.  
Likewise, $\bS_{\mathrm{heavy}}$ is a 
sum of $0$-moded random variables with
support in $\{0, \pm q_{\ell_0}, \ldots, \pm q_K\}$. 
Note that we have already shown that 
 $\sum_{i \notin L^{\ast}} \Pr[\overline{\bX}'_i  = \pm q_b]  \ge t_{\ell_0}/2$
 , giving the third property.
Finally, combining the fact that $\Pr[\bL \in \BAD] \le \epsilon$ with $\dtv(\sum_{i \in L}\ul{\bX}'_i + \sum_{i \notin L}\ol{\bX}_i, \sum_{i \in L}\ul{\bX}_i + \bS_{\mathrm{heavy}}) = O(\epsilon)$, we obtain the claimed variation distance bound $\dtv(\tilde{\bS}, \bS') \le O(\epsilon)$, finishing the proof. 
 \end{proof}

With Lemma~\ref{lem:light-heavy} in hand, we now apply Lemma~\ref{lem:close-to-pure-signed-sicsirv} to the distribution $\bS_{\mathrm{heavy}}$ with 
\begin{equation}
\label{eq:setbig}
\BIG = K^{25}/\eps^{32}.
\end{equation}  This gives the following corollary: 
\begin{corollary}~\label{corr:light-structure}
The distribution $\bS'$ is $\delta$-close in total variation distance to a distribution $\bS''= \bS'_{\mathrm{light}} + \sum_{a=\ell_0}^{K} q_a \cdot \bW_{a,a} + \sum_{(a,b) \in A} (q_b + \tau_{a,b} \cdot q_a) \bW_{a,b}$ where $\delta  = O(K^{71/20} \cdot \epsilon^{1/20})$ and
\begin{enumerate}
\item $A \subseteq \{(a,b): \ell_0 \leq a < b \leq K\}$, $\tau_{a,b} \in \{-1,1\}$, and $\bW_{a,a}$, $\bW_{a,b}$ are\ignore{{pure}} signed PBDs. 
\item $\Var[\bW_{a,a}]$, $\Var[\bW_{a,b}] \ge (\BIG/K)^{-1/4} >  K^6/\eps^8$, \ignore{and $\dtv(\bS'', \bS') = O(K^{71/20} \cdot \epsilon^{1/20})$.}
\item The random variables $\bS'_{\mathrm{light}}$, $\bW_{a,a}$ and $\bW_{a,b}$ are all independent of each other, and
\item $\bS'_{\mathrm{light}}$ is supported on a set of cardinality $M \le (2t_K/\epsilon)^{K}$. 
\end{enumerate}
\end{corollary}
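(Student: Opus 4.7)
The proof is a two-step composition of the structural results established in this section, with no genuinely new probabilistic content. First, since $\ell_0 \leq K$, Lemma~\ref{lem:light-heavy} gives an independent decomposition $\tilde{\bS} = \bS_{\mathrm{heavy}} + \bS_{\mathrm{light}}$ with $\dtv(\tilde{\bS},\bS') = O(\eps)$. The light piece $\bS_{\mathrm{light}} = \sum_{b<\ell_0} q_b \bS_b$ has each $\bS_b$ supported in an interval of length $(2/\eps) t_{\mathrm{cutoff}}+1$, so its support has size at most $((2/\eps) t_{\mathrm{cutoff}}+1)^{\ell_0 - 1} \leq (2t_K/\eps)^K = M$. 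The heavy piece $\bS_{\mathrm{heavy}}$ is a sum of $0$-moded random variables supported in $\{0, \pm q_{\ell_0}, \dots, \pm q_K\}$ with every weight $c_{q_b, \mathrm{heavy}}$ (for $b \geq \ell_0$) exceeding $t_{\ell_0}/2$.

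Second, apply Lemma~\ref{lem:close-to-pure-signed-sicsirv} to $\bS_{\mathrm{heavy}}$, treating it as a sum over the reduced support set $\{q_{\ell_0}, \dots, q_K\}$ of size at most $K$. Set $\BIG := K^{25}/\eps^{32}$. Under the standing assumption that $\eps$ is sufficiently small relative to $K$, we have $t_{\ell_0}/2 \geq \BIG$, so the precondition $c_{q_a,\mathrm{heavy}} \geq \BIG$ for all $a \geq \ell_0$ is satisfied. The lemma then yields an integer shift $V'$, a pair-set $A \subseteq \{(a,b) : \ell_0 \leq a < b \leq K\}$, signs $\tau_{a,b} \in \{-1,+1\}$, and mutually independent signed PBDs $\bW_{a,a}$ and $\bW_{a,b}$ (for $(a,b) \in A$), each of variance at least $(\BIG/K)^{1/4} = K^6/\eps^8$, such that
\[
\dtv\Big(\bS_{\mathrm{heavy}},\; V' + \sum_{a=\ell_0}^K q_a \bW_{a,a} + \sum_{(a,b)\in A} (q_b + \tau_{a,b} q_a)\bW_{a,b}\Big) = O(K^{71/20}/\BIG^{1/20}).
\]

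To finish, define $\bS'_{\mathrm{light}} := \bS_{\mathrm{light}} + V'$; as a fixed-integer translate of $\bS_{\mathrm{light}}$, it still has support cardinality at most $M = (2t_K/\eps)^K$, establishing property~(4). The joint independence of $\bS'_{\mathrm{light}}$ with the signed PBDs $\{\bW_{a,a},\bW_{a,b}\}$ (property~(3)) is inherited from the independence of $\bS_{\mathrm{heavy}}$ and $\bS_{\mathrm{light}}$ in Lemma~\ref{lem:light-heavy} together with the mutual independence among the $\bW$'s in Lemma~\ref{lem:close-to-pure-signed-sicsirv}. Properties~(1) and~(2) are immediate from the conclusion of Lemma~\ref{lem:close-to-pure-signed-sicsirv} together with the choice of $\BIG$. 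Summing the two error contributions via the triangle inequality gives $\delta = O(\eps) + O(K^{71/20}/\BIG^{1/20})$, which is dominated by $O(K^{71/20}\cdot \eps^{1/20})$ as claimed.

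The only nontrivial obstacle in the plan is the arithmetic bookkeeping around the choice of $\BIG$: one must verify $t_{\ell_0}/2 \geq \BIG$ in the relevant parameter regime, which simultaneously forces the value of $\BIG$ that delivers the variance lower bound $K^6/\eps^8$ on the $\bW$'s. Beyond this routine check, every other step is a direct instantiation of a preceding lemma or an elementary support-count.
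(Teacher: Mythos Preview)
Your proof is correct and follows essentially the same approach as the paper: apply Lemma~\ref{lem:light-heavy} to split $\bS'$ into $\bS_{\mathrm{light}}+\bS_{\mathrm{heavy}}$, invoke Lemma~\ref{lem:close-to-pure-signed-sicsirv} on $\bS_{\mathrm{heavy}}$ with $\BIG = K^{25}/\eps^{32}$, absorb the resulting shift $V'$ into $\bS'_{\mathrm{light}} := \bS_{\mathrm{light}}+V'$, and finish with the triangle inequality and the support count. The paper's proof is terser but structurally identical, including the same implicit reliance on $t_{\ell_0}/2 \geq \BIG$ that you explicitly flag.
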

\begin{proof}
By Lemma~\ref{lem:light-heavy} $\bS'$ is $O(\epsilon)$-close to  $\tilde{\bS} = \bS_{\mathrm{heavy}} + \bS_{\mathrm{light}}$ where the decomposition of $\tilde{\bS}$  is as described in that lemma.  Applying Lemma~\ref{lem:close-to-pure-signed-sicsirv}, we further obtain that $\bS_{\mathrm{heavy}}$ is $\delta = O(K^{71/20} \cdot \epsilon^{1/20})$-close to a distribution of the form 
$$
\sum_{a=\ell_0}^{K} q_a \cdot \bW_{a,a} + \sum_{(a,b) \in A} (q_b + \tau_{a,b} \cdot q_a) \bW_{a,b} + V',
$$
with $\bW_{a,a}$ and $\bW_{a,b}$ satisfies the conditions stated in the corollary. Defining  $\bS'_{\mathrm{light}} = V + \bS_{\mathrm{light}}$ will satisfy all the required conditions. We note that the size of the support of $\bS'_{\mathrm{light}}$ is the same as the size of the support of $\bS_{\mathrm{light}}$, so applying item (2) of Lemma~\ref{lem:light-heavy}, we get that the size of the support of $\bS'_{\mathrm{light}}$ is bounded by $(2t_K/\epsilon)^{K}$.
\end{proof}

Let us look at the structure of the distribution 
\[
\bS'' = \bS'_{\mathrm{light}} + \sum_{a=\ell_0}^{K} q_a \cdot \bW_{a,a} + \sum_{(a,b) \in A} (q_b + \tau_{a,b} \cdot q_a) \bW_{a,b}.
\]
For $(a,b) \in A$ let $q_{(a,b)}$ denote $q_b + \tau_{a,b} \cdot q_a$, and let $B$ denote the set 
$B=\{\ell_0,\dots,K\} \cup A$.  
For any $B' \subseteq B$, let us write $\mathsf{gcd}(B')$ to denote $\gcd(\{q_\alpha\}_{\alpha \in B}).$  
Let $\alpha^\ast$ denote the element of $B$ for which $\Var[q_{\alpha^\ast} \cdot \bW_{\alpha^\ast}]$ is largest (breaking ties arbitrarily) and let $\MIX$ denote the following subset of $B$:
\begin{equation} \label{eq:mixdef}
\MIX = \{\alpha^\ast\} \cup \{\alpha \in B: \Var[\bW_\alpha] \geq \max \{1/\eps^8, q_{\alpha^\ast}^2/\eps^2 \}\}.
\end{equation}
By applying Lemma~\ref{l:mix} to the distribution $\sum_{\alpha \in \MIX} q_\alpha \cdot \bW_{\alpha}$, with its $\sigma_{\min}^2$ set to $K^6/\eps^8$ and its $\eps'$ set to $\eps$, noting that $|B| \eps' = O(K^2 \eps) = o(K^{71/20} \cdot \epsilon^{1/20})$ we obtain the following corollary: 
%
%
\begin{corollary}~\label{corr:light-heavy1}
The distribution $\bS'$ is $\delta'= O(K^{71/20} \cdot \epsilon^{1/20})$-close in total variation distance to a distribution $\bS^{(2)}$ of the following form, where $\emptyset \subsetneq \MIX \subset B$ is as defined in (\ref{eq:mixdef}):
\[
\bS^{(2)}= \bS'_{\mathrm{light}} + q_{\MIX} \cdot  \bS_{\MIX} + \sum_{q_{{\alpha}} \in B \setminus \MIX} q_{{\alpha}} \cdot \bS_{{\alpha}},
\]
where $q_{\MIX} = \mathsf{gcd}( \MIX)$ and the following properties hold:
\begin{enumerate}

\item The random variables $\bS'_{\mathrm{light}}$, $\bS_{\MIX} $ and $\{\bS_{\alpha} \}_{q_{\alpha} \in B \setminus \MIX}$ are independent of each other. 

\item $\bS'_{\mathrm{light}}$ is supported on a set of at most $M$ integers, where 
$M \le (2t_K/\epsilon)^{K}$.

\item $\bS_{\MIX} $ and $\{\bS_{\alpha} \}_{q_{\alpha} \in B \setminus \MIX}$ are\ignore{{pure}} signed PBDs such that\ignore{\rnote{Was ${\frac 1 {K'}} \cdot \Var[\bS^{(2)}] \le \Var[\bS_{\MIX}] \le  \Var[\bS^{(2)}]$, but I don't see that we have this}}\ignore{$\frac{1}{{K}} \cdot \Var[\bS^{(2)}] \le \Var[\bS_{\MIX}] \le  \Var[\bS^{(2)}]$ and} 
for all $q_{\alpha} \in B \setminus \MIX$, we have $K^6/\eps^8 \leq \Var[S_\alpha] \leq  r^2/\eps^2$,
where $r = \max_{q_\alpha \in B} |q_\alpha|$.  Moreover $\Var[\bS_\MIX] \geq K^6/\eps^8.$

\item $ \Var[q_{\MIX} \cdot \bS_{\MIX}] = c \cdot \left(\Var[q_{\MIX} \cdot  \bS_{\MIX}] + \sum_{q_\alpha \in B \setminus \MIX} \Var[q_\alpha \cdot \bS_\alpha]
\right)$ for some $c \in [{\frac 1 {K^2}},1].$

\end{enumerate}
\end{corollary}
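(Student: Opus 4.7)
\medskip

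\noindent \textbf{Proof proposal.} The plan is to start from the structural decomposition of $\bS'$ furnished by Corollary~\ref{corr:light-structure}, and then apply the limit theorem of Lemma~\ref{l:mix} to ``coalesce'' the high-variance scaled PBDs in that decomposition into a single scaled PBD. This is exactly the role of the set $\MIX \subseteq B$ defined in (\ref{eq:mixdef}), and the gcd $q_{\MIX} = \gcd(\{q_\alpha\}_{\alpha \in \MIX})$ that appears in the conclusion matches the output of Lemma~\ref{l:mix}.

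More concretely, first I invoke Corollary~\ref{corr:light-structure} to write $\bS'$ as being $\delta = O(K^{71/20}\eps^{1/20})$-close to a distribution of the form
\[
\bS'' \;=\; \bS'_{\mathrm{light}} \;+\; \sum_{\alpha \in B} q_\alpha \cdot \bW_\alpha,
\]
where $B = \{\ell_0,\dots,K\} \cup A$ (so $|B| \le K^2$), the $\bW_\alpha$ are independent signed PBDs each with variance at least $K^6/\eps^8$, and $\bS'_{\mathrm{light}}$ is supported on a set of size at most $(2t_K/\eps)^K$. Next I let $\alpha^\ast \in B$ be an index maximizing $\Var[q_\alpha \bW_\alpha]$ and define $\MIX$ as in (\ref{eq:mixdef}). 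The goal now is to verify the hypotheses of Lemma~\ref{l:mix} for the independent signed PBDs $\{\bW_\alpha\}_{\alpha \in \MIX}$ scaled by the integers $\{q_\alpha\}_{\alpha \in \MIX}$, with $\alpha^\ast$ playing the role of the index ``$1$'' in that lemma, and with parameters $\eps' = \eps$ and $\sigma_{\min}^2 = K^6/\eps^8 \ge 1/\eps^8$.

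The two hypotheses of Lemma~\ref{l:mix} are immediate: (i) $\alpha^\ast$ maximizes $\Var[q_\alpha \bW_\alpha]$ over all of $B$, so a fortiori $\Var[q_{\alpha^\ast} \bW_{\alpha^\ast}] \ge \tfrac{1}{|\MIX|} \Var[\sum_{\alpha \in \MIX} q_\alpha \bW_\alpha]$; and (ii) for every $\alpha \in \MIX \setminus \{\alpha^\ast\}$, the very definition of $\MIX$ in (\ref{eq:mixdef}) gives $\Var[\bW_\alpha] \ge \max\{1/\eps^8, q_{\alpha^\ast}/\eps\}$. Lemma~\ref{l:mix} thus yields a single signed PBD $\bS_\MIX$, with $\Var[q_\MIX \bS_\MIX] = \Var[\sum_{\alpha \in \MIX} q_\alpha \bW_\alpha]$, such that
\[
\dtv\!\left(\sum_{\alpha \in \MIX} q_\alpha \bW_\alpha,\; q_\MIX \bS_\MIX\right) \;\le\; O(|\MIX|\,\eps) \;=\; O(K^2 \eps).
\]
Replacing $\sum_{\alpha \in \MIX} q_\alpha \bW_\alpha$ inside $\bS''$ by $q_\MIX \bS_\MIX$ (and renaming $\bS_\alpha := \bW_\alpha$ for $\alpha \in B \setminus \MIX$), the data-processing inequality together with the independence of $\bS'_{\mathrm{light}}$ and the PBD pieces gives the distribution $\bS^{(2)}$ in the corollary statement, with $\dtv(\bS'',\bS^{(2)}) \le O(K^2 \eps)$ and hence $\dtv(\bS',\bS^{(2)}) \le \delta + O(K^2 \eps) = O(K^{71/20}\eps^{1/20}) = \delta'$.

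Finally I would verify the four listed properties. Properties (1) and (2) are inherited directly from Corollary~\ref{corr:light-structure} and from the independence preserved by Lemma~\ref{l:mix}. For (3), the lower bound $\Var[\bS_\MIX] \ge K^6/\eps^8$ follows from $\Var[\bS_\MIX] \ge \Var[\bW_{\alpha^\ast}]$; for $\alpha \in B \setminus \MIX$ the lower bound $\Var[\bS_\alpha] \ge K^6/\eps^8$ is inherited from Corollary~\ref{corr:light-structure}, while the upper bound $\Var[\bS_\alpha] \le r/\eps$ follows because $\alpha \notin \MIX$ forces one of the two defining inequalities in (\ref{eq:mixdef}) to fail, and since $\Var[\bS_\alpha] \ge K^6/\eps^8 \ge 1/\eps^8$ it must be the second that fails, giving $\Var[\bS_\alpha] < q_{\alpha^\ast}/\eps \le r/\eps$. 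Property (4) is the main bookkeeping step: since $\alpha^\ast$ maximizes $\Var[q_\alpha \bW_\alpha]$ over the at most $|B| \le K^2$ summands and $\Var[q_\MIX \bS_\MIX] \ge \Var[q_{\alpha^\ast} \bW_{\alpha^\ast}]$, we obtain $\Var[q_\MIX \bS_\MIX] \ge \tfrac{1}{K^2}\!\left(\Var[q_\MIX \bS_\MIX] + \sum_{\alpha \in B \setminus \MIX} \Var[q_\alpha \bS_\alpha]\right)$, which is the desired statement with $c \in [1/K^2,1]$. The only nontrivial bookkeeping obstacle I anticipate is keeping the $|\MIX|\eps$ slack from Lemma~\ref{l:mix} below the dominant $K^{71/20}\eps^{1/20}$ error; since $|\MIX| \le K^2$ and $K^2 \eps \ll K^{71/20}\eps^{1/20}$ for small $\eps$, this is absorbed without issue.
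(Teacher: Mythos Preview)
Your proposal is correct and follows essentially the same route as the paper: invoke Corollary~\ref{corr:light-structure}, then apply Lemma~\ref{l:mix} to $\sum_{\alpha\in\MIX} q_\alpha \bW_\alpha$ with $\eps'=\eps$ and $\sigma_{\min}^2 = K^6/\eps^8$, absorbing the $O(|B|\eps)=O(K^2\eps)$ error into $O(K^{71/20}\eps^{1/20})$. One tiny remark: when verifying hypothesis (ii) of Lemma~\ref{l:mix} with $\sigma_{\min}^2=K^6/\eps^8$, the definition of $\MIX$ alone only gives $\Var[\bW_\alpha]\ge 1/\eps^8$; the full $\Var[\bW_\alpha]\ge K^6/\eps^8$ comes from item~(2) of Corollary~\ref{corr:light-structure}, which you already invoke elsewhere, so this is just a matter of citing it at that step.
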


The above corollary tells us that our distribution $\bS'$ is close to a ``nicely structured'' distribution $\bS^{(2)}$; we are now ready for our main learning result, which uses kernel-based tools developed in Section~\ref{sec:kernel} to learn such a distribution.  The following theorem completes the $\ell_0 \leq K$ case:

\begin{theorem} \label{thm:learn-some-large}
There is a learning algorithm and a positive constant $c$ with the following properties:  It is given as input $N$, values $\eps,\delta>0$, and integers $0 \leq a_1 < \cdots < a_k$, and can access draws from an unknown distribution
$\bS^*$ that is $c \epsilon$-close to a
$\{a_1,\dots,a_k\}$-sum $\bS$.
The algorithm runs in time
$(1/ \eps) ^{2^{O(k^2)}} \cdot (\log a_k)^{\poly(k)}$
and uses 
$(1/ \eps) ^{2^{O(k^2)}} \cdot \log \log a_k$
samples, and has the following property:  Suppose that for the zero-moded 
distribution
$\bS'$ such that $\bS' + V= \bS$ (as defined in Section~\ref{sec:setup-upper-bound}), the largeness index $\ell_0$ (as defined at the beginning of this section) is at most $K$ (again recall Section~\ref{sec:setup-upper-bound}).  Then with probability $1-o(1)$ the algorithm outputs a hypothesis distribution $\bH$ with $\dtv(\bH,\bS) \leq O(K^{71/20} \cdot \epsilon^{1/20})$.\ignore{\rnote{Is it worth even mentioning, maybe in a parenthetical between the theorem and its proof, that if you want to learn to accuracy $\eps$ you (obviously) just run it to learn to $\eps' = (\eps/K^{71/20})^{20}$-accuracy?  Or don't even bother saying this?  (Doing this doesn't change any of the claimed asymptotic runtimes or sample complexities, I think, because of how coarse our claimed bounds are.)}\pnote{It might be nice to state the theorem with $\epsilon$ accuracy, and then remark at the beginning
of the proof that the statement follows from the $\eps'$ result that is more convenient to prove.  I'm OK with leaving it as is also, with or without the
parenthetical.}}
\end{theorem}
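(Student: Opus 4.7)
The plan is to combine the structural result of Corollary~\ref{corr:light-heavy1} with the kernel-based learning tool Lemma~\ref{lem:siirv-kernel}, using guessing (plus hypothesis selection via Corollary~\ref{cor:guess}) to handle the unknown combinatorial choices ($\ell_0$, $A$, the signs $\tau_{a,b}$, the set $\MIX$) and the unknown variances of the PBDs $\bS_\MIX, \{\bS_\alpha\}$.

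First I would observe that since $\bS^\ast$ is $c\epsilon$-close to $\bS = \bS' + V$ and, by Corollary~\ref{corr:light-heavy1}, $\bS'$ is $\delta'=O(K^{71/20}\epsilon^{1/20})$-close to $\bS^{(2)} = \bS'_{\mathrm{light}} + q_\MIX \bS_\MIX + \sum_{q_\alpha \in B\setminus\MIX} q_\alpha \bS_\alpha$, the target distribution $\bS^\ast$ is $\kappa := c\epsilon + \delta'$-close to $\bS_{\mathrm{offset}} + q_\MIX \bS_\MIX + \sum_{q_\alpha \in B\setminus\MIX} q_\alpha \bS_\alpha$ where $\bS_{\mathrm{offset}} := V + \bS'_{\mathrm{light}}$ is independent of $\bS_\MIX$ and the $\bS_\alpha$'s, and $|\supp(\bS_{\mathrm{offset}})| \leq M \leq (2t_K/\epsilon)^K = (1/\epsilon)^{2^{O(k^2)}}$.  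This is precisely the setting in which Lemma~\ref{lem:siirv-kernel} applies (with $K$ there playing the role of $|B|=O(k^2)$).

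Next, I would describe how to produce the guesses required by Lemma~\ref{lem:siirv-kernel}.  The purely combinatorial data $(\ell_0, A, \{\tau_{a,b}\}, \MIX)$ range over a set of size at most $2^{\poly(k)}$ and are enumerated by brute force.  For the variances, the properties asserted by Corollary~\ref{corr:light-heavy1} give that each $\Var[\bS_\alpha]$ with $q_\alpha \in B\setminus\MIX$ lies in $[K^6/\epsilon^8,\, r/\epsilon]$ with $r \leq k\cdot a_k$, so a multiplicative grid of ratio $(1+c')$ for a small absolute constant $c'$ has at most $O(\log(a_k) + \log(1/\epsilon))$ points; a similar bound holds for $\Var[\bS_\MIX]$ (upper bounded by e.g. the total variance which we can roughly estimate from samples, or simply by an a priori upper bound of $N \cdot a_k^2$, still giving $O(\log a_k + \log(1/\epsilon))$ multiplicative guesses).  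Since there are $|B|=O(k^2)$ PBDs to approximate, the total number of variance guesses is $(\log a_k + \log(1/\epsilon))^{O(k^2)} = (\log a_k)^{\poly(k)} \cdot (1/\epsilon)^{\poly(k)}$.  For each guess, at least one of the vectors of guesses gives a constant-factor approximation to every $\sigma_\alpha$ simultaneously, and for that correct guess Lemma~\ref{lem:siirv-kernel} (with $\kappa = O(\epsilon^{1/20}\cdot K^{71/20})$) produces a $(O(\epsilon + \kappa), O(\delta))$-kernel-learnable hypothesis using $\frac{K^{O(K)}}{\epsilon^{O(K)}} \cdot M^2 \cdot \log(M/\delta) = (1/\epsilon)^{2^{O(k^2)}}$ samples.

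Finally I would invoke Corollary~\ref{cor:guess}, which via the $\mathrm{Select}$ procedure of Proposition~\ref{prop:log-cover-size} identifies (with high probability) a hypothesis within a constant factor of the best one in the guessed family, at the cost of an additive $O(\epsilon^{-2} \log(\#\text{guesses})) = O(\epsilon^{-2}) \cdot \poly(k) \cdot \log\log a_k$ samples.  Adding the two sample complexities gives the stated $(1/\epsilon)^{2^{O(k^2)}} \cdot \log\log a_k$ bound, and the running time follows from enumerating $(\log a_k)^{\poly(k)} \cdot (1/\epsilon)^{2^{O(k^2)}}$ guesses and applying the kernel and selection procedures to each.  The main conceptual obstacle is justifying that Lemma~\ref{lem:siirv-kernel} still applies when the kernel $\bZ$ is built from the guessed $\gamma_j$'s rather than the true $\sigma_j$'s --- but this is exactly what the robustness in the lemma statement was designed for: a constant-factor approximation to each $\sigma_j$ suffices.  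The mild technical bookkeeping is to verify that the variance range $[K^6/\epsilon^8, r/\epsilon]$ in Corollary~\ref{corr:light-heavy1} indeed admits such a $O(\log a_k)$-size multiplicative grid and that the aggregated error $O(\epsilon + \kappa)$ can be absorbed into the claimed $O(K^{71/20}\epsilon^{1/20})$ bound (by initially running with a polynomially smaller accuracy parameter).
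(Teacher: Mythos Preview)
Your overall architecture matches the paper's: guess the combinatorial data, grid over the PBD variances, invoke Lemma~\ref{lem:siirv-kernel} for each guess, and finish with Corollary~\ref{cor:guess}. The handling of the $\bS_\alpha$ variances for $\alpha\in B\setminus\MIX$ via a multiplicative grid of size $O(\log a_k+\log(1/\eps))$ is exactly what the paper does.

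The gap is in your treatment of $\Var[\bS_\MIX]$. Unlike the $\bS_\alpha$'s, the corollary gives no upper bound on $\Var[\bS_\MIX]$ in terms of $a_k$ and $\eps$ alone; it can be as large as $\Theta(N)$. Your parenthetical ``an a priori upper bound of $N\cdot a_k^2$'' therefore yields a grid of size $\Theta(\log N+\log a_k)$, and the resulting hypothesis-selection step costs $\Omega(\log\log N)$ samples and the enumeration costs $(\log N)^{\poly(k)}$ time---both forbidden by the theorem statement, which is independent of $N$. The alternative ``estimate the total variance from samples'' does not obviously fare better: a sample-based estimate of $\Var[\bS^\ast]$ approximates $\Var[\bS'_{\mathrm{light}}]+\Var[\bS_{\mathrm{pure}}]$, and without further argument you still need a grid whose size scales with $\log\Var[\bS^\ast]$, which again can be $\Theta(\log N)$.

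The paper closes this gap with a specific two-sample trick (its ``First main step''). One draws two independent points $s^{(1)},s^{(2)}$ from $\bS^\ast$; with probability at least $1/M=\eps^{2^{O(k^2)}}$ the two light components $s^{(j)}_{\mathrm{light}}$ coincide, in which case $|s^{(1)}-s^{(2)}|$ is (with constant probability) a factor-$2$ estimate of $\sqrt{\Var[\bS_{\mathrm{pure}}]}$. Repeating $\eps^{-2^{O(k^2)}}$ times produces a list of guesses for $\gamma_{\mathrm{pure}}$, one of which is good. Then property~(4) of Corollary~\ref{corr:light-heavy1}, namely $\Var[q_\MIX\bS_\MIX]\in[\tfrac{1}{K^2},1]\cdot\Var[\bS_{\mathrm{pure}}]$, converts this into a grid of only $O(\log K)$ candidates for $\sqrt{\Var[\bS_\MIX]}$---with no dependence on either $N$ or $a_k$. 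This is the missing idea in your proposal; with it in place, the rest of your argument goes through as written.
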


(To obtain an $O(\eps)$-accurate hypothesis, simply run the learning algorithm with its accuracy parameter set to $\eps' = \eps^{20}/K^{71}$.)

\ignore{
\gray{
\begin{theorem}
There is a learning algorithm with the following properties: It is given as input $N$, values $\epsilon, \delta>0$, and integers $q_1, \ldots, q_K$. Further, assume that there is a distribution $\bS^{(2)}$ with the structure as in Corollary~\ref{corr:light-heavy1} and that the algorithm has access to draws from a distribution $\bS'$ such that $\dtv(\bS', \bS^{(2)})  = O(K^{71/20} \cdot \epsilon^{1/20})$. The algorithm runs in time BLAH, uses BLAH samples, and succeeds with probability 
$1 - O(K^{71/20} \cdot \epsilon^{1/20}+ \delta)$. 
\end{theorem}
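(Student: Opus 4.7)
The assumption $\ell_0 \leq K$ lets us invoke Corollary~\ref{corr:light-heavy1}, so $\bS'$ is $\delta' = O(K^{71/20}\eps^{1/20})$-close to
\[
\bS^{(2)} = \bS'_{\mathrm{light}} + q_\MIX \cdot \bS_\MIX + \sum_{\alpha \in B \setminus \MIX} q_\alpha \cdot \bS_\alpha,
\]
and therefore $\bS^*$ is $\kappa := (c\eps + \delta')$-close to $V + \bS^{(2)}$. Setting $\bS_{\textsf{offset}} := V + \bS'_{\mathrm{light}}$, which is supported on a set of size at most $m \leq (2t_K/\eps)^K = (1/\eps)^{2^{O(k^2)}}$, the distribution $V + \bS^{(2)}$ is exactly of the form $\bS_{\textsf{offset}} + \sum_{j} p_j \bS_j$ required by Lemma~\ref{lem:siirv-kernel}, with the $p_j$'s being the (\emph{known}) integers $q_\MIX$ and $\{q_\alpha\}_{\alpha \in B \setminus \MIX}$ and with $K$ in that lemma replaced by $|B| = O(K^2)$. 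So, modulo not knowing the structural parameters $(\ell_0, A, \{\tau_{a,b}\}, \MIX)$ and the variances $(\sigma_\MIX^2, \{\sigma_\alpha^2\}_{\alpha \in B\setminus\MIX})$, we are already set up to run the kernel method of Section~\ref{sec:kernel} on draws from $\bS^*$.

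The plan is therefore to \emph{guess} all of this auxiliary information, produce one candidate hypothesis per guess using Lemma~\ref{lem:siirv-kernel}, and apply Corollary~\ref{cor:guess} (i.e., hypothesis selection via Proposition~\ref{prop:log-cover-size}) to pick the best. The combinatorial data $(\ell_0, A, \{\tau_{a,b}\}, \MIX)$ contributes at most $2^{\poly(K)}$ alternatives. For each $\alpha \in B \setminus \MIX$, item~(3) of Corollary~\ref{corr:light-heavy1} guarantees $\Var[\bS_\alpha] \in [K^6/\eps^8,\, r/\eps]$ with $r \leq 2a_k$, so a multiplicative grid of ratio $2$ contains $O(\log(a_k/\eps))$ candidate values $\gamma_\alpha \in [\sigma_\alpha, 2\sigma_\alpha]$; item~(4) combined with an empirical $(1 \pm 1/2)$-estimate $\hat{\sigma}^2$ of $\Var[\bS^*]$ (computable from $\poly(K/\eps)$ samples) localizes $\Var[q_\MIX \bS_\MIX]$ to $[\hat{\sigma}^2/(4K^2),\, 2\hat{\sigma}^2]$, giving $O(\log K)$ further guesses for $\gamma_\MIX$. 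In total, the number of candidate guesses is
\[
M_{\rm guess} \;\leq\; 2^{\poly(K)} \cdot \bigl(\log(a_k/\eps)\bigr)^{O(K^2)},
\]
and for each guess the corresponding kernel hypothesis $\bU_{\hat Y} + \bZ$ is formed from $T = \frac{K^{O(K)}}{\eps^{O(K)}} \cdot m^2 \log(m/\delta) = (1/\eps)^{2^{O(k^2)}}$ samples drawn from $\bS^*$ (with $\bZ$ defined as in Lemma~\ref{lem:siirv-kernel} using the guessed $\gamma$'s and the known $q$'s).

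For the \emph{correct} guess, Lemma~\ref{lem:siirv-kernel} (with that lemma's ``$K$'' set to $|B|$ and ``$\kappa$'' set to $c\eps + \delta'$) gives $\dtv(\bU_{\hat Y} + \bZ,\, \bS^*) = O(\eps + \kappa) = O(K^{71/20}\eps^{1/20})$ with probability $1-o(1)$; Corollary~\ref{cor:guess} then outputs a hypothesis $\bH$ with $\dtv(\bH, \bS^*) = O(K^{71/20}\eps^{1/20})$, and the triangle inequality together with $\dtv(\bS^*,\bS) \leq c\eps$ gives $\dtv(\bH, \bS) = O(K^{71/20}\eps^{1/20})$ as required. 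The total sample budget, by Corollary~\ref{cor:guess}, is $T + O(\eps^{-2}\log M_{\rm guess}) = (1/\eps)^{2^{O(k^2)}} \cdot \log\log a_k$, and the running time is $O(M_{\rm guess})$ times the per-guess cost of producing and evaluating the kernel hypothesis, which totals $(1/\eps)^{2^{O(k^2)}} \cdot (\log a_k)^{\poly(k)}$. The main subtlety --- and the reason the final sample complexity is $\log\log a_k$ rather than $\log\log N$ --- is the \emph{upper} bound $\Var[\bS_\alpha] \leq r/\eps$ on every non-$\MIX$ component in Corollary~\ref{corr:light-heavy1}(3), which is itself a consequence of the coalescing limit theorem Lemma~\ref{l:mix}: without it the variance grid would have to extend up to $\Theta(N \cdot a_k^2)$, wrecking the dependence on $N$. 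Everything else (kernel learning of the mixture offset via Lemma~\ref{lem:mix}, and robust approximation via Lemma~\ref{lem:kernel-close} used inside Lemma~\ref{lem:siirv-kernel}) is then plug-and-play from Section~\ref{sec:kernel}.
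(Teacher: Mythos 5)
Your overall skeleton matches the paper's proof of Theorem~\ref{thm:learn-some-large}: guess the combinatorial data $(\ell_0, A, \{\tau_{a,b}\}, \MIX)$, grid multiplicatively over $[K^6/\eps^8, r/\eps]$ for each $\Var[\bS_\alpha]$ with $\alpha \in B\setminus\MIX$ (this is where $\log\log a_k$ comes from), feed the guessed variances into Lemma~\ref{lem:siirv-kernel}, and finish with hypothesis selection via Corollary~\ref{cor:guess}. But there is a genuine gap in how you obtain $\gamma_\MIX$. Item (4) of Corollary~\ref{corr:light-heavy1} relates $\Var[q_\MIX \bS_\MIX]$ to the variance of the \emph{pure} part $q_\MIX\bS_\MIX + \sum_{\alpha \in B\setminus\MIX} q_\alpha\bS_\alpha$ only --- not to $\Var[\bS^{(2)}]$, which also includes the contribution of $\bS'_{\mathrm{light}}$. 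The ordering $c_{q_1}\leq\cdots\leq c_{q_K}$ is by weight, not by magnitude, so the light indices can carry the \emph{largest} $q_b$'s (on the order of $a_k$) while the heavy indices carry small ones; in that case $\Var[\bS'_{\mathrm{light}}]$ can dwarf $\Var[\bS_{\mathrm{pure}}]$, and your window $[\hat\sigma^2/(4K^2), 2\hat\sigma^2]$ built around an estimate of $\Var[\bS^*]$ simply does not contain $\Var[q_\MIX\bS_\MIX]$, so your $O(\log K)$-size grid misses the correct value entirely.

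Moreover, even the estimate $\hat\sigma^2$ itself is not available the way you claim: in the semi-agnostic setting an adversarial $c\eps$-fraction of mass placed arbitrarily far away makes the empirical variance of $\bS^*$ meaningless, and even for the clean $\bS^{(2)}$ the light component is an arbitrary distribution on up to $(2t_K/\eps)^K$ points over a potentially huge range, whose variance cannot be estimated to a constant factor from $\poly(K/\eps)$ samples. The paper circumvents both problems at once: it draws \emph{two} samples $s^{(1)}, s^{(2)}$ and uses $|s^{(1)}-s^{(2)}|$, arguing that with probability at least $1/M \cdot 2^{-\poly(K)}$ the two light components coincide (so the light part cancels) and the difference is a factor-$2$ approximation to $\sqrt{\Var[\bS_{\mathrm{pure}}]}$; repeating this $2^{\poly(k)}/\eps^{2^{O(k^2)}}$ times and treating each repetition as one more ``guess'' lets hypothesis selection discard the bad outcomes, and only then is the set $J$ of $O(\log K)$ candidates for $\gamma_\MIX$ built around $\gamma_{\mathrm{pure}}/q_\MIX$. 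Replacing your empirical-variance step with this paired-difference-plus-repetition device (or some other estimator of $\Var[\bS_{\mathrm{pure}}]$ that is robust to the light part and to the $c\eps$ corruption) is what is needed to close the argument; the rest of your proposal then goes through as in the paper.
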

}
}

\begin{proof}
The high level idea of the algorithm is as follows:  The algorithm {repeatedly} samples
two points from the distribution $\bS^*$
and, for each pair, uses those two points to guess (approximately) parameters of the distribution 
\[
\bS_{\mathrm{pure}} := q_{\MIX} \cdot  \bS_{\MIX} + \sum_{q_\alpha \in B \setminus \MIX} q_\alpha \cdot \bS_\alpha
\] 
from Corollary~\ref{corr:light-heavy1}.
The space of possible guesses will be of size 
$(1/\eps)^{2^{O(k^2)}} \cdot (\log a_k)^{\poly(k)}$, 
which leads to a $\poly(2^{k^2},\log(1/\eps)) \cdot \log \log a_k$ factor in the sample complexity by Corollary~\ref{cor:guess}.  For each choice 
of parameters in this space, Lemma~\ref{lem:siirv-kernel} allows us to produce a candidate hypothesis distribution  (this lemma leads to a 
$\exp(\poly(k))/ \eps^{2^{O(k)}}$ 
factor in the sample complexity); by the guarantee of Lemma~\ref{lem:siirv-kernel}, for the (approximately) correct choice of parameters the corresponding candidate hypothesis distribution will be close to the target distribution $\bS'$.  Given that there is a high-accuracy candidate hypothesis distribution in the pool of candidates, by Corollary \ref{cor:guess} (which details how our algorithms can ``make guesses''), the algorithm of that corollary will with high probability select a high-accuracy hypothesis distribution  $\bH$ from the space of candidates.

We now give the detailed proof.  To begin, the algorithm computes $K$ and the values $q_1,\dots,q_K$.  It guesses an ordering of $q_1,\dots,q_K$ such that $c_{q_1} \leq \cdots \leq c_{q_K}$ ($K! = 2^{\poly(k)}$ possibilities), guesses the value of the largeness index $\ell_0$ ($O(K)=\poly(k)$ possibilities), guesses the subset $A \subseteq \{(a,b): \ell_0 \leq a < b \leq K\}$ and the associated bits $(\tau_{a,b})_{(a,b) \in A}$ from Corollary~\ref{corr:light-structure} ($2^{\poly(k)}$ possibilities) , and  guesses the subset $\MIX \subseteq B$ from Corollary~\ref{corr:light-heavy1} ($2^{\poly(k)}$ possibilities).
The main portion of the algorithm consists of the following three steps:

\medskip \noindent 
\textbf{First main step of the algorithm: Estimating the variance of $\bS^{(2)}.$}  In the first main step, the algorithm constructs a space of $1/\eps^{2^{O(k^2)}}$ many guesses, one of which with very high probability is a multiplicatively accurate approximation of  $\sqrt{\Var[\bS_{\mathrm{pure}}]}$. This is done as follows:  the algorithm makes two independent draws from
$\bS^*$.
Since $\bS^*$ is $c \epsilon$-close to 
$\bS = \bS' + V$,
by Corollary~\ref{corr:light-heavy1} and Lemma~\ref{l:coupling}, 
the distribution over these two draws could be obtained by sampling twice independently from $\bS^{(2)}$,
and modifying the result with probability $O(K^{71/20} \cdot \epsilon^{1/20})$.
Let us write these two draws as $s^{(j)}=s^{(j)}_{\mathrm{light}} + s^{(j)}_{\mathrm{pure}}$ where $j \in \{1,2\}$ and $s^{(j)}_{\mathrm{light}} \sim \bS_{\mathrm{light}}$ and  $s^{(j)}_{\mathrm{pure}} \sim \bS_{\mathrm{pure}}$ (where $s^{(1)}_{\mathrm{light}}$,  $s^{(1)}_{\mathrm{pure}}$, $s^{(2)}_{\mathrm{light}}$,  $s^{(2)}_{\mathrm{pure}}$ are all independent draws). 
By part (2) of Corollary~\ref{corr:light-heavy1}, with probability at least $1 /|\bS_{\mathrm{light}}| \geq 1/M \geq
(\eps/2t_K)^{K} = \eps^{2^{O(k^2)}}$, it is the case that $s^{(1)}_{\mathrm{light}}=s^{(2)}_{\mathrm{light}}$. 
In that event, with probability at least 
 $1/2^{\poly(K)}$, we have
\begin{equation}
\label{eq:factor-of-two}
\frac12 \cdot  \sqrt{\Var[\bS_{\mathrm{pure}}]}\le          |s^{(2)} - s^{(1)} |  \le 2 \cdot \sqrt{\Var[\bS_{\mathrm{pure}}]}.
\end{equation}
To see this, observe that since each of the $O(K^2)$ independent constituent PBDs comprising $\bS_{\mathrm{pure}}$ has variance at least $K^6$, for each one with probability at least ${\frac 1 {\Theta(K^2)}}$ the difference between two independent draws will lie between $(1 - {\frac 1 {\Theta(K^2)}})$ and $(1 + {\frac 1 {\Theta(K^2)}})$ times the square root of its variance.  If this happens then we get (\ref{eq:factor-of-two}).  By repeating $2^{\poly(k)} \cdot \eps^{2^{O(k^2)}}$ times, the algorithm can obtain $2^{\poly(k)} /\eps^{2^{O(k^2)}}$ many guesses, one of which will, with overwhelmingly high probability, be a quantity $\gamma_{\mathrm{pure}}$ that is a multiplicative $2$-approximation of $\sqrt{\Var[\bS_{\mathrm{pure}}]}$.

\medskip \noindent 
\textbf{Second main step of the algorithm: Gridding in order to approximate variances.} Consider the set  $J$ defined as 
$$  
J =  \bigcup_{j=-1}^{1+\log({K})} \{2^j \cdot \gamma_{\mathrm{pure}}/q_\MIX\}.
$$
Given that $\gamma_{\mathrm{pure}}$ is within a factor of two of $\sqrt{\Var[\bS_{\mathrm{pure}}]}$ (by (\ref{eq:factor-of-two})) and given part (4) of Corollary~\ref{corr:light-heavy1}, it is easy to see that there is an element $\gamma_{\MIX}  \in J$ such that 
$\gamma_{\MIX}$ is within a multiplicative factor of 2 of $\sqrt{\Var[\bS_{\MIX}]}$. Likewise, for each $\alpha \in B \setminus \MIX$, define the set $J_{q_\alpha}$ as
\[
J_{q_\alpha} = \bigcup_{j=-1}^{1+\log (\sqrt{\max\{1/\eps^8,r^2/\epsilon^2}\})} \{ 2^j \cdot (\eps \cdot K)^{-1/4}/q_\alpha \},
\]
where, as in Corollary~\ref{corr:light-heavy1}, $r = \max_{q_\alpha \in B} |q_\alpha|.$  By part (3) of Corollary~\ref{corr:light-heavy1}, for each $q_\alpha \in B \setminus \MIX$ there is an element $\gamma_{q_\alpha} \in J_{q_\alpha}$ such that $\gamma_{q_\alpha}$ is within a multiplicative factor of two of $\Var[\bS_{q_\alpha}]$.  These elements of $J$ and of $J_{q_\alpha}$ are the guesses for the values of $\sqrt{\Var[\bS_{\MIX}]}$ and of $\sqrt{\Var[\bS_{q_{\alpha}}]}$ that are used in the final main step described below.  We note that the space of possible guesses here is of size at most $O(\log k) \cdot (\log (a_k/\eps))^{\poly(k)}$.

\medskip \noindent 
\textbf{Third main step of the algorithm:  Using guesses for the variances to run the kernel-based learning approach.} For each  outcome of the guesses described above (denote a particular such outcome by $\overline{\gamma}$; note that a particular outcome for $\overline{\gamma}$ comprises an element of $J$ and an element of $J_{q_\alpha}$ for each $\alpha \in B \setminus \MIX$), let us define the distribution $\bZ_{\MIX,\overline{\gamma}}$ to be uniform on the set $[-(c \epsilon \cdot \gamma_{\MIX})/K,(c \epsilon \cdot \gamma_{\MIX})/K] \cap \Z$ and $\bZ_{q_\alpha,\overline{\gamma}}$ to be uniform on the set $[-(c\epsilon \cdot \gamma_{q_\alpha})/K,(c\epsilon \cdot \gamma_{q_\alpha})/K] \cap \Z$, where $c$ is the hidden constant in the definition of $c_j$ in Lemma~\ref{lem:siirv-kernel}.  Applying Lemma~\ref{lem:siirv-kernel},  we can draw
$\frac{\exp(\poly(K))}{\epsilon^{\poly(K)}} \cdot m^2 \cdot \log (m/\delta)$ 
samples from $\bS$, where $m =(1/\eps)^{2^{O(k^2)}} \geq  (2t_K/\eps)^{K} \geq |\bS_{\mathrm{light}}|$, and we get a hypothesis $\bH_{\overline{\gamma}}$ resulting from this outcome of the guesses and this draw of samples from $\bS.$\ignore{  outcome $ \overline{\gamma} \in  J \times_{q_\alpha \in B \setminus \MIX} J_{q_\alpha}$, we get $\mathbf{H}_{\overline{\gamma}}$.} The guarantee of Lemma~\ref{lem:siirv-kernel} ensures that for the outcome $\overline{\gamma}$ all of whose components are factor-of-two accurate as ensured in the previous step, the resulting hypothesis $\mathbf{H}_{\overline{\gamma}}$ satisfies $\dtv(\mathbf{H}_{\overline{\gamma}} , \bS') \le O(K^{71/20} \cdot \epsilon^{1/20} + \epsilon)= O(K^{71/20} \cdot \epsilon^{1/20})$ with probability at least $1-\delta.$  Finally, an application of  Corollary \ref{cor:guess} concludes the proof.
\end{proof}

\section{Learning $\{ a_1, a_2, a_3 \}$-sums} \label{sec:learn3}

In this section we show that
when $|\supportset| = 3$
the learning algorithm can be sharpened to have no dependence on $a_1,a_2,a_3$ at all.  Recall Theorem~\ref{known-k-is-three-upper}:

\medskip
\noindent {\bf Theorem~\ref{known-k-is-three-upper}} (Learning when
$|\supportset|=3$ with known support){\bf .}
\emph{There is an algorithm and a positive constant $c$ with the following properties:  The algorithm is given $N$, an accuracy parameter $\eps>0$, distinct values $a_1 < a_2 < a_3 \in \Z_{\geq 0}$, and access to i.i.d. draws from an unknown 
random variable $\bS^*$  that is $c \epsilon$-close to
an $\{a_1,a_2,a_3\}$-sum $\bS$.
The algorithm uses $\poly(1/\eps)$ draws from $\bS^*$, runs in $poly(1/\eps)$ time, and with probability at least $9/10$ outputs a concise representation of a hypothesis distribution $\bH$ such that $\dtv(\bH,\bS^*) \leq \eps.$}
\medskip

The high-level approach we take follows the approach for general $k$; as in the general case, a sequence of transformations will be used to get from the initial 
target
to a ``nicer'' distribution (whose exact form depends on the precise value of the ``largeness index'') which we learn using the kernel-based approach.\ignore{\red{If the largeness index is 3 \red{or larger} then as before we will use a kernel-based approach as the heart of our learning algorithm to learn the ``nicer'' distribution; in other cases we will take a more direct approach that bypasses the kernel approach.}\rnote{Update this, it's no longer true, right?}}
(Lemmas~\ref{lem:learn-sum-two-high-variance-PBDs} and 
\ref{lem:learn-sum-3-high-variance-PBDs}, which establish learning results for distributions in two of these nicer forms, are deferred to later subsections.)
  Intuitively, the key to our improved independent-of-$a_3$ bound is a delicate analysis that carefully exploits extra additive structure that is present when $k=3$, and which lets us 
avoid the ``gridding'' over $O(\log a_k)$ many multiplicatively spaced guesses for variances that led to our $\log \log a_k$ dependence in the general-$k$ case.

To describe this additive structure, let us revisit the framework established in Section~\ref{sec:setup-upper-bound}, now specializing to the case $k=3$, so $\bS$ is an $\{ a_1, a_2, a_3 \}$-sum with $a_1 < a_2 < a_3.$  We now have that for each $i \in [N]$ the support of the zero-moded random variable $\bX'_i$ is contained in $\{0\} \cup Q$ where $Q=\{\pm q_1,\pm q_2, \pm q_3\}$ where $q_1=a_2 - a_1, q_2 = a_3 - a_2,$ and $q_3 = a_3 - a_1$. 
Further, the support size of  each $\bX'_i$ is $3$ and hence it includes  
at most two of the elements from the set $\{q_1, q_2, q_3\}$.
The fact that $q_3 = q_1 + q_2$ is the additive structure that we shall crucially exploit.  Note that in the case $k=3$ we have $K=3$ as well, and  $\Pr[\bX'_i=0] \geq 1/k=1/K=1/3$ for each $i \in [N].$  

Recalling the framework from the beginning of Section~\ref{sec:learn}, we reorder $q_1,q_2,q_3$ so that $c_{q_1} \leq c_{q_2} \leq c_{q_3}$.  We define the ``largeness index'' $\ell_0 \in \{1,2,3,4\}$ analogously to the definition at the beginning of Section~\ref{sec:learn}, but with a slight difference in parameter settings:  we now define the sequence $t_1, \ldots, t_K$ as $t_\ell= (1/\epsilon)^{C^\ell}$ where $C$ is a (large) absolute constant to be fixed later.  Define the ``largeness index" of the sequence $c_{q_1} \le \ldots \le c_{q_K}$ as the minimum $\ell \in [K]$ such that $c_{q_\ell} > t_\ell $, and let $\ell_0$ denote this value.   If there is no $\ell \in \{1,2,3\}$ such that  $c_{q_\ell} > t_\ell$, then we set $\ell_0  = 4$.

Viewing $\bS$ as $\bS' + V$ as before,
our analysis 
now involves four distinct cases, one for each possible value of $\ell_0.$  

\subsection{The case that $\ell_0 = 4$.} \label{sec:a0-is-four}
This case is identical to Section~\ref{sec:allsmall} specialized to $K=3$, so we can easily learn to accuracy $O(\eps)$ in $\poly(1/\eps^{C^3})=\poly(1/\eps)$ time.

\subsection{The case that $\ell_0=3$.} \label{sec:a0-is-three}

In this case we have $c_{q_1} \leq (1/\eps)^C$ and $c_{q_2} \leq (1/\eps)^{C^2}$ but $c_{q_3} \geq (1/\eps)^{C^3}.$  By Lemma~\ref{lem:light-heavy}, we have that $\dtv(\tilde{\bS},\bS') \leq O(\eps)$ where $\tilde{\bS} = \bS_{\textrm{heavy}} + \bS_{\textrm{light}}$, $\bS_{\textrm{heavy}}$ and  $\bS_{\textrm{light}}$ are independent of each other, $\bS_{\textrm{light}}$ is supported on a set of $O(1/\eps^{2C^2+2})$ integers, and $\bS_{\textrm{heavy}}$ is simply $q_3  \bS_3$ where $\bS_3 = \sum_{i=1}^N \bY_i$ is a signed PBD with $\sum_{i=1}^N \Pr[\bY_i = \pm 1] \geq 1/(2\eps^{C^3}).$ Given this constrained structure, the $\poly(1/\eps)$-sample and running time learnability of 
$\bS^*$
follows as a special case of the algorithm given in the proof of Theorem~\ref{thm:learn-some-large}.  In more detail, as described in that proof, two points drawn from 
$\bS^*$
can be used to obtain, with at least $\poly(\eps)$ probability, a multiplicative factor-2 estimate of $\sqrt{\Var[\bS_{\textrm{heavy}}]}$.  Given such an estimate no gridding is required, as it is possible to learn 
$\bS^*$
to accuracy $O(\eps)$  simply by using the $K=1$ case of the kernel learning result Lemma~\ref{lem:siirv-kernel} (observe that, crucially, having an estimate of $\Var[\bS_{\textrm{heavy}}]$ provides the algorithm with the value $\gamma_1$ in Lemma~\ref{lem:siirv-kernel} which is required to construct $\bZ$ and thereby carry out the kernel learning of $\bS'+V$ using $\bZ$).

\subsection{The case that $\ell_0=2$.} \label{sec:a0-is-two}

In this case we have $c_{q_1} \leq (1/\eps)^C$ while  $c_{q_3}, c_{q_2} \geq (1/\eps)^{C^2}$.  As earlier we suppose that $q_1 + q_2 = q_3$.   (This is without loss of generality as the other two cases are entirely similar; for example, if instead we had $q_1 + q_3 = q_2$, then we would have $q_3 = -q_1 + q_2 $, and it is easy to check that replacing $q_1$ by $-q_1$ everywhere does not affect our arguments.)

Lemma~\ref{lem:light-heavy} now gives us a somewhat different structure, namely that $\dtv(\tilde{\bS},\bS') \leq O(\eps)$ where $\tilde{\bS} = \bS_{\textrm{heavy}} + \bS_{\textrm{light}}$, $\bS_{\textrm{heavy}}$ and  $\bS_{\textrm{light}}$ are independent of each other, $\bS_{\textrm{light}} = q_1  \bS_1$ where $\bS_1$ is supported on $[-O(1/\eps^{C+1}),O(1/\eps^{C+1})] \cap \mathbb{Z}$, and $\bS_{\textrm{heavy}}$ is a 
sum of $0$-moded integer random variables
over $\{\pm q_{2},\pm q_3\}$, and which satisfies  $c_{q_2,\textrm{heavy}}, c_{q_3,\textrm{heavy}} > 1/(2\eps^{C^2})$.  Applying Lemma \ref{lem:close-to-pure-signed-sicsirv} to 
$\bS_{\textrm{heavy}}$, we get that $\dtv(\bS_{\textrm{heavy}},\bB) = O(\eps^{C^2/20})$ where either
\begin{equation} \label{eq:BB1}
\bB = V' + q_2 \bW_{2} +  q_3 \bW_3
\end{equation}
(if the set $A$ from Lemma \ref{lem:close-to-pure-signed-sicsirv} is empty) or
\begin{equation}
\label{eq:BB2}
\bB = V' +  q_2  \bW_{2} +  q_3  \bW_3 + (q_3 + \tau_{2,3}  q_2) \bW_{2,3}
\end{equation}
(if $A=\{(2,3)\}$),
where all the distributions $\bW_2,\bW_3$ (and possibly $\bW_{2,3}$) are independent signed PBDs with variance at least $\Omega(1/\eps^{C^2/4})$ and $\tau_{2,3} \in \{-1,1\}.$  

Let us first suppose that (\ref{eq:BB1}) holds, so $\bS'+V$ is $O(\eps^{C^2/20})$-close to 
\begin{equation} \label{eq:pig}
V'' +  q_1 \bS_1 + q_2 \bW_{2} +  q_3 \bW_3,
\end{equation}
where $V'' = V + V'$. Since, by Fact~\ref{fact:good-shift-invariance}, $q_2  \bW_2$ is $O(\eps^{C^2/8})$-shift-invariant at scale $q_2$, recalling the support of $\bS_1$ we get that $\bS'+V$ is $(O(\eps^{C^2/20}) + O(\eps^{C^2/8}/\eps^{C+1}))$-close (note that this is $O(\eps^{C^2/20})$ for sufficiently large constant $C$) to 
\[
V'' +  (q_1  + q_2)\bS_1 + q_2 \bW_{2} +  q_3 \bW_3 =V'' +  q_2 \bW_{2} +  q_3 (\bW_3 + \bS_1).
\]
Again using the support bound on $\bS_1$ and Fact~\ref{fact:good-shift-invariance} (but now on $q_3 \bW_3$), we get that $\bS'+V$, and therefore $\bS^*$, is $O(\eps^{C^2/20})$-close to
\begin{equation} \label{eq:wig}
V'' +  q_2 \bW_{2} +  q_3 \bW_3.
\end{equation}
We can now apply the algorithm in Lemma~\ref{lem:learn-sum-two-high-variance-PBDs} to semi-agnostically learn the distribution $V'' +  q_2 \bW_{2} +  q_3 \bW_3$ with $\mathsf{poly}(1/\epsilon)$ samples and time complexity.

%
%
%

Next, let us consider the remaining possibility in this case which is that (\ref{eq:BB2}) holds.\ignore{ We henceforth assume that $\tau_{2,3} = -1$, as an entirely similar analysis to what we give below goes through for the other case.  (To see this, note that if $\tau_{2,3}=-1$ then the three coefficients are $q_2,q_3,q_3-q_2=q_1$ and hence the sum of two of the coefficients is equal to the third, while if $\tau_{2,3}=1$ then the three coefficients are $q_2,q_3,q_3+q_2$ and again the sum of two of the coefficients is equal to the third.)}  If $\tau_{2,3}=-1$, then $\bS'+V$ is $O(\eps^{C^2/20})$-close to 
\[
V' +  q_1 \bS_1 + q_2  \bW_{2} +  q_3  \bW_3 + (q_3 -  q_2) \bW_{2,3}=
V' +  q_1 \bS_1 + q_2  \bW_{2} +  q_3  \bW_3 + q_1 \bW_{2,3},
\]
and using Fact~\ref{fact:good-shift-invariance} as earlier, we get that $\bS'+V$ is $O(\eps^{C^2/20})$-close to
\begin{equation} \label{eq:bag}
V'' +  q_2 \bW_{2} +  q_3 \bW_3 + q_1 \bW_{2,3}.
\end{equation}
On the other hand, if $\tau_{2,3}=1$ then $\bS'+V$ is $O(\eps^{C^2/20})$-close to 
\[
V'' +  q_1 \bS_1 + q_2  \bW_{2} +  q_3  \bW_3 + (q_3 +  q_2) \bW_{2,3},
\]
and by the analysis given between (\ref{eq:pig}) and (\ref{eq:wig}) we get that $\bS'+V$ is $O(\eps^{C^2/20})$-close to
\begin{equation}  \label{eq:hag}
V'' + q_2  \bW_{2} +  q_3  \bW_3 + (q_3 +  q_2) \bW_{2,3},
\end{equation}

In either case (\ref{eq:hag}) or (\ref{eq:bag}),\ignore{ with trivial changes, using $q_1 + q_2 = q_3$ and interchanging the roles of $\bW_{2,3}$ and $\bW_3$). For either of these cases,} we can use Lemma~\ref{lem:learn-sum-3-high-variance-PBDs} to learn the target distribution with $\poly(1/\epsilon)$ samples and running time. 

\medskip

%

\subsection{The case that $\ell_0=1$.} \label{sec:a0-is-one}

In this case we have $c_{q_1}, c_{q_2}, c_{q_3} \geq (1/\eps)^C.$  Assuming that $C \ge 96$, we appeal to Lemma~\ref{lem:switch-all-big}
to obtain that there are independent signed PBDs $\bS_1$, $\bS_2$ and $\bS_3$, each with variance at least $1/\epsilon^{2}$, such that 
\[
\dtv(\bS', V + q_1 \bS_1 + q_2 \bS_2 + q_3 \bS_3) \leq O(\eps^2).
\]
As before, we can appeal to Lemma~\ref{lem:learn-sum-3-high-variance-PBDs} to learn the target distribution with  $\poly(1/\epsilon)$ samples and running time.
 
%
%

\subsection{Deferred proofs and learning algorithms from the earlier cases }

\subsubsection{Learning algorithm for weighted sums of two PBDs}

\begin{lemma} \label{lem:learn-sum-two-high-variance-PBDs}
There is a universal constant $C_1$ such that the following holds:  
Let $\bS^\twohigh$ be a distribution of the form $p \cdot \bS^{(p)} + q \cdot \bS^{(q)} + V$, where both $\bS^{(p)}$ and $\bS^{(q)}$ are independent PBD$_N$ distributions with variance at least $1/\eps^{C_1}$ and $V \in \Z.$  Let $\bS$ be a distribution with $\dtv(\bS,\bS^\twohigh) \leq \eps.$  There is an algorithm with the following property:  The algorithm is given $\eps,p,q$ and access to i.i.d.\ draws from $\bS.$  The algorithm makes $\poly(1/\eps)$ draws, runs in $\poly(1/\eps)$ time, and with probability 999/1000 outputs a hypothesis distribution $\bH$ satisfying $\dtv(\bH,\bS) \leq O(\eps).$
\end{lemma}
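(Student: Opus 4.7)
The plan is to apply the kernel-based learning machinery of Lemma~\ref{lem:siirv-kernel} with $K=2$, $\bS_{\textsf{offset}}$ a point mass at $V$ (so $m=1$), and $p_1=p$, $p_2=q$. The only non-trivial ingredient Lemma~\ref{lem:siirv-kernel} requires is a pair of constant-factor approximations $\gamma_p,\gamma_q$ to $\sqrt{\Var[\bS^{(p)}]}$ and $\sqrt{\Var[\bS^{(q)}]}$. Obtaining $\gamma_p$ and $\gamma_q$ using $\poly(1/\eps)$ samples \emph{independent of $p,q$} is the heart of the argument; once we have them, the corollary plus hypothesis selection (Corollary~\ref{cor:guess}) delivers the desired learner. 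A first reduction lets us assume $\gcd(p,q)=1$: if $d=\gcd(p,q)$ then $\bS^{\twohigh}-V$ is a $d$-scaling of $p'\bS^{(p)}+q'\bS^{(q)}$ with $p'=p/d$, $q'=q/d$, and a learner for the latter immediately yields one for the former.

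To obtain $\gamma_p$ I would first, as in Theorem~\ref{thm:learn-some-large}, draw $O(\log(1/\eps))$ pairs $(s^{(1)},s^{(2)})$ from $\bS$ and form the candidate values $|s^{(1)}-s^{(2)}|$. Because both PBDs have variance $\ge 1/\eps^{C_1}$, Theorem~\ref{thm:CGS} together with Gaussian anti-concentration guarantees that with constant probability $|s^{(1)}-s^{(2)}|$ is a factor-$2$ approximation to $\sqrt{\Var[\bS^{\twohigh}]}$, and boosting gives this with probability $1-\eps$. Guessing (two possibilities) which of $p\sqrt{\Var[\bS^{(p)}]}$, $q\sqrt{\Var[\bS^{(q)}]}$ is larger and, WLOG, that it is the $p$-component, we divide by $p$ to obtain a factor-$2$ estimate $\gamma_p$.

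For $\gamma_q$ I would branch on three mutually-exhaustive hypotheses about $\Var[\bS^{(q)}]$, run all three in parallel, and let hypothesis selection choose. The three cases are as foreshadowed in Section~\ref{sec:a0-is-two} of the overview. \textbf{Case 1} ($\Var[\bS^{(q)}]\le \eps^2 p^2$): Sample a fresh pair $(s^{(3)},s^{(4)})$ and look at $\Delta=(s^{(3)}-s^{(4)})\bmod p$. Up to an $O(\eps)$ total-variation error (since $\bS\stackrel{\eps}{\approx}\bS^{\twohigh}$), this equals $q(\bS^{(q)}-\bS^{(q)\prime})\bmod p$, and multiplying by $q^{-1}\bmod p$ (well defined as $\gcd(p,q)=1$) recovers $(\bS^{(q)}-\bS^{(q)\prime})\bmod p$. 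Bernstein's inequality applied to the signed PBD $\bS^{(q)}-\bS^{(q)\prime}$ shows that with probability $1-o(1)$ its absolute value is $\ll p/2$, hence the modular reduction does not wrap and the representative in $(-p/2,p/2]$ equals $\bS^{(q)}-\bS^{(q)\prime}$; anti-concentration then makes its absolute value a constant-factor estimate of $\sqrt{\Var[\bS^{(q)}]}$ with constant probability, which is boosted by repetition. \textbf{Case 2} ($\eps^2 p^2\le \Var[\bS^{(q)}]\le p/\eps$): generate $O(\log(1/\eps))$ multiplicatively spaced guesses that cover the range. \textbf{Case 3} ($\Var[\bS^{(q)}]> p/\eps$): invoke Lemma~\ref{l:mix} with $D=2$, $r_1=p$, $r_2=q$, $\sigma_{\min}^2=1/\eps^{C_1}$ (for $C_1$ large enough), obtaining a single signed PBD $\bW$ with $\Var[\bW]=\Var[p\bS^{(p)}+q\bS^{(q)}]$ and $\dtv(p\bS^{(p)}+q\bS^{(q)},\bW)=O(\eps)$; then $\bS$ is $O(\eps)$-close to a shifted single PBD, which is learnable via a single application of Lemma~\ref{lem:siirv-kernel} with $K=1$ using only the already-obtained variance estimate.

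Each case yields $\poly(1/\eps)$ candidate hypotheses; by construction, one of them (in the ``correct'' case with the correct estimate) satisfies the hypothesis of Lemma~\ref{lem:siirv-kernel} with $\kappa=O(\eps)$, producing a hypothesis $O(\eps)$-close to $\bS$. An application of Corollary~\ref{cor:guess} selects a hypothesis with loss $O(\eps)$ using an extra $\poly(1/\eps)$ samples. The main obstacle I anticipate is the precise concentration/anti-concentration analysis in Case 1: one must verify that the modular reduction of the signed-PBD difference does not wrap around with the desired probability while simultaneously having non-trivial magnitude, carefully handling the $O(\eps)$ slack introduced by the semi-agnostic hypothesis $\dtv(\bS,\bS^{\twohigh})\le\eps$. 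Everything else is routine given the structural and algorithmic machinery already in hand.
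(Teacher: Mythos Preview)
Your proposal is essentially the paper's proof: the same three-case split on the size of $\sigma_q$ relative to $p$, the same modular trick in Case~1 (compute $q^{-1}(s^{(3)}-s^{(4)})\bmod p$ to read off $\sigma_q$), the same multiplicative grid in Case~2, and the same appeal to Lemma~\ref{l:mix} in Case~3, all glued together by hypothesis selection and fed into Lemma~\ref{lem:siirv-kernel}. Two small differences are worth noting: (i) you explicitly reduce to $\gcd(p,q)=1$ up front, which the paper's write-up tacitly assumes when it inverts $q$ modulo $p$---your formulation is cleaner here; (ii) your Case~2/Case~3 boundary is $\Var[\bS^{(q)}]\gtrless p/\eps$ whereas the paper uses $\sigma_q\gtrless p/\eps$; yours matches the hypothesis of Lemma~\ref{l:mix} exactly, and either choice keeps the Case~2 grid at $O(\log(1/\eps))$ guesses, so both work.
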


\begin{proof}
The high level idea of the algorithm is similar to Theorem~\ref{thm:learn-some-large}. First, assume that 
$\Var[ p \cdot \bS^{(p)}] \ge \Var[ q \cdot \bS^{(q)}]$ (the other case is identical, and the overall algorithm tries both possibilities and does hypothesis testing). Let $\sigma_p^2 = \Var [\bS^{(p)}]$, $\sigma_q^2 = \Var [\bS^{(q)}]$ and $\sigma_{\twohigh}^2 = \Var[\bS^\twohigh]$. We consider three cases depending upon the value of $\sigma_q$ and show that in each case the kernel based approach (i.e. Lemma~\ref{lem:siirv-kernel}) can be used to learn the target distribution $\bS^\twohigh$ with $\poly(1/\epsilon)$ samples (this suffices, again by hypothesis testing).  We now provide details.

\medskip
\noindent
\textbf{Estimating the variance of $\bS^\twohigh$:} The algorithm first estimates the variance of $\bS^\twohigh$. This is done by sampling two elements $s^{(1)}, s^{(2)}$ from $\bS^\twohigh$ and letting $|s^{(1)}- s^{(2)}|=\widehat{\sigma}_{\twohigh}$. Similar to the analysis of Theorem~\ref{thm:learn-some-large}, it is easy to show that with probability $\Omega(1)$, we have
\begin{equation}
\label{eq:twohigh}
\frac{1}{\sqrt{2}} \cdot \sigma_{\twohigh} \le \widehat{\sigma}_{\twohigh} \le \sqrt{2} \cdot \sigma_{\twohigh}.
\end{equation}
\noindent
\textbf{Guessing the dominant variance term and the relative magnitudes:} Observe that
\[
\Var[\bS^\twohigh] = \Var[p \cdot \bS^{(p)}] + \Var[q \cdot \bS^{(q)}].
\]
The algorithm next guesses whether $p \cdot \sigma_p \ge q \cdot \sigma_q$ or vice-versa. Let us assume that it is the former possibility. The algorithm then guesses one of the three possibilities: (i) $\sigma_q \le \epsilon \cdot p$, (ii) $\epsilon  \cdot p\le \sigma_q < p/\epsilon$, (iii) $\sigma_q > p/\epsilon$. The chief part of the analysis is in showing that in each of these cases, the algorithm can draw $O(1/\epsilon^2)$ samples from $\bS$ and (with the aid of Lemma~\ref{lem:siirv-kernel}) can produce a hypothesis $\bH$ such that $\dtv(\bH, \bS^\twohigh) = O(\epsilon)$. 
\begin{enumerate}

\item [(i)] In this case, we assume $\sigma_q \le \epsilon \cdot p$. This case is the \emph{crucial point of difference} where we save the factor of $\log \log p$ as opposed to the case $k>3$; this is done by working modulo $p$ to estimate $\sigma_q$.  (This is doable in this case because $\sigma_q$ is so small relative to $p$.) The algorithm samples two points $s^{(3)}, s^{(4)} \sim  \bS$; note that with probability $1-O(\eps)$ these points are distributed exactly as if they were drawn from $\bS^\twohigh$, so we may analyze the points as if they were drawn from $\bS^\twohigh.$  Let us assume that  $s^{(3)} = p \cdot s^{(3)}_p + q\cdot s^{(3)}_q + V$, $s^{(4)} = p \cdot s^{(4)}_p + q\cdot s^{(4)}_q + V$ where $s^{(3)}_p,s^{(4)}_p$ are i.i.d. draws from $\bS^{(p)}$ and similarly for $s^{(3)}_q,s^{(4)}_q$. 
 Then, note that with probability at least $1/10$, we have 
 $$\frac{1}{\sqrt{2}} \cdot  \sigma_q \le |s^{(4)}_q- s^{(3)}_q| \le \sqrt{2} \cdot \sigma_q .$$ 
This immediately implies that if we define $\widehat{\sigma}_q= q^{-1} \cdot (s^{(3)} - s^{(4)}) \ (\mathrm{mod} \  p)$, then $\widehat{\sigma}_q = |s^{(4)}_q- s^{(3)}_q|$, and thus 
\[
\frac{1}{\sqrt{2}} \cdot  \sigma_q \le \widehat{\sigma_q} \le \sqrt{2} \cdot \sigma_q .
\]
This gives one of the estimates required by Lemma~\ref{lem:siirv-kernel}; for the other one, we observe that defining
$\widehat{\sigma}_p:=\widehat{\sigma}_\twohigh/p$, having $p \sigma_p \in [{\frac {\Var[\bS^\twohigh]} 2},\Var[\bS^\twohigh]]$ and (\ref{eq:twohigh}) together give that
\[
\frac{1}{2} \cdot \sigma_p \le \widehat{\sigma}_p \le 2\sigma_p. 
\]
We can now apply Lemma~\ref{lem:siirv-kernel} to get that using $\poly(1/\epsilon)$ samples, we can produce a hypothesis distribution $\bH_{\mathrm{low}}$ such that $\dtv(\bH_{\mathrm{low}}, \bS) = O(\epsilon)$. 

\item [(ii)] In this case, we assume $\epsilon \cdot p < \sigma_q \le  p \cdot (1/\epsilon)$. In this case we simply guess one of the $O(\log(1/\eps)/\eps)$ many values 
\[
\hat{\sigma}_q \in \left\{ {\frac p {(1 + \eps/10)^i}} \right\}_{i \in \{-O(\ln(1/\eps)/\eps),\dots,O(\ln(1/\eps)/\eps)\}}
\]
and one of these guesses $\hat{\sigma}_q$ for $\sigma_q$ will be $(1+\eps/10)$-multiplicatively accurate. For each of these  values of $\hat{\sigma_q}$, as in case (ii) we can get a multiplicatively accurate estimate $\widehat{\sigma}_p$ of $\sigma_p$, so again by invoking Lemma~\ref{lem:siirv-kernel} we can create a hypothesis distribution $\bH_{\mathrm{med}, i}$, and for the right guess we will have that 
$\dtv(\bH_{\mathrm{med}, i}, \bS) = O(\epsilon)$. 

\item [(iii)] In this case, we invoke Lemma~\ref{l:mix} to get that there is a signed PBD $\bS'$ such that $\dtv(\bS', p \cdot \bS^{(p)} + q \cdot \bS^{(q)}) = O(\epsilon)$. This also yields that there is a signed PBD $\bS''= \bS' +V$ such that $\dtv(\bS'', \bS^{\twohigh}) = O(\epsilon)$. By a trivial application of Lemma~\ref{lem:siirv-kernel}, using $\poly(1/\epsilon)$ samples,  we obtain a hypothesis $\bH_{\mathrm{high}}$ such that $\dtv(\bH_{\mathrm{high}}, \bS) = O(\epsilon)$. 

\end{enumerate}
Finally, invoking the $\mathrm{Select}$ procedure from Proposition \ref{prop:log-cover-size} on the hypothesis distributions 
\[
\bH_{\mathrm{low}}, \{\bH_{\mathrm{med}, i}\}_{i \in \{-O(\ln(1/\eps)/\eps),\dots,O(\ln(1/\eps)/\eps)\}} \text{~and~}\bH_{\mathrm{high}},
\] 
we can use an additional $\poly(1/\epsilon)$ samples to output a distribution $\bH$ such that $\dtv(\bH, \bS) = O(\epsilon)$. 
\end{proof}

\subsubsection{Learning algorithm for weighted sums of three PBDs}~\label{sec:three-learn}
We now give an algorithm for learning a distribution of the form $p \cdot \bS^{(p)} + q \cdot \bS^{(q)} + r \cdot \bS^{(r)} + V$ where $r =  p + q$. 

\begin{lemma} \label{lem:learn-sum-3-high-variance-PBDs}
There is a universal constant $C_1$ such that the following holds:  
Let $\bS^\threehigh$ be a distribution of the form $p \cdot \bS^{(p)} + q \cdot \bS^{(q)} + r \cdot \bS^{(r)} + V$, where $\bS^{(p)}, \bS^{(q)}$ and $\bS^{(r)}$ are independent PBD$_N$ distributions with variance at least $1/\eps^{C}$ and $V \in \Z$ and $r=q+p.$   Let $\bS$ be a distribution with $\dtv(\bS,\bS^\threehigh) \leq \eps.$  There is an algorithm with the following property:  The algorithm is given $\eps,p,q,r$ and access to i.i.d.\ draws from $\bS.$  The algorithm makes $\poly(1/\eps)$ draws, runs in $\poly(1/\eps)$ time, and with probability 999/1000 outputs a hypothesis distribution $\bH$ satisfying $\dtv(\bH,\bS) \leq O(\eps).$
\end{lemma}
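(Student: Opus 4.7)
The plan is to extend the case-based approach of Lemma~\ref{lem:learn-sum-two-high-variance-PBDs} to three summands, crucially exploiting the additive relation $r = p + q$, which forces $\gcd(p,r) = \gcd(q,r) = \gcd(p,q)$. The overall goal is to produce, for each possible configuration of relative sizes of the three variances, a factor-$O(1)$ multiplicative estimate of each of $\sigma_p := \sqrt{\Var[\bS^{(p)}]}$, $\sigma_q$, and $\sigma_r$; feed the triple into Lemma~\ref{lem:siirv-kernel} (with $K=3$) to obtain a candidate hypothesis; and select the winner with the $\mathrm{Select}$ procedure of Proposition~\ref{prop:log-cover-size}. Since the total set of candidates across all guesses will have size $\poly(1/\eps)$, the final sample complexity stays at $\poly(1/\eps)$.

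As in the two-PBD case, two samples from $\bS$ give, with constant probability, a factor-$\sqrt 2$ estimate $\hat\sigma$ of $\sigma_\threehigh := \sqrt{\Var[\bS^\threehigh]}$. Trying all $3! = 6$ orderings of $\{p\sigma_p, q\sigma_q, r\sigma_r\}$ and hypothesis testing at the end, I assume WLOG that $r\sigma_r$ is the largest, which immediately yields a factor-$O(1)$ estimate $\hat\sigma_r = \hat\sigma/r$ of $\sigma_r$. I then branch on the magnitude of $\sigma_r^2$ relative to $\max(p,q)/\eps^{C''}$ for an appropriate constant $C'' \ll C$. In the ``large-$\sigma_r$'' sub-case, Lemma~\ref{l:mix} applied to $p \bS^{(p)} + r \bS^{(r)}$ (with $\gcd(p,r) = \gcd(p,q) =: g$) collapses this pair to $g \bS^*$ for a signed PBD $\bS^*$ whose variance matches; this reduces $\bS^\threehigh$, up to $O(\eps)$ in total variation, to a weighted sum of two PBDs with coefficients $q$ and $g$, which is handled by Lemma~\ref{lem:learn-sum-two-high-variance-PBDs}.

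In the ``small-$\sigma_r$'' sub-case $\sigma_r^2 < \max(p,q)/\eps^{C''}$, the bound $p\sigma_p, q\sigma_q \leq r\sigma_r$ forces $\sigma_p^2, \sigma_q^2$ to also be small relative to the other coefficients, and the modular-arithmetic tricks from Lemma~\ref{lem:learn-sum-two-high-variance-PBDs}(i) generalize. The key observation is that $r \equiv q \pmod p$ (since $r=p+q$), so a difference of two fresh samples reduced modulo $p$ equals $q\bigl((\bS^{(q)} - \bS'^{(q)}) + (\bS^{(r)} - \bS'^{(r)})\bigr) \bmod p$; after dividing through by $g = \gcd(p,q)$ to make $q/g$ invertible modulo $p/g$ and multiplying by $(q/g)^{-1}$, the absolute value of the resulting integer in $[-(p/g)/2, (p/g)/2]$ is, with constant probability, a factor-$O(1)$ estimate of $\sqrt{\sigma_q^2 + \sigma_r^2}$; combined with $\hat\sigma_r$, this yields $\hat\sigma_q$ up to a constant factor. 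Entirely symmetrically, using $r \equiv p \pmod q$, a mod-$q$ computation yields $\hat\sigma_p$. In the residual ``medium'' sub-regimes (where some variance sits between $\eps \cdot$ coefficient and coefficient$/\eps$), I grid multiplicatively over $O(\log(1/\eps))$ values exactly as in case~(ii) of Lemma~\ref{lem:learn-sum-two-high-variance-PBDs}.

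With factor-$O(1)$ estimates of all three variances in hand for the correct guess, Lemma~\ref{lem:siirv-kernel} (with $K=3$ and $m=1$) produces a hypothesis $\bH$ with $\dtv(\bH,\bS)=O(\eps)$ from $\poly(1/\eps)$ additional samples, and Proposition~\ref{prop:log-cover-size} selects such a hypothesis from the $\poly(1/\eps)$-sized candidate pool using $\poly(1/\eps)$ more samples. I expect the main obstacle to be the quantitative bookkeeping when $\sigma_q^2 + \sigma_r^2$ (respectively $\sigma_p^2 + \sigma_r^2$) sits near the threshold $(p/g)^2$ (respectively $(q/g)^2$) at which the mod-$p$ (respectively mod-$q$) arithmetic just barely faithfully reflects the true integer difference: the sub-case split on $\sigma_r^2$ versus $\max(p,q)/\eps^{C''}$ is engineered precisely to control this, and balancing the constant $C''$ against the constant $C$ in the hypothesis, together with the gridding depth in the medium regime, is where the bulk of the technical care is required.
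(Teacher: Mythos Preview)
Your plan is markedly more elaborate than the paper's, and the large/small split as you have it does not actually close. The paper orders the PBDs by their \emph{raw} standard deviations, guessing (say) $\sigma_p \geq \sigma_q \geq \sigma_r$, and splits into just two cases. If $\sigma_r \geq \eps^5 \sigma_p$ then all three of $\sigma_p,\sigma_q,\sigma_r$ lie within a $\poly(1/\eps)$ multiplicative window, so a multiplicative grid of $\poly(1/\eps)$ triples (anchored at $\hat\sigma_\threehigh$) covers the truth and Lemma~\ref{lem:siirv-kernel} finishes. If $\sigma_r < \eps^5 \sigma_p$, the paper does \emph{not} use Lemma~\ref{l:mix} or any modular trick: it exploits the additive relation to \emph{split} the small term, writing $r\bS^{(r)} = p\bS^{(r)} + q\bS^{(r)}$; then the $O(1/\sigma_p)$-shift-invariance of $p\bS^{(p)}$ at scale $p$ (Fact~\ref{fact:good-shift-invariance}) together with a Chernoff bound on $\bS^{(r)}$ lets one replace $p\bS^{(r)}$ by a constant shift $pV'$ up to $O(\eps^4)$ error, leaving $V'' + p\bS^{(p)} + q(\bS^{(q)}+\bS^{(r)})$, which is handed directly to Lemma~\ref{lem:learn-sum-two-high-variance-PBDs}.

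Concretely, the gap in your plan: in your ``large-$\sigma_r$'' branch you invoke Lemma~\ref{l:mix} on $p\bS^{(p)} + r\bS^{(r)}$ with $r\bS^{(r)}$ dominant, but that lemma then requires $\Var[\bS^{(p)}] \geq r/\eps'$, a lower bound on $\sigma_p$, not on $\sigma_r$; your branch condition says nothing about $\sigma_p$. In your ``small-$\sigma_r$'' branch the mod-$p$ estimate of $\sqrt{\sigma_q^2 + \sigma_r^2}$ needs that quantity to be $\ll p/g$, but from $q\sigma_q \leq r\sigma_r$ and $\sigma_r^2 < \max(p,q)/\eps^{C''}$ you only get $\sigma_q \leq (r/q)\sigma_r$, which can vastly exceed $p/g$ (take $q=1$, $p$ large, $\sigma_r \approx \sqrt{p}$, $\sigma_q \approx p^{3/2}$, with $r\sigma_r$ still largest). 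This is not a threshold-bookkeeping issue; the regimes genuinely fail to cover. The paper's splitting trick sidesteps all of this by reducing to Lemma~\ref{lem:learn-sum-two-high-variance-PBDs}, where the single needed modular argument is already encapsulated.
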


\begin{proof}
The algorithm begins by sampling two points $s^{(1)}, s^{(2)}$ from $\bS$.  Similar to the preceding proof, with probability $\Omega(1)$ we have 
\[
\frac{1}{\sqrt{2}} \cdot \sigma_{\threehigh}\le \hat{\sigma}_{\threehigh} \le \sqrt{2} \cdot \sigma_{\threehigh}, 
\] 
where $\sigma^2_{\threehigh} = \Var[\bS^{\threehigh}]$. Having obtained an estimate of $\sigma_{\threehigh}$, let us now assume (without loss of generality, via hypothesis testing) that $\sigma_p \ge \sigma_q \ge \sigma_r$. Similar to  Lemma~\ref{lem:learn-sum-two-high-variance-PBDs}, we consider various cases, and for each case (and relevant guesses) we run Lemma~\ref{lem:siirv-kernel} and obtain a hypothesis distribution for each of these guesses. Finally, we will use procedure $\mathrm{Select}$ (Proposition~\ref{prop:log-cover-size}) on the space of these hypotheses to select one. Let us now consider the cases: 
\begin{enumerate}

\item $\sigma_r \ge \epsilon^{5} \cdot \sigma_p$: In this case, note that given $\widehat{\sigma}_{\threehigh}$, we can construct a grid $J$ of $\poly(1/\eps)$ many triples such that there exists $\overline{\gamma}=(\gamma_p, \gamma_q, \gamma_r) \in J$  such that for $\alpha \in \{p,q,r\}$, 
$$
\frac{1}{\sqrt{2}} \cdot \sigma_\alpha \le \gamma_{\alpha} \le \sqrt{2} \cdot \sigma_{\alpha}. 
$$
For each such possibility $\overline{\gamma}$, we can apply Lemma~\ref{lem:siirv-kernel} which uses $\poly(1/\epsilon)$ samples; as before, for the right guess, we will obtain a hypothesis $\bH_{\overline{\gamma}}$ such that 
$\dtv(\bH_{\overline{\gamma}}, \bS) = O(\epsilon)$.  

\item $\sigma_r \le \epsilon^{5} \cdot \sigma_p$: In this case, since $r=p+q$, 
\[
p \cdot \bS^{(p)} + q \cdot \bS^{(q)} + r \cdot \bS^{(r)} = p \cdot \bS^{(p)} + q \cdot \bS^{(q)} + (p+q) \cdot \bS^{(r)}.
\]
As $\sigma_r \le \epsilon^{5} \cdot \sigma_p$, using the $O(1/\sigma_p)$-shift-invariance of $p \cdot \bS^{(p)}$ at scale $p$ that follows from Fact~\ref{fact:good-shift-invariance} and a Chernoff bound on $\bS^{(r)}$,
we get that for some integer $V'$,
\[
\dtv(p \cdot \bS^{(p)} + q \cdot \bS^{(q)} + r \cdot \bS^{(r)},p \cdot (\bS^{(p)} + V') + q \cdot (\bS^{(q)} + \bS^{(r)})) = O(\epsilon^4). 
\]
Thus we have
\[
\dtv(\bS^{\threehigh}, V''+p \cdot {\bS}^{(p)} + q \cdot (\bS^{(q)} + \bS^{(r)})) = O(\epsilon^4)
\]
for some integer $V''.$  However, now we are precisely in the same case as Lemma~\ref{lem:learn-sum-two-high-variance-PBDs}. Thus, using $\poly(1/\epsilon)$ samples, we can now obtain $\bH_{(2)}$ such that $\dtv(\bH_{(2)}, \bS) = O(\epsilon)$. 
\end{enumerate}
Finally, we apply  $\mathrm{Select}$ (Proposition~\ref{prop:log-cover-size}) on $\bH_{(2)}$ and  $\{\bH_{\overline{\gamma}}\}_{\overline{\gamma} \in J}$. This finishes the proof. 
\end{proof}

\subsubsection{Structural lemma for decomposing a heavy 
distribution into a sum of 
weighted sums of PBDs}

Our goal in this subsection is to prove Lemma~\ref{lem:switch-all-big}.  To do this, we will need a slightly more detailed version of Lemma~\ref{lem:close-to-pure-signed-sicsirv}
in the case that $K=2$, which is implicit in the proof of that lemma
(using the case that $A = \emptyset$ for (\ref{e:A.empty}) and
the case that $A = \{ (1,2) \}$ for (\ref{e:A.12})).

\begin{lemma} \label{lem:close-to-pure-signed-sicsirv.K=2}
Under the assumptions of Lemma~\ref{lem:close-to-pure-signed-sicsirv} in the
case that $K=2$, there is an integer $V'$, and independent signed PBDs 
$\bW_{1,1}$, $\bW_{2,2}$ and $\bW_{1,2}$, all with variance at least $\Omega(\BIG^{1/4})$,
such that either
\begin{equation}
\label{e:A.empty}
d_{TV} (\bS', V' + q_1 \bW_{1,1} + q_2 \bW_{2,2}) =  O(\BIG^{-1/20}),
\end{equation}
or
\begin{equation}
\label{e:A.12}
d_{TV} (\bS', V' + q_1 \bW_{1,1} + q_2 \bW_{2,2} + (q_2 + \sign(\Cov(M_1,M_2)) q_1) \bW_{1,2}) 
  =  O(\BIG^{-1/20}),
\end{equation}
where $\bM = (\bM_1, \bM_2)$ is as defined in (\ref{eq:M}).
\end{lemma}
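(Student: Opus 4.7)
The plan is to derive this statement by inspecting the already-established proof of Lemma~\ref{lem:close-to-pure-signed-sicsirv} in the special case $K=2$, and then dispatching a small amount of bookkeeping.  Since $K=2$, the index set $\{(a,b): 1 \le a < b \le K\}$ has a single element $(1,2)$, so the subset $A$ produced by Lemma~\ref{lem:close-to-pure-signed-sicsirv} is either $\emptyset$ or $\{(1,2)\}$; the proposed two cases (\ref{e:A.empty}) and (\ref{e:A.12}) correspond exactly to these two outcomes.  The $O(K^{71/20}/\BIG^{1/20}) = O(\BIG^{-1/20})$ bound on total variation distance is then the bound already supplied by Lemma~\ref{lem:close-to-pure-signed-sicsirv} (all factors of $K$ are absorbed into the $O(\cdot)$ since $K=2$).

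The first step is to recover the variance lower bound on every signed PBD that appears.  For $\bW_{1,1}$ and $\bW_{2,2}$ the proof of Lemma~\ref{lem:dG-to-comb-signed-PBDs} already gives $\Var[\bW_{a,a}] = \sigma_{a,a}^2 \ge \BIG/K = \BIG/2$, which is well above $\Omega(\BIG^{1/4})$.  In the case $A = \{(1,2)\}$, the branch (I) in the proof of Lemma~\ref{lem:dG-to-comb-signed-PBDs} builds $\bW_{1,2}$ so that $\Var[\bW_{1,2}] = \sigma_{1,2}^2 \ge \sigma^{1/2} = (\BIG/2)^{1/4} = \Omega(\BIG^{1/4})$.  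In the case $A = \emptyset$, the $\bW_{1,2}$ built inside the proof is trivial (branch (II)) and does not appear on the right-hand side of (\ref{e:A.empty}); to satisfy the statement of the current lemma I would simply pick, independently of everything else, any signed PBD of variance at least $\Omega(\BIG^{1/4})$ (for instance a sum of sufficiently many fair coins) and call it $\bW_{1,2}$.  Independence of $\bW_{1,1},\bW_{2,2},\bW_{1,2}$ is built into the construction of Lemma~\ref{lem:dG-to-comb-signed-PBDs}.

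For the $A=\{(1,2)\}$ case, it remains to identify the sign in front of $q_1$ in the coefficient $(q_2 + \sign(\Cov(\bM_1,\bM_2))\, q_1)$ with the $\tau_{1,2}$ supplied by Lemma~\ref{lem:close-to-pure-signed-sicsirv}.  By the construction there, $\tau_{1,2} = \sign(\tilde{\Sigma}_{1,2})$.  Since $\tilde{\Sigma} = \Cov(\bM)$ by Lemma~\ref{lem:CLT-mean-0-multinomial}, we have $\tilde{\Sigma}_{1,2} = \Cov(\bM_1,\bM_2)$, so $\tau_{1,2} = \sign(\Cov(\bM_1,\bM_2))$ as required.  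Taking the dot product of the approximation of $\bM$ by $(\bS_1,\bS_2)$ with the weight vector $(q_1,q_2)$ and collecting the resulting integer shift (from the trivial $\bW_{1,2}$ in the $A=\emptyset$ case, together with the shift $V$ coming from the passage from $\bS$ to $\bS'$ and the integer $V'$ coming from Lemma~\ref{lem:close-to-pure-signed-sicsirv}) into a single integer $V'$ then produces either (\ref{e:A.empty}) or (\ref{e:A.12}).

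There is no substantive obstacle: every ingredient is already contained in the proofs of Lemmas~\ref{lem:CLT-mean-0-multinomial}, \ref{lem:dG-to-comb}, \ref{lem:dG-to-comb-signed-PBDs}, and \ref{lem:close-to-pure-signed-sicsirv}.  The only care required is (i) to check the sign identification $\tau_{1,2}=\sign(\Cov(\bM_1,\bM_2))$, which is immediate from the definition of $\tilde\Sigma$, and (ii) to handle the cosmetic point that the lemma's statement insists on three independent PBDs of variance $\Omega(\BIG^{1/4})$ even in the $A=\emptyset$ case where $\bW_{1,2}$ is not used, which is handled by adjoining an arbitrary independent large-variance signed PBD.
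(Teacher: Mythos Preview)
Your proposal is correct and matches the paper's approach exactly: the paper states that this lemma ``is implicit in the proof of [Lemma~\ref{lem:close-to-pure-signed-sicsirv}] (using the case that $A = \emptyset$ for (\ref{e:A.empty}) and the case that $A = \{ (1,2) \}$ for (\ref{e:A.12})),'' and you have faithfully unpacked that implicit argument, including the sign identification $\tau_{1,2}=\sign(\tilde\Sigma_{1,2})=\sign(\Cov(\bM_1,\bM_2))$ and the cosmetic handling of $\bW_{1,2}$ in the $A=\emptyset$ case.
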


Let  $\bS' =\sum_{i=1}^N \bX'_i$ where each $\bX'_i$ is $0$-moded and supported on $\{0,\pm p, \pm q , \pm r\}$ where $r = q+p$. Each random variable $\bX'_i$ has a support of size $3$, and by inspection of how $\bX'_i$ is obtained from $\bX_i$, we see that each $\bX'_i$ is supported either on $\{0,p,r\}$, or on $\{-p,0,q\}$, or on $\{-r,-q,0\}$. If for $\alpha \in \{p,q,r\}$, we define $c_{\alpha} = \sum_{i=1}^N \Pr[X'_i = \pm\alpha]$, then we have the following lemma. 

 \begin{lemma} \label{lem:switch-all-big}
Let $\bS' =\sum_{i=1}^N \bX'_i$ as described above where $c_p,c_q,c_r  > 1/\eps^C$.  Then we have
\begin{equation} \label{eq:Sprimecloseredux}
\dtv(\bS', V + p \cdot \bS^{(p)} + q \cdot \bS^{(q)} + r \cdot \bS^{(r)}) 
    \leq O(\eps^{C_1}),
\end{equation}
where $V$ is a constant, $C_1 = C/48$ and $\bS^{(p)},  \bS^{(q)}$ and $\bS^{(r)}$ are mutually independent PBD$_N$ distributions each of which has variance at least $1/(\eps^{C_1})$.
\end{lemma}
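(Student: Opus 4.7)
The plan is to apply Lemma~\ref{lem:close-to-pure-signed-sicsirv} directly with $K=3$, setting $q_1=p$, $q_2=q$, $q_3=r$ and $\BIG = 1/\eps^C$. Since the hypothesis gives $c_{q_a} \geq \BIG$ for each $a \in [3]$, the lemma produces an integer $V'$, a subset $A \subseteq \{(1,2),(1,3),(2,3)\}$ of pairs, sign values $\tau_{a,b} = \sign(\tilde{\Sigma}_{a,b}) = \sign(\Cov(\bM_a,\bM_b))$ for each $(a,b)\in A$, and mutually independent signed PBDs $\bW_{1,1},\bW_{2,2},\bW_{3,3}$ together with $\{\bW_{a,b}\}_{(a,b)\in A}$, each of variance $\Omega(\BIG^{1/4}) = \Omega(\eps^{-C/4})$, such that
\[
\dtv\!\left(\bS',\; V' + p\bW_{1,1} + q\bW_{2,2} + r\bW_{3,3} + \sum_{(a,b)\in A}(q_b + \tau_{a,b}q_a)\bW_{a,b}\right) = O(\eps^{C/20}),
\]
where $\bM = \sum_i\bY_i$ is the $\Z^3$-valued multinomial associated with $\bS'$ as in (\ref{eq:M}).

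The crux of the argument is to determine the three signs and to observe that, together with the identity $r=p+q$, they force every ``extra'' coefficient $q_b + \tau_{a,b}q_a$ to lie in $\{p,q,r\}$. Since each $\bX'_i$ is $0$-moded with support of size $3$, it is of exactly one of three types: Type~A with support $\{0,p,r\}$ (so $\bY_i \in \{0,\be_1,\be_3\}$), Type~B with support $\{-p,0,q\}$ (so $\bY_i \in \{-\be_1,0,\be_2\}$), or Type~C with support $\{-r,-q,0\}$ (so $\bY_i \in \{-\be_3,-\be_2,0\}$). Because a single $\bY_i$ is never nonzero in two coordinates simultaneously, $\E[\bY_{i,a}\bY_{i,b}]=0$ and hence $\Cov(\bY_{i,a},\bY_{i,b}) = -\E[\bY_{i,a}]\E[\bY_{i,b}]$ for every pair $(a,b)$ with $a\ne b$. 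For Type~A, coordinates $1$ and $3$ are nonnegative, giving $\Cov(\bY_{i,1},\bY_{i,3})\le 0$; for Type~B, coordinate $1$ is nonpositive and $2$ nonnegative (opposite signs), giving $\Cov(\bY_{i,1},\bY_{i,2})\ge 0$; for Type~C, coordinates $2$ and $3$ are nonpositive, giving $\Cov(\bY_{i,2},\bY_{i,3})\le 0$. The hypothesis $c_p,c_q,c_r > 1/\eps^C$ forces these inequalities to be strict upon summing over $i$, so $\tau_{1,2}=+1$, $\tau_{1,3}=-1$, $\tau_{2,3}=-1$. The extra coefficients are therefore $q+p=r$, $r-p=q$, and $r-q=p$---each in $\{p,q,r\}$.

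Given this, I collect terms by weight using independence of all the $\bW$'s: set $\tilde\bS^{(p)}:=\bW_{1,1}+\bW_{2,3}$, $\tilde\bS^{(q)}:=\bW_{2,2}+\bW_{1,3}$, and $\tilde\bS^{(r)}:=\bW_{3,3}+\bW_{1,2}$, where the second summand is omitted if the corresponding index is not in $A$. Each is a signed PBD of variance at least $\Omega(\eps^{-C/4})$. Every signed PBD is an integer shift of an unsigned PBD of the same variance (add $1$ to each component Bernoulli supported on $\{0,-1\}$ and subtract the number of such components); writing $\tilde\bS^{(\alpha)} = -m_\alpha + \bS^{(\alpha)}$ for unsigned PBD$_N$'s $\bS^{(\alpha)}$ (padding with trivial Bernoullis if necessary) and absorbing $V'$ together with the shift $-(pm_p+qm_q+rm_r)$ into a single integer $V$, we obtain that $\bS'$ is $O(\eps^{C/20})$-close to $V + p\bS^{(p)} + q\bS^{(q)} + r\bS^{(r)}$ with independent PBDs of variance at least $\Omega(\eps^{-C/4})$. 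Taking $C_1 = C/48$ comfortably accommodates both the error bound ($\eps^{C/20}\le \eps^{C_1}$ for small $\eps$) and the variance bound ($\eps^{-C/4}\ge \eps^{-C_1}$).

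The main obstacle is the covariance sign computation in the second paragraph: it is precisely the relation $r=p+q$---combined with the $0$-modedness and size-$3$ support of each $\bX'_i$---that forces each extra coefficient to collapse to an element of $\{p,q,r\}$. Were any of the three covariance signs the opposite of what is claimed, one of the forbidden weights $r+p=2p+q$, $r+q=p+2q$, or $q-p$ would appear, and the desired three-PBD decomposition would not exist.
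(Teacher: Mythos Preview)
Your approach is correct and takes a genuinely different, more direct route from the paper. The paper partitions the $\bX'_i$ by type and applies the $K=2$ structural lemma (Lemma~\ref{lem:close-to-pure-signed-sicsirv.K=2}) separately to each type. Because within a single type only two of $\{p,q,r\}$ appear, and the within-type counts need not be large even when $c_p,c_q,c_r$ are, the paper first convolves $\bS'$ with an auxiliary sum $\bS_e$ of $\Theta(\eps^{-C/4})$ balanced Bernoullis at each of the scales $p,q,r$ (using the shift-invariance of $\bS'$ at those scales to control the cost of adding $\bS_e$), and then does a further case split within each type depending on whether the within-type counts are large.

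You instead apply the $K=3$ lemma (Lemma~\ref{lem:close-to-pure-signed-sicsirv}) once to the whole sum, where the hypothesis $c_p,c_q,c_r>\eps^{-C}$ is exactly what that lemma needs. Your key observation---that the three-type structure forces $\Cov(\bM_1,\bM_2)\ge 0$, $\Cov(\bM_1,\bM_3)\le 0$, $\Cov(\bM_2,\bM_3)\le 0$, so every cross-term coefficient $q_b+\tau_{a,b}q_a$ collapses into $\{p,q,r\}$---is exactly the insight that lets you bypass the auxiliary convolution and the within-type case analysis. One small correction: the hypothesis $c_p,c_q,c_r>0$ does \emph{not} force the covariances to be strictly nonzero (e.g., if no $\bX'_i$ is of Type~B then $\Cov(\bM_1,\bM_2)=0$), but this is harmless since then $(1,2)\notin A$ and the corresponding term is simply absent. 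Your route yields a cleaner argument and in fact a better error exponent ($C/20$ versus the paper's $C/48$); the paper's approach has the modest advantage of only invoking the two-dimensional structural lemma.
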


\begin{proof}
We first prove that 
\begin{equation}\label{eq:shiftpqr}
\dshift{p}(\bS') \le \epsilon^{C/2} ; \ \dshift{q}(\bS') \le \epsilon^{C/2} ; \ \dshift{r}(\bS') \le \epsilon^{C/2}. 
\end{equation}
To see this, note that,
as we showed in the proof of Lemma~\ref{lem:light-heavy},
 $\dshift{p}(\bX'_i) \leq 1- \Pr[\bX'_i = \pm p]$. By applying Corollary~\ref{corr:pshift}, we get that
$$
\dshift{p}(\bS') = O \left( \frac{1}{\sqrt{\sum_i \Pr[\bX'_i = \pm p]}}\right) = O(\eps^{C/2}).
$$
Likewise, we also get the other components of (\ref{eq:shiftpqr}). 

Let us next consider three families of i.i.d. random variables $\{\bY'_i\}_{i=1}^m$, $\{\bZ'_i\}_{i=1}^m$ and $\{\bW'_i\}_{i=1}^m$ defined as follows: for $1 \le i \le m$, 
\[
\Pr[\bY'_i=0] = \Pr[\bY'_i=p] = 1/2 ; \ \  \Pr[\bZ'_i=0] = \Pr[\bZ'_i=q] = 1/2 ; \Pr[\bW'_i=0] = \Pr[\bW'_i=r] = 1/2 ; 
\]
Let $m = \epsilon^{-C/4}$. Let $\sum_{i=1}^m \bY'_i = \bS^{(y)}$, $\sum_{i=1}^m \bZ'_i = \bS^{(z)}$ and $\sum_{i=1}^m \bW'_i = \bS^{(w)}$. Let $\bS_e = \bS^{(y)} + \bS^{(z)} + \bS^{(w)}$, and note that $\bS_e$ is supported on $\{i \cdot p + j \cdot q + k \cdot r\}$ where $0 \le i , j ,k \le m$. Using (\ref{eq:shiftpqr}), 
we have
\[
\dtv(\bS', \bS' + \bS_e) \le m \cdot \epsilon^{C/2} = O(\epsilon^{C/4}).
\]
Thus, it suffices to prove 
\begin{equation} \label{eq:Sprimecloseredux.prime}
\dtv(\bS' + \bS_e, V + p \cdot \bS^{(p)} + q \cdot \bS^{(q)} + r \cdot \bS^{(r)}) 
    \leq O(\eps^{C_1}).
\end{equation}

We assign each random variable $\bX'_i$ to one of three different types:
\begin{itemize}
\item \textbf{Type 1:} The support of $\bX'_i$ is $\{0, p , r\}$. 
\item \textbf{Type 2:} The support of $\bX'_i$ is $\{-p, 0 , q\}$. 
\item \textbf{Type 3:} The support of $\bX'_i$ is $\{-r, -q  , 0\}$.
\end{itemize}
Let the set of Type $1$ variables be given by $\mathcal{S}_1$. We will show that there exists independent 
signed PBDs $\bS^{(p,1)}$, $\bS^{(q,1)}$ and $\bS^{(r,1)}$ 
and a constant $V_1$ such that the variances of $\bS^{(p,1)}$ and $\bS^{(r,1)}$ are each at
least $\epsilon^{-C_1}$, and $\bS^{(q,1)}$ is either constant 
(when (\ref{e:A.empty}) holds)
or has variance
at least $\epsilon^{-C_1}$ (when (\ref{e:A.12}) holds), and that satisfy
\begin{equation}\label{po:1}
\dtv\bigg(\sum_{i \in \mathcal{S}_1} \bX'_i + \sum_{i=1}^{m/2} \bY'_i + \sum_{i=1}^{m/2} \bW'_i, V_1 + p \cdot \bS^{(p,1)} + q \cdot \bS^{(q,1)} + r \cdot \bS^{(r,1)}\bigg) = O(\epsilon^{C/48}). 
\end{equation}

If we can prove (\ref{po:1}), then we can analogously prove the symmetric
statements that
\[
\dtv\bigg(\sum_{i \in \mathcal{S}_2} \bX'_i 
 + \sum_{i=m/2 + 1}^{m} \bY'_i + \sum_{i=1}^{m/2} \bZ'_i, 
V_2 + p \cdot \bS^{(p,2)} + q \cdot \bS^{(q,2)} + r \cdot \bS^{(r,2)}\bigg)  = O(\epsilon^{C/48})
\]
and
\[
\dtv\bigg(\sum_{i \in \mathcal{S}_3} \bX'_i 
+ \sum_{i=m/2 + 1}^{m} \bW'_i + \sum_{i=m/2+1}^{m} \bZ'_i, 
V_3 + p \cdot \bS^{(p,3)} + q \cdot \bS^{(q,3)} + r \cdot \bS^{(r,3)}\bigg)  = O(\epsilon^{C/48}),
\]
with analogous conditions on the variances of 
$\bS^{(p,2)}, \bS^{(q,2)}, \bS^{(r,2)},\bS^{(p,3)}, \bS^{(q,3)}, \bS^{(r,3)}$, and
combining these bounds with (\ref{po:1}) will imply the desired inequality (\ref{eq:Sprimecloseredux.prime}).

Thus it remains to prove (\ref{po:1}).  Let $\gamma_1 = \sum_{i \in \mathcal{S}_1} \Pr[\bX'_i = p]$ 
and $\delta_1 =  \sum_{i \in \mathcal{S}_1} \Pr[\bX'_i = r]$. We consider the following cases: 

\medskip

\noindent {\bf Case (I):}  Assume $\gamma_1$ and $\delta_1 \ge \epsilon^{-C/8}$.
Since the possibilities $\bX'_i=p$ and $\bX'_i=r$ are mutually exclusive, for each $i$ we have that 
$\Cov(\mathbf{1}_{\bX_i' = p}, \mathbf{1}_{\bX_i' = r}) \leq 0$,
which implies that $\Cov(\sum_i \mathbf{1}_{\bX_i' = p}, \sum_i \mathbf{1}_{\bX_i' = r}) \leq 0$.  
Applying Lemma~\ref{lem:close-to-pure-signed-sicsirv.K=2} to the distribution $\sum_{i \in \mathcal{S}_1} \bX'_i$,
we obtain (\ref{po:1}) in this case.

\medskip

\noindent {\bf Case (II):} Now let us assume that at least one of $\gamma_1$ or $\delta_1$ is less than $\epsilon^{-C/8}$,
without loss of generality, that $\gamma_1 < \epsilon^{-C/8}$. For each variable $\bX'_i$, let us consider a corresponding random variable $\tilde{\bX'}_i$ 
defined by replacing the $p$-outcomes of $\bX'_i$ with $0$'s. 
If $\bZ$ is any distribution such that $\dshift{p}(\bZ) \le \kappa$, then 
\[
\dtv(\bZ + \bX'_i , \bZ + \tilde{\bX'}_i) \le \kappa \cdot \Pr[\tilde{\bX'}_i = p ]. 
\]
Applying this observation iteratively, we have
\[
\dtv\bigg(  \sum_{i \in \mathcal{S}_1} \bX'_i + \sum_{i=1}^{m/2} \bY'_i + \sum_{i=1}^{m/2} {\bW'}_i ,  \sum_{i \in \mathcal{S}_1} \tilde{\bX'}_i + \sum_{i=1}^{m/2} \bY'_i + \sum_{i=1}^{m/2} {\bW'}_i  \bigg) = O(\gamma_1 \cdot \epsilon^{C/4}) = O(\epsilon^{C/8}). 
\]
However, now note that $\sum_{i \in \mathcal{S}_1} \tilde{\bX'}_i + \sum_{i=1}^{m/2} \bY'_i + \sum_{i=1}^{m/2} {\bW}'_i$ 
can be expressed as 
$p \cdot \bS^{(p,1)} + r \cdot \bS^{(r,1)}$
for independent signed PBDs $\bS^{(p,1)}$, and $\bS^{(r,1)}$,
and that the variances of $\sum_{i=1}^{m/2} \bY'_i$ and $\sum_{i=1}^{m/2} {\bW'}_i $ ensure that
$\Var[\bS^{(p,1)}], \Var[\bS^{(r,1)}] \ge \epsilon^{-C/2}$.  This establishes
(\ref{po:1}) in this case, completing the proof of the lemma.
\end{proof}

\section{Unknown-support algorithms:  Proof of Theorems \ref{unknown-general-k-upper} and \ref{unknown-k-is-two-upper}}
\label{sec:unknown-support}

We begin by observing that the hypothesis selection procedure described in Section \ref{sec:hyp-test} provides a straightforward reduction from the case of unknown-support to the case of known-support.  More precisely, it implies the following:

\begin{observation} \label{obs:reduction}
For any $k$, let $A$ be an algorithm that semi-agnostically learns 
$\{ a_1,..., a_k \}$-sums,
with $0 \leq a_1 <  \cdots < a_k$, 
using $m(a_1,\dots,a_k,\eps,\delta)$ samples and running in time $T(a_1,\dots,a_k,\eps,\delta)$ to learn to accuracy $\eps$ with probability at least $1-\delta$, outputting a hypothesis distribution from which it is possible to generate a draw or evaluate the hypothesis's p.m.f. on a given point in time $T'(a_1,\dots,a_k).$  Then there is an algorithm $A'$ which
semi-agnostically learns
$(\amax, k)$-sums
using
\[
\left(\max_{0 \leq a_1 < \cdots < a_k \leq \amax} m(a_1,\dots,a_k,\eps/6,\delta/2) \right) + 
O(k \log(\amax)/\eps^2 + \log(1/\delta)/\eps^2)\]
 samples, 
and running
in time
\[
\poly({(}\amax)^k,1/\eps) \cdot \left(\max_{0 \leq a_1 < \cdots < a_k \leq \amax} \left( T(a_1,\dots,a_k,\eps) \right)  + T'(a_1,\dots,a_k)  \right),
\]
and, with probability at least $1-\delta$, outputting
a hypothesis with error at most $6\eps.$
\end{observation}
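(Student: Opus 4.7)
The plan is to reduce the unknown-support case to the known-support case by an exhaustive enumeration of possible supports, followed by the hypothesis selection procedure of Section~\ref{sec:hyp-test}. Specifically, let $\calT$ denote the collection of all tuples $(a_1,\dots,a_k)$ with $0 \leq a_1 < \cdots < a_k \leq \amax$, so that $|\calT| = M \leq (\amax+1)^k$. The unknown true support, call it $(a_1^\star,\dots,a_k^\star)$, is some element of $\calT$.

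The algorithm $A'$ will first draw a single pool of $m^\star := m(\amax, \eps/6, \delta/2)$ samples from $\bS^*$, where $m(\amax,\eps/6,\delta/2) := \max_{(a_1,\dots,a_k) \in \calT} m(a_1,\dots,a_k,\eps/6,\delta/2)$. For each tuple $(a_1,\dots,a_k) \in \calT$, it runs the known-support algorithm $A$ on this same sample pool (treated as if it came from an $\{a_1,\dots,a_k\}$-sum), producing a candidate hypothesis $\bH_{(a_1,\dots,a_k)}$. Since we reuse the same samples across all $M$ runs of $A$, only the correct tuple $(a_1^\star,\dots,a_k^\star)$ needs to yield a good hypothesis; by the guarantee of $A$, with probability at least $1-\delta/2$ (over the sample pool), we have $\dtv(\bH_{(a_1^\star,\dots,a_k^\star)},\bS^*) \leq \eps$.

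Now we are in a position to invoke Proposition~\ref{prop:log-cover-size} (equivalently, Corollary~\ref{cor:guess}) on the collection $\{\bH_{(a_1,\dots,a_k)}\}_{(a_1,\dots,a_k) \in \calT}$ of $M$ candidate distributions, one of which is $\eps$-close to $\bS^*$. This procedure draws an additional $O((\log M + \log(1/\delta))/\eps^2) = O(k\log \amax / \eps^2 + \log(1/\delta)/\eps^2)$ samples from $\bS^*$, makes $O(M)$ calls to the sampling and evaluation oracles for each candidate hypothesis (each taking time at most $\max_{(a_1,\dots,a_k) \in \calT} T'(a_1,\dots,a_k)$), and with probability $1-\delta/2$ outputs an index $(a_1,\dots,a_k) \in \calT$ such that $\dtv(\bH_{(a_1,\dots,a_k)},\bS^*) \leq 6\eps$. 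A union bound over the two failure events gives overall failure probability at most $\delta$, and the stated sample and time bounds follow by summing the two contributions.

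No single step is a real obstacle here; the main point is just to verify that samples can be reused across the $M$ runs of $A$ (which they can, because success for the unique correct tuple suffices) and that the hypothesis selection oracle requirements are met by the sampling/evaluation subroutines assumed in the statement. A minor bookkeeping point is that when $A$ is run on the ``wrong'' tuple, its output may be arbitrary -- but this is harmless because $\mathrm{Select}$ only uses candidate hypotheses as black-box distributions and requires merely that at least one of them be close to $\bS^*$.
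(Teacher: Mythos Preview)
Your proposal is correct and follows essentially the same approach as the paper: enumerate all at most $(\amax+1)^k$ candidate support tuples, run $A$ on each using a single shared sample pool, and then apply the $\mathrm{Select}$ procedure of Proposition~\ref{prop:log-cover-size} to pick out a good hypothesis. The paper's own proof is the same one-paragraph argument, and your additional remarks (sample reuse being valid because only the true tuple needs to succeed, and bad tuples yielding arbitrary hypotheses being harmless for $\mathrm{Select}$) are exactly the right justifications.
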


The algorithm $A'$  works as follows:  first, it tries all (at most $(\amax)^k$) possible vectors of values for $(a_1,\dots,a_k)$ as the parameters for algorithm $A$, using the same set of $\max_{0 \leq a_1 < \cdots < a_k \leq \amax} m(a_1,$ $\dots,a_k,\eps,\delta/2)$ samples as the input for each of these runs of $A$.  Having done this, $A'$ has a list of candidate hypotheses such that with probability at least $1-\delta/2$, at least one of the candidates is $\eps$-accurate.  Then $A'$  runs the $\mathrm{Select}$ procedure from Proposition \ref{prop:log-cover-size} on the resulting hypothesis distributions.  

Together with Theorems \ref{known-k-is-three-upper}
and \ref{known-general-k-upper}, Observation \ref{obs:reduction} immediately yields Theorem \ref{unknown-general-k-upper} (learning 
$(\amax, 3)$-sums).  It also yields a result for the unknown-support $k=2$ case, but a sub-optimal one because of the $\log(\amax)$ dependence.  In the rest of this section we show how the sharper bound of Theorem \ref{unknown-k-is-two-upper}, with no dependence on $\amax$, can be obtained by a different (but still simple) approach.

Recall Theorem \ref{unknown-k-is-two-upper}:

\begin{reptheorem}{unknown-k-is-two-upper}
\emph{Learning $(\amax, 2)$-sums} 
There is an algorithm and a positive constant $c$ with the following properties:  The algorithm is given $N$, accuracy and confidence parameters $\eps,\delta > 0$, an upper bound $\amax \in \Z_{+}$, and access to i.i.d. draws from an unknown 
random variable $\bS^*$ that is $c \epsilon$-close to an
$\{a_1,a_2\}$-sum $\bS$,
where $0 \leq a_1 \leq a_2 \leq \amax.$  
The algorithm uses $\poly(1/\eps)$ draws from $\bS^*$, runs in $\poly(1/\eps,\log(1/\delta))$ time, and with probability $1-\delta$ outputs a (concise representation of a) hypothesis distribution $\bH$ such that $\dtv(\bH,\bS^*) \leq \eps.$
\end{reptheorem}

\begin{proof}
Let 
the 
$\{a_1,a_2\}$-sum $\bS$ over (unknown values) $\{a_1,a_2\}$ 
be $\bS = \sum_{i=1}^N \bX_i$ where
$\bX_i(a_1)=1-p_i$, $\bX_i(a_2)=p_i.$  Let $\bX'_i$, $i \in [N]$ be independent Bernoulli random variables with $\bX'_i(1)=p_i.$
The distribution $\bS$ is identical to $a_1 N + (a_2-a_1) \bS'$ where $\bS'$ is the PBD $\bS' = \sum_{i=1}^N \bX'_i.$

Intuitively, if the values $a_1$ and $a_2$ (hence $a_1$ and $a_2-a_1$) were known then it would be simple to learn
using the algorithm for learning a PBD$_N.$  The idea of what follows is that either (i) learning is easy because the essential support is small, or (ii) it is possible to infer the value of $a_2-a_1$ (basically by taking the $\gcd$ of a few sample points) and, given this value, to reduce to the problem of learning a PBD$_N.$  Details follow.

If the PBD $\bS'$ is in sparse form (see Theorem \ref{thm: sparse cover theorem}), then it (and hence $\bS$) is $\eps$-essentially-supported on a set of size $O(1/\eps^3)$.  In this case the algorithm $A'$ of Fact~\ref{fact:learn-sparse-ess-support} can learn $\bS^*$ to accuracy $O(\eps)$ in $\poly(1/\eps)$ time using
$\poly(1/\eps)$ samples.  Thus we may subsequently assume that $\bS'$ is not in sparse form. (The final overall algorithm will run both $A'$ and the learning algorithm described below, and use the hypothesis selection procedure from Section \ref{sec:hyp-test} to choose the overall final hypothesis from these two.)

Since $\bS'$ is not in sparse form, by the last part of
Theorem \ref{thm: sparse cover theorem} it is in $1/\eps$-heavy binomial form.  We will require the following
proposition:

\begin{proposition} \label{prop:GCD}
Let $a \in \Z_+,b \in \Z$ be arbitrary constants.
For all small enough $\epsilon > 0$, if
$\bS' =\sum_{i=1}^N \bS'_i$ 
is
a PBD 
in $1/\eps$-heavy binomial form, then with
probability at least $1-O(\sqrt{\eps})$, the $\gcd$ of $m=\Omega(1/\sqrt{\eps})$ i.i.d. draws $v_1,\dots,v_m$ from $a(\bS' - b)$ is equal to $a$.
\end{proposition}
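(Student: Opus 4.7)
\medskip
\noindent \emph{Proof plan.} Write each sample as $v_i = a(s_i - b) = a D_i$ where $s_i \sim \bS'$ and $D_i := s_i - b$. Since $a > 0$, the event $\gcd(v_1,\dots,v_m) = a$ is equivalent to $\gcd(D_1,\dots,D_m) = 1$, which in turn is equivalent to the statement that no prime $p$ divides every $D_i$. So the goal reduces to bounding
\[
\Pr[\gcd(D_1,\dots,D_m) \neq 1] \;\leq\; \sum_{p \text{ prime}} \Pr[E_p], \quad \text{where} \quad E_p := \{\, p \mid D_i \text{ for all } i\,\},
\]
modulo the usual care that the union over large primes collapses to ``all $D_i = 0$'' once concentration of $\bS'$ is used. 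The proof proceeds by establishing a single pointwise bound on $\Pr[D \equiv 0 \pmod p]$ coming from shift-invariance, combined with concentration for large~$p$.

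\medskip
\noindent \emph{Step 1: shift-invariance yields $\Pr[D \equiv 0 \!\!\mod p] \leq 1/p + O(p\eps)$.} Because $\bS'$ is in $1/\eps$-heavy Binomial form, Fact~\ref{fact:heavy-PBD-nice}(2) gives $\dshift{1}(\bS') = O(\eps)$, hence also $\dshift{1}(D) = O(\eps)$. The data-processing inequality implies $\dtv(D \bmod p,\, (D{+}1)\bmod p) \leq O(\eps)$, so applying the pointwise bound $|\Pr[X{=}x] - \Pr[Y{=}x]| \leq 2\dtv(X,Y)$ and iterating gives $|\Pr[D \equiv 0 \!\!\mod p] - \Pr[D \equiv r \!\!\mod p]| \leq O(\min(r,p-r)\eps)$ for each $r \in \{0,1,\dots,p-1\}$. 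Summing these inequalities over $r$ and using $\sum_r \Pr[D \equiv r \!\!\mod p] = 1$ yields the desired bound. Set $\hat q_p := 1/p + C p \eps$, so $\Pr[E_p] \leq \hat q_p^{\,m}$.

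\medskip
\noindent \emph{Step 2: bucket the primes and apply the union bound with $m = C'/\sqrt{\eps}$.} I plan to split the primes into three ranges and handle each one. (a) For $p \leq \eps^{-1/4}$, $\hat q_p = (1/p)(1 + O(p^2\eps))$ and $m p^2 \eps = O(\eps^{1/2}) \cdot p^2 \eps = O(1)$, so $\hat q_p^{\,m} \leq O((1/p)^m)$; summing over primes is dominated by $p=2$ and yields $O(2^{-m}) = 2^{-\Omega(1/\sqrt{\eps})}$. (b) For $\eps^{-1/4} < p \leq (\log(1/\eps))^{1/2}/\sqrt{\eps}$, $\hat q_p = O(\eps^{1/4})$, so each $\hat q_p^{\,m} = (O(\eps^{1/4}))^{\Omega(1/\sqrt{\eps})}$ is super-polynomially small and there are at most $O(1/\sqrt{\eps})$ such primes. (c) For $(\log(1/\eps))^{1/2}/\sqrt{\eps} < p \leq T$, where $T := C'' \sqrt{\log(1/\eps)}/\eps$, the shift-invariance bound is too weak; instead I use anti-concentration. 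Since $\bS'$ is $\eps$-close to a Binomial with standard deviation $\Omega(1/\eps)$, $\max_k \Pr[D=k] = O(\eps)$, and $\Pr[|D - \E D| > T] \leq O(\eps)$ for $C''$ large enough. Hence $\Pr[D \equiv 0 \!\!\mod p]$ is at most (\# of multiples of $p$ in the essential support)$\cdot O(\eps) + O(\eps) = O(T\eps/p) + O(\eps) = O(\sqrt{\eps})$ in this range, so each $\Pr[E_p]$ is super-small.

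\medskip
\noindent \emph{Step 3: handle large primes and conclude.} For $p > T$, condition on the event $\mathcal B := \{|D_i - \E D| \leq T \text{ for all } i\}$, which by Step 2 and a union bound holds with probability $1 - m \cdot O(\eps) = 1 - O(\sqrt{\eps})$. On $\mathcal B$, any prime $p > T$ satisfies $p > |D_i|$, so $p \mid D_i$ iff $D_i = 0$; thus $\bigcup_{p > T} E_p \cap \mathcal B \subseteq \{D_1 = \cdots = D_m = 0\}$, whose probability is at most $(\Pr[D=0])^m \leq O(\eps)^m$, negligible. Adding all the contributions gives $\Pr[\gcd(D_1,\dots,D_m) \neq 1] = O(\sqrt{\eps})$, as required. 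The principal obstacle is Step 2(c): the shift-invariance bound $1/p + O(p\eps)$ degrades to a triviality around $p \approx 1/\sqrt{\eps}$, and one must switch to an anti-concentration/local-CLT style argument (using the heavy-binomial structure from Fact~\ref{fact:heavy-PBD-nice}) to cover the intermediate regime; careful choice of the threshold $T$ and of $m = \Theta(1/\sqrt{\eps})$ is what makes all three contributions simultaneously $O(\sqrt{\eps})$.
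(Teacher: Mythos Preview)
Your approach is genuinely different from the paper's: instead of coupling $\bS'$ to a discretized Gaussian and then to a uniform distribution on an interval (paying $O(m\eps)=O(\sqrt\eps)$ once up front), you work directly with $\bS'$ via shift-invariance and anti-concentration. The paper's coupling buys a clean bound $\Pr[p\mid D]\le O(1)/p$ valid for \emph{all} primes, so a single sum $\sum_p(O(1)/p)^\ell$ handles everything; your route avoids coupling but then must bucket primes and switch arguments across regimes.

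There is, however, a genuine gap. In Steps~2(c) and~3 you set $T=C''\sqrt{\log(1/\eps)}/\eps$ and claim $\Pr[|D-\E D|>T]\le O(\eps)$. This implicitly assumes $\sigma(\bS')=\Theta(1/\eps)$, but ``$1/\eps$-heavy Binomial form'' (Theorem~\ref{thm: sparse cover theorem}) only guarantees $\Var=\Omega(1/\eps^2)$, i.e.\ a \emph{lower} bound on $\sigma$. If $\sigma\gg1/\eps$ the true concentration interval has width $\Theta(\sigma\sqrt{\log(1/\eps)})\gg T$ and your tail bound fails. Relatedly, in Step~3 you write ``$p>T$ implies $p>|D_i|$'', but $D_i$ is centered at $\E D=\E[\bS']-b$, not at $0$; for $p$ exceeding twice the interval width the correct conclusion is only that all $D_i$ coincide with the unique multiple of $p$ in the interval, whence $\bigcup_{p>2T}E_p\cap\mathcal B\subseteq\{D_1=\cdots=D_m\}$ (probability $\le(\max_k\Pr[D=k])^{m-1}$, still negligible).

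If you correct $T$ to scale with $\sigma$, the number of primes in Step~2(c) becomes $\Theta(\sigma)$, and since every pointwise bound you have on $\bS'$ itself (as opposed to the approximating Binomial) carries an irreducible additive $\eps$ from the TV distance, your per-prime bound cannot drop below $\Omega(\eps)$, and the union bound can blow up when $\sigma$ is very large. The clean fix is to first couple all $m$ samples to the approximating Binomial (paying $O(m\eps)=O(\sqrt\eps)$ once), then run your analysis on the Binomial, where $\max_k\Pr=O(1/\sigma)$ and the tails are genuinely sub-Gaussian. At that point your argument goes through --- but it has essentially become the paper's argument with more bookkeeping.
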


\begin{proof}
$\bS'$ is $\eps$-close to a translated Binomial distribution $\bY$ as described in the second bullet of Theorem \ref{thm: sparse cover theorem}, and (by Fact \ref{fact:heavy-PBD-nice}) $\bY$ is $O(\eps)$-close to $\bZ$, a 
discretized $N(\mu_{\bY},\sigma_{\bY}^2)$ Gaussian where $\sigma_{\bY}^2=\Omega(1/\eps^2)$.  
By Lemma~\ref{l:coupling},
a collection of $m$ i.i.d. draws from 
$\bS^*$
is distributed exactly as a collection of $m$ i.i.d. draws from $\bZ$ except with failure probability $O(m \eps)$.  Incurring this failure probability, we may suppose that $v_1,\dots,v_m$ are i.i.d. draws from $\bZ$.  Except with an additional failure probability $2^{-\Omega(m)}$, at least $\Omega(m)$ of these draws lie within $\pm \sigma_{\bY}$ of $\mu_{\bY}$, so we additionally suppose that this is the case.  Next, since any two points within one standard deviation of the mean of a discretized Gaussian have probability mass within a constant factor of each other, with an additional failure probability of at most $2^{-\Omega(m)}$ we may suppose that the multiset $\{v_1,\dots,v_m\}$ contains at least $\ell = \Omega(m)$
points that are distributed uniformly and independently over the integer interval $I := [\mu_{\bY}-\sigma_{\bY},\mu_{\bY} + \sigma_{\bY}] \cap \Z$.  Thus, to establish the proposition, it suffices to prove that for any $b \in \Z$, with high probability the $\gcd$ of $\ell$ points drawn uniformly and independently from the shifted interval $I-b$ is 1.

The $\gcd$ is 1 unless there is some prime $p$ such that all $\ell$ draws from $I-b$ are divisible by $p$.  
Since $\eps$ is at most some sufficiently small constant, we have that $|I|$ is at least (say) 100; since $|I|\geq 100$,
for any prime $p$ at most a $1.02/p$ fraction of the points in $I$ are divisible by $p$, so $\Pr[\text{all $\ell$ draws are divisible by }p] \leq (1.02/p)^{\ell}.$  Thus a union bound gives
\begin{align*}
\Pr[\gcd > 1] &\leq \sum_{\text{prime~p}}\Pr[\text{~all $\ell$ draws are divisible by~}p] 
\leq  \sum_{\text{prime~p}} (1.02/p)^\ell
<\sum_{n \geq 2} (1.02/n)^\ell < (2/3)^\ell,
\end{align*}
and the proposition is proved.
\end{proof}

We now describe an algorithm to learn 
$\bS^*$
when $\bS'$ is in\ignore{$\eps$-close to some PBD $\bY = \sum_{i=1}^n \bY_i \in \calS_{N,\eps}$ that is in } $1/\eps$-heavy binomial form.  The algorithm first makes a single draw from 
$\bS^*$
call this the ``reference draw''.  
With probability at least $9/10$, it is from 
$\bS$;  let us assume from for the rest of the proof that this is
the case,
 and let its value be $v=a_1(N-r)+a_2r$.
Next, the algorithm makes $m=\Omega(1/\sqrt{\eps})$ i.i.d. draws $u_1,\dots,u_m$ from 
$\bS^*$.  Since $d_TV(\bS, \bS^*) < c \epsilon$ for a small positive
constant $c$, and $\epsilon$ is at most a small constant, a union bound implies that,
except for a failure probability $O((1/\sqrt{\epsilon}) \epsilon) < 1/10$, all of these
draws come from $\bS$.  Let us assume from here on that this is the case.
For each $i$, the algorithm sets $v_i = u_i - v$, and computes the $\gcd$ of $v_1,\dots,v_m$.  Each $u_i$ equals $a_1 (N-n_i) + a_2n_i$ where $n_i$ is drawn from the PBD $\bS'$, so we have that
\[
v_i =a_1 (N-n_i) + a_2n_i - a_1(N-r) - a_2r = (a_2-a_1)(n_i -r),
\]
and Proposition \ref{prop:GCD} gives that with failure probability at most $O(\sqrt{\eps})$, the $\gcd$
of $n_1 -r, ..., n_m - r$ is 1, so that the $\gcd$ of $v_1,...,v_m$
is equal to
$a_2-a_1.$

With the value of $a_2-a_1$ in hand, it is straightforward to learn $\bS$.  Dividing each draw from $\bS$ by $a_2-a_1$, we get draws
from ${\frac {a_1N}{a_2-a_1}} + \bS'$ where $\bS'$ is the PBD$_N$ described above.  Such a ``shifted PBD'' can be learned easily as follows: if ${\frac {a_1N}{a_2-a_1}}$ is an integer then this is a PBD$_{(a_1+1)N}$, hence is a PBD$_{(\amax +1)N}$, and can be learned using the algorithm for learning a PBD$_{N'}$ given the value of $N'.$  If ${\frac {a_1N}{a_2-a_1}}$ is not an integer, then its non-integer part can be obtained from a single draw, and subtracting the non-integer part we arrive at the case handled by the previous sentence.  

The algorithm described above has failure probability $O(\sqrt{\eps})$, but by standard techniques 
(see Lemma 3.4 of \cite{haukealitwar91} and Proposition~\ref{prop:log-cover-size})
this failure probability can be reduced to an arbitrary $\delta$ at the cost of a $\log(1/\delta)$ factor increase in sample complexity.  This concludes the proof of Theorem \ref{unknown-k-is-two-upper}.
\end{proof}

\section{A reduction for weighted sums of PBDs} 
\label{sec:reduction}

Below we establish a reduction showing that an efficient algorithm for learning 
sums of weighted PBDs with weights $\{0=a_1,\dots,a_k\}$ 
implies the existence of an efficient algorithm for learning 
sums of weighted PBDs with weights $\{0=a_1,\dots,a_{k-1}\} \mod a_k$.
Here by ``learning
sums of weighted PBDs with weights $\{0=a_1,\dots,a_{k-1}\} \mod a_k$'' we mean an algorithm which is given access to i.i.d. draws from the distribution $\bS' := (\bS \mod a_k)$ where 
$\bS$ is a weighted sum of PBDs with weights $\{0=a_1,\dots,a_{k-1}\}$, and should produce a high-accuracy hypothesis distribution for $\bS'$ (which is is supported over $\{0,1,\dots,a_{k}-1\}$); so both the hypothesis distribution \emph{and the samples provided to the learning algorithm} are reduced mod $a_k$.  Such a reduction will be useful for our lower bounds because it enables us to prove a lower bound for learning a 
weighted sum of PBDs with $k$ unknown weights
by proving a lower bound for learning mod $a_k$ with
$k-1$ weights.

The formal statement of the reduction is as follows:

\begin{theorem} \label{thm:reduction}
Suppose that $A$ is an algorithm with the following properties:  $A$ is given $N$, an accuracy parameter $\eps > 0$, a confidence parameter $\delta>0$, and distinct non-negative integers $0=a_1,\dots,a_k$.  $A$ is provided with access to i.i.d. draws from a distribution $\bS$ where $\bS = a_2 \bS_2 + \cdots + a_{k} \bS_{k}$ and
each $\bS_i$ is an unknown PBD$_N$.  
For all $N$, $A$ makes $m(a_1,\dots,a_k,\eps,\delta)$ draws from $\bS$
and with probability at least $1-\delta$ outputs a hypothesis $\tilde{\bS}$ such that $\dtv(\bS',\tilde{\bS})
\leq \eps.$

Then there is an algorithm $A'$ with the following properties:  $A'$ is given $N,0=a_1,\dots,a_{k},\eps,\delta$ and is provided with access to i.i.d. draws from $\bT' := (\bT \mod a_k)$ where   $\bT = a_2 \bT_2 + \cdots + a_{k-1} \bT_{k-1}$    
where in turn each $\bT_i$ is a PBD$_N$.  $A'$ makes $m'=m(a_1,\dots,a_k,\eps,\delta/2)$ draws
from $\bT'$ and with probability $1-\delta$ outputs a hypothesis $\tilde{\bT}'$ such that $\dtv(\bT',\tilde{\bT}') \leq \eps.$
\end{theorem}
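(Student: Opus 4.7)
The plan is for $A'$ to simulate samples from an auxiliary distribution $\bS^\star$ of the form that $A$ learns, invoke $A$ on those simulated samples to obtain a hypothesis $\tilde{\bS}$, and then return $\tilde{\bS} \bmod a_k$. Concretely, $A'$ constructs an auxiliary PBD $\bS_k^\star$ of order $N^\star$ (to be chosen below) with all Bernoulli probabilities equal to $1/2$, so that $\sigma^2 := \Var[\bS_k^\star] = N^\star/4$. For each draw $t_i \sim \bT'$, the learner draws an independent $s_i^\star \sim \bS_k^\star$ and forms $u_i := t_i + a_k s_i^\star$. After padding each $\bT_j$ with $N^\star - N$ trivial Bernoullis to make it a PBD$_{N^\star}$, the intended distribution $\bS^\star := \bT + a_k \bS_k^\star = a_2 \bT_2 + \cdots + a_{k-1} \bT_{k-1} + a_k \bS_k^\star$ is exactly of the form required by $A$, and $A'$ invokes $A$ with parameters $(N^\star, a_1, \ldots, a_k, \eps, \delta/2)$ on $u_1, \ldots, u_{m'}$ to obtain $\tilde{\bS}$; the output of $A'$ is $\tilde{\bT}' := \tilde{\bS} \bmod a_k$.

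The crux of the analysis is a per-sample total variation bound between the simulated law $D_{\mathrm{sim}}$ of $u_i$ and the true law $\bS^\star$. Decomposing $\bT = (\bT \bmod a_k) + a_k \lfloor \bT/a_k \rfloor$, we may write $\bS^\star = (\bT \bmod a_k) + a_k (\lfloor \bT/a_k \rfloor + \bS_k^\star)$, whereas $D_{\mathrm{sim}} = (\bT \bmod a_k) + a_k \bS_k^\star$ with $\bS_k^\star$ drawn independently of $\bT$. Conditioning on the shared component $\bT \bmod a_k = r$, the two conditional laws differ only by whether we shift $a_k \bS_k^\star$ by $a_k \mathbf{J}_r$, where $\mathbf{J}_r := (\lfloor \bT/a_k \rfloor \mid \bT \bmod a_k = r)$ takes values in $[0, kN]$ (since $\bT \leq (k-2) a_{k-1} N < k a_k N$). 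By Fact~\ref{fact:good-shift-invariance}, $a_k \bS_k^\star$ is $O(1/\sigma)$-shift-invariant at scale $a_k$; applying the triangle inequality over at most $kN$ such shifts and averaging over $r$ yields
\[
\dtv(D_{\mathrm{sim}},\, \bS^\star) \;\leq\; \E_r \E_{j \sim \mathbf{J}_r}\, \dtv(a_k \bS_k^\star,\; a_k \bS_k^\star + a_k j) \;\leq\; O(kN/\sigma).
\]

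Choosing $N^\star = \Theta\bigl((m' k N /\delta)^2\bigr)$ makes $\sigma = \Omega(m' k N /\delta)$, so the per-sample TV is $O(\delta/m')$ and the joint TV between $D_{\mathrm{sim}}^{\otimes m'}$ and $(\bS^\star)^{\otimes m'}$ is at most $\delta/2$. By the coupling interpretation of TV distance, with failure probability at most $\delta/2$ the $m'$ simulated samples can be coupled to equal $m'$ i.i.d.\ samples from $\bS^\star$; in that event $A$ succeeds with probability at least $1 - \delta/2$, outputting $\tilde{\bS}$ with $\dtv(\tilde{\bS}, \bS^\star) \leq \eps$. The data processing inequality for $\dtv$ applied to the map $x \mapsto x \bmod a_k$ then yields $\dtv(\tilde{\bT}',\, \bS^\star \bmod a_k) \leq \eps$, and since $\bS^\star \bmod a_k = \bT \bmod a_k = \bT'$, we conclude $\dtv(\tilde{\bT}', \bT') \leq \eps$ with total failure probability at most $\delta$. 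The sample complexity is exactly $m' = m(a_1,\dots,a_k,\eps,\delta/2)$ because $m$ does not depend on the PBD order. The main conceptual step is the per-sample shift-invariance bound, which must be argued \emph{without} assuming independence of $\bT \bmod a_k$ and $\lfloor \bT/a_k \rfloor$; this is what forces the worst-case $kN$ factor, and in turn dictates the polynomial growth of $N^\star$. All remaining steps---padding to a common PBD order, passing from per-sample to joint TV via coupling, and the final modular reduction via data processing---are routine.
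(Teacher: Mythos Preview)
Your proof is correct and follows essentially the same approach as the paper: simulate samples for $A$ by adding $a_k \bS_k^\star$ with $\bS_k^\star = \Bin(N^\star,1/2)$ to draws from $\bT'$, bound the per-sample total variation between the simulated law and $\bS^\star = \bT + a_k \bS_k^\star$ via the shift-invariance of $a_k \bS_k^\star$ at scale $a_k$ (Fact~\ref{fact:good-shift-invariance}), choose $N^\star$ polynomially large to make the joint TV over $m'$ samples at most $\delta/2$, and finish with the data processing inequality for the $\bmod\ a_k$ map. The only cosmetic difference is that the paper bounds the shift magnitude by $a_{\max}\cdot N$ directly (without the conditioning decomposition you spell out), whereas you bound $\lfloor \bT/a_k \rfloor$ by $kN$; your inequality $(k-2)a_{k-1}N < k a_k N$ tacitly assumes $a_{k-1} < a_k$, which the theorem statement does not guarantee, so strictly speaking you should replace $kN$ by $k\,a_{\max}N/a_k$ (or simply $k\,a_{\max}N$) and adjust $N^\star$ accordingly---this changes nothing substantive.
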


\begin{proof}The high-level idea is simple; in a nutshell, we leverage the fact that the algorithm $A$ works with sample complexity $m(a_1,\dots,a_k,\eps,\delta)$ independent of $N$ for all $N$ to construct a data set suitable for algorithm $A$ from a ``mod $a_k$'' data set that is the input to algorithm $A'$.   

In more detail, as above suppose the target distribution $\bT'$ is $(\bT \mod a_k)$ where $\bT = a_2 \bT_2 + \cdots + a_{k-1}\bT_{k-1}$ and each $\bT_i$ is an independent PBD$_N$.  Algorithm $A'$ works as follows:  First, it makes $m'$ draws $v'_1,\dots,v'_{m'}$ from $\bT'$, the $j$-th of which is equal to some value $(a_{2} N_{2,j} + \cdots + a_{k-1}N_{k-1,j}) \mod a_k$.  Next, using its own internal randomness it makes $m'$ draws $N_{k,1},\dots, N_{k,m'}$ from the PBD$_{N^\star}$ distribution $\bT_k:= \Bin(N^\star,1/2)$ (we specify $N^\star$ below, noting here only that $N^\star \gg N$) and constructs the ``synthetic'' data set of $m'$ values whose $j$-th element is
\[
u_j := v'_j +  a_k N_{k,j}.
\]
Algorithm $A'$ feeds this data set of values to the algorithm $A$, obtains a hypothesis $\bH$, and outputs $(\bH \mod a_k)$ as its final hypothesis.

To understand the rationale behind this algorithm, observe that if each value
$v'_j$
were an independent draw from $\bT$ rather than $\bT'$
(i.e., if it were not reduced mod $a_k$), then each $u_j$ would be distributed precisely as a draw from 
$\bT^\star := a_2 \bT_2 + \cdots + a_{k-1} \bT_{k-1} + a_k \bT_k$ (observe that each PBD$_{N_i}$ is also a PBD$_{N^\star}$, simply by having the ``missing'' $N^\star - N_i$ Bernoulli random variables all trivially output zero).  In this case we could invoke the performance guarantee of algorithm $A$ when it is run on such a target distribution.  The issue, of course, is that $v'_j$ is a draw from $\bT'$ rather than $\bT$, i.e. $v'_j$ equals $(v_j \mod a_k)$ where $v_j$ is some draw from $\bT.$  We surmount this issue by observing that since $a_k \bT_k$ is shift-invariant at scale $a_k$,  by taking $\bT_k$ to have sufficiently large variance, we can make the variation distance between the distribution of each $v'_j$ and the original $v_j$ sufficiently small that so it is as if the values $v'_j$ actually were drawn from $\bT$ rather than $\bT'.$

In more detail, let
us view $v'_j$ as the reduction mod $a_k$ of a draw $v_j$
from $\bT$ as just discussed; i.e., let
$v'_j \in \{ 0, ..., a_k - 1 \}$ satisfy $v'_j = v_j + a_k c_j$ for $c_j \in \Z$. We observe that each $c_j$ satisfies $|c_j| < \amax \cdot N.$  Recalling that $\bT_k = \Bin(N^\star, 1/2)$ has $\Var[\bT_k] = N^\star/4,$ by Fact \ref{fact:good-shift-invariance} we have that $\dtv(a_k \bT_k + \bT,a_k \bT_k + \bT') \leq O(1/\sqrt{N^\star}) \cdot \amax \cdot N$.  Hence the variation distance between $(a_k \bT_k + \bT')^{m'}$ (from which the sample $u_1,\dots,u_{m'}$ is drawn) and $(a_k \bT_k + \bT)^{m'}= (\bT^\star)^{m'}$
 (what we would have gotten if each $v_j'$ were replaced by $v_j$)
is at most $O(1/\sqrt{N^\star}) \cdot \amax \cdot N \cdot m'.$  By taking $N^\star = \Theta((\amax \cdot N \cdot m' / \delta)^2)$, this is at most $\delta/2$, so at the cost of a $\delta/2$ failure probability we may assume that the $m'=m(a_1,\dots,a_k,\eps,\delta/2)$-point sample $u_1,\dots,u_{m'}$ is drawn from 
$\bT^\star.$  Then with probability $1-\delta/2$ algorithm $A$ outputs an $\eps$-accurate hypothesis, call it $\tilde{\bT}^\star$ (this is the $\bH$ mentioned earlier), for the target distribution $\bT^\star$ from which its input sample was drawn, so $\dtv(\tilde{\bT}^\star,\bT^\star) \leq \eps.$  Taking $\tilde{\bT'}$ to be $(\tilde{\bT}^\star \mod a_k)$ and observing that $(\bT^\star \mod a_k) \equiv \bT'$, we have that 
\[
\dtv(\tilde{\bT'},\bT') = 
\dtv((\tilde{\bT}^\star \mod a_k),(\bT^\star \mod a_k)) \leq
\dtv(\tilde{\bT}^\star,\bT^\star)
\leq \eps,
\]
and the proof is complete.
\end{proof}

\section{Known-support lower bound for $|\supportset|=4$:  Proof of Theorem \ref{known-k-is-four-lower}} \label{sec:lower}

Recall Theorem \ref{known-k-is-four-lower}:
%

\begin{reptheorem}{known-k-is-four-lower}
\emph{($k=4$, known-support lower bound)} 
Let $A$ be any algorithm with the following properties:  algorithm $A$ is given $N$, an accuracy parameter $\eps > 0$, distinct values $0 \leq a_1 < a_2 < a_3 < a_4 \in \Z$, 
and access to i.i.d. draws from 
an unknown $\{a_1,a_2,a_3,a_4\}$-sum
and with probability at least $9/10$ algorithm $A$ outputs a hypothesis distribution $\tilde{\bS}$ such that $\dtv(\tilde{\bS},\bS) \leq \eps.$
Then there are infinitely many quadruples $(a_1,a_2,a_3,a_4)$ such that for sufficiently large $N$, $A$ must use $\Omega(\log \log a_4)$ samples even when run with $\eps$
set to a (suitably small) positive absolute constant.
\end{reptheorem}

\subsection{Proof of Theorem \ref{known-k-is-four-lower}} \label{sec:proof-of-lower}

Fix $a_1 = 0$ and $a_2 = 1$.  (It suffices to prove a lower bound for
this case.)  To reduce clutter in the notation, let $p = a_3$ and $q = a_4$.
Applying Theorem~\ref{thm:reduction},
it suffices to prove that $\Omega(\log \log q)$
examples are needed to learn  distributions of random variables of
the form $\bS = \bU + p \bV \mod q$, where $\bU$ and $\bV$ are unknown 
PBDs over $\Theta(N)$ variables.  We do this in the rest of this section.

Since an algorithm that achieves a small error with high probability can be used to achieve small error in expectation,
we may use Lemma~\ref{fano}, which provides lower bounds on the number of examples needed for small expected error, to
prove Theorem~\ref{known-k-is-four-lower}.
To apply Lemma~\ref{fano}, we must find a set of distributions $\bS_1,\dots,\bS_i,\dots,$ where $\bS_i = \bU_i + p \bV_i \mod q$, that
are separated enough that they must be distinguished by a successful
learning algorithm (this is captured by the variation distance lower bound of Lemma \ref{fano}), but close enough (as captured by the Kullback-Liebler divergence upper bound) that this is difficult. 
We sketched the ideas behind our construction of these distributions $\bS_1,\dots,\bS_T$, $T=\log^{\Theta(1)}q$, earlier in Section \ref{s:ltechniques}, so we now proceed to the actual construction and proof.

%

The first step is to choose $p$ and $q$.   The choice is inspired by the theory of rational approximation of irrational numbers. The core of the construction requires us to use an irrational number which is hard to approximate as a ratio of small integers but such that, expressed as a continued fraction, the convergents do not grow very rapidly. For concreteness, we will consider the inverse of the golden ratio $\phi$:  \[
\frac{1}{\phi} = 
\cfrac{1}{1+\cfrac{1}{1+\cfrac{1}{1+\cdots}}}
\]

Let $f_0=1,f_1=1,f_2=2,\dots$ denote the Fibonacci numbers; It is easy to see that the $t^{th}$ convergent of $1/\phi$ is given by $f_{t-1}/f_t$. 
 We take $p = f_{L}$, $q = f_{L+1}$ where we think of $L$ as an asymptotically large parameter (and of course
$ q = f_{L+1}$ implies $L = \Theta(\log q)$).  Looking ahead, 
 the two properties of $1/\phi$ which will be useful will be: (a) For any $t$, $f_t/f_{t+1}$ is a very good approximation of $1/\phi$, and moreover, (b)\ignore{On the other hand, because of the property of continued fractions,\pnote{I can see that an upper bound on the quality of approximation is due to a property of continued fractions, but what is this true of a lower bound?}} the approximations obtained by these convergents are essentially the best possible. 
 
\begin{definition}
Let $\rho_q(a,b)$ be the Lee metric on $\mathbb{Z}_q$,
i.e., the minimum of $|j|$ over all $j$ such
that $a = b + j \mod q$.
\end{definition}

The following lemma records the properties of $p$ and $q$ that we will use. \ignore{\pnote{I don't mind deleting the informal
description if either of you think it's corny.  It does not take account of the fact that the lemma concerns walking in both
directions from zero, but you could get that from a walk in one direction by walking twice as long, and then shifting everything
back halfway, so I think that the informal description is morally the same as the lemma.}}To interpret this lemma,
it may be helpful to imagine starting at 0, taking steps of size $p$ through 
$[q]$, wrapping around when
you get to the end, and dropping a breadcrumb after each step.  Because $p$ and $q$ are relatively prime,
after $q$ steps, each member of $[q]$ has a breadcrumb.  Before this, however,
the lemma captures two ways in which the
breadcrumbs are ``distributed evenly'' (in fact, within constant factors of optimal)
throughout the walk: (a) they are pairwise well-separated,
and (b) all positions have a breadcrumb nearby.
\begin{lemma}
\label{l:space}
There are absolute constants $c_1,c_2>0$ such that,
for all integers $v \neq v', v,v' \in (-c_2q,c_2q)$, we have
\begin{equation}
\label{e:space.lower}
\rho_q( p v, p v') > \frac{c_1 q}{ \max \{ |v|, |v'| \}}.
\end{equation}
Furthermore, for any $i \in [q]$, for any $t \leq L$, there is a
$v \in \{ -f_t, ..., f_t \}$ such that 
\begin{equation}
\label{e:space.upper}
\rho_q(i, p v) \leq \frac{3 q}{f_t}.
\end{equation}
\end{lemma}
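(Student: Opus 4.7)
\medskip

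\noindent \textbf{Proof plan for Lemma~\ref{l:space}.}

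The plan is to exploit the fact that $p/q = f_L/f_{L+1}$ is exactly the $L$-th convergent of the simple continued fraction expansion of $1/\phi = [0;1,1,1,\ldots]$, so that classical results on the ``badly approximable'' nature of $1/\phi$ apply. The key arithmetic ingredient is D'Ocagne's/Catalan's identity $f_L f_{t+1} - f_{L+1} f_t = (-1)^t f_{L-t}$, which upon reducing mod $q = f_{L+1}$ yields the explicit residues
\[
p \cdot f_s \;\equiv\; (-1)^{s-1} f_{L+1-s} \pmod{q} \qquad (1 \leq s \leq L).
\]
In particular $\rho_q(p f_s, 0) = f_{L+1-s}$, so the product $f_s \cdot \rho_q(p f_s, 0) = f_s f_{L+1-s} = \Theta(q)$ by Binet's formula. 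This already gives the right order of magnitude for the bounds we want; both parts of the lemma extend this to arbitrary multipliers.

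\medskip

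\noindent \textbf{Lower bound (\ref{e:space.lower}).} It suffices to show that there is a constant $c_1' > 0$ such that $|w| \cdot \rho_q(pw,0) \geq c_1' q$ for every integer $w$ with $0 < |w| < c_2 q$, since then applying this to $w = v - v'$ and using $|w| \leq 2\max\{|v|,|v'|\}$ gives (\ref{e:space.lower}) with $c_1 = c_1'/2$. To prove this sub-claim, I would use the Zeckendorf representation $|w| = f_{s_r} + f_{s_{r-1}} + \cdots + f_{s_1}$ with $s_r > s_{r-1} + 1 > \cdots$ and $s_r \leq L$ (the range condition translates into an upper bound on $s_r$ via $f_{s_r} \leq |w|$), combined with the explicit formula above to get
\[
pw \;\equiv\; \pm \sum_{j} (-1)^{s_j-1} f_{L+1-s_j} \pmod{q}.
\]
The dominant term in this signed sum of Fibonacci numbers is $\pm f_{L+1-s_r}$, and a standard estimate (using $\sum_{k \leq m-2} f_k < f_m$ and the Zeckendorf gap condition) shows that the smaller terms cannot cancel more than a constant fraction of it, giving $\rho_q(pw,0) \geq c \cdot f_{L+1-s_r}$. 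Combined with $|w| < f_{s_r+1}$, Binet's formula yields $|w| \cdot \rho_q(pw,0) \geq c \cdot f_{L+1-s_r} \cdot f_{s_r} = \Omega(q)$, as desired. (Equivalently, one may bypass Zeckendorf entirely and cite the general fact that $p/q$, being a convergent of the badly-approximable number $1/\phi$, satisfies $\|wp/q\| \geq c/|w|$ for all integers $w$ with $0 < |w| < q$; but the Zeckendorf approach is self-contained and gives explicit constants.)

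\medskip

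\noindent \textbf{Upper bound (\ref{e:space.upper}).} Consider the orbit $\mathcal{O}_t = \{pv \bmod q : -f_t \leq v \leq f_t\} \subseteq \{0,1,\ldots,q-1\}$. By the (already established) lower bound (\ref{e:space.lower}) applied inside this range, these $2 f_t + 1$ values are distinct. I would invoke the three-distance theorem (Steinhaus) for the rotation by $p/q$: the arcs in $\mathbb{Z}/q\mathbb{Z}$ between cyclically adjacent points of $\mathcal{O}_t$ take at most three distinct lengths, which by the explicit Fibonacci computation in Step~1 can be identified (up to sign) as $f_{L+1-t}$ and $f_{L-t}$ (the two ``return distances'' associated with the orbit of size $f_{t+1}+f_t$). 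Using Binet's formula, each of these is bounded by $3q/f_t$; thus every $i \in [q]$ lies within distance $3q/f_t$ of some point of $\mathcal{O}_t$, which is exactly (\ref{e:space.upper}).

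\medskip

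\noindent \textbf{Main obstacle.} The subtle step is the non-cancellation argument in the Zeckendorf-sum estimate for the lower bound: without care, one could worry that the alternating signs $(-1)^{s_j-1}$ combined with a bad choice of $w$ produce heavy cancellation and a residue much smaller than $f_{L+1-s_r}$. The Zeckendorf gap condition $s_{j+1} \geq s_j + 2$ is exactly what rescues us, since it gives $f_{L+1-s_{r-1}} \leq f_{L-1-s_r}$, and then telescoping $\sum_{j<r} f_{L+1-s_j} \leq f_{L-s_r} < f_{L+1-s_r}$ bounds the total contribution of the lower-order terms strictly below the leading one. Getting the constants $c_1, c_2$ explicit from this telescoping is where most of the detailed calculation lives.
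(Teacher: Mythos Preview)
Your high-level strategy is sound, but the Zeckendorf argument for the lower bound has a direction error at its heart. You write that in the signed sum $\sum_j (-1)^{s_j-1} f_{L+1-s_j}$ the ``dominant term'' is $f_{L+1-s_r}$. Since $s_r$ is the \emph{largest} index in the Zeckendorf decomposition of $|w|$, the index $L+1-s_r$ is the \emph{smallest}, so $f_{L+1-s_r}$ is the \emph{smallest} term in the sum, not the dominant one. Your telescoping claim ``$f_{L+1-s_{r-1}} \le f_{L-1-s_r}$'' is accordingly backwards: the gap condition $s_{r-1}\le s_r-2$ gives $L+1-s_{r-1}\ge L+3-s_r$, so $f_{L+1-s_{r-1}}\ge f_{L+3-s_r} > f_{L-1-s_r}$. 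The terms with $j<r$ are the large ones, and their sum certainly exceeds $f_{L+1-s_r}$. The argument can be repaired by instead showing that the genuinely dominant term $f_{L+1-s_1}$ survives cancellation (the remaining terms sum to at most $f_{L-s_1}<f_{L+1-s_1}$, giving $|S|\ge f_{L-1-s_1}\ge f_{L+1-s_r}$), after which your use of $|w|\ge f_{s_r}$ (not the upper bound $|w|<f_{s_r+1}$ you wrote) and Binet finishes. But as written the core step fails.

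The paper bypasses Zeckendorf entirely and uses exactly the alternative you mention parenthetically. From $\rho_q(pv,pv')=|p(v-v')-uq|$ for some integer $u$, divide by $q(v-v')$ to get $|p/q - u/(v-v')| = \rho_q(pv,pv')/(q(v-v'))$; then the Hurwitz-type lower bound $|1/\phi - u/(v-v')|\ge c_3/(v-v')^2$ together with $|p/q-1/\phi|<1/q^2$ and the triangle inequality immediately yields the result. This is a two-line argument with no combinatorics of Fibonacci sums. For the upper bound the paper is similarly more elementary than your three-distance approach: it observes $|f_{t-1}/f_t - p/q|\le 2/f_t^2$ (both are close to $1/\phi$), and since $\gcd(f_{t-1},f_t)=1$ the points $\{jf_{t-1}/f_t \bmod 1: j\in[f_t]\}$ are \emph{exactly} equally spaced with gap $1/f_t$, so every $i/q$ is within $1/f_t$ of one of them, hence within $3/f_t$ of the corresponding $jp/q$. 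No Steinhaus theorem is needed.
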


To prove the first part of Lemma~\ref{l:space}, we need the following lemma
on the difficulty of approximating $1/\phi$ by rationals.
\begin{lemma}[\cite{HarEtAl08}]
\label{l:phi.approx.lower}
There is a constant $c_3>0$ such that for all positive integers
$m$ and $n$,
\[
\left| \frac{m}{n} - \frac{1}{\phi} \right| \geq \frac{c_3}{n^2}.
\]
\end{lemma}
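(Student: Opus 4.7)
The plan is to exploit the fact that $1/\phi$ is a quadratic irrational, specifically a root of an integer polynomial of degree $2$, and apply a Liouville-style argument. Observe that $\phi = (1+\sqrt{5})/2$ satisfies $\phi^2 = \phi + 1$, which is easily rearranged to give $1/\phi = \phi - 1$, and therefore $1/\phi$ is a root of the polynomial
\[
f(x) = x^2 + x - 1.
\]

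The key algebraic observation is that for any positive integers $m,n$,
\[
f(m/n) = \frac{m^2 + mn - n^2}{n^2}.
\]
The numerator $m^2 + mn - n^2$ is an integer, and it is nonzero because $1/\phi$ is irrational (so $f$ has no rational roots). Hence $|f(m/n)| \geq 1/n^2$.

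On the other hand, since $f(1/\phi) = 0$, the mean value theorem gives
\[
f(m/n) = f(m/n) - f(1/\phi) = f'(\xi)\cdot \left(\frac{m}{n} - \frac{1}{\phi}\right)
\]
for some $\xi$ between $m/n$ and $1/\phi$, where $f'(x) = 2x+1$. If $|m/n - 1/\phi| > 1$ then the desired bound $|m/n - 1/\phi| \geq c_3/n^2$ holds trivially (for any $c_3 \leq 1$, since $n \geq 1$). Otherwise, $\xi$ lies in the bounded interval $[1/\phi - 1,\, 1/\phi + 1]$, so $|f'(\xi)| \leq C$ for an absolute constant $C$. Combining this upper bound on $|f'(\xi)|$ with the lower bound $|f(m/n)| \geq 1/n^2$ gives
\[
\left|\frac{m}{n} - \frac{1}{\phi}\right| = \frac{|f(m/n)|}{|f'(\xi)|} \geq \frac{1}{Cn^2},
\]
which yields the lemma with $c_3 = 1/C$.

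There is no real obstacle here: the argument is entirely elementary once one notices that $1/\phi$ satisfies the quadratic $x^2 + x - 1 = 0$ with integer coefficients. The only mild subtlety is separating out the trivial case $|m/n - 1/\phi| > 1$ so that the derivative bound applies on a bounded interval.
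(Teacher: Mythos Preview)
Your proof is correct. The paper does not actually prove this lemma; it is simply cited from \cite{HarEtAl08} as a known result. Your argument is the standard Liouville-type proof for quadratic irrationals, which is exactly the right approach and entirely self-contained.
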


We will also use the fact that $\frac{f_{t-1}}{f_t}$ is a good approximation.
\begin{lemma}[\cite{HarEtAl08}]
\label{l:phi.approx.upper}
For all $t$,
\[
\left| \frac{f_{t-1}}{f_t} - \frac{1}{\phi} \right| < \frac{1}{f_t^2}.
\]
\end{lemma}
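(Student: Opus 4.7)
My plan is to derive Lemma~\ref{l:phi.approx.upper} from Cassini's identity together with the fact that successive convergents of a simple continued fraction bracket the limit. The whole argument is elementary, and I do not expect any single step to be a serious obstacle; the only mild care required is in verifying the bracketing property for the specific continued fraction $1/\phi$.

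First, I would establish Cassini's identity for the Fibonacci numbers in the paper's indexing ($f_0=f_1=1$, $f_{t+1}=f_t+f_{t-1}$): namely,
\[
f_{t-1}\,f_{t+1} - f_t^2 \;=\; (-1)^t \qquad \text{for all } t \geq 1.
\]
This is a one-line induction: the base case $t=1$ gives $f_0 f_2 - f_1^2 = 1\cdot 2 - 1 = 1 = (-1)^1\cdot(-1)$ (with the appropriate sign convention, so I would settle the sign during the writeup), and the inductive step uses $f_{t+2}=f_{t+1}+f_t$ to reduce to the previous case. Dividing through by $f_t f_{t+1}$ gives the key numerical identity
\[
\frac{f_{t-1}}{f_t} - \frac{f_t}{f_{t+1}} \;=\; \frac{(-1)^t}{f_t\,f_{t+1}},
\]
so consecutive convergents $f_{t-1}/f_t$ and $f_t/f_{t+1}$ differ by exactly $1/(f_t f_{t+1})$, with sign alternating in $t$.

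Next I would argue that $1/\phi$ lies strictly between $f_{t-1}/f_t$ and $f_t/f_{t+1}$ for every $t \geq 1$. The cleanest route is to use the closed form $1/\phi = \lim_{s\to\infty} f_{s-1}/f_s$ together with the alternating-sign identity above: since the successive differences alternate in sign and shrink monotonically to $0$, the sequence of convergents splits into a decreasing even-indexed subsequence and an increasing odd-indexed subsequence (or vice versa), each bracketing the common limit $1/\phi$. Consequently,
\[
\left|\frac{f_{t-1}}{f_t} - \frac{1}{\phi}\right| \;<\; \left|\frac{f_{t-1}}{f_t} - \frac{f_t}{f_{t+1}}\right| \;=\; \frac{1}{f_t\,f_{t+1}}.
\]

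Finally, I would finish using $f_{t+1} \geq f_t$, which holds for every $t \geq 1$ by monotonicity of the Fibonacci sequence (with $f_2 = 2 > 1 = f_1$ and $f_{t+1} = f_t + f_{t-1} > f_t$ for $t \geq 2$). This gives $1/(f_t f_{t+1}) \leq 1/f_t^2$, proving the claimed bound $|f_{t-1}/f_t - 1/\phi| < 1/f_t^2$. The only small bookkeeping point is to sanity-check the $t=1$ endpoint separately ($|1 - 1/\phi| \approx 0.382 < 1 = 1/f_1^2$), which matches the inequality, so no special case is required in the final statement.
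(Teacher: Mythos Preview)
Your argument is correct. Note, however, that the paper does not supply its own proof of this lemma: it is simply quoted as a standard fact from the cited reference \cite{HarEtAl08}. The route you take---Cassini's identity to get $|f_{t-1}/f_t - f_t/f_{t+1}| = 1/(f_t f_{t+1})$, then the alternating-sign bracketing of the limit $1/\phi$ by consecutive convergents, then $f_{t+1} \geq f_t$---is exactly the classical textbook derivation, so there is no meaningful difference in approach to report.
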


\medskip

\begin{proof}[Proof of Lemma~\ref{l:space}] 
Assume without loss of generality that $v' < v$.
By the definition of $\rho_q$, 
there is an integer $u$ such that
\[
|p v - p v' - u q| = \rho_q( p v, p v').
\]
Dividing both sides by $q (v - v')$, we get
\[
\left| \frac{p}{q} - \frac{u}{v - v'} \right| = \frac{\rho_q( p v, p v')}{q (v - v')}.
\]
Hence we have
\[
\frac{c_3}{(v-v')^2} \leq
\left| \frac{1}{\phi} - \frac{u}{v - v'} \right| < \frac{\rho_q( p v, p v')}{q (v - v')} + \frac{1}{q^2},
\]
where we have used Lemma~\ref{l:phi.approx.lower} for the first inequality and Lemma~\ref{l:phi.approx.upper} for the second.

If $|v| \leq \sqrt{c_3/4} \cdot q$ and $|v'| \leq \sqrt{c_3/4} \cdot q$,
then we get
\[
\frac{\rho_q( p v, p v')}{q (v - v')} > \frac{c_3}{2 (v-v')^2},
\]
so that
\[
\rho_q( p v, p v') > \frac{c_3 q}{2 (v - v') }
    \geq \frac{c_3 q}{ 4 \max \{ |v|, |v'| \}},
\]
completing the proof of (\ref{e:space.lower}).

Now we turn to  (\ref{e:space.upper}).  Two applications of Lemma~\ref{l:phi.approx.upper} and the triangle inequality together imply
\[
\left| \frac{f_{t-1}}{f_t} - \frac{p}{q} \right|  = 
\left| \frac{f_{t-1}}{f_t} - \frac{f_L}{f_{L+1}} \right|  \leq
\left| \frac{f_{t-1}}{f_t} - \frac{1}{\phi} \right|  +
\left| \frac{f_L}{f_{L+1}} - {\frac 1 \phi} \right|  \leq
{\frac 1 {f_t^2}} + {\frac 1 {f_{L+1}^2}} \leq \frac{2}{f_t^2}.
\]
Thus, for all integers $j$ with $|j| \leq f_t$, we have
\begin{equation}
\label{e:s.by.ft}
\left| \frac{j f_{t-1}}{f_t} - \frac{j p}{q} \right| \leq \frac{2}{f_t}.
\end{equation}
Since $f_{t-1}$ and $f_t$ are relatively prime, the elements
$\left\{ \frac{ j f_{t-1} }{f_t} \mod 1 : j \in [f_t] \right\}$ are exactly
equally spaced over $[0,1]$, so there is some $\tilde{j} \in [q]$ such that
the fractional part of $\frac{\tilde{j} f_{t-1} }{f_t}$ is within
$\pm \frac{1}{f_t}$ of $i/q$.  Applying (\ref{e:s.by.ft}), the fractional
part of $\frac{\tilde{j} p}{q}$ is within $\pm \frac{3}{f_t}$ of
$i/q$, and scaling up by $q$ yields (\ref{e:space.upper}), completing the proof of Lemma~\ref{l:space}.
\end{proof}

\medskip

Let $\ell = \lfloor \sqrt{L} \rfloor$.
Now that we have $p$ and $q$, we turn to defining
$(\bU_1, \bV_1),..., (\bU_{\ell}, \bV_{\ell})$.  The distributions that
are hard to distinguish will be chosen from among
\[
\bS_1 = \bU_1 + p \bV_1 \mod q, \ldots, \bS_{\ell} = \bU_{\ell} + p \bV_{\ell} \mod q.
\]
For a  positive integer $a$ 
let $\Bin(a^2,1/2)$ be the Binomial distribution which is a sum of $a^2$ independent Bernoulli random variables with expectation $1/2$, 
and let $\bW(a) = \Bin(2 a^2,1/2) - a^2$.
Let $C = \lceil a^2/q \rceil$ so that $Cq-a^2 \geq 0$, and observe that 
\[
\bW(a) + C q \mod q
\quad \text{
is identical to 
} \quad
\bW(a) \mod q,
\]
and that $\bW(a) + C q$ is a PBD$_{\Theta(q + a^2)}$ distribution.



We will need a lemma about how the $\bW$ random variables ``behave like discretized Gaussians''
that is a bit stronger in some cases than the usual Chernoff-Hoeffding bounds.
We will use the following bound on binomial coefficients:
\begin{lemma}[\cite{Ros99}]
\label{l:asymp.bin.coeff}
If $k = o(n^{3/4})$, then
\[
\frac{1}{2^n} \cdot { n \choose {\lfloor n/2 \rfloor + k} } = O \left( \frac{1}{\sqrt{n}} \exp\left(\frac{-2 k^2}{n} \right) \right).
\]
\end{lemma}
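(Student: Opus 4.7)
The plan is to derive this asymptotic from Stirling's approximation, in essentially the standard way one obtains a local central limit theorem for a symmetric binomial; the assumption $k=o(n^{3/4})$ is exactly what is needed to control the cubic and higher error terms in the Taylor expansion.

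First I would reduce to the case in which $n$ is even. When $n$ is odd, writing $\lfloor n/2 \rfloor = (n-1)/2$ and setting $n' = n-1$ reduces $\binom{n}{\lfloor n/2 \rfloor+k}$ to a quantity of the form $2\binom{n'}{n'/2 + k'}$ for $k' = k$ or $k-1$, changing $n$ by $1$ and $k$ by at most $1$; since the claimed bound is insensitive to such changes (both $1/\sqrt{n}$ and $\exp(-2k^2/n)$ change by a $1+o(1)$ factor), it is enough to assume $n$ is even. So fix $n$ even and write the binomial coefficient as $n!/((n/2+k)!(n/2-k)!)$.

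Next I would apply Stirling's formula $m! = \sqrt{2\pi m}\,(m/e)^m(1+O(1/m))$ to each of the three factorials. The $e^m$ factors cancel, and using the identity $(n/2)^n = (n/2)^{n/2+k}(n/2)^{n/2-k}$ the ratio becomes
\[
\frac{1}{2^n}\binom{n}{n/2+k} = \frac{\sqrt{n}}{\sqrt{2\pi(n/2+k)(n/2-k)}}\cdot \left(1+\tfrac{2k}{n}\right)^{-(n/2+k)}\left(1-\tfrac{2k}{n}\right)^{-(n/2-k)}(1+o(1)).
\]
The prefactor is $\frac{1}{\sqrt{\pi n/2}}(1+O(k^2/n^2))$, which already yields the desired $\Theta(1/\sqrt{n})$ leading behavior.

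The main work is then to show that the logarithm of the two $(1\pm 2k/n)$ factors equals $-2k^2/n + o(1)$. Writing $x=2k/n$ and grouping symmetric and antisymmetric parts,
\[
-(n/2+k)\log(1+x) - (n/2-k)\log(1-x) = -\tfrac{n}{2}\log(1-x^2) - k\log\tfrac{1+x}{1-x}.
\]
Expanding $-\log(1-x^2)=x^2+x^4/2+\cdots$ and $\log\frac{1+x}{1-x}=2(x+x^3/3+\cdots)$ and substituting $x=2k/n$, the $x^2$ terms give $nx^2/2 - 2kx = 2k^2/n - 4k^2/n = -2k^2/n$, exactly the exponent claimed. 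The next-order contribution is of order $nx^4 + kx^3 = O(k^4/n^3)$, and the hypothesis $k=o(n^{3/4})$ makes this $o(1)$, so the tail of the Taylor series contributes a multiplicative $1+o(1)$ factor. Combining with the $\Theta(1/\sqrt{n})$ prefactor yields the bound.

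The only mildly delicate step, and the one I would expect to require the most care, is the error bookkeeping: one must ensure that all higher-order terms in the Taylor expansions as well as the $O(1/m)$ Stirling corrections remain $o(1)$ under $k=o(n^{3/4})$, rather than merely $k=O(\sqrt{n})$. The computation above shows that $O(k^4/n^3)$ is the worst error, and the threshold $k=o(n^{3/4})$ is precisely calibrated to that term; all other corrections (including $O(k^2/n^2)$ from the prefactor and $O(1/n)$ from Stirling) are strictly smaller in the relevant regime.
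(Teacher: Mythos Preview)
The paper does not prove this lemma; it simply cites it from \cite{Ros99} and uses it as a black box. Your Stirling-based derivation is the standard one and is correct: the symmetric/antisymmetric splitting of the logarithm, the identification of $-2k^2/n$ as the leading term, and the observation that the dominant error is $O(k^4/n^3)$ (so that $k=o(n^{3/4})$ is exactly the right threshold) are all accurate. One minor imprecision: in the odd-$n$ reduction, $\binom{n}{\lfloor n/2\rfloor+k}$ is a \emph{sum} $\binom{n'}{n'/2+k}+\binom{n'}{n'/2+k-1}$ rather than literally twice a single term, but since both summands obey the even-case bound with $k'\in\{k,k-1\}$ and the extra factor of $2$ is absorbed by $2^n=2\cdot 2^{n'}$, the conclusion is unaffected.
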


Now for our bound regarding $\bW$.
\begin{lemma}
\label{l:gaussian}
There is a constant $c_4>0$ such that for all $k$,
\[
\Pr[\bW(a) = k] \leq \frac{c_4 \exp\left(- \frac{k^2}{2 a^2} \right) }{a}.
\]
\end{lemma}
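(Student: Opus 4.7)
The plan is to write $\Pr[\bW(a) = k] = \binom{2a^2}{a^2 + k}/2^{2a^2}$, so that $\Pr[\bW(a) = k]$ is precisely the quantity controlled by Lemma~\ref{l:asymp.bin.coeff} with $n = 2a^2$. In the regime where Lemma~\ref{l:asymp.bin.coeff} applies directly, this is just bookkeeping; the main work is covering the ranges of $k$ that fall outside its hypothesis.

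First I would handle the ``central'' regime $|k| \leq a^{3/2}$. Here $|k| = o((2a^2)^{3/4})$, so Lemma~\ref{l:asymp.bin.coeff} gives
\[
\Pr[\bW(a) = k] = \frac{1}{2^{2a^2}}\binom{2a^2}{a^2 + k} = O\!\left(\frac{1}{\sqrt{2a^2}} \exp\!\left(-\frac{2k^2}{2a^2}\right)\right) = O\!\left(\frac{1}{a}\exp\!\left(-\frac{k^2}{a^2}\right)\right).
\]
Since $\exp(-k^2/a^2) \le \exp(-k^2/(2a^2))$, this immediately yields the claimed bound in this range (with a universal constant absorbed into $c_4$).

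Next I would handle the ``tail'' regime $a^{3/2} \le |k| \le a^2$. Writing $\bW(a) = \sum_{i=1}^{2a^2}(X_i - 1/2)$ with $X_i$ independent $\mathrm{Bernoulli}(1/2)$ variables bounded in $[-1/2,1/2]$, Hoeffding's inequality yields the one-sided tail bound $\Pr[\bW(a) \ge |k|] \le \exp(-2|k|^2/(2a^2)) = \exp(-k^2/a^2)$, and similarly for the lower tail. Since $\Pr[\bW(a) = k] \le \Pr[|\bW(a)| \ge |k|] \le 2\exp(-k^2/a^2)$, the required bound reduces to showing $2\exp(-k^2/a^2) \le c_4 \exp(-k^2/(2a^2))/a$, i.e.\ $2a \le c_4 \exp(k^2/(2a^2))$. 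For $|k| \ge a^{3/2}$ we have $k^2/(2a^2) \ge a/2$, so $\exp(k^2/(2a^2))$ grows exponentially in $a$ and dwarfs $a$ for all sufficiently large $a$; small $a$ is handled by adjusting $c_4$ to absorb the finitely many exceptional cases. Finally, the regime $|k| > a^2$ is trivial because $\bW(a)$ is supported on $\{-a^2,\dots,a^2\}$, so $\Pr[\bW(a) = k] = 0$.

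The only real subtlety is the mismatch between the exponent $\exp(-2k^2/n) = \exp(-k^2/a^2)$ that falls out of Lemma~\ref{l:asymp.bin.coeff} and the (weaker) exponent $\exp(-k^2/(2a^2))$ demanded by the statement; this mismatch is exactly what gives enough slack in the tail regime for the bookkeeping above to go through. Hence I anticipate the only ``thinking'' step is to pick the cutoff (here $a^{3/2}$) so that both the asymptotic binomial estimate and the Hoeffding tail bound dovetail cleanly into a single constant $c_4$.
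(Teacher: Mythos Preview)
Your approach is essentially identical to the paper's: split into a central regime handled by Lemma~\ref{l:asymp.bin.coeff} and a tail regime handled by Hoeffding, using the slack between $\exp(-k^2/a^2)$ and $\exp(-k^2/(2a^2))$ to absorb the missing $1/a$ factor in the tail.

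There is one technical slip worth fixing. You take the cutoff at $|k|=a^{3/2}$ and assert that then $|k|=o((2a^2)^{3/4})$. But $(2a^2)^{3/4}=2^{3/4}a^{3/2}$, so at the boundary $|k|/n^{3/4}=2^{-3/4}$, a fixed nonzero constant; this does not satisfy the hypothesis $k=o(n^{3/4})$ of Lemma~\ref{l:asymp.bin.coeff}. The paper avoids this by placing the cutoff at $a^{4/3}$, which is genuinely $o(a^{3/2})$. With that cutoff the tail computation gives $k^2/(2a^2)\ge a^{2/3}/2$, which still grows unboundedly and makes $\exp(k^2/(2a^2))$ dominate $a$, so the rest of your argument goes through unchanged. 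Any cutoff strictly below $a^{3/2}$ by a diverging factor (e.g.\ $a^{3/2}/\log a$) would also work; the point is just that your chosen cutoff sits exactly at the edge of the lemma's applicability rather than inside it.
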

\begin{proof} If $|k| \leq a^{4/3}$, the lemma follows directly from Lemma~\ref{l:asymp.bin.coeff}.  If
$|k| > a^{4/3}$, we may suppose w.l.o.g. that $k$ is positive.  Then Hoeffing's inequality implies that 
\begin{align*}
& \Pr[\bW(a) = k] \leq \Pr[\bW(a) \geq k] \leq \exp\left(- \frac{k^2}{a^2}\right) \\
& \leq\exp\left(- \frac{k^2}{2 a^2}\right) \exp\left(- \frac{k^2}{2 a^2}\right) 
          \leq \exp\left(- \frac{k^2}{2 a^2}\right) \exp\left(- \frac{a^{2/3}}{2} \right) 
             = O\left( \exp\left(- \frac{k^2}{2 a^2}\right)/a \right),
\end{align*}
completing the proof.
\end{proof}

The following lemma may be considered a standard fact about binomial coefficients, but for completeness we give a proof below.

\begin{lemma}
\label{u:gaussian}
For every $c>0$, there exists $c'>0$ such that for all integers $x,a$ with $|x|< c \cdot a$, 
\[
\Pr[\bW(a) = x]  \ge \frac{c'}{a}. 
\]
\end{lemma}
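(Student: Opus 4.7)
The plan is to show this by a direct application of a (sharp) local central limit theorem for the binomial distribution, or equivalently by Stirling's formula applied to the relevant binomial coefficient. Write
\[
\Pr[\bW(a) = x] \;=\; \frac{1}{2^{2a^2}}\binom{2a^2}{a^2 + x},
\]
so the task is to lower bound $\binom{2a^2}{a^2+x}/2^{2a^2}$ by a constant times $1/a$ uniformly over $|x| < c\cdot a$.

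First I would dispose of a trivial range: when $a$ is smaller than some constant $a_0(c)$ depending only on $c$, the set of integers $x$ with $|x| < c\cdot a$ is finite and each probability $\Pr[\bW(a)=x]$ is a fixed positive number, so one can absorb this into the constant $c'$. For $a \geq a_0(c)$, I would apply Stirling's approximation $n! = \sqrt{2\pi n}(n/e)^n(1 + O(1/n))$ to each of the three factorials in $\binom{2a^2}{a^2+x} = (2a^2)!/((a^2+x)!(a^2-x)!)$; this is exactly the computation that justifies Lemma~\ref{l:asymp.bin.coeff}, but now tracking the \emph{lower} bound. Carrying out the arithmetic (and using $\log(1+t) = t - t^2/2 + O(t^3)$ for $|t| \leq c/a \ll 1$ to expand $\log((a^2+x)/a^2)$ and $\log((a^2-x)/a^2)$) yields
\[
\frac{1}{2^{2a^2}}\binom{2a^2}{a^2+x} \;=\; \frac{1}{\sqrt{\pi a^2}}\exp\!\left(-\frac{x^2}{a^2}\right)\bigl(1 + O(1/a) + O(x^2/a^3)\bigr).
\]
Since $|x| < c\cdot a$, we have $x^2/a^2 < c^2$, so the exponential factor is at least $e^{-c^2}$, and the error terms $O(1/a) + O(x^2/a^3) = O(1/a)$ can be made smaller than $1/2$ once $a \geq a_0(c)$ is large enough. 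Hence $\Pr[\bW(a)=x] \geq (e^{-c^2}/(2\sqrt{\pi}))\cdot(1/a)$, which gives the desired constant $c' = c'(c) > 0$.

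I do not anticipate any substantive obstacle: the estimate is a standard consequence of Stirling, and the hypothesis $|x| < c\cdot a$ is precisely what is needed to ensure that both (i) the Gaussian exponent $x^2/a^2$ stays bounded and (ii) the Taylor expansions of $\log((a^2 \pm x)/a^2)$ remain in the regime where the quadratic term controls the expansion. Alternatively, and even more briefly, one can simply cite the local central limit theorem for sums of i.i.d.\ $\{0,1\}$-valued random variables with mean $1/2$ (see, e.g., any standard probability reference), which states that $\Pr[\Bin(n,1/2) = n/2 + k] = (1 + o(1))\cdot \frac{1}{\sqrt{\pi n/2}}\exp(-2k^2/n)$ uniformly over $|k| = O(\sqrt{n})$; applying this with $n = 2a^2$ and $k = x$ (so $|k| = O(a) = O(\sqrt{n})$) immediately yields the claim.
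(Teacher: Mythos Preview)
Your proposal is correct and takes essentially the same approach as the paper: both start from $\Pr[\bW(a)=x] = 2^{-2a^2}\binom{2a^2}{a^2+x}$ and lower-bound this by a constant times $(1/a)\exp(-\Theta(x^2/a^2))$. The only cosmetic difference is that the paper factors out the central term and bounds the ratio $\binom{2a^2}{a^2+x}/\binom{2a^2}{a^2} = \prod_{j=1}^{x}\frac{a^2-j+1}{a^2+j}$ elementarily (together with $2^{-2n}\binom{2n}{n}\ge \frac{1}{10\sqrt{n}}$), whereas you invoke Stirling/local CLT directly; both routes are standard and yield the same bound.
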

\begin{proof}
Note that 
\[
\Pr[\bW(a) = x]  = \frac{1}{2^{2a^2}} \cdot \binom{2a^2}{a^2 + x}
\]
Thus, 
\begin{align*}
& \Pr[\bW(a) =x] \geq \frac{1}{2^{2a^2}} \cdot \binom{2a^2}{a^2} \cdot \frac{(a^2)! (a^2)!}{(a^2+x)! (a^2-x)!} \\
& \geq \frac{1}{10 \cdot a} \cdot \prod_{j=1}^{x} \frac{a^2 - j+1}{a^2 +j } = \frac{1}{10 \cdot a} \cdot \prod_{j=1}^{x} \frac{1 -\frac{ j-1}{a^2}}{1+\frac{j}{a^2} }\\
&\ge  \frac{1}{10 \cdot a} \cdot \prod_{j=1}^{x} e^{\frac{-10 \cdot j}{a^2}} \ge \frac{1}{10 \cdot a} \cdot e^{-\frac{10 \cdot x^2}{a^2}}. 
\end{align*}
Here the second inequality uses that $2^{-2n} \cdot \binom{2n}{n} \ge \frac{1}{10 \cdot \sqrt{n}}$ and the third inequality uses that for $j \le a^2/2$, 
$
e^{\frac{-10 \cdot j}{a^2}} \le (1 -\frac{ j-1}{a^2})/(1+\frac{j}{a^2} ).
$ The bound on $\Pr[\bW(a) = x]$   follows immediately. 
\end{proof}

Now, let $\bU_t = \bW(\lfloor \frac{p}{c_5 f_t} \rfloor)$, 
where $c_5$ is a constant that we will set in the analysis, and let
$\bV_t = \bW (f_t)$.  Recall that $\bS_t = \bU_t + p \bV_t \mod q$.
Let $\ell' = \lfloor L^{1/4} \rfloor$ and recall that $\ell = \lfloor L^{1/2} \rfloor$.
Let ${\cal S} = \{ \bS_{\ell'}, ... , \bS_{\ell} \}$
This is the set of $\Omega(\log^{1/2} q)$ distributions to which we will apply Lemma \ref{fano}.

Now, to apply Lemma~\ref{fano} we need upper bounds on the
KL-divergence between pairs of elements of $\cal S$, and lower bounds
on their total variation distance.  Intuitively, the upper bound on the KL-divergence will follow
from the fact that each $\bS_t$ ``spaces apart by $\Theta(p/f_t)$ PBDs
with a lot of measure in a region of size $p/f_t$'' (i.e. the translated $\bU_t$ distributions), so the probability never gets very small
between consecutive ``peaks'' in the distribution; consequently, all of
the probabilities in all of the distributions are within a
constant factor of one another.  The following lemma makes this precise:

\begin{lemma}\label{lem:KLbounds}
There is a constant $c_6>0$ such that, for large enough $q$,
if $\lfloor L^{1/4} \rfloor < t < L^{1/2}$, 
for all $i \in [q]$,
\[
\frac{1}{c_6 q} \leq \Pr[\bS_t = i] \leq \frac{c_6}{q}.
\]
\end{lemma}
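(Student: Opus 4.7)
The plan is to expand
\[
\Pr[\bS_t = i] = \sum_v \Pr[\bV_t = v]\cdot \Pr\bigl[\bU_t \equiv i - pv \pmod q\bigr]
\]
and combine the near-Gaussian behavior of $\bU_t$ and $\bV_t$ (Lemmas~\ref{l:gaussian} and~\ref{u:gaussian}) with the equidistribution of $\{p v \bmod q\}_v$ supplied by Lemma~\ref{l:space}. Write $a = \lfloor p/(c_5 f_t) \rfloor$, so $\bU_t = \bW(a)$; using $q/p \to \phi$ we have $a = \Theta(q/f_t)$ with constants depending on $c_5$. For each $v$ let $r_v \in (-q/2, q/2]$ denote the representative of $i - p v$ modulo $q$. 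Note that for $t \le L^{1/2}$ the entire support $[-f_t^2, f_t^2]$ of $\bV_t$ lies inside $(-c_2 q, c_2 q)$ for large enough $q$, so the hypothesis of Lemma~\ref{l:space} is satisfied by every $v$ we encounter.

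For the \emph{lower bound}, fix $i \in [q]$ and apply~(\ref{e:space.upper}) to produce $v^* \in \{-f_t, \ldots, f_t\}$ with $\rho_q(i, p v^*) \le 3 q/f_t$; equivalently $|r_{v^*}| \le 3q/f_t = O(a)$ (with $c_5$ fixed). Retaining only the $v^*$ summand gives $\Pr[\bS_t = i] \ge \Pr[\bV_t = v^*]\cdot \Pr[\bU_t = r_{v^*}]$, and two applications of Lemma~\ref{u:gaussian}---once to $\bV_t = \bW(f_t)$ at $|v^*| \le f_t$, and once to $\bU_t = \bW(a)$ at $|r_{v^*}| = O(a)$---deliver $\Omega(1/f_t) \cdot \Omega(1/a) = \Omega(1/(a f_t)) = \Omega(1/q)$.

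For the \emph{upper bound}, Lemma~\ref{l:gaussian} applied termwise to $\Pr[\bU_t \equiv j \pmod q] = \sum_{k \in \Z} \Pr[\bU_t = j + kq]$ yields $\Pr[\bU_t \equiv j \pmod q] \le (c_4/a)\exp(-r^2/(2a^2)) + \mathrm{tail}$, where $r$ represents $j$ in $(-q/2, q/2]$ and the ``tail'' collects the $k \ne 0$ contributions. Because $q/a = \Theta(f_t)$, and because the hypothesis $t \ge L^{1/4}$ makes $f_t$ super-polynomial in $\log q$, every wrap-around term is $(c_4/a)\exp(-\Omega(f_t^2))$ and the cumulative tail contribution to $\Pr[\bS_t = i]$ is $o(1/q)$. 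Combined with Lemma~\ref{l:gaussian} applied to $\bV_t$, the task reduces to showing
\[
\Sigma \;:=\; \sum_v \exp\!\bigl(-v^2/(2f_t^2)\bigr) \exp\!\bigl(-r_v^2/(2a^2)\bigr) \;=\; O(1),
\]
because multiplying by the prefactor $c_4^2/(a f_t) = O(1/q)$ then yields the desired upper bound.

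The crux is the bound on $\Sigma$, which I would prove by a double dyadic decomposition. Partition the $v$'s into shells $\mathcal S_0 = \{|v| < f_t\}$ and $\mathcal S_j = \{2^{j-1} f_t \le |v| < 2^j f_t\}$ for $j \ge 1$, and within each shell into cells $\mathcal C_{j,k} = \{v \in \mathcal S_j : k a \le |r_v| < (k+1) a\}$ for $k \ge 0$. On $\mathcal S_j$ the Gaussian factor in $v$ is at most $\exp(-2^{2(j-1)}/2)$ (or $1$ for $j=0$), and on $\mathcal C_{j,k}$ the Gaussian factor in $r_v$ is at most $\exp(-k^2/2)$. Counting the cells uses~(\ref{e:space.lower}) uniformly on $\mathcal S_j$: any two distinct $v,v' \in \mathcal S_j$ have $\rho_q(p v, p v') \ge c_1 q/(2^j f_t) = \Theta(a/2^j)$, so a set of $r_v$-width $2a$ meets $\mathcal S_j$ in at most $O(2^j + 1)$ points. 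Summing gives $\Sigma \le \sum_{j \ge 0} \sum_{k \ge 0} O(2^j + 1) \exp(-2^{2(j-1)}/2) \exp(-k^2/2) = O(1)$. The main obstacle is exactly this step: neither Gaussian tail alone is sharp enough, and one must marry it with the number-theoretic spacing from~(\ref{e:space.lower}) through the shell/cell partition. A secondary subtlety is that $\bU_t$'s own support wraps around $\Z_q$ many times whenever $t < L/2$; this is handled cleanly by the wrap-around tail bound above, which is exponentially small once $t \ge L^{1/4}$.
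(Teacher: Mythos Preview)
Your proof is correct and follows essentially the same approach as the paper: both arguments expand $\Pr[\bS_t=i]$ as a sum over $v$, obtain the lower bound from (\ref{e:space.upper}) together with Lemma~\ref{u:gaussian}, and obtain the upper bound by combining the Gaussian tail bound of Lemma~\ref{l:gaussian} on both $\bU_t$ and $\bV_t$ with the spacing estimate~(\ref{e:space.lower}). The only differences are bookkeeping---the paper uses \emph{linear} shells $\mathcal V_j=[-j\sigma(\bV),j\sigma(\bV)]\setminus\mathcal V_{j-1}$ and, within each, orders the $v$'s by $\rho_q(pv,i)$ rather than bucketing $|r_v|$ into cells, and it is terse about the wrap-around of $\bU_t$ modulo $q$ that you handle explicitly---but the substance is the same.
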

\begin{proof}
Fix an arbitrary $t$ for which $\lfloor L^{1/4} \rfloor <
t < L^{1/2}$.  Since $t$ is fixed, we drop it from all subscripts.

First, let us work on the lower bound.  Roughly, we will show that a random
$v \sim \bV$ has a good chance of translating $\bU$ within $\Theta(\sigma(\bU))$ of $i$.  Specifically, (\ref{e:space.upper}) implies that there is a
$u \in \left[- 3 \lfloor q/f_t \rfloor,3 \lfloor q/f_t \rfloor \right]$
and a $v \in \left[-f_t,f_t\right]$ such that
$i = u + p v \mod q$. Thus
\[
\Pr[\bS = i] \geq \Pr[\bU = u] \cdot \Pr[\bV = v] 
\geq \Omega\left( \frac{f_t}{p} \cdot \frac{1}{f_t } \right)
  \geq \Omega\left( \frac{1}{p} \right) = \Omega\left( \frac{1}{q} \right),
\]
where the second inequality follows from an application of Lemma~\ref{u:gaussian} 
(recalling that $q=\Theta(p)$).

Now for the upper bound.  We have
\begin{align*}
\Pr[\bS = i] & = \sum_v \Pr[\bS = i | \bV = v] \Pr[\bV = v] \\
             & = \sum_v \Pr[\bU = i - p v \mod q] \Pr[\bV = v] \\
             & < o(1/q) + \sum_{v: |v| \leq \sigma(V) \ln q} \Pr[\bU = i - p v \mod q] \Pr[\bV = v],
\end{align*}
since $\Pr[|v| > \sigma(V) \ln q] = o(1/q)$.  Let ${\cal V}_1 = [-\sigma(\bV),\sigma(\bV)]$, and, for each $j > 1$, let
${\cal V}_{j} = [- j \sigma(\bV),j \sigma(\bV)] - {\cal V}_{j-1}$.
Then
\begin{align}
\nonumber
\Pr[\bS = i] & \leq o(1/q) + \sum_{j=1}^{\lfloor \ln q \rfloor} \sum_{v \in {\cal V}_j} \Pr[\bU = i - p v \mod q] \Pr[\bV = v] \\
\label{e:u.remains}
           & \leq o(1/q) + O(1) \cdot \sum_{j=1}^{\lfloor \ln q \rfloor} \frac{ e^{-j^2/2}}{\sigma(\bV)}
                 \sum_{v \in {\cal V}_j} \Pr[\bU = i - p v \mod q],
\end{align}
by Lemma~\ref{l:gaussian}.

Now fix a $j \leq \ln q$. 
Let $(v'_k)_{k=1,2,\dots}$ be the ordering of the elements of ${\cal V}_j$ by order of increasing $\rho_q$-distance from $pv'_k$ to $i$. Since each $|v'_k| \leq j \cdot \sigma(\bV) \ll c_2 q$,
Lemma~\ref{l:space} implies that $\rho_q$-balls of
radius $\Omega \left( \frac{q}{j \sigma(\bV)} \right)$ centered at members
of $p v_1',...,p v_k'$ are disjoint, so
\[
k \cdot \Omega \left( \frac{q}{j \sigma(\bV)} \right)
  < 2 \rho_q(pv'_k,i) + 1.
\]
Since $\sigma(\bU) \sigma(\bV) = \Theta(q)$, we get
\begin{equation}
\label{e:vprime}
\rho_q(pv'_k,i) = \Omega \left( \frac{k \sigma(\bU)}{j} \right).
\end{equation}
Applying Lemma~\ref{l:gaussian}, we get 
$$
\sum_{v \in {\cal V}_j} \Pr[\bU = i - p v \mod q] \le 
\frac{1}{\sigma(\bU)} \sum_{k>0}
\exp\bigg(-\Omega\bigg(\frac{k^2\cdot \sigma^2(\bU)  }{j^2 \sigma^2 (\bU)}\bigg)\bigg)
= O\bigg( \frac{j}{\sigma(\bU)} \bigg).
$$

Combining with (\ref{e:u.remains}), we get
\[
\Pr[\bS = i]  = \sum_{j=1}^{\infty}O\bigg( \frac{j}{\sigma(\bU)} \bigg) \cdot O\bigg( \frac{1}{\sigma(\bV)} \bigg) \cdot e^{-j^2/2} = O\bigg( \frac{1}{\sigma(\bU) \cdot \sigma(\bV)}\bigg) =O\left(\frac{1}{q}\right). 
\]
This finishes the upper bound on $\Pr[\bS = i]$, concluding our proof. 
\end{proof}

We have  the following immediate corollary.
\begin{lemma}
\label{l:kl.lower}
There is a constant $c_7$ such that, for all 
$i,j \in \{ \ell',..., \ell \}$, we have
$D_{KL} (\bS_i || \bS_j) \leq c_7$.
\end{lemma}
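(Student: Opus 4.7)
The plan is to derive Lemma~\ref{l:kl.lower} as a direct computational consequence of the pointwise bounds established in Lemma~\ref{lem:KLbounds}. Specifically, Lemma~\ref{lem:KLbounds} guarantees that for each $t$ in the relevant range $\lfloor L^{1/4}\rfloor < t < L^{1/2}$ and each $i \in [q]$, we have
\[
\frac{1}{c_6 q} \;\leq\; \Pr[\bS_t = i] \;\leq\; \frac{c_6}{q}.
\]
The key observation is that any two distributions on $[q]$ whose pointwise masses are sandwiched between $\frac{1}{c_6 q}$ and $\frac{c_6}{q}$ must have bounded KL divergence, with the bound depending only on $c_6$ (and not on $q$).

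Concretely, I would argue as follows. For any $i,j \in \{\ell',\dots,\ell\}$ and any $k \in [q]$, the ratio $\Pr[\bS_i = k] / \Pr[\bS_j = k]$ is at most $(c_6/q)/(1/(c_6 q)) = c_6^2$ and at least $c_6^{-2}$. Hence
\[
\left|\log \frac{\Pr[\bS_i = k]}{\Pr[\bS_j = k]}\right| \;\leq\; 2 \log c_6.
\]
Plugging into the definition of KL divergence,
\[
D_{KL}(\bS_i \| \bS_j) \;=\; \sum_{k \in [q]} \Pr[\bS_i = k] \log \frac{\Pr[\bS_i = k]}{\Pr[\bS_j = k]} \;\leq\; 2 \log c_6 \cdot \sum_{k \in [q]} \Pr[\bS_i = k] \;=\; 2 \log c_6,
\]
and setting $c_7 := 2 \log c_6$ completes the argument.

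There is essentially no obstacle here, since all the work has been done in Lemma~\ref{lem:KLbounds}; the only mild subtlety is that we need the pointwise bound to apply for every index in the range $\{\ell',\dots,\ell\}$, which matches (up to the boundary) the range $\lfloor L^{1/4}\rfloor < t < L^{1/2}$ from Lemma~\ref{lem:KLbounds}. Any boundary issue can be addressed by shrinking the family by a constant number of indices or by verifying that the proof of Lemma~\ref{lem:KLbounds} holds inclusively at the endpoints; this does not affect the asymptotic size $\Omega(\log^{1/2} q)$ of $\calS$ that is ultimately used with Fano's inequality.
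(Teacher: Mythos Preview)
Your proposal is correct and matches the paper's approach exactly: the paper simply states Lemma~\ref{l:kl.lower} as an ``immediate corollary'' of Lemma~\ref{lem:KLbounds}, and your computation---bounding the pointwise ratio by $c_6^2$ and hence the log-ratio by $2\log c_6$---is precisely the intended one-line derivation. Your remark about the endpoint discrepancy between the ranges in Lemma~\ref{lem:KLbounds} and the definition of $\calS$ is well observed and correctly handled.
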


It remains only to give a lower bound on the total variation distance.
\begin{lemma}
\label{l:tv.lower}
There is a positive constant $c_8$ such that, for large enough $q$, for $\ell \geq i > j$,
we have $d_{TV} (\bS_i, \bS_j) > c_8$.
\end{lemma}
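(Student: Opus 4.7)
The plan is to lower bound $\dtv(\bS_i, \bS_j)$ by exhibiting an explicit additive character $\chi_k(x) = e^{-2\pi i k x/q}$ under which the two distributions have noticeably different expectations. The elementary inequality $\dtv(\mu, \nu) \geq \tfrac{1}{2}|\mathbb{E}_\mu[\chi_k] - \mathbb{E}_\nu[\chi_k]|$, combined with the independence structure $\bS_t = \bU_t + p\bV_t \bmod q$, reduces the problem to estimating, for $t \in \{i,j\}$,
\[
\mathbb{E}[\chi_k(\bS_t)] \;=\; \mathbb{E}[e^{-2\pi i k\bU_t/q}]\cdot \mathbb{E}[e^{-2\pi i kp\bV_t/q}] \;=\; \cos(\pi k/q)^{2 a_t^2}\cdot \cos(\pi kp/q)^{2 f_t^2},
\]
where $a_t = \lfloor p/(c_5 f_t)\rfloor$ and I have used the identity $\mathbb{E}[e^{i\theta\bW(a)}] = \cos(\theta/2)^{2a^2}$ that follows from writing $\bW(a)$ as a sum of $2a^2$ i.i.d.\ uniform $\pm\tfrac{1}{2}$ variables.

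The frequency I will use is the Fibonacci index $k := f_{j+1}$. The Fibonacci identity $f_{j+1}f_L - f_j f_{L+1} = (-1)^j f_{L-j}$---a restatement of the fact that $f_j/f_{j+1}$ is a continued-fraction convergent of $p/q = f_L/f_{L+1}$---gives $\pi kp/q = \pi f_j + (-1)^j \pi f_{L-j}/f_{L+1}$, hence $\cos(\pi k p/q) = (-1)^{f_j}\cos(\pi f_{L-j}/f_{L+1})$. The sign disappears when this is raised to the even power $2f_t^2$, and a Taylor expansion (valid because $\pi f_{L-j}/f_{L+1}\to 0$ as $L\to\infty$) together with $f_n \approx \phi^n/\sqrt{5}$ yields
\[
\cos(\pi kp/q)^{2 f_t^2} \;\approx\; \exp\!\bigl(-\tfrac{\pi^2}{5}\,\phi^{2(t-j-1)}\bigr).
\]
At $t = j$ this is approximately $0.47$, while at $t = i \geq j+1$ it is at most approximately $0.14$, so the $\bV_t$-factor alone already yields a constant-size gap.

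I will then check that the $\bU_t$-factor $\cos(\pi k/q)^{2 a_t^2}$ is a harmless $1 - O(1/c_5^2)$ at both values of $t$: the same expansion gives $\cos(\pi k/q)^{2a_t^2} \approx \exp(-\pi^2(f_{j+1}/f_t)^2/(\phi^2 c_5^2))$, and $(f_{j+1}/f_t)^2 \leq \phi^2$ for all $t \geq j$, so the exponent is at most $\pi^2/c_5^2$. Fixing $c_5$ to be a sufficiently large absolute constant, and using the trivial bound $\cos(\pi k/q)^{2a_i^2} \leq 1$ on the $\bS_i$ side, we conclude
\[
\mathbb{E}[\chi_k(\bS_j)] - \mathbb{E}[\chi_k(\bS_i)] \;\geq\; (1 - O(1/c_5^2))\cdot 0.47 \;-\; 0.14 \;\geq\; c'_8 > 0,
\]
which delivers $\dtv(\bS_i, \bS_j) \geq c'_8/2$ as required.

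The main technical subtlety is the interplay of constants: $c_5$ is fixed once at the start of the construction and implicitly controls the constant $c_6$ in Lemma~\ref{lem:KLbounds} (the upper bound there degrades as $c_5$ grows), so I will need to verify that the ``$c_5$ large enough'' demanded by the Fourier calculation is consistent with the earlier choice. Additionally, the $1/\sqrt{2}$ in $\sigma(\bW(a))$ and the $\lfloor\cdot\rfloor$ in the definition of $a_t$ each contribute only $O(1)$ multiplicative corrections that are absorbed into the $O(1/c_5^2)$ slack, and the $\approx$'s above become genuine approximate equalities in the $L\to\infty$ limit since $\pi f_{L-j}/f_{L+1}$ tends to $0$. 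If one preferred not to use the Fibonacci identity explicitly, an alternative is to apply Lemma~\ref{l:space}(a),(b) to produce any $|k| \leq f_j$ with $\rho_q(kp, 0) \asymp q/f_j$; the numerical constants change, but the structure of the argument is identical.
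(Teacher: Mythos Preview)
Your Fourier-analytic argument is correct and genuinely different from the paper's proof. The paper proceeds pointwise: it fixes the ``new peak'' set $\mathcal{W}=[-f_i,-f_{j+1}]\cup[f_{j+1},f_i]$, and for each $v\in\mathcal{W}$ and small offset $|u|\le p/(c_5 f_i)$ it shows directly that $\Pr[\bS_i=pv+u\bmod q]\ge c_5/(25p)$ while $\Pr[\bS_j=pv+u\bmod q]$ is $O(1/p)$ with a constant that can be driven below $c_5/(25)$ by taking $c_5$ large; summing over $\Theta(p)$ such pairs gives the TV lower bound. The upper bound on $\bS_j$ is obtained by the same shell decomposition over $\mathcal{V}_h$ and use of Lemma~\ref{l:space} that appears in the KL lemma. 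Your approach instead tests both distributions against the single character $\chi_{f_{j+1}}$; the factorisation $\widehat{\bS_t}(k)=\cos(\pi k/q)^{2a_t^2}\cos(\pi kp/q)^{2f_t^2}$ together with the d'Ocagne identity $f_{j+1}f_L-f_jf_{L+1}=(-1)^jf_{L-j}$ makes the $\bV$-factor explicitly $\exp(-\tfrac{\pi^2}{5}\phi^{2(t-j-1)}+o(1))$, which already separates the two values of $t$. Your route is shorter, avoids the spacing lemma entirely for this step, and makes the role of the Fibonacci/continued-fraction structure very transparent; the paper's pointwise argument, on the other hand, does not rely on the characteristic function being closed-form and would adapt more readily if the $\bW(a)$ building blocks were replaced by PBDs without an exact product formula. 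On the constant $c_5$: both proofs require $c_5$ to exceed some moderate absolute threshold (yours needs roughly $c_5>\pi/\sqrt{\ln(0.47/0.14)}\approx 3$), and in both cases this feeds back into $c_6$ and $c_7$ only through fixed absolute constants, so there is no circularity.
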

\begin{proof}  Let ${\cal W}$ be the union of two integer intervals
\[
{\cal W} = [ -f_i ,..., -f_{j+1} ] \cup 
           [ f_{j+1} ,..., f_i ].
\]
It may be helpful to think of $\cal W$ as
being comprised of $v$ such that $p v$ is the location
of a ``peak'' in $\bS_i$, but not in $\bS_j$.  
We will show that $\bS_i$ assigns significantly more probability to points close to $pv$ than $\bS_j$ does.

Choose $v \in {\cal W}$, and $u$ with $|u| \leq \frac{p}{c_5 f_i}$.   (We will later set $c_5>0$ to be a sufficiently large absolute constant.) For large enough $q$, standard facts about binomial coefficients give that
\begin{equation}
\label{e:lower.point.dtv}
\Pr[\bS_i = p v + u \mod q] \geq 
\Pr[\bV_i = v] \cdot \Pr[\bU_i = u]  \geq \frac{1}{5 f_i} \cdot \frac{c_5 f_i}{ 5 p}
                   = \frac{c_5}{25 p}.
\end{equation}
Now, let us upper bound $\Pr[\bS_j = p v + u \mod q]$.  
Let $a \in [q]$ be such that $p v + u = a \mod q$.  We have
\begin{align*}
\Pr[\bS_j = a]
 &   = \sum_v \Pr[\bS_j = a \  | \  \bV_{j} = v] \Pr[\bV_{j} = v] \\
 &  < o(1/q^2) + \sum_{v : |v| \leq \sigma(\bV_j) \ln q} \Pr[\bS_j = a \  | \  \bV_{j} = v] \Pr[\bV_{j} = v].
\end{align*}
As before, let 
${\cal V}_1 = [-\sigma(\bV_{j}),\sigma(\bV_{j})]$, and, for each $h > 1$, let
${\cal V}_{h} = [- h \sigma(\bV_{j}),h \sigma(\bV_{j})] - {\cal V}_{h-1}$, so that
Lemma~\ref{l:gaussian} implies
\begin{align}
\Pr[\bS_j = a ] & \leq o(1/q^2) + \sum_{h=1}^{\lfloor \ln q \rfloor} \sum_{v \in {\cal V}_{h}} \Pr[\bU_{j} = a-p v \mod q] \Pr[\bV_{j} = v] \\
\label{e:u.remains.dtv}
           & \leq o(1/q^2) + \sum_{h=1}^{\lfloor \ln q \rfloor} \frac{ c_4 e^{-(h-1)^2/2}}{\sigma(\bV_{j})}
                 \sum_{v \in {\cal V}_{h}} \Pr[\bU_{j} = a-p v \mod q].
\end{align}
Let $(v'_k)_{k=1,2,\dots}$ be the ordering of the elements of 
${\cal V}_{h}$ by order of increasing $\rho_q$ distance from $a$ to $pv'_k.$
Since each $|v'_k| \leq h \cdot \sigma(\bV) \ll c_2 q$,
Lemma~\ref{l:space} implies that $\rho_q$-balls of
radius $\frac{c_1 q}{ 2 h f_j }$ centered at members
of $p v_1',...,p v_k'$ are disjoint, so
\[
k \cdot \frac{c_1 q}{ h f_j }  < 2 \rho_q(a, pv'_k) + \frac{c_1 q}{ h f_j }.
\]
so, for large enough $q$, we have
\[
\rho_q(a, pv'_k) > \frac{c_1 (k-1) q}{ 5 h f_j }.
\]
Using this with Lemma~\ref{l:gaussian}, we get that, for large enough $q$,
\begin{align*}
\sum_{v \in {\cal V}_{h}} \Pr[\bU_{j} = a-p v \mod q]
& \leq \ignore{o(1/q^2) +}
{\frac 1 {\sigma(\bU_j)}}
   \sum_{k} 
      c_4 \exp\left(- \frac{(k-1)^2 c_1^2 q^2 c_5^2}{100 h^2 p^2} \right) \\
& \leq \ignore{o(1/q^2) +} 2 \cdot
      \frac{c_4 c_5 f_j }{p}
    \sum_{k} 
      \exp\left(- \frac{(k-1)^2 c_1^2 c_5^2}{100 h^2} \right)  \\
& \leq \ignore{o(1/q^2) +}
      \frac{c_4 c_5 f_j }{p}
      \cdot \frac{40 (h+1)}{c_1 c_5} =
      \frac{40 (h+1) c_4 f_j }{c_1 p}.
     \end{align*}
Plugging back into (\ref{e:u.remains.dtv}), 
we get
\[
\Pr[\bS_j = a ]
\leq 
o(1/q^2) + 40 \sum_{h=1}^{\lfloor \ln q \rfloor} \frac{(h+1)e^{-(h-1)^2/2}  c_4^2}{c_1 p}.
\]
Thus, if $c_5$ is a large enough absolute
constant, there is a constant $c_7$ such that
\[
\Pr[\bS_i = p v + u \mod q] 
 - \Pr[\bS_j = p v + u \mod q] > \frac{c_7}{p},
\]
for all 
$v \in {\cal W}$, and $u$ with $|u| \leq \frac{p}{c_5 f_i}$.
Lemma~\ref{l:space} implies that, if $c_5$ is large enough,
the resulting values of $pv+u$ are distinct elements of $[q]$, and the number of
such pairs is at least 
$(f_{i+1} - f_{i} ) \cdot \lfloor \Omega( \frac{p}{f_i} )\rfloor$
which is $\Omega(p)$, which completes the proof.
\end{proof}

\section{Lower bound for $(\amax,3)$-sums:  Proof of Theorem \ref{unknown-k-is-three-lower}} \label{sec:unknown-lower}

Theorem \ref{unknown-k-is-three-lower} follows from the following stronger result, which gives a lower bound for learning 
a weighted sum of $PBDs$ with weights $\{0=a_1,a_2,a_3\}$
even if  the largest support value $a_3$ is known.

\begin{theorem} [$k = 3$, strengthened unknown-support lower bound] \label{thm:stronger-unknown-lower} 
Let $A$ be any algorithm with the following properties:  algorithm $A$ is given $N$, an accuracy parameter $\eps > 0$, a value $0 < \amax \in \Z$, and access to i.i.d. draws from an unknown 
$\bS=a_2 \bS_2 + a_3 \bS_3$,
where $a_3$ is
the largest prime that is at most $\amax\}$ and $0 < a_2 < a_3$. (So the values $a_1= 0$ and $a_3$ are ``known'' to the learning algorithm $A$, but the value of $a_2$ is not.)  
Suppose that $A$ is guaranteed to output, with probability at least $9/10$, a hypothesis distribution $\tilde{\bS}$ satisfying $\dtv(\bS,\tilde{\bS}) \leq \eps.$
Then  for sufficiently large $N$, $A$ must use $\Omega(\log \amax)$ samples even when run with $\eps$ set to a (suitably small) positive absolute constant.
\end{theorem}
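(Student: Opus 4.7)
The plan is to combine the reduction from Theorem~\ref{thm:reduction} with an application of Fano's inequality (Theorem~\ref{fano}) to a carefully chosen ``hard family'' of distributions. By Theorem~\ref{thm:reduction}, it suffices to prove an $\Omega(\log \amax)$ lower bound for the problem of learning a distribution of the form $\bT = (a_2 \bS_2 \bmod a_3)$, where $a_3$ is the largest prime at most $\amax$ and $\bS_2$ is a PBD$_N$ that I am free to choose; the unknown parameter is $a_2 \in \{1,\dots,a_3-1\}$. I will exhibit a family $\{\bT_{a_2^{(1)}}, \dots, \bT_{a_2^{(T)}}\}$ of $T = (a_3)^{\Omega(1)}$ such distributions, all of which are pairwise $\Omega(1)$-far in total variation distance but pairwise $O(1)$-close in KL divergence. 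Fano's inequality then yields the desired sample lower bound of $\Omega(\log T) = \Omega(\log a_3) = \Omega(\log \amax)$.

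To construct the family, I would first fix $\bS_2$ to be a suitably shifted binomial of the form $\bW(c \cdot a_3) + \tau$ for a small absolute constant $c>0$ and an appropriate integer offset $\tau$. By the Gaussian-type estimates of Lemmas~\ref{l:gaussian} and~\ref{u:gaussian}, this $\bS_2$ is (up to $o(1)$ total variation error) supported on an integer interval $I$ of length $r = \kappa a_3$ for a small constant $\kappa>0$, and within a constant factor of uniform on a slightly shorter sub-interval of $I$. For each candidate $a_2 \in \{1,\dots,a_3-1\}$, I let $\bT_{a_2} = (a_2 \bS_2 \bmod a_3)$. Since $a_3$ is prime, multiplication by $a_2$ is a bijection on $\Z_{a_3}$, so $\bT_{a_2}$ is supported on the image of $I$ under this bijection, and inside this image inherits the constant-factor-flatness of $\bS_2$. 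Using this I would argue that any two such distributions $\bT_{a_2},\bT_{a_2'}$ assign to every point in $\{0,\dots,a_3-1\}$ probabilities within a constant factor of each other, which yields the required $O(1)$ upper bound on pairwise KL divergences.

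The heart of the argument is producing a family of size $T=(a_3)^{\Omega(1)}$ in which every pair satisfies $\dtv(\bT_{a_2},\bT_{a_2'}) = \Omega(1)$. In the idealized setting in which $\bS_2$ is exactly uniform on $\{0,\dots,r-1\}$ with $r=\kappa a_3$, the distribution $\bT_{a_2}$ is uniform on the $r$-element set $a_2\cdot\{0,\dots,r-1\}\bmod a_3$, so $\dtv(\bT_{a_2},\bT_{a_2'})$ equals one minus the normalized overlap of the two $r$-element subsets of $\Z_{a_3}$. I would then invoke the equidistribution result of Shparlinski~\cite{Shparlinski:08} to bound, for every fixed $a_2$, the number of $a_2'$ for which this overlap is not $\Omega(1)$ below $r$; since $\kappa$ is small, a typical $a_2'$ produces a shifted-and-stretched image whose overlap with the image of $a_2$ is only a small constant fraction of $r$. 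A first-moment/union-bound argument over uniformly random choices of $a_2^{(1)},\dots,a_2^{(T)}$ with $T = (a_3)^{1/3}$ then shows that with positive probability every pair in the resulting family has small overlap, and hence pairwise variation distance $\Omega(1)$.

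The main obstacle is passing from this idealized uniform-on-an-interval picture to the actual shifted binomial $\bS_2$ that the reduction requires. I would handle this by exploiting the constant-factor flatness of $\bS_2$ on the bulk of its support (via Lemmas~\ref{l:gaussian} and~\ref{u:gaussian}) to sandwich both the KL-divergence upper bound and the total-variation lower bound between analogous quantities for the uniform-interval case, losing only constant factors in the separation and an additive $o(1)$ coming from the Gaussian tails of $\bS_2$. Combined with the Shparlinski-based counting argument, this yields the required pairwise bounds on the hard family, and plugging $\alpha = \Omega(1)$ and $\beta = O(1)$ with $t+1 = (a_3)^{1/3}$ into Theorem~\ref{fano} gives the claimed $\Omega(\log \amax)$ lower bound.
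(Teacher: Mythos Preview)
Your proposal is correct and follows essentially the same route as the paper: reduce via Theorem~\ref{thm:reduction} to learning $(a_2 \bS_2 \bmod a_3)$, pick $\bS_2$ to be a shifted binomial with standard deviation a small constant times $a_3$, use Shparlinski's equidistribution bound on the idealized uniform-interval model to extract a family of size $(a_3)^{\Omega(1)}$ with pairwise $\Omega(1)$ variation distance, transfer this to the binomial via constant-factor flatness, and finish with Fano.

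One point deserves tightening. Your justification for the KL bound appeals to ``constant-factor flatness of $\bS_2$ on the bulk of its support,'' but the bulk occupies only a $\kappa$-fraction of $\Z_{a_3}$, so flatness there says nothing about $\bT_{a_2}(j)$ for $j$ outside the image of the bulk. What you actually need, and what the paper proves directly (Lemma~\ref{lem:for-KL}), is that $(\bS_2 \bmod a_3)(j) = \Theta(1/a_3)$ for \emph{every} $j\in\{0,\dots,a_3-1\}$. This holds because the un-reduced binomial has $\Theta(a_3^2)$ summands, so every residue class mod $a_3$ contains a point within $a_3/2=\Theta(\sqrt{N'})$ of the mean, where the pmf is still $\Theta(1/\sqrt{N'})=\Theta(1/a_3)$. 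Since each $\bT_{a_2}$ is a permutation of $(\bS_2 \bmod a_3)$, the pointwise ratio bound and hence the $O(1)$ KL bound follow for all pairs, not just those in the hard family. With this correction, your argument goes through as written.
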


Via our reduction, Theorem \ref{thm:reduction}, we obtain Theorem \ref{thm:stronger-unknown-lower} from the following lower bound for learning a single scaled PBD mod $a_3$ when the scaling factor is unknown:

\begin{theorem} [lower bound for learning 
                 mod $a_3$]
\label{thm:unknown-mod3-lower}
Let $A$ be any algorithm with the following properties:  algorithm $A$ is given $N$, an accuracy parameter $\eps > 0$, a value $0 < \amax \in \Z$, and access to i.i.d. draws from 
from a distribution  $\bS'=(a_2 \bS_2 \mod a_3)$ where $\bS_2$ is a PBD$_N$, $a_3$ is the largest prime that is at most $\amax$,  and $a_2 \in \{1,\dots,a_3-1\}$ is ``unknown'' to $A$.
Suppose that $A$ is guaranteed to output a hypothesis distribution $\tilde{\bS}'$ satisfying $\E[\dtv(\bS',\tilde{\bS}')] \leq \eps$ (where the expectation is over the random samples drawn from $\bS'$ and any internal randomness of $A$).  
Then for sufficiently large $N$, $A$ must use $\Omega(\log \amax)$ samples  when run with $\eps$ set to some sufficiently small absolute constant.
\end{theorem}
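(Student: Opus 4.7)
The plan is to invoke Fano's inequality (Theorem~\ref{fano}) with a carefully constructed hard family of $T = a_3^{\Omega(1)}$ distributions of the form $\bS'_j = (a_2^{(j)} \cdot \bS_2 \mod a_3)$, where the single PBD$_N$ distribution $\bS_2$ is shared across the whole family and only the scaling value $a_2^{(j)} \in \{1,\dots,a_3-1\}$ varies from one member to the next. Establishing the two competing bounds Fano requires -- an $O(1)$ upper bound on pairwise KL divergence and an $\Omega(1)$ lower bound on pairwise total variation distance -- simultaneously is the heart of the argument. Combining $\alpha=\Omega(1)$, $\beta=O(1)$ and $T=a_3^{\Omega(1)}$ in Theorem~\ref{fano} then yields the desired $\Omega((\log T)/\beta) = \Omega(\log a_3)=\Omega(\log \amax)$ sample lower bound.

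For the KL upper bound, I would take $\bS_2$ to be a shifted binomial with variance $\Theta(a_3^2)$, chosen so that $\bS_2$ puts mass $\Theta(1/a_3)$ on each of $\Theta(a_3)$ consecutive integers (a consequence of Lemmas~\ref{l:gaussian} and~\ref{u:gaussian}). Since $a_3$ is prime, for any $a_2\in\{1,\dots,a_3-1\}$ the map $x\mapsto a_2 x \bmod a_3$ is a bijection on $\mathbb{Z}_{a_3}$, so after reducing mod $a_3$ the distribution $\bS'_j$ assigns every point of $\{0,\dots,a_3-1\}$ probability within a universal constant factor of $1/a_3$. This uniform two-sided control makes each density ratio bounded by $O(1)$ and hence gives $D_{KL}(\bS'_j\|\bS'_{j'})=O(1)$ for every pair.

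For the TV lower bound, following the informal discussion in Section~\ref{s:ltechniques}, I would first replace $\bS_2$ by a uniform distribution $\bU_r$ on an interval $\{0,\dots,r-1\}$ with $r=\kappa a_3$ for a small constant $\kappa>0$. The first step is to show that $\dtv\bigl((a_2\cdot\bS_2\bmod a_3),\,(a_2\cdot \bU_r\bmod a_3)\bigr)$ is much smaller than the target TV separation uniformly in $a_2$; this reduces to the fact that a shifted binomial with variance $\Theta(a_3^2)$ is close (in Kolmogorov distance, hence up to the data-processing inequality under mod-$a_3$) to the uniform distribution on a well-chosen sub-interval of length $\kappa a_3$. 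It then suffices to exhibit a set of $a_3^{1/3}$ scaling values $a_2$ for which the distributions $(a_2\cdot\bU_r\bmod a_3)$ have pairwise TV distance $\Omega(1)$. Because $a_3$ is prime and $r=\kappa a_3$, each such distribution is exactly uniform on the $r$-element set $a_2\cdot\{0,\dots,r-1\}\bmod a_3$, so TV distance between two of them is $1-\frac{1}{r}|a_2\cdot\{0,\dots,r-1\}\cap a_2'\cdot\{0,\dots,r-1\}\bmod a_3|$.

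The key claim is that for any fixed $a_2'$, a uniformly random $a_2\in\{1,\dots,a_3-1\}$ yields intersection size at most $(1-\Omega(1))r$ with high probability; this is where I will invoke the equidistribution result of Shparlinski~\cite{Shparlinski:08} to bound the number of ``bad'' multipliers $a_2$ producing large overlap. A standard union-bound / probabilistic argument, drawing $a_3^{1/3}$ candidates independently at random, then shows that with positive probability every pair among the drawn multipliers has overlap at most $(1-\Omega(1))r$, proving the existence of the desired hard family. The main obstacle, and the technically most delicate step, is this overlap bound and its derivation from equidistribution: the translation between the ``uniform $\bU_r$ picture'' and the actual shifted binomial $\bS_2$ is standard but requires choosing the variance of $\bS_2$ and the constant $\kappa$ compatibly so that the KL and TV estimates above hold with the same constants; once the uniform-case overlap bound is in hand via Shparlinski's theorem, the rest of the proof is assembled by applying Theorem~\ref{fano}.
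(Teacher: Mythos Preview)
Your overall strategy matches the paper's: apply Fano's inequality to a family $\{\bS'_r = r\bS_2 \bmod a_3\}$ where $\bS_2$ is a fixed shifted binomial of standard deviation $\Theta(a_3/K)$ for a large constant $K$; get the $O(1)$ KL bound from two-sided $\Theta(1/a_3)$ density control; get the $\Omega(1)$ TV separation by passing through the uniform proxy and invoking Shparlinski's equidistribution result; and extract $a_3^{1/3}$ good multipliers by a probabilistic/deletion argument. One minor point on the KL step: the bijection $x\mapsto a_2 x$ only permutes residues, so it does not by itself explain why every residue has mass $\Theta(1/a_3)$; you need that $\bS_2\bmod a_3$ already has this property, which holds because the representative of each residue class nearest the mode of $\bS_2$ lies within $a_3/2 = \Theta(K\,\sigma_{\bS_2})$ of the mode. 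This is the content of the paper's Lemma~\ref{lem:for-KL}.

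The genuine gap is your transfer from the uniform $\bU_r$ back to the binomial $\bS_2$ in the TV lower bound. A shifted binomial with standard deviation $\sigma$ is \emph{not} close to the uniform distribution on any interval of length $\Theta(\sigma)$, in either total variation or Kolmogorov distance: both distances are $\Omega(1)$ regardless of how the interval is chosen (think Gaussian versus box). So the assertion that $\dtv\bigl((a_2\bS_2\bmod a_3),\,(a_2\bU_r\bmod a_3)\bigr)$ is ``much smaller than the target TV separation'' is false, and the appeal to ``Kolmogorov distance plus data processing'' does not repair it (data processing bounds TV from TV, not from $\dk$, and mod-$a_3$ is not monotone anyway). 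The paper never claims $\bS_2$ is close to $\bU$. Instead, once Shparlinski gives that $\supp(\bU'_{q_i})$ and $\supp(\bU'_{q_j})$ share at most a $3/K$ fraction of their points, the paper argues directly about $\bS'$: a $1-e^{-\Theta(c^2)}$ fraction of the mass of $\bS'_{q_i}$ lies on $\supp(\bU'_{q_i})$, and on that support its density is within a factor $\Gamma(c)$ of uniform; hence $\bS'_{q_i}$ can place at most $\Gamma(c)\cdot 3/K + e^{-\Theta(c^2)}$ mass on $\supp(\bU'_{q_j})$, and taking $K$ large relative to $\Gamma(c)$ yields $\dtv(\bS'_{q_i},\bS'_{q_j})\geq 4/5$. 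Your route \emph{can} be salvaged via the triangle inequality if you only establish $\dtv(\bS_2\bmod a_3,\bU_r)<1/2$ (this is achievable, roughly $0.2$, for a well-chosen interval length) and then take $K$ large enough that $1-3/K-2\cdot(1/2-\Omega(1))>0$; but the step as written does not go through.
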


While Theorem \ref{thm:unknown-mod3-lower} lower bounds the expected error of the hypothesis produced by a learning algorithm that uses too few samples, such a lower bound is easily seen to imply an $(\eps,\delta)$-type bound as well.  Thus to prove Theorem \ref{thm:stronger-unknown-lower} (and thus Theorem \ref{unknown-k-is-three-lower}) it suffices to prove Theorem \ref{thm:unknown-mod3-lower}.

\subsection{Proof of Theorem \ref{thm:unknown-mod3-lower}}

Recall that by the Bertrand-Chebychev theorem we have $a_3 = \Theta(\amax)$; throughout what follows we view \ignore{$N$ as ``sufficiently large'' relative to $a_2$ and for convenience we suppose that $N+1$ is a multiple of $a_2$; we view }$a_3$ as a ``sufficiently large'' prime number\ignore{ which is ``sufficiently large'' relative to $1/\eps$, and we view $\eps$ as a ``sufficiently small'' positive absolute constant}.
Let $\bS_2$ be the distribution $\Bin(N',{\frac 1 2}) + a_3 - \left({\frac {N'} 2} - {\frac {c \sqrt{N'}} 2}\right)$, where\ignore{\pnote{Removed ``2'', hoping to simplify some expressions downstream.}}
$N' = \lceil \left( {\frac {a_3}{cK}}\right)^2\rceil$
and $c,K>0$ are absolute constants that will be specified later.  (It is helpful to think of $c$ as being a modest number like, say, 10,  and to think of $K$ as being extremely large relative to $c$.)     Note that $\bS_2$ is a PBD$_N$ for $N=\poly(a_3)$, and that $\bS_2$ has $\Var[\bS_2] = N'/4 = \sigma_{\bS_2}^2$ where $\sigma_{\bS_2} = (a_3)/(cK) + O(1)$.  
Note further that nothing is ``unknown'' about $\bS_2$ --- the only thing about $\bS'=a_2 \bS_2$ that is unknown to the learner is the value of $a_2$.

\begin{remark} \label{rem:intuition}
  For intuition, it is useful to consider the distribution $\bS_2$ mod $a_3$, and to view it as a coarse approximation of the distribution $\bU$ which is uniform over the interval
$\{1,\dots,c\sqrt{N'} \}$ 
  where $c \sqrt{N'} \approx a_3/K$; we will make this precise later.
\end{remark}

The lower bound of Theorem \ref{thm:unknown-mod3-lower} is proved by considering distributions $\bS'_r$, $1 \leq r \leq a_3-1$, which are defined as $\bS'_r := (r \cdot \bS_2 \mod a_3)$.  The theorem is proved by applying the generalized Fano's Inequality (Theorem \ref{fano}) to a subset of the distributions $\{\bS'_r\}_{r \in [a_3-1]}$; recall that this requires both an upper bound on the KL divergence and a lower bound on the total variation distance.  The following technical lemma will be useful for the KL divergence upper bound:

\begin{lemma}  \label{lem:for-KL}
For any $1 \leq r_1 \neq r_2 \leq a_3-1$ and any $j \in \{0,1,\dots,a_3-1\}$, the ratio
$\bS'_{r_1}(j)/\bS'_{r_2}(j)$ lies in $[1/C,C]$ where $C$ is 
a constant (that is independent of $a_3$ but depends on $c,K$).
\end{lemma}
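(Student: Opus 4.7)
My plan is to reduce to proving that $\Pr[\bS_2 \equiv i \bmod a_3] = \Theta(1/\sigma)$ for every $i \in \{0, \ldots, a_3 - 1\}$, where $\sigma := \sqrt{N'}/2 = \Theta(a_3/(cK))$ is the standard deviation of $\bS_2$ and the hidden constants depend only on $c$ and $K$. The lemma then follows immediately: since $a_3$ is prime, the map $s \mapsto rs \bmod a_3$ is a bijection on $\mathbb{Z}_{a_3}$ for each $r \in \{1, \ldots, a_3 - 1\}$, giving $\bS'_r(j) = \Pr[\bS_2 \equiv j r^{-1} \bmod a_3]$. Thus $\bS'_{r_1}(j)/\bS'_{r_2}(j)$ is a ratio of two such wrapped probabilities, and is bounded by the ratio of the maximum to the minimum of $\Pr[\bS_2 \equiv \cdot \bmod a_3]$ over $\{0, \ldots, a_3-1\}$. (When $j = 0$ both numerator and denominator equal $\Pr[\bS_2 \equiv 0 \bmod a_3]$, so the ratio is trivially $1$.)

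To analyze $\Pr[\bS_2 \equiv i \bmod a_3]$ I will expand it as $\sum_{k \in \mathbb{Z}} \Pr[\bS_2 = i + k a_3]$ and use $\bS_2 = \Bin(N', 1/2) + a_3 - N'/2 + c \sqrt{N'}/2$ to rewrite each summand as $\Pr[\Bin(N', 1/2) = N'/2 + m_k]$, where $m_k := i + (k-1) a_3 - c \sqrt{N'}/2$. The values $m_k$ form an arithmetic progression of common difference $a_3$, so $\Pr[\bS_2 \equiv i \bmod a_3]$ behaves like a discrete wrapped Gaussian whose ``period'' $a_3$ is roughly $2cK$ standard deviations of the underlying binomial.

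For the upper bound, I will invoke Lemma~\ref{l:gaussian} with $a = \sqrt{N'/2}$, so that $\bW(a) = \Bin(N', 1/2) - N'/2$, to get $\Pr[\Bin(N', 1/2) = N'/2 + m] \leq O(1/\sigma)\, \exp(-m^2/N')$. Letting $k^{*}$ minimize $|m_k|$, we have $|m_{k^{*}}| \leq a_3/2$ and $|m_k| \geq a_3(|k - k^{*}| - 1/2)$ for the other $k$'s. Since $a_3^{2}/N' = (cK)^{2}$, the tail decays geometrically with ratio $\exp(-\Omega((cK)^{2}))$, so $\sum_k \exp(-m_k^{2}/N') = O(1)$ and therefore $\Pr[\bS_2 \equiv i \bmod a_3] \leq c_{\mathrm{up}}/\sigma$ for some $c_{\mathrm{up}} = c_{\mathrm{up}}(c, K)$.

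For the matching lower bound, the single $k = k^{*}$ term already suffices: from $|m_{k^{*}}| \leq a_3/2$ I get $|m_{k^{*}}|/\sqrt{N'/2} \leq cK/\sqrt{2}$, a constant depending only on $c$ and $K$, so Lemma~\ref{u:gaussian} with its free constant set to $cK/\sqrt{2}$ yields $\Pr[\Bin(N', 1/2) = N'/2 + m_{k^{*}}] \geq c_{\mathrm{lo}}/\sigma$ for some $c_{\mathrm{lo}} = c_{\mathrm{lo}}(c, K) > 0$, and combining with the upper bound gives $C = c_{\mathrm{up}}/c_{\mathrm{lo}}$, independent of $a_3$. The main obstacle is precisely this uniform lower bound: since $\sigma \ll a_3$, the distribution $\bS_2$ is supported in a window of size $O(\sigma)$ tiny compared to $a_3$, so for generic $i$ there is no ``natural'' $k$ placing the binomial near its peak. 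What saves the argument is that the arithmetic progression $\{i + (k-1) a_3\}_k$ must approach the binomial's mode (located at $c\sqrt{N'}/2$ above the minimum of the support) to within $a_3/2$ for some $k^{*}$, and $a_3/2$ is only $O(cK)$ binomial standard deviations from the mean, keeping Lemma~\ref{u:gaussian} applicable with an $a_3$-independent (though $cK$-dependent) constant.
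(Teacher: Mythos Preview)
Your proof is correct and follows essentially the same approach as the paper: both reduce via the bijection $s\mapsto rs\bmod a_3$ to showing $\Pr[\bS_2\equiv i\bmod a_3]=\Theta(1/\sqrt{N'})$ uniformly in $i$, and both obtain this by observing that the nearest point of the arithmetic progression $i+a_3\Z$ to the binomial's mode lies within $a_3/2=\Theta(\sqrt{N'})$ of it. Your version is actually more carefully fleshed out than the paper's, which simply writes $\Pr[\bS_2\equiv jr^{-1}\bmod a_3]=\Theta(1)\cdot\bS_2(M)$ and appeals to ``standard facts about binomial coefficients,'' whereas you explicitly sum the tail via Lemma~\ref{l:gaussian} and pin down the lower bound via Lemma~\ref{u:gaussian}.
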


\begin{proof}
Recalling that $a_3$ is prime, for any 
 $r \in [a_3-1]$ and any $j \in \{0,1,\dots,a_3-1\}$,
if $r^{-1} \in [a_3]$ is such that $r^{-1} r = 1 \mod a_3$,
we have 
\[
\bS'_{r}(j)=\Pr[r \cdot \bS_2 \equiv j \mod a_3]
= \Pr[\bS_2 \equiv j r^{-1} \mod a_3] = \Theta(1) \cdot \bS_2(M),
\]
where $M \in \{0,1,\dots,N\}$ is the integer in 
$a_3 \Z + j r^{-1}$ 
that is closest to $N/2.$  Since
$|M-N/2| \leq a_3/2 =\Theta(\sqrt{N})$ and $\bS_2 = \Bin(N,1/2)$, standard facts about binomial coefficients imply that  $\bS_2(M) = {N \choose M}/2^N = \Theta(1)/\sqrt{N}$, from which the lemma follows. \end{proof}

From this, recalling the definition of KL-divergence $D_{KL}(\bS'_{r_1} || \bS'_{r_2}) = 
\sum_i \bS'_{r_1}(i) \ln {\frac {\bS'_{r_1}(i)} {\bS'_{r_2}(i)}}$, we easily obtain
\begin{corollary} \label{cor:KL-good}
For any $1 \leq r_1 \neq r_2 \leq a_3-1$ we have $D_{KL}(\bS'_{r_1} || \bS'_{r_2}) = O(1).$
\end{corollary}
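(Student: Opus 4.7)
The plan is entirely routine: Corollary~\ref{cor:KL-good} is a direct pointwise-to-average consequence of Lemma~\ref{lem:for-KL}. Unwinding the definition of KL divergence, I would write
\[
D_{KL}(\bS'_{r_1} \,\|\, \bS'_{r_2}) = \sum_{j=0}^{a_3-1} \bS'_{r_1}(j)\, \ln\!\left(\frac{\bS'_{r_1}(j)}{\bS'_{r_2}(j)}\right),
\]
and then apply Lemma~\ref{lem:for-KL} termwise. Since that lemma furnishes a constant $C$ (depending on $c,K$ but not on $a_3$) with $\bS'_{r_1}(j)/\bS'_{r_2}(j) \leq C$ for every $j \in \{0,1,\dots,a_3-1\}$, each logarithm is bounded above by $\ln C = O(1)$.

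To finish, I would note that $\bS'_{r_1}$ is a probability distribution, so $\sum_j \bS'_{r_1}(j) = 1$, and hence the weighted sum is at most $\ln C = O(1)$. (One minor subtlety: Lemma~\ref{lem:for-KL} also ensures the ratio is bounded \emph{below} by $1/C > 0$, so in particular $\bS'_{r_2}(j) > 0$ whenever $\bS'_{r_1}(j) > 0$, which guarantees the KL divergence is well defined and the logarithm is always finite.) There is no real obstacle here — the only thing to check is that the constant $C$ produced by Lemma~\ref{lem:for-KL} is genuinely independent of $a_3$, which it is, and this completes the proof.
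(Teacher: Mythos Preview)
Your proposal is correct and follows exactly the same approach as the paper, which simply states that the corollary follows ``easily'' from Lemma~\ref{lem:for-KL} and the definition of KL divergence. Your write-up is in fact more detailed than the paper's one-line justification, and the minor subtlety you flag (that the lower bound $1/C$ ensures the divergence is well defined) is a sensible observation.
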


Next we turn to a lower bound on the variation distance; for this we will have to consider only a restricted subset of the distributions $\{\bS'_r\}_{r \in [a_3-1]}$, and use a number theoretic equidistribution result of Shparlinski. To apply this result it will be most convenient for us to work with the distribution $\bU$ instead of $\bS_2$ (recall Remark \ref{rem:intuition}) and to bring $\bS_2$ and the $\bS'_r$ distributions into the picture later (in Section \ref{sec:concluding}) once we have established an analogue of our desired result for some distributions related to $\bU$.

\subsubsection{Equidistribution of scaled modular uniform distributions $\bU'_r.$}

For $1 \leq r \leq a_3-1$ we consider the distributions $\bU'_r := (r \cdot \bU \mod a_3)$ (note the similarity with the distributions $\bS'_r$).  Since for each $r \in [a_3-1]$ the distribution $\bU'_r$ is uniform on a $\Theta(1/K)$-fraction of the domain $\{0,1,\dots,a_3-1\}$, it is natural to expect that $d_{TV}(\bU'_{r_1},\bU'_{r_2})$ is large for $r_1 \neq r_2$.  To make this intuition precise, we make the following definition.

\begin{definition} \label{def:Npr}
Given integers $r, p, Y, Z$ and a set $\mathcal{X}$ of integers,\ignore{ \subset \{0,1,\dots,p-1\}} we define
\[
N_{r,p} (\mathcal{X}, Y , Z) := \bigg|\bigg\{ (x,y) \in \mathcal{X} \times  [Z +1, Z +Y] : r \cdot x \equiv y \ (\text{mod ~} p) \bigg\} \bigg|.
\]
\end{definition}
We will use the following, which is due to Shparlinski \cite{Shparlinski:08}.
\begin{lemma}[\cite{Shparlinski:08}]
For all integers $p, Z, X,Y$ 
such that $p \geq 2$ and $0 < X,Y < p$,
for any $\mathcal{X} \subseteq \{ 1,..., X \}$, we have
\[
\sum_{r=1}^p \left| N_{r,p} (\mathcal{X}, Y , Z) - \frac{|\mathcal{X}|  \cdot Y}{p}\right|^2 \le |\mathcal{X}| \cdot (X+Y) \cdot p^{o(1)}.
\]
\end{lemma}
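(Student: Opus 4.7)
The plan is a second-moment (variance) argument on $\mathbb{Z}/p\mathbb{Z}$, carried out via discrete Fourier analysis (exponential sums). The key identity to derive is a Parseval-type formula that rewrites $\sum_r |N_{r,p} - |\mathcal{X}|Y/p|^2$ as a sum of ``overlap counts'' between $I = [Z+1,Z+Y]\bmod p$ and its multiplicative dilates.

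First, I would expand the indicator of $I$ into its discrete Fourier series:
$$\mathbf{1}_I(t) = \frac{Y}{p} + \frac{1}{p}\sum_{a=1}^{p-1} c_a\, e_p(at), \qquad e_p(u) := e^{2\pi i u/p},$$
with $c_a = \sum_{y\in I} e_p(-ay)$ satisfying the classical geometric-series estimate $|c_a|\le \min\{Y,\, p/(2\|a/p\|)\}$. Substituting into $N_{r,p}(\mathcal{X},Y,Z) = \sum_{x\in\mathcal{X}}\mathbf{1}_I(rx)$ gives
$$N_{r,p}(\mathcal{X},Y,Z) - \frac{|\mathcal{X}|Y}{p} \;=\; \frac{1}{p}\sum_{a=1}^{p-1} c_a\, S_\mathcal{X}(ar), \qquad S_\mathcal{X}(b):=\sum_{x\in\mathcal{X}}e_p(bx).$$

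Next I would square this identity and sum over $r\in\{1,\ldots,p\}$. Orthogonality $\sum_{r=1}^{p} e_p(r\xi) = p\cdot \mathbf{1}[\xi\equiv 0\,(\bmod\,p)]$ collapses the resulting triple sum, and after the change of variable $u = rx \bmod p$ (which is a bijection on $\mathbb{Z}/p\mathbb{Z}$ since $\gcd(x,p)=1$), one arrives at the cleaner ``Parseval-type'' identity
$$\sum_{r=1}^{p}\Bigl|N_{r,p} - \tfrac{|\mathcal{X}|Y}{p}\Bigr|^{2} \;=\; |\mathcal{X}|\cdot\|\mathbf{1}_I - Y/p\|_2^{2} \;+\; \sum_{\substack{x,x'\in\mathcal{X}\\ x\neq x'}} \bigl(\,|I\cap s^{-1}I| - Y^2/p\,\bigr),$$
where $s=s(x,x'):=x'x^{-1}\bmod p$ is well-defined since $p$ is prime and $1\le x,x'<p$, and $|I\cap s^{-1}I|$ counts those $y\in I$ for which the dilate $sy \bmod p$ also lies in $I$. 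The diagonal contribution is $|\mathcal{X}|\cdot Y(1-Y/p)\le |\mathcal{X}|Y$, which is already within the target bound.

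The main obstacle — and the analytic heart of the lemma — is controlling the off-diagonal sum. For $s\neq 1$, one must show that the overlap $|I\cap s^{-1}I|$ is close to its ``random'' value $Y^2/p$ on average (weighted by the number of pairs $(x,x')$ realising each $s$). The trivial bound $|I\cap s^{-1}I|\le Y$ is too weak: it yields only $|\mathcal{X}|^{2}Y$, which loses a factor of $p$ relative to what is needed. The correct bound comes from classical incomplete-character-sum / exponential-sum estimates (Polya-Vinogradov, Burgess, or Weil / Kloosterman bounds), which control $\bigl|\,|I\cap s^{-1}I| - Y^{2}/p\bigr|$ with an error of shape $p^{1/2+o(1)}$; combining these with an $L^{2}$/Cauchy-Schwarz step and the observation that each $s\in(\mathbb{Z}/p)^{*}$ is realised as $x'/x$ by at most $\min(|\mathcal{X}|,X)$ pairs yields the claimed $|\mathcal{X}|(X+Y)\,p^{o(1)}$ bound, the $p^{o(1)}$ factor being precisely the loss inherent in the Weil-type estimates. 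Adding the diagonal and off-diagonal contributions then completes the proof.
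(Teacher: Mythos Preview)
The paper does not prove this lemma; it is quoted from Shparlinski~\cite{Shparlinski:08} and used as a black box, so there is no ``paper's own proof'' to compare against.

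Your second-moment/Parseval reduction is correct and is indeed the standard opening move: one arrives exactly at
\[
\sum_{r=1}^{p}\Bigl(N_{r,p}-\tfrac{|\mathcal{X}|Y}{p}\Bigr)^{2}
  \;=\; E - \frac{|\mathcal{X}|^{2}Y^{2}}{p},
\qquad
E:=\bigl|\{(x,x',y,y')\in\mathcal{X}^{2}\times I^{2}: xy'\equiv x'y\ (\bmod\ p)\}\bigr|,
\]
and the diagonal $x=x'$ contributes $|\mathcal{X}|\,Y(1-Y/p)\le|\mathcal{X}|Y$, which is fine.

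Where your sketch goes off the rails is the off-diagonal step. There is no pointwise bound of the form $\bigl|\,|I\cap s^{-1}I|-Y^{2}/p\,\bigr|\le p^{1/2+o(1)}$ valid for every $s$ (take $s=1$: the left side is $Y-Y^{2}/p$, which can be of order $Y\gg p^{1/2}$), and Weil/Kloosterman bounds are not the relevant input here since the map $y\mapsto sy$ is linear, not given by a curve. The actual mechanism is arithmetic, via the divisor bound. In the model case $Z=0$ one has $|xy'-x'y|<XY<p^{2}$; when $XY<p$ the congruence $xy'\equiv x'y$ forces the integer equality $xy'=x'y$, and writing $r(n)=|\{(x,y)\in\mathcal{X}\times[1,Y]:xy=n\}|\le d(n)=n^{o(1)}=p^{o(1)}$ gives
\[
E=\sum_{n}r(n)^{2}\le p^{o(1)}\sum_{n}r(n)=|\mathcal{X}|\,Y\cdot p^{o(1)}.
\]
For general $Z$ (and for $XY\ge p$) one has to work harder, but the $p^{o(1)}$ loss still comes from divisor-function estimates rather than from character sums; this is what Shparlinski carries out. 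So your high-level plan is right, but the named analytic tool for the crucial step is wrong.
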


We shall use the following corollary.
Set $\mathcal{X} = \{1, ..., X \}$.
Let us define the quantity 
\[
\mathcal{N}_{r,X} =| \{(x,y) : x,  y \in \mathcal{X}, r \cdot x \equiv y \ (\text{mod~}p)\}|.\]
Taking $Z=0$ and $Y=X$, we get:

\begin{corollary}
For all integers $p$ and $X$ such that
$p > 0$ and $0 < X < p$, we have
\[
\sum_{r=1}^p \left| \mathcal{N}_{r,X} -\frac{X^2}{p} \right|^2 \le X^2  \cdot p^{o(1)}.
\]
\end{corollary}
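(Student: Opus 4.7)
The plan is to derive the corollary by a direct specialization of the preceding lemma of Shparlinski, with $Z = 0$, $Y = X$, and $\mathcal{X} = \{1,\dots,X\}$ (so $|\mathcal{X}| = X$). With these choices, the set $[Z+1, Z+Y]$ becomes exactly $\{1,\dots,X\} = \mathcal{X}$, so the counting quantity in the lemma,
\[
N_{r,p}(\mathcal{X}, Y, Z) = \bigl|\{(x,y) \in \mathcal{X} \times [1,X] : r \cdot x \equiv y \pmod{p}\}\bigr|,
\]
coincides with $\mathcal{N}_{r,X}$ as defined just before the corollary. The target ``expected count'' $|\mathcal{X}| \cdot Y / p$ becomes $X \cdot X / p = X^2/p$, matching the quantity subtracted inside the square in the corollary's left-hand side.

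For the right-hand side, the lemma's upper bound $|\mathcal{X}| \cdot (X+Y) \cdot p^{o(1)}$ specializes to $X \cdot (X + X) \cdot p^{o(1)} = 2 X^2 \cdot p^{o(1)}$, and the constant factor of $2$ is absorbed into the $p^{o(1)}$ term, yielding exactly $X^2 \cdot p^{o(1)}$ as claimed. The only condition needed to invoke the lemma is $0 < X < p$, which is precisely the hypothesis of the corollary, and we need $\mathcal{X} \subseteq \{1,\dots,X\}$, which is trivially satisfied (with equality) by our choice.

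There is no substantive obstacle here: the corollary is really just a convenient restatement of Shparlinski's bound in the symmetric setting $Y = X$ with $\mathcal{X}$ taken to be the full interval $\{1,\dots,X\}$, which is the version we will actually use in the variation distance lower bound to follow. I would simply verify the two substitutions above and note that the claimed asymptotic bound follows immediately.
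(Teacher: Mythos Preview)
Your proposal is correct and takes essentially the same approach as the paper: the paper likewise specializes Shparlinski's lemma by setting $\mathcal{X} = \{1,\dots,X\}$, $Z = 0$, and $Y = X$, obtaining the corollary immediately (with the factor of $2$ absorbed into $p^{o(1)}$ exactly as you note).
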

This easily yields
\[
\mathbf{E}_{r \in [p]}  \left[\left| \mathcal{N}_{r,X} -\frac{X^2}{p} \right|^2\right]\le \frac{X^2}{p^{1 - o(1)}}
  \]
which in turn implies
\ignore{\pnote{Because for $v > 0$, we have $\E_u((u-v)^2) \leq b \Rightarrow \Pr_v(u \geq 2 v) \leq b/v^2$.}}
\[
\mathbf{Pr}_{r \in [p]}  \left[\mathcal{N}_{r,X} \geq \frac{2 X^2}{p} \right]\le \frac{p^{1 + o(1)}}{X^2}.
\]

We specialize this bound by setting $X$ to be $\lceil c \sqrt{N'} \rceil$ and
$p = a_3$ which gives
${\frac {X^2}{p}} =
      {\frac {a_3}{K^2}} + O(1)$, from which we get that
\begin{equation}
  \Pr_{r \in [a_3]} \left[\mathcal{N}_{r,X}  \ge \frac{2 a_3}{K^2} \right] \le \frac{a_3^{o(1)}}{a_3}.
  \label{eq:a.k3l}
\end{equation}

Using (\ref{eq:a.k3l}) it is straightforward to show that there is a large subset of the distributions 
$\{\bU'_r\}_{r \in [a_3]}$ any two of which are very far from each other in total variation distance:

\begin{theorem} \label{thm:equidistribution-uniform}
Given any sufficiently large prime $a_3$, there is a subset of $t\geq a_3^{1/3}$ many values 
$\{q_1,\dots,q_t\} \subset [a_3]$ such that for any $q_i \neq q_j$ we have
$\dtv(\bU'_{q_i},\bU'_{q_j}) \geq 1 - {\frac{3}{K}}.$
\end{theorem}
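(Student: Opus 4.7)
The plan is to apply a probabilistic existence argument to the bound (\ref{eq:a.k3l}), identifying ``good'' multipliers and picking many of them so that all pairwise ratios remain good.

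First I would reduce the total-variation question to the counting quantity $\mathcal{N}_{r,X}$. Because $a_3$ is prime and $r \in [a_3-1]$, the map $x \mapsto r x \bmod a_3$ is a bijection on $\Z/a_3\Z$, so $\bU'_r$ is exactly uniform on a set $A_r \subset \{0,\dots,a_3-1\}$ of size $X = \lceil c\sqrt{N'}\rceil$. For two uniform distributions with supports of the same size $X$, a one-line computation gives $\dtv(\bU'_{q_i},\bU'_{q_j}) = 1 - |A_{q_i} \cap A_{q_j}|/X$. The intersection $A_{q_i} \cap A_{q_j}$ is in bijection, via $v \mapsto (v q_i^{-1}\bmod a_3,\; v q_j^{-1}\bmod a_3)$, with the set of pairs $(x_1,x_2) \in \{1,\dots,X\}^2$ satisfying $(q_i q_j^{-1}) x_1 \equiv x_2 \pmod{a_3}$. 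Setting $r = q_i q_j^{-1} \bmod a_3$, this count is exactly $\mathcal{N}_{r,X}$. Hence, recalling $X = a_3/K + O(1)$, whenever $\mathcal{N}_{q_iq_j^{-1},X} \leq 2a_3/K^2$ we obtain
\[
\dtv(\bU'_{q_i},\bU'_{q_j}) \;=\; 1 - \frac{\mathcal{N}_{q_iq_j^{-1},X}}{X} \;\geq\; 1 - \frac{2a_3/K^2}{a_3/K + O(1)} \;\geq\; 1 - \frac{3}{K},
\]
for $a_3$ sufficiently large.

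Next I would pick the set $\{q_1,\dots,q_t\}$ by the probabilistic method. Call $r \in [a_3-1]$ \emph{good} if $\mathcal{N}_{r,X} \leq 2a_3/K^2$; by (\ref{eq:a.k3l}) the fraction of bad $r$ is at most $a_3^{o(1)-1}$. Sample $t = \lceil a_3^{1/3}\rceil$ elements $q_1,\dots,q_t$ independently and uniformly from $[a_3-1]$. For any fixed pair $i\neq j$, the quotient $q_i q_j^{-1} \bmod a_3$ is uniformly distributed on $[a_3-1]$ (since $a_3$ is prime and $q_j$ is uniform), so the probability that this quotient is bad is at most $a_3^{o(1)-1}$. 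A union bound over the $\binom{t}{2} \leq a_3^{2/3}$ pairs gives an expected number of bad pairs at most $a_3^{2/3} \cdot a_3^{o(1)-1} = a_3^{-1/3+o(1)} = o(1)$. Hence with probability $1-o(1)$ no pair is bad, witnessing the existence of the required set (after throwing away any accidental duplicates, which also happen with probability $o(1)$ since $t^2/a_3 = o(1)$).

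The only mild subtleties I anticipate are purely bookkeeping: verifying that $q_i q_j^{-1}$ is uniform on $[a_3-1]$ when the $q_i$'s are sampled uniformly from $[a_3-1]$ (immediate since $a_3$ is prime), and absorbing $O(1)$ slack from $X = \lceil c\sqrt{N'}\rceil$ versus $a_3/K$ into the passage from the bound $2/K$ to $3/K$ in the variation-distance estimate, which is straightforward for sufficiently large $a_3$. There is no genuine obstacle: Shparlinski's bound provides essentially optimal equidistribution on average over $r$, and the $a_3^{1/3}$ size follows from the standard second-moment/union-bound tradeoff.
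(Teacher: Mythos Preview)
Your proposal is correct and follows essentially the same approach as the paper: reduce $\dtv(\bU'_{q_i},\bU'_{q_j})$ to the count $\mathcal{N}_{q_iq_j^{-1},X}$ via the bijection induced by multiplication mod $a_3$, then use (\ref{eq:a.k3l}) and a probabilistic existence argument. The only cosmetic difference is that the paper phrases the final step as a ``standard deletion argument'' (sample, then delete vertices from bad pairs), whereas you directly show the expected number of bad pairs among $a_3^{1/3}$ random samples is $o(1)$; both are equally valid.
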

\begin{proof}
  To prove the theorem it suffices to show that if $q_1, q_2$ are chosen randomly from $[a_3]$ then $\dtv(\bU'_{q_i},\bU'_{q_j}) \geq 1 - {\frac{3}{K}}$ with probability $1- \frac{a_3^{o(1)}}{a_3}$.
  (From there, the theorem follows from a standard deletion argument \cite{ASE:92}.)
  Since $a_3$ is prime, to show this it suffices to prove that for a randomly chosen $r \in [a_3]$ we have that $\dtv(\bU, \bU'_r) \geq 1 - {\frac{3}{K}}$
with probability $1 - \frac{a_3^{o(1)}}{a_3}$.   
Observe that for a given outcome of $r$, since both $\bU$ and $\bU'_r$ are uniform distributions over their domains ($\calX$ and $(r \cdot \calX \mod a_3)$ respectively) which are both of size $X=\lceil c\sqrt{N'} \rceil$, we have that $\dtv(\bU, \bU'_r) = 1 - {\frac {|{\cal X} \cap (r \cdot {\cal X} \mod a_3)|} {X}}.$  Moreover, we have that

\[
\mathcal{N}_{r,X} =| \{(x,(r x \mod \ a_3)) : x, (r x \mod \ a_3)   \in \mathcal{X}\}| = |{\cal X} \cap(r \cdot {\cal X} \mod a_3)|,
\]
so 
\[
\dtv(\bU, \bU'_r) = 1 - {\frac {{\cal N}_{r,X}}{X}} = 1 - {\frac {{\cal N}_{r,X}} {\lceil c \sqrt{N'} \rceil}}
=1 - {\frac {{\cal N}_{r,X}} {a_3/K + O(1)}}, 
\]
which is at least $1 - {\frac{3}{K}}$ provided that ${\cal N}_{r,X} < {\frac {2 a_3}{K^2}}.$ 
So by (\ref{eq:a.k3l}) we get that $\dtv(\bU, \bU'_r) \geq 1 - {\frac{3}{K}}$  with probability $1 - \frac{a_3^{o(1)}}{a_3}$  over a random $r$, as desired, and we are done.
\end{proof}

\subsubsection{Concluding the proof of Theorem \ref{thm:unknown-mod3-lower}.} \label{sec:concluding}

Given Theorem \ref{thm:equidistribution-uniform} it is not difficult to argue that the related family of distributions
$\{\bS'_{q_1},\dots,\bS'_{q_t}\}$ are all pairwise far from each other with respect to total variation distance.  First, recall that $\bS_{2}$ is a $\Bin(N',{\frac 1 2})$ distribution (mod $a_3$) which has been shifted so that its mode is in the center of $\supp(\bU)$ and so that the left and right endpoints of $\supp(\bU)$ each lie $c/2$ standard deviations away from its mode.  From this, the definition of $\bS'_{q_i}$, and well-known tail bounds on the Binomial distribution it is straightforward to verify that a $1 - e^{-\Theta(c^2)}$ fraction of the probability mass of $\bS'_{q_i}$ lies on $\supp(\bU'_{q_i})$, the support of $\bU'_{q_i}.$  Moreover, standard bounds on the Binomial distribution imply that for any two points $\alpha,\beta \in \supp(\bU'_{q_i}),$ we have that 
\begin{equation} \label{eq:Gamma}
{\frac 1 {\Gamma(c)}} \leq {\frac {\bS'_{q_i}(\alpha)}{\bS'_{q_j}(\beta)}} \leq \Gamma(c)
\end{equation}
where $\Gamma(c)$ is a function depending only on $c$.  Let $\bS''_{r}$ denote $(\bS'_r)_{\supp(\bU'_{r})}$, i.e. the conditional distribution of $\bS'_r$ restricted to the domain $\supp(\bU'_{r}).$  Recalling Theorem~\ref{thm:equidistribution-uniform} and the fact that $\dtv(\bU'_{q_i},\bU'_{q_j}) = 1 - {\frac {|\supp(\bU'_{q_i}) \cap \supp(\bU'_{q_j})|}{
|\supp(\bU'_{q_i})|}},$ by (\ref{eq:Gamma}) we see that by choosing $K$ to be suitably large relative to $\Gamma(c)$, we can ensure that
$\dtv(\bS''_{q_i},\bS''_{q_j})$ is at least  $9/10.$  Since $\dtv(\bS''_{r},\bS'_{r}) \leq e^{-\Theta(c^2)}$, taking $c$ to be a modest positive constant like 10, we get that
$\dtv(\bS'_{q_i},\bS'_{q_j})$ is at least  $8/10$ (with room to spare).  Thus we have established:

\begin{theorem} \label{thm:equidistribution-binomial}
Given any sufficiently large prime $a_3$, there is a subset of $t\geq a_3^{1/3}$ many values 
$\{q_1,\dots,q_t\} \subset [a_3]$ such that for any $q_i \neq q_j$ we have
$\dtv(\bS'_{q_i},\bS'_{q_j}) \geq 4/5.$
\end{theorem}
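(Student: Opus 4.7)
The plan is to transfer the separation bound from the uniform distributions $\bU'_r$ (where Theorem~\ref{thm:equidistribution-uniform} already gives us what we need) to the Binomial-based distributions $\bS'_r$. I would use the same index set $\{q_1,\dots,q_t\}$ produced by Theorem~\ref{thm:equidistribution-uniform}, so there is nothing further to prove about the size of the set; the entire task is to compare $\dtv(\bS'_{q_i},\bS'_{q_j})$ to $\dtv(\bU'_{q_i},\bU'_{q_j})$.

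The first step is to show that each $\bS'_{q_i}$ is essentially supported on $\supp(\bU'_{q_i})$. Recall that $\bS_2$ was constructed as a shifted $\Bin(N',1/2)$ whose mode sits in the middle of $\supp(\bU)=\{1,\dots,\lceil c\sqrt{N'}\rceil\}$ and whose endpoints are $c/2$ standard deviations from the mode. A Chernoff bound then gives $\Pr[\bS_2 \notin \supp(\bU)] \leq e^{-\Theta(c^2)}$, and since scaling by $q_i$ modulo $a_3$ is a bijection, the same bound transfers to $\bS'_{q_i}$ on $\supp(\bU'_{q_i})$. The second step is to control the relative density: since $\supp(\bU)$ spans $O(1)$ standard deviations around the mode of $\bS_2$, standard binomial coefficient estimates give two-sided bounds $\bS_2(\alpha)/\bS_2(\beta) \in [1/\Gamma(c),\Gamma(c)]$ for $\alpha,\beta \in \supp(\bU)$, and again bijectivity transfers this to any two points in $\supp(\bU'_{q_i}) \cup \supp(\bU'_{q_j})$.

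With these two facts in hand, let $\bS''_{q_i}$ denote $\bS'_{q_i}$ conditioned on $\supp(\bU'_{q_i})$. The $\Gamma(c)$-ratio bound implies that $\bS''_{q_i}$ is within a factor $\Gamma(c)$ of $\bU'_{q_i}$ pointwise, so
\[
\dtv(\bS''_{q_i},\bS''_{q_j}) \geq 1 - \Gamma(c)\cdot \tfrac{|\supp(\bU'_{q_i})\cap \supp(\bU'_{q_j})|}{|\supp(\bU'_{q_i})|} = 1 - \Gamma(c)\cdot(1-\dtv(\bU'_{q_i},\bU'_{q_j})),
\]
which by Theorem~\ref{thm:equidistribution-uniform} is at least $1 - 3\Gamma(c)/K$. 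Choosing $K$ sufficiently large relative to $\Gamma(c)$ makes this at least $9/10$. Finally I pass back from $\bS''$ to $\bS'$ using $\dtv(\bS''_{r},\bS'_{r})\leq e^{-\Theta(c^2)}$ and the triangle inequality, at which point fixing $c$ to be a moderate absolute constant (say $c=10$) gives $\dtv(\bS'_{q_i},\bS'_{q_j}) \geq 4/5$ with room to spare.

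The main obstacle, if any, is the quantitative bookkeeping in the second step: making sure the choice of $N'$ and the shift in the definition of $\bS_2$ are tight enough that $\supp(\bU)$ really does lie within $O(1)$ standard deviations of the mode (so that $\Gamma(c)$ is a genuine constant independent of $a_3$) while also being large enough that the tail mass outside $\supp(\bU)$ is negligible. Once this balancing of $c$ and $K$ is pinned down, the rest of the argument is a routine triangle-inequality chase that transfers the uniform-case bound to the binomial case.
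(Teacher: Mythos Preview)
Your proposal is correct and follows essentially the same approach as the paper: take the index set from Theorem~\ref{thm:equidistribution-uniform}, use a Chernoff-type tail bound to show $\bS'_{q_i}$ is $(1-e^{-\Theta(c^2)})$-essentially supported on $\supp(\bU'_{q_i})$, use the $\Gamma(c)$ pointwise ratio bound on the conditional distributions $\bS''_{q_i}$ to transfer the overlap bound from Theorem~\ref{thm:equidistribution-uniform}, and finish with the triangle inequality after choosing $K$ large relative to $\Gamma(c)$ and $c$ a moderate constant. Your displayed inequality makes explicit the step the paper leaves as ``by (\ref{eq:Gamma}) we see that by choosing $K$ suitably large\ldots'', but the argument is otherwise identical.
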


All the pieces are now in place to apply Fano's Inequality.  In the statement of Theorem \ref{fano} we may take $\alpha=1$ (by Theorem \ref{thm:equidistribution-binomial}), $\beta = O(1)$ (by Corollary \ref{cor:KL-good}), and $\eps$ to be an absolute constant, and Theorem \ref{fano} implies that any algorithm achieving expected error at most $\eps$ must use $\Omega(\ln t) = \Omega(\ln a_3)$ samples.  This concludes the proof of Theorem \ref{thm:unknown-mod3-lower}. \qed

\appendix

\section{Time complexity of evaluating and sampling from our hypotheses} 
\label{sec:evaluation}

Inspection of our learning algorithms reveals that any possible hypothesis distribution $\bH$ that the algorithms may construct, corresponding to any possible vector of outcomes for the guesses that the algorithms may make, must have one of the following two forms:

\begin{itemize}

\item[(a)] (see Sections~\ref{sec:allsmall} and~\ref{sec:a0-is-four}) $\bH$ is uniform over a multiset $S$ of at most $1/\eps^{2^{\poly(k)}}$ many integers (see the comment immediately after Fact~\ref{fact:learn-sparse-ess-support}; note that the algorithm has $S$).

\item [(b)] (see Lemma~\ref{lem:siirv-kernel} and Definition~\ref{def:kernel}) $\bH$ is of the form $\bU_{\widehat{Y}} + \bZ$ where
$\widehat{Y}$ is a multiset of at most $1/\eps^{2^{\poly(k)}}$ integers (note that the algorithm has $\widehat{Y}$) and $\bZ = \sum_{a=1}^{K=\poly(k)} p_a \bZ_a$ where $\bZ_a$ is the uniform distribution on the interval $[-c_a, c_a] \cap \mathbb{Z}$ (and the algorithm has $K$, the $p_a$'s, and the $c_a$'s).

\end{itemize}

It is easy to see that a draw from a type-(a) distribution can be simulated in time $1/\eps^{2^{\poly(k)}}$, and likewise it is easy to simulate an $\eval_{\bH}$ oracle for such a distribution in the same time.  It is also easy to see that a draw from a type-(b) distribution can be simulated in time $1/\eps^{2^{\poly(k)}}$.  The main result of this appendix is Theorem \ref{thm:evaluation}, stated below.  Given this theorem it is easy to see that a type-(b) $\eval_{\bH}$ oracle can be simulated in time $1/\eps^{2^{\poly(k)}}$, which is the final piece required to establish that our hypotheses can be efficiently sampled and evaluated as required by Corollary~\ref{cor:guess}.

\begin{theorem}\label{thm:evaluation}
Let $\bH$ be a distribution which is of the form $\bH = \bY + \sum_{a=1}^K p_a \cdot \bZ_a$ where the distributions $\bY, \bZ_1 , \ldots, \bZ_K$ are independent integer valued random variables. Further, 
\begin{enumerate}

\item Each $\bZ_a$ is uniform on the integer intervals $[-\gamma_a, \ldots, \gamma_a]$.

\item $\bY$ is supported on a set $A_{\bY} \subseteq \mathbb{Z}$ of size $m$ with the probabilities given by $\{\alpha_V\}_{V \in A_{\bY}}$. 

\end{enumerate}
Given as input the sets $A_{\bY}$, $\{p_a\}_{a=1}^K$,  $\{\alpha_V\}_{V \in A_{\bY}}$ and $\{\gamma_a\}_{a=1}^K$ and a point $x \in \mathbb{Z}$, we can evaluate $\Pr[\bH=x]$ in time $\mathcal{L}^{O(K)}$ where $\mathcal{L}$ is the length of the input.\ignore{ In other words, if $K=O(1)$, the algorithm runs in polynomial time. }
\end{theorem}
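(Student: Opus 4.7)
\textbf{Proof plan for Theorem~\ref{thm:evaluation}.} The plan is to reduce the evaluation to a bounded number of lattice-point counting problems in a polytope of dimension $K$, and then invoke a known efficient algorithm for the latter.

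First, by independence of $\bY$ and $\bZ := \sum_{a=1}^K p_a \bZ_a$ we can write
\[
\Pr[\bH = x] \;=\; \sum_{V \in A_\bY} \alpha_V \cdot \Pr[\bZ = x - V].
\]
Since $|A_\bY| = m \leq \mathcal{L}$ and each $\alpha_V$ is given as a rational of bit-length $O(\mathcal{L})$, it suffices to show that for an arbitrary integer $y$ of bit-length $O(\mathcal{L})$ (note $|x - V| \le 2^{O(\mathcal{L})}$), one can evaluate $\Pr[\bZ = y]$ in time $\mathcal{L}^{O(K)}$.

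Next, because each $\bZ_a$ is uniform on $\{-\gamma_a, \ldots, \gamma_a\}$ with mass $1/(2\gamma_a + 1)$ per atom, we have
\[
\Pr[\bZ = y] \;=\; \frac{N(y)}{\prod_{a=1}^K (2\gamma_a + 1)},
\]
where $N(y)$ is the number of lattice points $(z_1, \ldots, z_K) \in \mathbb{Z}^K$ satisfying the box constraints $|z_a| \leq \gamma_a$ for each $a$ together with the single linear equation $\sum_{a=1}^K p_a z_a = y$. The denominator is computable in $\mathrm{poly}(\mathcal{L})$ time, so it remains to compute $N(y)$.

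Now, $N(y)$ is exactly the number of integer points in a rational polytope $P(y) \subset \mathbb{R}^K$ defined by $2K$ box inequalities and one equation, whose integer coefficients all have bit-length $O(\mathcal{L})$. We invoke Barvinok's algorithm for counting lattice points in a rational polytope in fixed dimension, which, for dimension $K$, runs in time polynomial in the bit-length of the polytope's description with exponent depending on $K$; in particular, the running time is $\mathcal{L}^{O(K)}$. (Equivalently, one can first apply inclusion--exclusion on the $K$ box constraints to reduce $N(y)$ to a signed sum of $2^K$ denumerants of the form $|\{w \in \mathbb{Z}_{\geq 0}^K : \sum_a p_a w_a = n\}|$, and then compute each denumerant in time $\mathcal{L}^{O(K)}$ via partial-fraction expansion of the generating function $\prod_a 1/(1-t^{p_a})$; this route is more elementary but numerically equivalent.)

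Combining the two steps, total running time is $m \cdot \mathcal{L}^{O(K)} = \mathcal{L}^{O(K)}$, as required. The main obstacle is the lattice-point count $N(y)$: the ranges $\gamma_a$ may be exponentially large in $\mathcal{L}$, so one cannot afford to enumerate tuples $(z_1,\ldots,z_K)$ or to unfold the convolutions directly; this is what forces the appeal to a fixed-dimension integer programming/counting tool such as Barvinok's algorithm (or, equivalently, polynomial-time denumerant computation) rather than a naive dynamic program.
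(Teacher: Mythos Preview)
Your proposal is correct and follows essentially the same approach as the paper: condition on $\bY=V$, reduce $\Pr[\bZ=y]$ to counting integer points in a $K$-dimensional rational polytope (box constraints plus one linear equation), invoke Barvinok's fixed-dimension lattice-point counting algorithm, and then sum over the $m$ values of $V$. The only cosmetic difference is that the paper encodes the equality $\sum_a p_a z_a = y$ as a pair of inequalities with $\pm 0.1$ slack to match Barvinok's inequality-based polytope input, whereas you state it as an equation directly; this is immaterial.
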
 
Our chief technical tool to prove this will be the following remarkable theorem of Barvinok~\cite{Barvinok:94}, which shows that the number of integer points in a rational polytope can be computed in polynomial time when the dimension is fixed:

\begin{theorem}\label{thm:Barvinok}[Barvinok]
There is an algorithm with the following property: given as input a list of $m$ pairs $(a_1,b_1),\dots,(a_m,b_m)$ where each $a_i \in \Q^d, b_i \in \Q$, specifying a polytope $X \subseteq \mathbb{R}^d$, $X = \{x \in \mathbb{R}^d: \langle a_i, x \rangle \le b_i\}_{i=1}^m$, the algorithm outputs the number of integer points in $X$ in time $\mathcal{L}^{O(d)}$ where $\mathcal{L}$ is the length of the input, i.e. the description length of $\{a_i\}_{i=1}^m$ and $\{b_i\}_{i=1}^m$. 
\end{theorem}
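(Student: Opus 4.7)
\medskip

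\noindent \emph{Proof proposal for Theorem~\ref{thm:Barvinok}.}

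The plan is to encode the counting problem via the \emph{exponential generating function} of the polytope,
\[
f(P, x) \;=\; \sum_{m \in P \cap \Z^d} x^m \quad \in \Z[[x_1^{\pm 1}, \ldots, x_d^{\pm 1}]],
\]
reduce its computation to generating functions of vertex cones via Brion's theorem, and then handle each vertex cone via Barvinok's \emph{short signed decomposition} into unimodular cones. The desired count $|P \cap \Z^d|$ will be recovered as the limit $\lim_{x \to \mathbf{1}} f(P,x)$, which must be computed carefully via a rational substitution.

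First, I would preprocess the input. Given $(a_i, b_i)_{i=1}^m$ defining $P$, enumerate all vertices of $P$ in time $m^{O(d)}$ by examining every $d$-tuple of facets, intersecting, and checking feasibility. For each vertex $v$, compute the supporting cone $K_v = v + \mathrm{cone}(P - v)$, and triangulate it into simplicial cones using the $d$-dimensional vertex-figure; this also takes time $m^{O(d)}$ per vertex. Then Brion's theorem gives
\[
f(P, x) \;=\; \sum_{v \text{ vertex}} f(K_v, x),
\]
as an identity of meromorphic functions on $(\C^*)^d$, so it suffices to compute each $f(K_v,x)$ as a short rational expression and sum them.

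Next, and this is the technical heart, I would handle one simplicial cone $K = v + \sum_{i=1}^d \R_{\geq 0}\, u_i$ at a time. Its generating function has the closed form
\[
f(K, x) \;=\; \frac{\sigma_{\Pi}(x) \cdot x^{v}}{\prod_{i=1}^d (1 - x^{u_i})},
\]
where $\sigma_\Pi(x) = \sum_{m \in \Pi \cap \Z^d} x^m$ is the (a priori large) sum over lattice points in the half-open fundamental parallelepiped $\Pi = \{\sum t_i u_i : t_i \in [0,1)\}$; this parallelepiped contains $\mathrm{ind}(K) := |\det(u_1,\dots,u_d)|$ lattice points, which may be huge. The key is \emph{Barvinok's signed decomposition}: find a nonzero integer vector $w$ lying in the half-open parallelepiped of small coordinates, $w = \sum_i \alpha_i u_i$ with $|\alpha_i| \leq \mathrm{ind}(K)^{1/d}$ — this uses Minkowski's theorem and can be produced via LLL plus bounded enumeration in polynomial time. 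Replacing one generator $u_i$ at a time by $w$ produces at most $d$ sub-cones whose signed sum equals $K$ (with explicit signs $\pm 1$ coming from orientation), and each sub-cone has index at most $\mathrm{ind}(K)^{(d-1)/d}$. Recursing, the depth of the decomposition tree is $O(\log_{d/(d-1)} \log \mathrm{ind}(K)) = O_d(\log \log \mathrm{ind}(K))$, each node branches into $\leq d$ children, and $\mathrm{ind}(K) \leq \calL^{O(d)}$ from Hadamard, so the total number of terminal unimodular cones (those with index $1$) produced is $\calL^{O(d)}$. For a unimodular cone the parallelepiped contains a single lattice point, and $\sigma_\Pi(x)$ is just one monomial, giving an explicit short rational form.

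Finally, to recover $|P \cap \Z^d| = f(P, \mathbf{1})$, I cannot plug $x = \mathbf{1}$ directly since each factor $1 - x^{u_i}$ vanishes. Instead, pick a generic integer direction $\xi \in \Z^d$ not orthogonal to any $u_i$ arising in the decomposition (chosen, say, by trying the first $\calL^{O(d)}+1$ directions and detecting nonvanishing denominators), substitute $x = e^{t\xi}$, expand each $\tfrac{e^{t \langle \xi, v\rangle}}{\prod (1 - e^{t\langle \xi, u_i\rangle})}$ as a Laurent series in $t$ up to the constant term — this can be done with a polynomial number of rational arithmetic operations since there are at most $d$ denominator factors — and read off the $t^0$ coefficient of the total sum; the poles of individual terms cancel by Brion, leaving an integer. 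Summing the contributions from all unimodular cones over all vertices produces $|P \cap \Z^d|$ in total time $\calL^{O(d)}$.

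The main obstacle will be the signed decomposition step: one must verify that a short vector $w$ with $|\alpha_i| \leq \mathrm{ind}(K)^{1/d}$ can always be found in polynomial time (this is where LLL and the geometry of numbers enter in an essential way) and that the resulting index reduction $\mathrm{ind}(K) \mapsto \mathrm{ind}(K)^{(d-1)/d}$ really holds with the signed decomposition being a valid identity of generating functions (the ``half-open'' accounting of faces is delicate but standard). Everything else — Brion, triangulation, vertex enumeration, and the final series extraction — is bookkeeping at fixed $d$.
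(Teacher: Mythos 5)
The paper does not prove this statement: Theorem~\ref{thm:Barvinok} is quoted verbatim from Barvinok's 1994 paper \cite{Barvinok:94} and used as a black box (only in the proof of Lemma~\ref{lem:density-count1}), so there is no internal proof to compare against, and reproving it is outside the paper's scope. Your outline is essentially the standard proof of Barvinok's theorem --- vertex enumeration, Brion's identity over supporting cones, triangulation into simplicial cones, signed decomposition into unimodular cones, and specialization of the resulting short rational function at $x=\mathbf{1}$ via a generic direction and Laurent expansion --- and at that level it is correct. Two points in the technical heart deserve care. First, the short vector should satisfy $|\alpha_i| \le \mathrm{ind}(K)^{-1/d}$, not $\mathrm{ind}(K)^{1/d}$: Minkowski's theorem is applied to the symmetric body $\{\sum_i \alpha_i u_i : |\alpha_i| \le \mathrm{ind}(K)^{-1/d}\}$, whose volume is exactly $2^d$, and only with this bound does replacing $u_i$ by $w$ reduce the index to $|\alpha_i| \cdot \mathrm{ind}(K) \le \mathrm{ind}(K)^{(d-1)/d}$ (using LLL instead of exact shortest vectors costs an extra $2^{O(d)}$ factor, harmless for fixed $d$). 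Second, the ``half-open accounting'' you defer is exactly where the original algorithm does something nontrivial: the signed decomposition creates lower-dimensional cones via inclusion-exclusion on shared faces, and Barvinok sidesteps tracking them by decomposing the \emph{polar} cones, where lower-dimensional pieces can simply be discarded and one dualizes back at the end. With those repairs your sketch matches the known argument; for the paper itself, citing \cite{Barvinok:94} suffices.
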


We will use this algorithm via the following lemma. 

\begin{lemma}\label{lem:density-count1}
Let $\bH'$ be a distribution which is of the form $\bH' = V+  \sum_{a=1}^K p_a \cdot \bZ_a$ where $V,p_1,\dots,p_K \in \mathbb{Z}$ and $\bZ_1,\dots,\bZ_K$ are independent integer valued random variables and for $1 \le a \le K$, 
$\bZ_a$ is uniform on $[-\gamma_a, \ldots, \gamma_a]$. Then, given any point $x \in \mathbb{Z}$, $\{p_a\}_{a=1}^K$, $\{\gamma_a\}_{a=1}^K$ and $V$, the value $\Pr[\bH'=x]$ 
can be computed in time $\mathcal{L}^{O(K)}$ where $\mathcal{L}$ is the description size of the input. 
\end{lemma}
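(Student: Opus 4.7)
The plan is to reduce the evaluation of $\Pr[\bH' = x]$ to a single lattice-point counting problem in a rational polytope of dimension $K$, and then invoke Barvinok's algorithm (Theorem~\ref{thm:Barvinok}).

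First, since $\bZ_1,\dots,\bZ_K$ are independent and $\bZ_a$ is uniform on the $(2\gamma_a+1)$ integers in $[-\gamma_a,\gamma_a]$, we have
\[
\Pr[\bH' = x] \;=\; \frac{N(x)}{\prod_{a=1}^{K}(2\gamma_a+1)},
\]
where $N(x)$ is the number of integer tuples $(z_1,\dots,z_K) \in \Z^K$ satisfying the box constraints $-\gamma_a \le z_a \le \gamma_a$ for each $a\in[K]$ together with the single linear equation
\[
\sum_{a=1}^{K} p_a\, z_a \;=\; x - V.
\]
The denominator is trivial to compute exactly in time polynomial in $\mathcal{L}$, so it suffices to compute $N(x)$.

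Next, I would express the feasible set as a rational polytope $X \subseteq \R^K$ cut out by $2K+2$ linear inequalities: the $2K$ box inequalities $z_a \le \gamma_a$ and $-z_a \le \gamma_a$, and the two inequalities $\sum_a p_a z_a \le x-V$ and $-\sum_a p_a z_a \le -(x-V)$ that together encode the equality constraint. Each coefficient and right-hand side is an integer of bit-length at most $\mathcal{L}$, so the total description length of $X$ is $\poly(\mathcal{L}, K)$. Applying Theorem~\ref{thm:Barvinok} to $X$ with ambient dimension $d = K$ yields $N(x) = |X \cap \Z^K|$ in time $\mathcal{L}^{O(K)}$.

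Two minor points that should be addressed in the write-up but that I do not expect to be obstacles. First, because of the equality constraint the polytope $X$ may be lower-dimensional in $\R^K$; however, Theorem~\ref{thm:Barvinok} as stated only requires a rational polytope in $\R^d$, with no full-dimensionality hypothesis, and correctly returns the (finite) number of lattice points (bounded above by $\prod_a(2\gamma_a+1)$). Second, degenerate coefficients (e.g.\ some $p_a=0$) are harmless: the corresponding variable simply ranges freely over its box and the equality becomes a constraint on the remaining coordinates, still captured by the same inequality description. The only ``hard'' step is the invocation of Barvinok's theorem itself, which we take as a black box; everything else is a direct encoding. Combining the count $N(x)$ with the explicit denominator gives $\Pr[\bH'=x]$ in total time $\mathcal{L}^{O(K)}$, as required.
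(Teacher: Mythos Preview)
Your proposal is correct and essentially identical to the paper's proof: both reduce to counting lattice points in a $K$-dimensional box intersected with a hyperplane and invoke Barvinok. The only cosmetic difference is that the paper replaces your exact equality $\sum_a p_a z_a = x-V$ by the pair of strict-looking bounds $x-V-0.1 \le \sum_a p_a z_a \le x-V+0.1$, which (since the sum is an integer) picks out the same lattice points while sidestepping any worry about non-full-dimensional polytopes.
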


\begin{proof}
Consider the polytope defined by the following inequalities:  
\begin{align*}
\text{for~}1 \leq a \leq k, \ \ \  -\gamma_a  \le  y_a \le \gamma_a, \quad \quad \quad\text{and} \quad \quad
 x-V-0.1  \le \sum_{a=1}^K p_a \cdot y_a \le x -V+0.1.
\end{align*}
Then it is easy to see that if $\mathcal{N}_x$ is the number of integer points in the above polytope, then 
\[
\Pr[\bH'=x] =  \mathcal{N}_x \cdot \prod_{a=1}^K \frac{1}{2 \gamma_a +1}. 
\]
Combining this observation with Theorem~\ref{thm:Barvinok} proves the lemma. 
\end{proof}

\begin{proofof}{Theorem~\ref{thm:evaluation}}
Let $V \in A_{\bY}$. Then, using Lemma~\ref{lem:density-count1}, we obtain that for $\bH_V = V + \sum_{a=1}^K p_a \cdot \bZ_a$, $\Pr[\bH_V=x]$ can be computed in time $\mathcal{L}^{O(K)}$. Now, observe that 
$$
\Pr[\bH=x] = \sum_{V \in A_{\bY}} \Pr[\bH_V = x] \cdot \alpha_V. 
$$
As each term of the above sum can be computed in time $\mathcal{L}^{O(K)}$, hence the total time required to compute the above sum is bounded by $\mathcal{L}^{O(K)}$ (note that $\mathcal{L} \ge |A_\bY|$).
\end{proofof}

\subsection*{Acknowledgments} 
We would like to thank Igor Shparlinski and Aravindan Vijayaraghavan for useful discussions,
and the JMLR reviewers for many helpful comments.
\bibliographystyle{alpha}	
\bibliography{allrefs}	

\end{document}